\def\l@subsection#1#2{}
\def\l@subsubsection#1#2{}
\newtheorem{theorem}{Theorem}
\newtheorem{corollary}{Corollary}
\newtheorem{fact}{Fact}
\newtheorem{lemma}{Lemma}
\newcommand{\orth}{\mathrm{O}(2n)}
\newcommand{\Hn}{\mathcal{H}_n}
\newcommand{\LHn}{\mathcal{L}(\mathcal{H}_n)}
\newcommand{\LLHn}{\mathcal{L}(\mathcal{L}(\mathcal{H}_n))}
\newcommand{\mU}{\mathcal{U}}
\newcommand{\gr}[1]{\left\langle #1 \right\rangle}
\definecolor{mypurple}{RGB}{164,64,214}
\definecolor{mypurple2}{RGB}{170,0,255}
\definecolor{mycyan}{RGB}{0, 191, 255}
\definecolor{myred}{RGB}{255, 0, 85}
\definecolor{mypink}{RGB}{255, 0, 213}
\definecolor{mymaroon}{RGB}{128, 0, 0}
\let\ORIbbl@fixname\bbl@fixname
\def\bbl@fixname#1{%
  \@ifundefined{languagealias@\expandafter\string#1}
    {\ORIbbl@fixname#1}
    {\edef\languagename{\@nameuse{languagealias@#1}}}%
}
\newcommand{\definelanguagealias}[2]{%
  \@namedef{languagealias@#1}{#2}%
}
\DeclareFontFamily{OMX}{MnSymbolE}{}
\DeclareSymbolFont{MnLargeSymbols}{OMX}{MnSymbolE}{m}{n}
\DeclareFontShape{OMX}{MnSymbolE}{m}{n}{
    <-6>  MnSymbolE5
   <6-7>  MnSymbolE6
   <7-8>  MnSymbolE7
   <8-9>  MnSymbolE8
   <9-10> MnSymbolE9
  <10-12> MnSymbolE10
  <12->   MnSymbolE12
}{}
\DeclareFontShape{OMX}{MnSymbolE}{b}{n}{
    <-6>  MnSymbolE-Bold5
   <6-7>  MnSymbolE-Bold6
   <7-8>  MnSymbolE-Bold7
   <8-9>  MnSymbolE-Bold8
   <9-10> MnSymbolE-Bold9
  <10-12> MnSymbolE-Bold10
  <12->   MnSymbolE-Bold12
}{}
\let\llangle\@undefined
\let\rrangle\@undefined
\DeclareMathDelimiter{\llangle}{\mathopen}%
                     {MnLargeSymbols}{'164}{MnLargeSymbols}{'164}
\DeclareMathDelimiter{\rrangle}{\mathclose}%
                     {MnLargeSymbols}{'171}{MnLargeSymbols}{'171}
\newcommand{\ket}[1]{| #1 \rangle}
\newcommand{\bra}[1]{\langle #1 |}
\newcommand{\braket}[2]{\langle #1|#2 \rangle}
\newcommand{\tr}{\mathrm{tr}}
\newcommand{\eps}{\varepsilon}
\renewcommand{\Pr}{\mathbb{P}}
\newcommand{\E}{\mathop{{}\mathbb{E}}}
\newcommand{\Var}{\mathrm{Var}}
\newcommand{\Pf}{\mathrm{pf}}
\newcommand{\kett}[1]{| #1 \rrangle}
\newcommand{\braa}[1]{\llangle #1 |}
\newcommand{\brakett}[2]{\llangle #1|#2 \rrangle}
\newcommand{\Google}{\affiliation{%
Google Quantum AI, Venice, CA 90291, USA}}
\newcommand{\Stanford}{\affiliation{%
Stanford Institute for Theoretical Physics, Stanford University, Stanford, CA 94305, USA}}
\newcommand{\Columbia}{\affiliation{Department of Chemistry, Columbia University, New York, NY, United States}}
\newcommand{\Harvard}{\affiliation{Department of Chemistry and Chemical Biology, Harvard University, MA 02138}}
\newcommand{\citen}[1]{Ref.~\citenum{#1}}
\newcommand{\etal}[0]{\textit{et al.}~}
\begin{document}

\title{Matchgate Shadows for Fermionic Quantum Simulation}

\date{\today}

\author{Kianna Wan}\email{corresponding author: kianna.wan@gmail.com}
\Google
\Stanford

\author{William J.~Huggins}
\Google

\author{Joonho Lee}
\Google
\Columbia
\Harvard

\author{Ryan Babbush}
\Google

\begin{abstract}
``Classical shadows'' are estimators of an unknown quantum state, constructed from suitably distributed random measurements on copies of that state \cite{Huang2020}. In this paper, we analyze classical shadows obtained using random matchgate circuits, which correspond to fermionic Gaussian unitaries.
We prove that the first three moments of the Haar distribution over the \textit{continuous} group of matchgate circuits are equal to those of the \textit{discrete} uniform distribution over only the matchgate circuits that are also Clifford unitaries; thus, the latter forms a ``matchgate 3-design.''
This implies that the classical shadows resulting from the two ensembles are functionally equivalent.
We show how one can use these matchgate shadows to efficiently estimate inner products between an arbitrary quantum state and fermionic Gaussian states, as well as the expectation values of local fermionic operators and various other quantities, thus surpassing the capabilities of prior work. 
As a concrete application, this enables us to apply wavefunction constraints that control the fermion sign problem in the quantum-classical auxiliary-field quantum Monte Carlo algorithm (QC-AFQMC)~\cite{Huggins2022}, without the exponential post-processing cost incurred by the original approach.
\end{abstract}

\maketitle

\tableofcontents

\section{Introduction}

Efficiently extracting information from a quantum mechanical system is a task of both theoretical and practical importance.
As quantum information technology develops, we may wish to use it to characterise unknown quantum states and processes. 
Even when we possess a complete description of a quantum system, such as instructions for preparing a quantum state via a quantum circuit, we may need to compute properties of the system that are not efficiently obtainable from this description using classical computation.
The formalism of ``classical shadows'' introduced by Huang \textit{et al.}~\cite{Huang2020} provides us with a rigorous approach to solving this problem in some cases. 

A classical shadow of a quantum state \(\rho\) is an unbiased estimator of \(\rho\), constructed from the outcomes of random measurements on copies of \(\rho\). As its name suggests, samples of this estimator are stored classically, and can be used to estimate properties of $\rho$, such as the expectation values of a set of observables.
A classical shadows protocol is specified by choosing a distribution over unitaries; measurements are performed by sampling a unitary from this distribution, applying it to \(\rho\), and measuring in the computational basis.
The efficiency of any classical shadows scheme depends on both the choice of unitary distribution and the particular observables of interest.
Ideally, a scheme is efficient both in terms of quantum resources (the circuit depth and the number of random measurements required), and classical resources (the complexity of the post-processing for obtaining estimates from the classical shadow samples).

Recently, Huggins \etal\cite{Huggins2022} used classical shadows to implement a new hybrid algorithm, called ``quantum-classical hybrid quantum Monte Carlo,'' on a near-term quantum processor. 
Their approach is based on classical computational techniques, known as projector quantum Monte Carlo (QMC) methods, that approximate the ground state of a quantum Hamiltonian by stochastically implementing the imaginary time-evolution operator. 
Generically, QMC algorithms for fermionic systems are made to scale polynomially by imposing constraints derived from a ``trial wavefunction'' \(\ket{\Psi_{\textrm{trial}}}\), an ansatz for the ground state that is provided as an input to the algorithm.
This polynomial scaling comes at the expense of introducing a bias that depends on the quality of the ansatz. The use of richer families of trial wavefunctions to increase the accuracy of these constrained QMC calculations is an active area of research. Quantum computing offers the ability to efficiently prepare a new class of trial wavefunctions that are difficult to access classically, which motivated the development of a hybrid quantum-classical algorithm for QMC.

The details vary between specific methods for constrained QMC, but generally the constraints can naturally be cast in terms of inner products with \(\ket{\Psi_{\textrm{trial}}}\).
In \citen{Huggins2022}, the authors implemented a quantum-classical hybrid algorithm for auxiliary-field quantum Monte Carlo (QC-AFQMC), by collecting classical shadow samples of $\ket{\Psi_{\text{trial}}}$ and using these to estimate the inner products $\braket{\Psi_{\text{trial}}}{\varphi_i}$ between $\ket{\Psi_{\text{trial}}}$ and Slater determinants $\ket{\varphi_i}$.
The particular classical shadows protocol implemented in \citen{Huggins2022} is the one based on random Clifford circuits, proposed and analysed in Ref.~\citenum{Huang2020}. The use of classical shadows substantially reduced the number of circuit repetitions required
compared to the alternative approach of using Hadamard tests. However, the proposed scheme for classically post-processing the Clifford-based classical shadows to obtain the requisite inner product estimates is inefficient, with a runtime that scales exponentially with the number of qubits.

To remedy this exponential bottleneck in the QC-AFQMC algorithm, and motivated by the importance of fermionic quantum simulation in general, we develop new tomographic protocols based on classical shadows from random matchgate circuits. We refer to these classical shadows as ``matchgate shadows'' for brevity. 
Matchgate circuits, which we formally define in subsection~\ref{sec:background_matchgates}, are generated by a certain set of two-qubit Pauli rotations, and are equivalent to fermionic Gaussian unitaries under the Jordan-Wigner transformation~\cite{JordanWigner}.
We consider two distributions over matchgate circuits: the Haar-uniform distribution over the continuous group of all matchgate circuits, and the uniform distribution over the discrete subset of matchgate circuits that are also members of the Clifford group. In Theorem~\ref{thm:moments}, we establish that the first three moments of two distributions are the same, by finding explicit expressions for the corresponding twirl channels and showing that they are equal.
Thus, in the same way the uniform distribution over the Clifford group is a unitary 3-design~\cite{Zhu2017-xf}, we can colloquially describe our discrete distribution over Clifford matchgate circuits as a ``matchgate 3-design.''
In the context of classical shadows, the form of the estimators as well as their variance depend only on the first three moments, so it follows that the two distributions lead to the same results. In addition to potential practical implications, the $3$-design property is useful theoretically, as it allows us to exploit the more explicit symmetry of the full matchgate group when analysing the discrete ensemble.

Crucially, we also show how to \emph{efficiently} post-process our matchgate shadows to estimate three kinds of quantities: 
(i) expectation values of local fermionic operators, (ii) fidelities $\tr(\varrho\rho)$ between unknown quantum states $\rho$ and fermionic Gaussian states $\varrho$, and (iii) inner products $\braket{\psi}{\varphi}$ between any pure state \(\ket{\psi}\) (accessed via a state preparation circuit) and arbitrary Slater determinants \(\ket{\varphi}\).
We also analyse the variances of the resulting estimates, proving explicit polynomial upper bounds in cases (i) and (ii). In case (iii), we derive an efficiently computable bound, which we evaluate for system sizes up to \(1000\) qubits, finding a modest growth rate consistent with a sublinear scaling in the system size.
Beyond these three classes of observables, we provide a general framework for efficiently estimating the expectation values of arbitrary products of local fermionic operators, fermionic Gaussian density operators, and fermionic Gaussian unitaries, though we do not address the task of bounding the variance in this more general situation. We apply this framework to obtain an efficient procedure for estimating inner products between any pure state and arbitrary pure fermionic Gaussian states (not necessarily Slater determinants) using our matchgate shadows. Our post-processing procedures are based on novel methods for classically evaluating free-fermion quantities by exploiting their underlying Clifford algebra structure, which may be of independent interest in both classical and quantum computation. 

Our work builds on the classical shadows formalism of Huang \etal~\cite{Huang2020}, but several differences arise when considering random matchgate circuits rather than random Clifford circuits. First, the fact that the \(n\)-qubit Clifford group has only one non-trivial irreducible representation leads to a particularly simple form for the corresponding measurement channel~\cite{Chen2021-cq}. In contrast, the group of $n$-qubit matchgate circuits has \(2n + 1\) inequivalent irreducible representations~\cite{Helsen2022}, complicating the analysis. Another interesting difference is that when using classical shadows based on the Clifford group~\cite{Huang2020}, the choice of ensemble, between single- and $n$-qubit random Clifford circuits, allows one to efficiently estimate \textit{either} local qubit observables, or low-rank observables such as fidelities. 
In contrast, our work shows that matchgate shadows are capable of simultaneously estimating both local fermionic observables as well as certain global properties (e.g., the fidelities with fermionic Gaussian states).

The idea of using classical shadows from random matchgate circuits was also explored by Zhao \textit{et~al.} in \citen{Zhao2021-fv}.
We make a brief comparison here and contrast our work with theirs more thoroughly in subsection~\ref{sec:summarylocal}. Zhao \etal analyse a discrete ensemble of matchgate circuits that is a subset of the discrete ensemble considered in this paper. They apply the resulting shadows to estimate the expectation values of local fermionic observables in a particular basis. We obtain the same scaling as their approach for these local observables. 
Our practical results go considerably beyond theirs, however, by developing efficient methods and bounding the variances for estimating the nonlocal properties described above (including those required for QC-AFQMC), in addition to the expectation values of local fermionic observables in arbitrary bases. We thus broaden the scope of applicability of the classical shadows formalism to the quantum simulation of fermionic systems. 

We refer the reader to Section~\ref{sec:summary} for a summary of our main results; the relevant background material is reviewed in Section~\ref{sec:background}.
In subsection~\ref{sec:summary_random_matchgates}, we characterise the moments of the two distributions over matchgate circuits we consider. Then, in subsection~\ref{sec:summary_shadows}, we provide an expression for the corresponding measurement channel (necessary for classically constructing the matchgate shadows), as well as a general formula for the variance of expectation value estimates. 
We consider several applications in subsection~\ref{sec:summary_applications}, giving an overview of our methods for efficiently extracting estimates of various quantities from matchgate shadows via classical post-processing, along with bounds on the variances of these estimates. As a specific example, we present a concise description of our protocol for efficiently estimating the inner products required for QC-AFQMC in Algorithm~\ref{alg:overlaps}.
In Section~\ref{sec:ensembles}, we supply the proofs of our results on the ensembles of random matchgate circuits and the classical shadows they generate.
Section~\ref{sec:compute} provides details and proofs of correctness for our efficient post-processing methods, while Section~\ref{sec:variance} gives the proofs of our variance bounds. 
In Section~\ref{sec:discussion}, we discuss the context of our work and some directions for future exploration. The appendices contain a mixture of generalisations and more technical details related to the results of the main text.

\section{Background} \label{sec:background}

In this section, we provide the background material required for developing our results and putting them into context. In subsection~\ref{sec:background_preliminaries}, we summarise basic concepts and introduce some notational conventions that will be used throughout the paper. Table~\ref{tab:notation} contains an abbreviated list of commonly used notation. We then review the classical shadows framework of~\citen{Huang2020} in subsection~\ref{sec:background_shadows}, and describe the application of classical shadows to the QC-AFQMC algorithm of Ref.~\citenum{Huggins2022} in subsection~\ref{sec:background_QCQMC}. Readers familiar with this background material can skip to the summary of our main results in Section~\ref{sec:summary}, after skimming subsection~\ref{sec:background_preliminaries} or Table~\ref{tab:notation} to gain familiarity with some of the notation.

\begin{center}
\begin{table}
\caption{Some notation used throughout the paper.}
\label{tab:notation}
\small
\begin{tabular}{l l l}
\hline \vspace{-1em} \\
$n$ = number of fermionic modes = number of qubits &\hspace{2em} &$\mathrm{O}(2n) = $ (real) orthogonal group in dimension $2n$ \\[3pt]
$\zeta$ = number of fermions & &$\mathrm{B}(2n)$ = group of $2n \times 2n$ signed permutation matrices \\[3pt]
$\Hn \cong \mathbb{C}^{2^n}$ = space of $n$-qubit states & & $U_Q$ = Gaussian unitary/matchgate circuit  \\[3pt]
$\mathcal{L}(\mathcal{V}) = $ space of linear operators on vector space $\mathcal{V}$ & & \qquad\quad $(U_Q^\dagger \gamma_\mu U_Q = \sum_{\nu=1}^{2n} Q_{\mu\nu}\gamma_\nu$ for $Q \in \orth)$ \\[3pt]  
$[k] = \{1,\dots, k\}$, for $k \in \mathbb{Z}_{>0}$ & &$\mathcal{U}_Q(\,\cdot\,) = U_Q^\dagger(\,\cdot\,)U_Q$ \\[3pt]
${S\choose j} = \{T \subseteq S: |T| = j\}$, for set $S$  & &$\mathrm{M}_n$ = group of $n$-mode Gaussian unitaries/ \\[3pt]
$\gamma_\mu =$ $\mu$th Majorana operator & & \qquad \quad {$n$-qubit matchgate circuits}  \\[3pt] 
$\gamma_S = \gamma_{\mu_1}\dots \gamma_{\mu_{|S|}}$, where $S = \{\mu_1,\dots, \mu_{|S|}\}$, $\mu_1< \dots < \mu_{|S|}$ & &$\mathrm{Cl}_n$ = $n$-qubit Clifford group \\[3pt]
$\Gamma_k = \mathrm{span}\{\gamma_S: S \in {[2n]\choose k}\}$; \enspace $\Gamma_{\text{even}} = \bigoplus_{k \text{ even}} \Gamma_k$  & &$\Pf(M)$ = Pfaffian of a matrix $M$\\[7pt]
$\mathcal{P}_k =$ projector onto $\Gamma_k$ & & $M\big|_S$ = matrix $M$ restricted to rows and columns in $S$ \\ [5pt]
\multicolumn{3}{l}{Liouville representation:  $\kett{A} \equiv \frac{1}{\sqrt{\tr(A^\dagger A)}}A$, \quad $\mathcal{E}\kett{A} \equiv \frac{1}{\sqrt{\tr(A^\dagger A)}} \mathcal{E}(A)$, \quad $\brakett{A}{B} \equiv \frac{\tr(A^\dagger B)}{\sqrt{\tr(A^\dagger A)\tr(B^\dagger B)}}$ \enspace for $A,B \in \LHn \setminus \{0\}$} \\ \hline
\end{tabular}
\end{table}
\end{center}

\subsection{Preliminaries and notation} \label{sec:background_preliminaries}

Throughout, we use $\mathcal{H}_n$ to denote the space of $n$-qubit states,  and $\mathcal{L}(\mathcal{V})$ the space of linear operators on a vector space $\mathcal{V}$. Thus, $\LHn$ is the space of $n$-qubit operators, and $\LLHn$ the space of superoperators. All vector spaces we consider are over the field $\mathbb{C}$.

\subsubsection{Majorana operators} \label{sec:background_Majoranas}

For a system of $n$ fermionic modes corresponding to creation operators $a_1^\dagger,\dots, a_n^\dagger$, we define $2n$ \textit{Majorana operators} $\gamma_1, \dots \gamma_{2n}$ by
\begin{equation} \label{Majoranadef} \gamma_{2j-1} \coloneqq a_j + a_j^\dagger, \qquad \gamma_{2j} \coloneqq -i(a_j - a_j^\dagger) \end{equation}
for $j \in [n]\coloneqq \{1,\dots, n\}$.
The Majorana operators are Hermitian and satisfy the anticommutation relations
\begin{equation} \label{MajoranaCCR} \{\gamma_\mu, \gamma_\nu\} = 2\delta_{\mu,\nu}I \end{equation}
for all $\mu,\nu \in [2n]$. 
For a subset of indices $S \subseteq [2n]$, we denote by $\gamma_S$ the product of the Majorana operators indexed by the elements in $S$ in increasing order. That is, 
\begin{align*} &\gamma_{S} \coloneqq \gamma_{\mu_1}\dots \gamma_{\mu_{|S|}} \quad \text{for $S = \{\mu_1,\dots \mu_{|S|}\} \subseteq [2n]$ with $\mu_1 < \dots < \mu_{|S|}$}, \\
&\gamma_{\varnothing} \equiv I,
\end{align*}
where $I$ denotes the identity operator in $\LHn$.
It will be clear from context whether the subscript of $\gamma$ is a single index (usually represented in terms of lowercase Greek or Roman letters) or a subset of indices (usually represented by an uppercase Roman letter). We further define the independent subspaces 
\begin{equation} \label{Gammak} \Gamma_k \coloneqq \mathrm{span}\left\{\gamma_S: S \in {[2n]\choose k}\right\} \end{equation}
for $k \in \{0,\dots, 2n\}$, where ${[2n]\choose k}$ denotes the set of subsets of $[2n]$ of cardinality $k$. Many of the operators we will be considering are \textit{even operators}, i.e., they are in the subspace $\Gamma_{\text{even}}$ spanned by products of an even number of Majorana operators:
\begin{equation}\label{Gamma_even} \Gamma_{\text{even}} \coloneqq \bigoplus_{\ell = 0}^{n} \Gamma_{2\ell}. \end{equation}

We represent $n$-mode fermionic operators by $n$-qubit operators via the Jordan-Wigner transformation~\cite{JordanWigner} (and generally, we will not distinguish between a fermionic operator and its qubit representation\footnote{The results in this paper are actually essentially independent of the particular fermion-to-qubit mapping. That is, all of the results can be interpreted purely in terms of fermionic states and operators, if one interprets ``computational basis states'' as occupation number states, ``matchgate circuts'' as fermionic Gaussian unitaries (see subsection~\ref{sec:background_matchgates}), and the set of ``Clifford'' matchgate circuits with the set of Gaussian unitaries $U_Q$ such that $Q$ is a signed permutation matrix (see Eq.~\eqref{MnCl}). We consider the Jordan-Wigner transformation to obtain a simple and explicit qubit implementation of our protocols.}): 
\begin{equation} \label{JW}
    \gamma_{2j-1} = \left(\prod_{i=1}^{j-1} Z_i\right) X_j, \qquad \gamma_{2j} = \left(\prod_{i=1}^{j-1} Z_i\right) Y_j,
\end{equation}
where $Z_i$ denotes the $n$-qubit operator that acts as Pauli $Z$ on the $i$th qubit and as the identity on the rest of the qubits, and similarly for $X_i$ and $Y_i$. Then, the space $\LHn$ of $n$-qubit operators is spanned by products of Majorana operators, i.e., $\LHn = \bigoplus_{k=0}^{2n} \Gamma_k$. Denoting the eigenstates of $Z$ as $\ket{0}$ and $\ket{1}$, the simultaneous eigenstates of $\{Z_j\}_{j\in [n]}$ are $\bigotimes_{j \in [n]}\ket{b_j} \eqqcolon \ket{b}$ for $b = (b_1,\dots, b_n) \in \{0,1\}^n$. We refer to $\{\ket{b}\}_{b\in \{0,1\}^n}$ as the \textit{computational basis}. This corresponds under the Jordan-Wigner transformation to the occupation-number basis with respect to our fermionic modes $\{a_j^\dagger\}_{j\in [n]}$: we have $\ket{b} = (a_1^\dagger)^{b_1} \dots (a_n^\dagger)^{b_n}\ket{\mathbf{0}}$, where $\ket{\mathbf{0}} \equiv \ket{0}^{\otimes n}$ is the vacuum state, and
\begin{equation} \label{bb} \ket{b}\bra{b} = \prod_{j=1}^n\frac{1}{2}\left(I-i(-1)^{b_j}\gamma_{2j-1}\gamma_{2j}\right).\end{equation}

More generally, we will refer to any set of Hermitian operators satisfying the anticommutation relations in Eq.~\eqref{MajoranaCCR} as a set of Majorana operators. We can form different sets of Majorana operators by taking appropriate linear combinations of $\gamma_1,\dots,\gamma_{2n}$. Specifically, if
\begin{equation}\label{Majoranatilde} \widetilde{\gamma}_\mu = \sum_{\nu =1}^{2n} Q_{\mu\nu} \gamma_\nu \end{equation}
for each $\mu \in [2n]$, then $\{\widetilde{\gamma}_\mu\}_{\mu \in [2n]}$ are self-adjoint and satisfy $\{\widetilde{\gamma}_\mu, \widetilde{\gamma}_\nu\} = 2\delta_{\mu,\nu}$, if and only if the $2n \times 2n$ matrix $Q$ is real and orthogonal. Having picked out a special basis---the computational basis---for the space of $n$-qubit states, we will typically use $\gamma_\mu$ (without tildes) to denote the particular set of Majorana operators satisfying Eq.~\eqref{bb}, and sometimes refer to this set as the ``canonical'' basis of Majorana operators.

\subsubsection{Matchgate circuits/fermionic Gaussian unitaries} \label{sec:background_matchgates}

Let $\orth$ be the group of real orthogonal $2n \times 2n$ matrices. For any $Q \in \orth$, we use $U_Q$ to denote any unitary (acting on $\mathcal{H}_n$) such that
\begin{equation} \label{UQdef} U_Q^\dagger \gamma_\mu U_Q = \sum_{\nu=1}^{2n}Q_{\mu\nu}\gamma_\nu \end{equation}
for all $\mu \in [2n]$. We call any such unitary a \textit{(fermionic) Gaussian unitary}.  Gaussian unitaries transform between valid sets of Majorana operators [cf.~Eq.~\eqref{Majoranatilde}]. It is easily verified using Eq.~\eqref{UQdef} that for any $S \subseteq [2n]$,
\begin{equation} \label{UQgammaS}
U_Q^\dagger \gamma_S U_Q = \sum_{S' \in {[2n] \choose |S|}} \det(Q\big|_{S,S'})\gamma_{S'},
\end{equation} 
where $M\big|_{S,S'}$ denotes the restriction of the matrix $M$ to rows indexed by $S$ and columns indexed by $S'$.
Since $\LHn$ is spanned by $\{\gamma_S: S \subseteq [2n]\}$, $U_Q$ is fully determined by $Q$ up to an irrelevant global phase. It follows from Eq.~\eqref{UQgammaS} that for each $k \in \{0,\dots, 2n\}$, the subspace $\Gamma_k$ spanned by products of $k$ Majorana operators [Eq.~\eqref{Gammak}] is invariant under conjugation by any $U_Q$. The set of all Gaussian unitaries on $n$ modes forms a group, which we denote by $\mathrm{M}_n$.

\textit{Matchgate circuits} are qubit representations of fermionic Gaussian unitaries under the Jordan-Wigner transformation. We will generally use the terms matchgate circuit and Gaussian unitary interchangeably, and call $\mathrm{M}_n$ the ``matchgate group.''\footnote{We note that in \citen{Helsen2022}, the matchgate group is used to refer to only those Gaussian unitaries corresponding to special orthogonal matrices $Q \in \mathrm{SO}(2n)$, while the group corresponding to all of $\orth$ is referred to as the ``generalised'' matchgate group.} \textit{Matchgates} are a particular class of two-qubit gates, generated by two-qubit $X$ rotations of the form $\exp(i\theta X\otimes X)$ and single qubit $Z$ rotations $\exp(i\theta Z\otimes I)$ and $\exp(i\theta I\otimes Z)$. 
Matchgate circuits can then be defined as the unitaries generated by nearest-neighbour matchgates (where the $n$ qubits are placed on a line) along with the Pauli $X$ operator $X_n$ on the last qubit. To see why this definition of matchgate circuits corresponds to that of Gaussian unitaries, note that 
\[ iX_j X_{j+1} = \gamma_{2j}\gamma_{2j+1},\qquad iZ_j = \gamma_{2j-1}\gamma_{2j} \]
from Eq.~\eqref{JW}, and that for $\mu \neq \nu$,
\[ \exp(\theta \gamma_\mu\gamma_\nu)^\dagger \gamma_\xi \exp(\theta \gamma_\mu\gamma_\nu) = \begin{dcases} \cos(2\theta) \gamma_\mu + \sin(2\theta) \gamma_\nu \qquad &\xi = \mu, \\
-\sin(2\theta) \gamma_\mu + \cos(2\theta)\gamma_\nu \qquad &\xi = \nu, \\
\gamma_\xi, \qquad &\xi \not\in \{\mu,\nu\}.\end{dcases} \]
Thus, the nearest neighbour $X \otimes X$ rotations and single-qubit $Z$ rotations implement Gaussian unitaries corresponding to Givens rotations in planes spanned by the $\gamma_\mu$, $\gamma_{\mu+1}$ axes for every $\mu \in [2n-1]$; these generate all rotations in $\mathrm{SO}(2n)$. Adding in the $X$ operator on the $n$th qubit then generates $\mathrm{O}(2n)$, since $X_n$ implements the reflection that takes $\gamma_{2n} \mapsto -\gamma_{2n}$ and leaves all the other $\gamma_\mu$ unchanged, as can be seen from Eq.~\eqref{JW}.

\subsubsection{Fermionic Gaussian states and Slater determinants} \label{sec:background_Gaussianstates}

There are several equivalent ways of defining \textit{(fermionic) Gaussian states}. Physically speaking, they are the ground states and thermal states of non-interacting fermionic Hamiltonians. For our purposes, an $n$-mode Gaussian state is any state whose density operator $\varrho$ can be written as
\begin{equation} \label{varrhodef} \varrho = \prod_{j=1}^n \frac{1}{2}(I - i\lambda_j \widetilde{\gamma}_{2j-1}\widetilde{\gamma}_{2j})\end{equation}
for some coefficients $\lambda_j \in [-1,1]$ and Majorana operators $\widetilde{\gamma}_\mu = U_Q^\dagger \gamma_\mu U_Q = \sum_{\mu=1}^{2n} Q_{\mu\nu}\gamma_\nu$ with $Q \in \orth$. If $\lambda_j \in \{-1,1\}$ for all $j \in [n]$, then $\varrho$ is a pure Gaussian state; otherwise, $\varrho$ is a mixed Gaussian state. Gaussian unitaries map Gaussian states to Gaussian states. In particular, from Eq.~\eqref{bb}, we see that computational basis states are all Gaussian states, and that any pure Gaussian state can be prepared from the vacuum state $\ket{\mathbf{0}}$ by a Gaussian unitary $U_Q \in \mathrm{M}_n$. 

The density operator of any Gaussian state is in $\Gamma_{\text{even}}$, and (as the name suggests) a Gaussian state is fully determined by its two-point correlations $\tr(\varrho \gamma_\mu \gamma_\nu)$, which form its \textit{covariance matrix}. Specifically, for any $n$-qubit state $\rho$, the associated covariance matrix $C_\rho$ is an antisymmetric $2n\times 2n$ matrix with entries
\begin{equation} \label{covariancematrix} (C_\rho)_{\mu\nu} \coloneqq -\frac{i}{2}\tr([\gamma_\mu,\gamma_\nu] \rho) \end{equation}
for $\mu,\nu \in [2n]$. For example, by Eq.~\eqref{bb}, the covariance matrix of a computational basis state $\ket{b}\bra{b}$ is 
\begin{equation} \label{covarianceb} C_{\ket{b}} \coloneqq \bigoplus_{j=1}^n \begin{pmatrix} 0 & (-1)^{b_j} \\ (-1)^{b_j + 1} &0 \end{pmatrix}. \end{equation} For a general Gaussian state $\varrho$ specified as in Eq.~\eqref{varrhodef}, the covariance matrix is 
\begin{equation} \label{covariancegeneral} C_\varrho = Q^{\mathrm{T}} \bigoplus_{j=1}^n \begin{pmatrix} 0 & \lambda_j \\ -\lambda_j &0 \end{pmatrix} Q,
\end{equation}
and a useful relation between the covariance matrices of $\varrho$ and $U_Q^\dagger \varrho U_Q$ for some $U_Q \in \mathrm{M}_n$ is $C_{U_Q^\dagger\varrho U_Q} = Q^{\mathrm{T}} C_\varrho Q$. 

For $\zeta \in \mathbb{Z}_{\geq 0}$, a $\zeta$-fermion \text{Slater determinant} is a Gaussian state that is also an eigenstate of the number operator $\sum_{j =1}^n a_j^\dagger a_j$. 
Not every Gaussian state is a Slater determinant. Indeed, the definition of a Slater determinant depends on the choice of fermionic modes $\{a_j^\dagger\}_{j \in [n]}$; as discussed in subsection~\ref{sec:background_Majoranas}, throughout this paper, $\{a_j^\dagger\}_{j \in [n]}$ are the ``canonical'' modes whose occupation-number states correspond to computational basis states under the Jordan-Wigner transformation. Hence, computational basis states are all Slater determinants, and any $\zeta$-fermion Slater determinant can be prepared from a computational basis state $\ket{x}$ of Hamming weight $|x| = \zeta$ by a Gaussian unitary that commutes with the number operator. 

Fermionic Gaussian states, including Slater determinants, can be efficiently described classically via their covariance matrices (Eq.~\eqref{covariancematrix}). Any $\zeta$-fermion Slater determinant $\ket{\varphi}$ can also be written as
\begin{equation} \label{ajtildeSlater} \ket{\varphi} = \widetilde{a}_1^\dagger\dots \widetilde{a}_{\zeta}^\dagger\ket{\mathbf{0}}, \qquad \text{where }\widetilde{a}_j = \sum_{k=1}^n V_{jk} a_k \text{ for each $j \in [n]$}, \end{equation}
for some $n \times n$ unitary matrix $V$. Hence, we can also specify a Slater determinant by specifying $V$ (or the first $\zeta$ rows of $V$). For reference, the Majorana $\{\widetilde\gamma_\mu\}_{\mu \in [2n]}$ operators corresponding to these $\{\widetilde{a}_j\}_{j \in [n]}$ $\{\gamma_\mu\}_{\mu \in [2n]}$ by $\widetilde{\gamma}_{2j-1} = \sum_k (\mathrm{Re}(V_{jk})\gamma_{2k-1} -\mathrm{Im}(V_{jk})\gamma_{2k})$ and $\widetilde{\gamma}_{2j} = \sum_{k}(\mathrm{Im}(V_{jk})\gamma_{2k-1} + \mathrm{Re}(V_{jk})\gamma_{2k})$, so the fermionic Gaussian unitary $U_{\widetilde{Q}}$ that implements the transformation in Eq.~\eqref{ajtildeSlater} is given by the (special) orthogonal matrix \begin{equation} \label{QSlater} \widetilde{Q} = \begin{pmatrix} R_{11} &\dots &R_{1n} \\ 
\vdots &\ddots &\vdots \\ 
R_{n1} &\dots &R_{nn} \end{pmatrix}, \qquad \text{with blocks } R_{jk} \coloneqq \begin{pmatrix} \mathrm{Re}(V_{jk}) &-\mathrm{Im}(V_{jk}) \\ \mathrm{Im}(V_{jk}) &\mathrm{Re}(V_{jk}) \end{pmatrix}. \end{equation}

\subsubsection{Liouville representation} \label{sec:Liouville}

In some parts of this paper, predominantly in Section~\ref{sec:ensembles}, we will use the Liouville representation (or Pauli-transfer matrix representation) for operators and superoperators, for the purpose of making certain expressions more clear. In this representation, operators are notated using ``double'' kets, and a ``double'' braket is used to represent the Hilbert-Schmidt inner product. By convention, we take all double kets to be normalised with respect to the Hilbert-Schmidt norm. Thus, for any nonzero operators $A,B \in \LHn$, 
\[ \kett{A} \equiv \frac{1}{\sqrt{\tr(A^\dagger A)}}A, \qquad \brakett{A}{B} \equiv \frac{\tr(A^\dagger B)}{\sqrt{\tr(A^\dagger A)\tr(B^\dagger B)}} \]
(and set $\kett{0} \equiv 0$ for the zero operator).
In particular, since Majorana operators square to the identity, and their products are Hilbert-Schmidt orthogonal, we have
\begin{equation} \label{kettgammaS} \kett{\gamma_S} \equiv \frac{1}{\sqrt{2^n}}\gamma_S, \qquad \brakett{\gamma_S}{\gamma_{S'}} = \delta_{S,S'}. \end{equation} 
As with usual (state) kets, we will freely write e.g., $\kett{A}\kett{B}$ as shorthand for the tensor product $\kett{A}\otimes \kett{B}$.
A superoperator acting on an operator is represented by placing the superoperator to the left of the operator's double ket, i.e., for $\mathcal{E} \in \LLHn$ and $A \in \LHn$, 
\[ \mathcal{E}\kett{A} \equiv \frac{1}{\sqrt{\tr(A^\dagger A)}} \mathcal{E}(A). \] We will also write $\mathcal{E}\mathcal{E}'$ in place of $\mathcal{E}\circ \mathcal{E}'$. The fact that the $2^n$ ordered products $\gamma_S$ of Majorana operators forms an orthogonal basis for $\LHn$ can be expressed as a resolution of the identity superoperator $\mathcal{I} \in \mathcal{L}(\mathcal{L}(\Hn))$:
\begin{equation} \label{resolutionofidentity} \mathcal{I} = \sum_{S \subseteq [2n]}\kett{\gamma_S}\braa{\gamma_S}. \end{equation}

\subsection{Review of the classical shadows framework} \label{sec:background_shadows}

In this subsection, we review the \textit{classical shadows} framework of~\citen{Huang2020},\footnote{Similar ideas were developed independently in Refs.~\cite{paini2019approximate,Paini_2021}.} introducing generalisations where necessary. The objective is to estimate the expectation values $\tr(O_1 \rho),\dots, \tr(O_M\rho)$ of $M$ ``observables''\footnote{Usually, the term observable is reserved for operators that are Hermitian. In this paper, we abuse terminology and allow our observables to be non-Hermitian in general (using the term to refer to any operator whose expectation value we are interested in).} $O_1,\dots, O_M \in \LHn$ with respect to an unknown $n$-qubit quantum state $\rho$, assuming that we are given copies of the state and some classical description of the observables. To apply the classical shadows protocol, we first choose a distribution $D$ over some set of unitaries. For each copy of $\rho$, we 1) randomly draw a unitary $U$ from this distribution, 2) apply $U$ to $\rho$, and 3) measure in the computational basis $\{\ket{b}\}_{b\in\{0,1\}^n}$. (Steps 2 and 3 are equivalent to measuring in the basis $\{U^\dagger\ket{b}\}_{b \in \{0,1\}^n}$.) Then, consider applying $U^\dagger$ to the post-measurement state. The quantum channel $\mathcal{M} \in \LLHn$ describing the overall process is given by 
\begin{align} 
    \mathcal{M}(\rho) &= \E_{U \sim D}\sum_{b \in \{0,1\}^n} \bra{b}U \rho U^\dagger\ket{b}U^\dagger\ket{b}\bra{b}U \nonumber\\
    &= \tr_1\left[\sum_{b \in \{0,1\}^n} \E_{U \sim D} \mathcal{U}^{\otimes 2} (\ket{b}\bra{b}^{\otimes 2}) (\rho\otimes I) \right], \label{Mgeneral2}
\end{align}
where $\tr_1$ denotes the partial trace over the first tensor component, and $\mathcal{U}$ denotes the unitary channel corresponding to $U^\dagger$, i.e., $\mathcal{U}(\,\cdot\,) = U^\dagger(\,\cdot\,)U$. Now, suppose that $\mathcal{M}$ is invertible on some subspace $\mathcal{X}$ of $\LHn$, and assume that $\mathcal{X}$ contains $\rho$ and $U^\dagger\ket{b}\bra{b}U$ for all $U \in D$ and $b \in \{0,1\}^n$. Let $\mathcal{M}^{-1}: \mathcal{X} \to \mathcal{X}$ denote the inverse of $\mathcal{M}$ restricted to this subspace. Then, define the random operator $\hat{\rho}$ by 
\begin{equation} \label{hatrho} \hat{\rho} \coloneqq \mathcal{M}^{-1}(\hat{U}^\dagger\ket{\hat{b}}\bra{\hat{b}}\hat{U}), \end{equation}
where $\hat{U}$ is distributed according to $D$ and $\Pr[\ket{\hat{b}} = \ket{b}\, |\, \hat{U} = U] = \bra{b}U\rho U^\dagger \ket{b}$. By construction, $\hat{\rho}$ is an unbiased estimator for $\rho$:
\[ \E[\hat\rho] = \rho, \]
and in the literature, the term ``classical shadow'' is often used to refer to both $\hat{\rho}$ (a random variable) or a sample of it obtained in one realisation of the above procedure (i.e., $\mathcal{M}^{-1} (U^\dagger\ket{b}\bra{b}U)$ for some outcomes $U$ and $\ket{b}$). These samples can be used to estimate the expectation values $\tr(O_i\rho)$, since
\[ \E[\hat{o}_i] = \tr(O_i \rho), \quad \text{where $\hat{o}_i \coloneqq \tr(O_i \hat{\rho}) = \tr\left(O_i \mathcal{M}^{-1}(\hat{U}^\dagger\ket{\hat{b}}\bra{\hat{b}}\hat{U})\right)$ for $i \in [M]$}. \]
Note that only steps 1)--3) are performed on the quantum computer; the remaining computations (constructing the classical shadow sample $\mathcal{M}^{-1}(U\ket{b}\bra{b}U)$ and calculating expectation values with respect to it) are classical. 

To bound the number of samples of the classical shadow estimator $\hat{\rho}$ (and hence the number of copies of $\rho$) required to estimate the expectation values to within some desired precision with high probability, we can consider the variances of the estimators $\hat{o}_i$:
\begin{align*}
    \Var[\hat{o}_i] &= \E[|\hat{o}_i|^2] - |\E[\hat{o}_i]|^2 \\
    &= \E_{U \sim D} \sum_{b \in \{0,1\}^n} \bra{b} U\rho U^\dagger\ket{b} \left|\tr\left(O_i \mathcal{M}^{-1}(U^\dagger \ket{b}\bra{b}U)\right)\right|^2 - |\tr(O_i\rho)|^2 \\
    &= \E_{U \sim D} \sum_{b \in \{0,1\}^n} \tr\left[U^\dagger \ket{b}\bra{b} U \rho \otimes \mathcal{M}^{-1}(U^\dagger \ket{b}\bra{b}U)O_i \otimes  \mathcal{M}^{-1}(U^\dagger\ket{b}\bra{b}U) O_i^\dagger\right] - |\tr(O_i\rho)|^2.
\end{align*}
If we further assume that $O_i,O_i^\dagger \in \mathcal{X}$, then we can use the fact that $\tr(\mathcal{M}^{-1}(A) B) = \tr(A \mathcal{M}^{-1}(B))$ for $A,B, \in \mathcal{X}$ to rewrite this as
\begin{equation} \label{Varoi}
    \Var[\hat{o}_i] = \tr\left[\sum_{b \in \{0,1\}^n} \E_{U \in D} \mathcal{U}^{\otimes 3}(\ket{b}\bra{b}^{\otimes 3}) \left(\rho \otimes \mathcal{M}^{-1}(O_i) \otimes \mathcal{M}^{-1}(O_i^\dagger)\right) \right] - |\tr(O_i\rho)|^2,
\end{equation}
which can be upper bounded by the first term.
Thus, we see from Eqs.~\eqref{Mgeneral2} and~\eqref{Varoi} that while the measurement channel $\mathcal{M}$ in the classical shadows protocol depends on the chosen unitary distribution $D$ through the 2-fold twirl $\E_{U \sim D}\mathcal{U}^{\otimes 2}$, the variances of the resulting estimates depend on $D$ through the 3-fold twirl $\E_{U \sim D}\mathcal{U}^{\otimes 3}$. If median-of-means estimators (see e.g., \cite{lerasle2019lecture}) are used, it follows straightforwardly from Chebyshev's and Hoeffding's inequalities that 
\begin{equation} \label{Nsample} N_{\mathrm{sample}} = \mathcal{O}\left(\frac{\log (M/\delta)}{\eps^2} \max\limits_{i \in [M]} \Var[\hat{o}_i] \right) \end{equation} 
classical shadow samples suffice to estimate every $\tr(O_i\rho)$ to within additive error $\eps$ with probability at least $1-\delta$.\footnote{Specifically, to form the median-of-means estimator for each observable, we take $N_{\mathrm{sample}} = KL$ samples in total, where $K = \mathcal{O}(\log(M/\delta))$ and $L = \mathcal{O}(\max_{i \in [M]}\Var[\hat{o}_i]/\eps^2$) (cf. Step~2 of Algorithm~\ref{alg:overlaps}). Dividing into $K$ groups of $L$ samples each, we compute the sample mean of each group, then take the median of the sample means. The median-of-means estimator is employed to achieve logarithmic scaling with the inverse failure probability. If, for instance, a sample mean estimator were used instead, we would not be able to apply tighter concentration bounds such as Hoeffding's inequality, because estimates derived from the classical shadow are not in general bounded: like density operators, the classical shadow estimator satisfies $\tr[\hat{\rho}] = 1$, but unlike density operators, $\hat{\rho}$ is not positive semi-definite.}

It is important to note that the classical shadows framework does not in general provide a protocol that is efficient in terms of quantum and classical resources. Even in cases where there is an efficient procedure to sample unitaries $U$ from $D$ and implement them on a quantum computer, the variance of the estimates $\hat{o}_i$ may be large, necessitating a large number of copies of $\rho$ (if the number of samples is chosen according to Eq.~\eqref{Nsample}). In addition, one needs to be able to somehow classically compute $\tr(O_i \mathcal{M}^{-1}(U^\dagger\ket{b}\bra{b}U))$ for all $i$.
The particular classical shadows approach taken in \citen{Huggins2022} to implement QC-AFQMC, described in the following subsection, provides an example of a situation where the variance is small (in fact, bounded by a constant in the system size $n$), but the classical post-processing is inefficient (having a runtime exponential in $n$). One of the motivations behind the present work is to provide a protocol that has $\mathrm{poly}(n)$ quantum \textit{and} classical complexity for estimating expectation values of a large class of fermionic observables, including (but not limited to) those required for QC-AFQMC.

\subsection{Classical shadows applied to quantum-classical auxiliary-field quantum Monte Carlo (QC-AFQMC)} \label{sec:background_QCQMC}

In this subsection, we review one of the motivating use cases for the protocols we develop in this paper.
We begin by briefly discussing projector QMC techniques.
Formally, we can find the ground state of a Hamiltonian \(H\) by applying the imaginary time-evolution operator $e^{-\tau H}$ to some initial state $\ket{\psi_\textrm{init}}$ that has non-vanishing overlap with the true ground state $\ket{\psi_\textrm{ground}}$:
\begin{equation} \label{QMCgs}
    \ket{\psi_\textrm{ground}} \propto \lim_{\tau \rightarrow \infty} e^{-\tau H} \ket{\psi_\textrm{init}}.
\end{equation}
In order to avoid explicitly storing and manipulating exponentially large objects, projector QMC methods approximate this projection onto the ground state by implementing it stochastically.
In some cases, such as for unfrustrated systems of bosons, this yields a polynomially scaling procedure for computing the ground state energy and other properties.

However, when we consider systems containing multiple identical fermions, projector QMC methods are typically faced with the \textit{fermion sign problem}. 
In projector QMC methods based on second quantisation, such as \textit{auxiliary-field qantum Monte Carlo} (AFQMC)~\cite{Zhang2003-ql}, the sign problem manifests as an exponentially large variance in the estimator of the energy~\cite{Zhang2003-ql}.
When a polynomially scaling approach is desired, the fermion sign problem is usually controlled by applying constraints to the statistical samples of the RHS of Eq.~\eqref{QMCgs} using an approximation to the ground state referred to as the ``trial wavefunction'' \(\ket{\Psi_\textrm{trial}}\). Details vary between methods, but the basic idea is to modify the statistical samples so as to constrain their overlaps with the trial wavefunction to be positive. Implementing this constraint correctly for a statistical sample \(\ket{\varphi}\) requires calculating \(\braket{\Psi_\textrm{trial}}{\varphi}\).

In \citen{Huggins2022}, Huggins \textit{et al.} proposed and implemented a quantum-classical hybrid quantum Monte Carlo algorithm, which involved preparing the trial wavefunction $\ket{\Psi_{\textrm{trial}}}$ on a quantum computer, and using a classical shadows protocol to estimate its overlaps with the statistical samples.
\citen{Huggins2022} focused on AFQMC, where the statistical sample are Slater determinants $\ket{\varphi_i}$ in general single-particle bases. Given a quantum circuit \(U_\Psi\) that prepares \(\ket{\Psi_\textrm{trial}}\) from the vacuum state $\ket{\mathbf{0}}$, it is straightforward to use Hadamard tests to estimate the overlap between   \(\ket{\Psi_\textrm{trial}}\) and any Slater determinant \(\ket{\varphi_i}\).
However, evaluating these overlaps one at a time (and preparing $\ket{\Psi_{\text{trial}}}$ for each of the Hadamard tests performed for each overlap) for the large number of Slater determinants that arise in a typical AFQMC calculation could be prohibitively expensive.\footnote{Another difficulty with implementing a hybrid quantum-classical AFQMC scheme using Hadamard tests is that the Slater determinants whose overlaps are needed in later stages of the Monte Carlo calculation depend on the results of earlier overlap calculations. The necessary communication between the quantum and classical computers could lead to further slowdowns.}

In order to make an experiment on a near-term quantum computer feasible, Huggins \etal designed a protocol that involves first collecting a pre-determined number of classical shadow samples using the quantum device. 
Then, as the QMC calculation is run on the classical computer, the necessary overlaps can be evaluated classically using the stored shadow samples. 
More specifically, the protocol entails preparing the state 
\begin{equation*}
    \rho = \frac{1}{2} \left(\ket{\mathbf{0}} + \ket{\Psi_\textrm{trial}}\right)\left(\bra{\mathbf{0}} + \bra{\Psi_\textrm{trial}}\right),
\end{equation*}
and collecting classical shadow samples of it. 
On the classical computer, one specifies the operator
\begin{equation*}
    \ket{\varphi_i}\bra{\mathbf{0}} = \widetilde{a}_1^\dagger\dots \widetilde{a}_{\zeta}^\dagger \ket{\mathbf{0}}\bra{\mathbf{0}},
\end{equation*}
where the free parameters of \(\ket{\varphi_i}\) are contained in the choice of basis for the \(\widetilde{a}_k^\dagger\) operators (see subsection~\ref{sec:background_Gaussianstates}). Ref.~\citenum{Huggins2022} considers trial wavefunctions $\ket{\Psi_{\text{trial}}}$ that are eigenstates of the number operator, with eigenvalue $\zeta > 0$, and Slater determinants $\ket{\varphi_i}$ with the same number of particles $\zeta$. 
(In Appendix~\ref{app:overlaps_Slater} we discuss how to relax this requirement and allow for \(\ket{\Psi_\textrm{trial}}\) that does not have a fixed particle number.) In this case, we have $\braket{\Psi_{\text{trial}}}{\mathbf{0}} = \braket{\varphi_i}{\mathbf{0}} = 0$, so 
the overlap of interest can be expressed as
\begin{equation*}
    \braket{\Psi_{\textrm{trial}}}{\varphi_i} = 2 \tr\left(\ket{\varphi_i}\bra{\mathbf{0}}\rho\right).
\end{equation*}

Hence, the overlap \(\ket{\Psi_\textrm{trial}}\) can be estimated by estimating the expectation value of $\ket{\varphi_i}\bra{\mathbf{0}}$ with respect to the state $\rho$. This can in principle be done using the classical shadow samples of \(\rho\). In the particular classical shadows protocol used by Ref.~\citenum{Huggins2022}, the distribution $D$ (see subsection~\ref{sec:background_shadows}) is the uniform distribution over the $n$-qubit Clifford group. The variances of the resulting estimators were analysed in~\cite{Huang2020}, and it is straightforward to show that the variance for estimating the expectation value of $\ket{\varphi_i}\bra{\mathbf{0}}$ is bounded above by a constant. However, the cost of classically post-processing the classical shadow samples to obtain these estimates appears to scale exponentially with the system size $n$. Explicitly, 
Clifford classical shadow estimates $o_i$ of the expectation value of $\ket{\varphi_i}\bra{\mathbf{0}}$ are of the form
\begin{equation*}
    o_i = \left(2^{n} + 1\right) \bra{b}U \ket{\varphi_i}\bra{\mathbf{0}}U^\dagger\ket{b},
\end{equation*}
where $U$ is a Clifford unitary and $\ket{b}$ is a computational basis state.
In order to evaluate $o_i$ to within a constant additive error, we would therefore need to calculate $\bra{b}U \ket{\varphi_i}\bra{\mathbf{0}}U^\dagger\ket{b}$ up to an additive error that is exponentially small in \(n\). It is not clear how to evaluate the $\bra{b}U \ket{\varphi_i}$ component of this expression with the necessary degree of precision without resorting to methods that scale exponentially with \(n\) in the general case.

In the experimental implementation of QC-AFQMC in \citen{Huggins2022}, Clifford classical shadows were used despite this exponential complexity of the classical post-processing, 
because the system sizes considered were sufficiently small. 
The techniques we present in this work, based on classical shadows from different ensembles of unitaries, will allow this exponentially costly step to be removed in future implementations of QC-AFQMC.

\section{Summary} \label{sec:summary}

In this section, we present an overview of the key results of this paper. We provide references to sections that contain the proofs and additional details. All of the notation and background concepts we use in this section are explained in Section~\ref{sec:background}.

\subsection{Random matchgate circuits} \label{sec:summary_random_matchgates}
 
In this work, we consider the classical shadows resulting from two different distributions over matchgate circuits. As discussed in subsection~\ref{sec:background_matchgates}, matchgate circuits correspond to fermionic Gaussian unitaries via the Jordan-Wigner transformation, and form a continuous group $\mathrm{M}_n$ which is in one-to-one correspondence with the orthogonal group $\orth$ (if we ignore global phases): 
\begin{equation} \label{Mn} \mathrm{M}_n = \{U_Q: Q \in \orth\}. \end{equation}
The first distribution we study is the ``uniform'' distribution over $\mathrm{M}_n$, where uniformity is more precisely given by the normalised Haar measure $\mu$ on $\orth$. The second distribution is the uniform distribution over the discrete subgroup of $\mathrm{M}_n$ consisting only of matchgate circuits that are also in the $n$-qubit Clifford group $\mathrm{Cl}_n$. From Eq.~\eqref{UQdef}, these coincide with the group $\mathrm{B}(2n)$ of $2n \times 2n$ signed permutation matrices:
\begin{equation} \label{MnCl} \mathrm{M}_n \cap \mathrm{Cl}_n = \{U_Q: Q \in \mathrm{B}(2n)\}. \end{equation}
We explain how to efficiently sample from these two distributions in Appendix~\ref{app:random_circuit_sampling}.

For $j \in \mathbb{Z}_{>0}$, we use $\mathcal{E}^{(j)}_{\mathrm{M}_n}$ and $\mathcal{E}^{(j)}_{\mathrm{M}_n \cap \mathrm{Cl}_n}$ to denote the $j$-fold twirl channels corresponding to the distributions over $\mathrm{M}_n$ and $\mathrm{M}_n \cap \mathrm{Cl}_n$, respectively:
\begin{equation} \label{EjMn} \mathcal{E}_{\mathrm{M}_n}^{(j)} \coloneqq \int_{\orth} d\mu(Q)\, \mathcal{U}_Q^{\otimes j} \end{equation}
\begin{equation} \label{EjMnCln} \mathcal{E}_{\mathrm{M}_n \cap \mathrm{Cl}_n}^{(j)} \coloneqq \frac{1}{|\mathrm{B}(2n)|} \sum_{Q \in \mathrm{B}(2n)} \mathcal{U}_Q^{\otimes j}, \end{equation}
where \begin{equation} \label{mathcalUQ} \mathcal{U}_Q(\, \cdot \,) \coloneqq U_Q^\dagger (\,\cdot \,)U_Q \end{equation} denotes the unitary channel for the Gaussian unitary $U_Q^\dagger$. 
Since the measurement channel $\mathcal{M}$ in the classical shadows procedure and the variance of the estimates obtained from the classical shadows are determined by the 2- and 3-fold twirls [see Eqs.~\eqref{Mgeneral2} and~\eqref{Varoi}], our first step is to evaluate $\mathcal{E}^{(j)}_{\mathrm{M}_n}$ and $\mathcal{E}^{(j)}_{\mathrm{M}_n \cap \mathrm{Cl}_n}$ for $j$ up to $3$:
\begin{theorem}[First three moments of uniform distributions over $\mathrm{M}_n$ and $\mathrm{M}_n \cap \mathrm{Cl}_n$] \label{thm:moments} Let $\mathcal{E}^{(j)}_{\mathrm{M}_n}, \mathcal{E}^{(j)}_{\mathrm{M}_n  \cap \mathrm{Cl}_n} \in \LLHn^{\otimes j}$ be defined as in Eqs.~\eqref{EjMn} and~\eqref{EjMnCln}. Then, we have
\begin{align} 
    {(i) \enspace} &\mathcal{E}^{(1)}_{\mathrm{M}_n} = \mathcal{E}^{(1)}_{\mathrm{M}_n \cap \mathrm{Cl}_n} = \kett{I}\braa{I}, \nonumber \\
    {(ii) \enspace} & \mathcal{E}^{(2)}_{\mathrm{M}_n} = \mathcal{E}^{(2)}_{\mathrm{M}_n \cap \mathrm{Cl}_n} = \sum_{k = 0}^{2n} \kett{\Upsilon^{(2)}_k}\braa{\Upsilon^{(2)}_k}, \nonumber \\
    {(iii) \enspace} & \mathcal{E}^{(3)}_{\mathrm{M}_n} = \mathcal{E}^{(3)}_{\mathrm{M}_n \cap \mathrm{Cl}_n} =\sum_{\substack{k_1,k_2,k_3 \geq 0 \\ k_1 + k_2 + k_3 \leq 2n}}\kett{\Upsilon^{(3)}_{k_1,k_2,k_3}}\braa{\Upsilon^{(3)}_{k_1,k_2,k_3}},\nonumber\\
    \noalign{\noindent where} 
    &\quad \kett{\Upsilon^{(2)}_k} \coloneqq {2n \choose k}^{-1/2} \sum_{\substack{S \subseteq [2n] \label{Psi2}\\ |S|= k}} \kett{\gamma_S}\kett{\gamma_S}, \\ 
    &\quad \kett{\Upsilon^{(3)}_{k_1,k_2,k_3}} \coloneqq {{2n \choose k_1,k_2, k_3,2n-k_1-k_2-k_3}}^{-1/2} \sum_{\substack{A_1,A_2, A_3 \subseteq [2n] \text{ } \mathrm{ disjoint} \\ |A_1| = k_1, |A_2| = k_2, |A_3| = k_3} }   \kett{\gamma_{A_1}\gamma_{A_2}}\kett{\gamma_{A_2}\gamma_{A_3}} \kett{\gamma_{A_3}\gamma_{A_1}}. \label{Psi3}
\end{align}
\end{theorem}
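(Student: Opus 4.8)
\textit{Proof proposal.} The plan is to view both $j$-fold twirls as orthogonal projectors and reduce the theorem to the equality of two invariant subspaces. Since conjugation by a unitary preserves the Hilbert--Schmidt inner product, $Q\mapsto\mathcal U_Q^{\otimes j}$ is a unitary representation of $\orth$ on $\mathcal L(\mathcal H_n)^{\otimes j}$, and by the standard group-averaging argument $\mathcal E^{(j)}_{\mathrm{M}_n}$ and $\mathcal E^{(j)}_{\mathrm{M}_n\cap\mathrm{Cl}_n}$ are the orthogonal projectors onto the subspaces of vectors fixed by all of $\orth$, resp.\ all of $\mathrm{B}(2n)$. As $\mathrm{B}(2n)\subseteq\orth$, the first subspace is contained in the second, so it suffices to show: (i) the vectors $\kett{\Upsilon^{(j)}_{\cdots}}$ listed in the theorem form an orthonormal family lying in the $\orth$-invariant subspace, and (ii) the $\mathrm{B}(2n)$-invariant subspace has dimension at most the number of those vectors. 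Then $\mathrm{span}\{\Upsilon^{(j)}\}\subseteq(\text{$\orth$-invariants})\subseteq(\text{$\mathrm{B}(2n)$-invariants})$ all coincide, and each twirl is the projector onto this common subspace, namely $\sum\kett{\Upsilon^{(j)}_{\cdots}}\braa{\Upsilon^{(j)}_{\cdots}}$. Throughout I would use the $\orth$-module identification $\LHn\cong\Lambda(\mathbb C^{2n})$, $\kett{\gamma_S}\mapsto e_S$, under which $\mathcal U_Q$ acts on $\Gamma_k$ as (an) exterior power of $Q$ --- this is exactly the content of Eq.~\eqref{UQgammaS}.

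For (i), orthonormality is routine: distinct index tuples place the corresponding $\Upsilon$-vectors in distinct graded blocks $\Gamma_{m_1}\otimes\cdots\otimes\Gamma_{m_j}$ (the block determines, and is determined by, the tuple), and within a block the squared norm of the unnormalised sum equals the number of its terms, since the cross terms vanish --- a disjoint triple $(A_1,A_2,A_3)$ is recovered from the three pairwise unions $A_1\cup A_2$, $A_2\cup A_3$, $A_3\cup A_1$ --- which is precisely the (multi)nomial coefficient in the stated normalisation. For $\orth$-invariance, the cases $j=1$ ($\Upsilon^{(1)}=I$) and $j=2$ are immediate: $\sum_{|S|=k}\gamma_S\otimes\gamma_S$ is fixed by $\mathcal U_Q^{\otimes 2}$ because $\sum_{|S|=k}\det(Q|_{S,S'})\det(Q|_{S,S''})=\delta_{S',S''}$ (orthogonality of the $k$th exterior power). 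The case $j=3$ is the crux. Note that the graded components of $\mathcal G_3:=\sum_{A_1,A_2,A_3\ \mathrm{disjoint}}\gamma_{A_1}\gamma_{A_2}\otimes\gamma_{A_2}\gamma_{A_3}\otimes\gamma_{A_3}\gamma_{A_1}$ are exactly the unnormalised $\Upsilon^{(3)}_{k_1,k_2,k_3}$, so it is enough to show $\mathcal G_3$ is $\orth$-invariant. Transporting to $\Lambda(\mathbb C^{2n})$ and using $e_A\wedge e_B=0$ whenever $A\cap B\neq\varnothing$, one may drop the disjointness constraint: $\mathcal G_3$ becomes the image of $\sum_{A_1,A_2,A_3\subseteq[2n]}(e_{A_1}\otimes e_{A_2})\otimes(e_{A_2}\otimes e_{A_3})\otimes(e_{A_3}\otimes e_{A_1})\in(\Lambda\mathbb C^{2n})^{\otimes 6}$ under the $\orth$-equivariant map $x_1\otimes\cdots\otimes x_6\mapsto(x_1\wedge x_2)\otimes(x_3\wedge x_4)\otimes(x_5\wedge x_6)$. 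Reorganising the six tensor factors according to which of $A_1,A_2,A_3$ they carry, this $6$-fold sum equals $\Omega\otimes\Omega\otimes\Omega$ with $\Omega:=\sum_{A\subseteq[2n]}e_A\otimes e_A$, which is manifestly $\orth$-invariant (again orthogonality of exterior powers); equivariance of the wedge map then gives invariance of $\mathcal G_3$.

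For (ii), I would use that $\mathrm{B}(2n)$ is generated by coordinate sign flips $e_\mu\mapsto-e_\mu$ and coordinate permutations. In the basis $\{\gamma_{S_1}\otimes\cdots\otimes\gamma_{S_j}\}$, the sign flip at $\mu$ multiplies a basis element by $(-1)$ raised to the number of the $S_i$ containing $\mu$; invariance therefore confines any invariant vector to the span of those basis elements in which every coordinate lies in an even number of the $S_i$. For $j=1$ this forces $S_1=\varnothing$; for $j=2$ it forces $S_1=S_2$; for $j=3$ it forces each coordinate to lie in exactly $0$ or $2$ of $S_1,S_2,S_3$, so such configurations are parametrised by ordered triples of pairwise disjoint sets $(A_1,A_2,A_3)$ via $S_1=A_1\sqcup A_2$, $S_2=A_2\sqcup A_3$, $S_3=A_3\sqcup A_1$. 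Coordinate permutations act transitively (up to signs) on the basis of each such ``surviving'' block, so the invariant subspace there is at most one-dimensional; since part (i) has already produced a nonzero invariant $\Upsilon^{(j)}$ in each such block, and these blocks are in bijection with the index tuples in the theorem (a disjoint triple of sizes $(k_1,k_2,k_3)$ exists iff $k_1+k_2+k_3\le 2n$), the dimension is exactly the number of $\Upsilon$-vectors, completing the reduction.

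The main obstacle is the $\orth$-invariance of $\Upsilon^{(3)}_{k_1,k_2,k_3}$: verifying it directly against Eq.~\eqref{UQgammaS} drowns one in sums of products of minors of $Q$, and an infinitesimal check via the Leibniz rule for $\mathrm{Lie}(\mathrm{SO}(2n))$ is workable but fiddly. The exterior-algebra manoeuvre above --- rewriting the rigid, disjointness-constrained sum as the image, under an equivariant multiplication, of a tensor product of manifestly invariant ``identity'' tensors, with disjointness enforced for free by $e_A\wedge e_B=0$ --- is the device that makes this step clean, and it is the part I would want to pin down first. (It also makes transparent why the result is special to $j\le 3$: with four or more tensor factors the sign-flip condition no longer isolates a single permutation orbit per block, and $\mathrm{B}(2n)$ fails to be a matchgate $4$-design.)
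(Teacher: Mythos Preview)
Your proposal is correct. The overall scaffolding---both twirls are orthogonal projectors onto their invariant subspaces (the paper's Fact~\ref{fact:Eprojector}), sign flips in $\mathrm{B}(2n)$ cut the support down to basis elements in which each index appears in an even number of the $S_i$, and coordinate permutations act transitively on each resulting block so that its invariant subspace is at most one-dimensional---is essentially the content of the paper's Lemmas~\ref{lem:E2} and~\ref{lem:E3}, just reorganised as a dimension count rather than a direct computation of the projector. The genuine difference lies in how you verify that $\kett{\Upsilon^{(3)}_{k_1,k_2,k_3}}$ is $\orth$-invariant. The paper (Appendix~\ref{app:E3}) reduces to generators of $\orth$, checks the single reflection by inspection, and for each Givens rotation $G_\mu(\theta)$ shows that $\frac{\partial}{\partial\theta}\,\mathcal{U}_{G_\mu(\theta)}^{\otimes 3}\kett{\Upsilon^{(3)}}\big|_{\theta=0}=0$ by a case analysis on how the indices $\mu,\mu{+}1$ are distributed among $A_1,A_2,A_3$---precisely the infinitesimal route you label ``workable but fiddly.'' Your alternative, realising the full sum $\mathcal{G}_3$ as the image under the equivariant multiplication map $\Lambda^{\otimes 6}\to\Lambda^{\otimes 3}$ of a factor-permuted copy of $\Omega^{\otimes 3}$ (with the disjointness constraint absorbed by $e_A\wedge e_B=0$ for $A\cap B\neq\varnothing$), is a cleaner and more structural argument: it sidesteps the casework entirely and, as you note, makes visible why $j\le 3$ is special. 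In exchange, the paper's approach needs no apparatus beyond Eq.~\eqref{UQgammaS} and is perhaps the more natural route if one first guesses the form of $\kett{\Upsilon^{(3)}}$ from the $\mathrm{B}(2n)$-symmetry and only then checks $\orth$-invariance.
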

We prove Theorem~\ref{thm:moments} in subsection~\ref{sec:moments} (see also subsection~\ref{sec:Liouville} for an explanation of the Liouville representation notation used here). In addition to providing explicit expressions for the relevant twirl channels, Theorem~\ref{thm:moments} shows in particular that the third moments of the two distributions are equal. Thus, the \textit{discrete} ensemble of Clifford matchgate circuits is a 3-design for the \textit{continuous} Haar-uniform distribution over all matchgate circuits, in the same way that the Clifford group is a unitary 3-design~\cite{Zhu2017-xf}. We can state this result informally as:
\begin{corollary} \label{cor:3design} The group of matchgate circuits that are also Clifford unitaries forms a ``matchgate 3-design.''
\end{corollary}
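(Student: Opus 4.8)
The plan is to treat Corollary~\ref{cor:3design} as a definitional consequence of Theorem~\ref{thm:moments}, so the first task is to fix precisely what the (non-standard) phrase ``matchgate $t$-design'' should mean. Following the standard notion of a unitary $t$-design---a distribution over unitaries whose $t$-fold twirl channel agrees with that of the Haar measure on the full unitary group---I would define a matchgate $t$-design to be any distribution supported on matchgate circuits whose $j$-fold twirl channel $\int d\nu\,\mathcal{U}^{\otimes j}$ coincides with the corresponding twirl $\mathcal{E}^{(j)}_{\mathrm{M}_n}$ of the Haar-uniform distribution over the full matchgate group $\mathrm{M}_n$, for every $j \in \{1,\dots,t\}$. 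The candidate design is the uniform distribution over the finite set $\mathrm{M}_n \cap \mathrm{Cl}_n$, which by Eq.~\eqref{MnCl} equals $\{U_Q : Q \in \mathrm{B}(2n)\}$; since this set is a finite group, the uniform distribution is well-defined and its $j$-fold twirl is exactly the finite average $\mathcal{E}^{(j)}_{\mathrm{M}_n \cap \mathrm{Cl}_n}$ of Eq.~\eqref{EjMnCln}.

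With the definition in place, the verification is immediate: parts (i), (ii), and (iii) of Theorem~\ref{thm:moments} assert exactly the equalities $\mathcal{E}^{(j)}_{\mathrm{M}_n} = \mathcal{E}^{(j)}_{\mathrm{M}_n \cap \mathrm{Cl}_n}$ for $j = 1, 2, 3$. These are precisely the three conditions required by the definition with $t = 3$, so the uniform distribution over $\mathrm{M}_n \cap \mathrm{Cl}_n$ is a matchgate $3$-design. I would then record that this justifies the colloquial terminology, in direct analogy with the fact that the uniform distribution over the full Clifford group $\mathrm{Cl}_n$ is a unitary $3$-design~\cite{Zhu2017-xf,Webb2015-nv}.

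The only point requiring care---and the closest thing to an obstacle---is definitional rather than computational: one must confirm that matching the first three twirl channels is the correct and complete content of the $3$-design property, and in particular that nothing beyond the $j \leq 3$ moments needs to be checked. This is exactly how $t$-designs are defined, so no additional verification is needed. I would also add a brief remark explaining why the name stops at $3$: the corollary neither claims nor requires that the ensemble be a $4$-design, which would demand the separate equality $\mathcal{E}^{(4)}_{\mathrm{M}_n} = \mathcal{E}^{(4)}_{\mathrm{M}_n \cap \mathrm{Cl}_n}$ not addressed here; the designation ``$3$-design'' simply reflects the largest moment for which Theorem~\ref{thm:moments} establishes agreement, which is also all that the classical shadows application needs.
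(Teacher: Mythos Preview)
Your proposal is correct and matches the paper's own treatment: the paper does not give a separate proof of Corollary~\ref{cor:3design} but simply states it as an informal restatement of the equalities $\mathcal{E}^{(j)}_{\mathrm{M}_n} = \mathcal{E}^{(j)}_{\mathrm{M}_n\cap\mathrm{Cl}_n}$ for $j\le 3$ established in Theorem~\ref{thm:moments}, invoking the analogy with the Clifford group being a unitary $3$-design. Your write-up is, if anything, more explicit than the paper's, since you spell out the intended definition of ``matchgate $t$-design'' before checking it.
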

This is a general result that can be applied in any context that involves up to the third moment of uniformly random matchgate circuits. In the specific context of classical shadows, it implies that the discrete and continuous ensembles we defined above lead to the same measurement channel and variances, so we can in principle use either ensemble and obtain the same results. In addition to possible practical implications (e.g., it may be easier to sample and implement unitaries from one distribution than the other, depending on hardware capabilities), from a mathematical perspective, the 3-design property is useful in that it allows us to use the additional symmetry of the full matchgate group to more easily analyse the discrete ensemble. We will see explicit examples of this in the following subsections. 

\subsection{Matchgate shadows} \label{sec:summary_shadows}
Theorem~\ref{thm:moments} allows us to characterise the classical shadows associated with the two distributions over matchgate circuits described above. We will refer to these colloquially as ``matchgate classical shadows'' or more simply ``matchgate shadows.''  Substituting the expression for $\mathcal{E}_{\mathrm{M}_n}^{(2)} = \mathcal{E}_{\mathrm{M}_n \cap \mathrm{Cl}_n}^{(2)}$ from Theorem~\ref{thm:moments}(ii) for $\E_{U \sim D} \mathcal{U}^{\otimes 2}$ in Eq.~\eqref{Mgeneral2}, we show in subsection~\ref{sec:shadows_channel} that the classical shadows measurement channel $\mathcal{M}$ (for both distributions) is given by\footnote{The reader may notice that this is the same channel as that in \citen{Zhao2021-fv}, which uses a different distribution over matchgate circuits; see subsection~\ref{sec: comparison prior} for a detailed discussion and comparison.}
\begin{equation} \label{matchgateshadowschannel} \mathcal{M} = \sum_{\ell =0}^{n}{n\choose \ell} {2n \choose 2\ell}^{-1} \mathcal{P}_{2\ell},\end{equation}
where for $k \in \{0,\dots, n\}$, $\mathcal{P}_k \in \LLHn$ denotes the projector onto the subspace $\Gamma_k$ of $\LHn$ spanned by all products of $k$ Majorana operators, i.e., in Liouville representation,
\begin{equation} \label{Pk} \mathcal{P}_k \coloneqq \sum_{S \in {[2n]\choose k}} \kett{\gamma_S}\braa{\gamma_S}. \end{equation}
Consequently, the image of $\mathcal{M}$ is the subspace $\Gamma_{\text{even}} \coloneqq \bigoplus_{\ell=0}^{n} \Gamma_{2\ell}$
of $\LHn$ consisting of even operators, and the (pseudo)inverse $\mathcal{M}^{-1}: \Gamma_{\text{even}} \to \Gamma_{\text{even}}$ on this subspace is 
\begin{equation} \label{Minverse}
    \mathcal{M}^{-1} = \sum_{\ell = 0}^{n} {2n\choose 2\ell} {n\choose \ell}^{-1} \mathcal{P}_{2\ell}.
\end{equation}
Since $U_Q^\dagger \ket{b}\bra{b}U_Q \in \Gamma_{\text{even}}$ for any $Q \in \orth$ and computational basis state $\ket{b}$ (see Eqs.~\eqref{bb} and~\eqref{UQgammaS}), our matchgate shadow samples $\mathcal{M}^{-1}(U_Q^\dagger\ket{b}\bra{b}U_Q)$ are well-defined. Furthermore, it follows from the fact that $\Gamma_{\text{even}}$ is Hilbert-Schmidt orthogonal to its complement  $\Gamma_{\text{odd}} \coloneqq \bigoplus_{\ell = 1}^n \Gamma_{2n-1}$ that these classical shadows produce unbiased estimates of the expectation values $\tr( O_1\rho),\dots,\tr( O_M\rho)$ provided that \textit{either} the state $\rho$ is in $\Gamma_{\text{even}}$, or the observables $O_1, \dots, O_M$ are in $\Gamma_{\text{even}}$. Indeed, for many physical problems, the observables of interest are even operators, due to fermionic parity conservation. In our particular application to QC-AFQMC (see subsection~\ref{sec:background_QCQMC}, and~\ref{sec:summary_Slater} below), the starting state $\rho$ and relevant observables are all even operators or can all be made even, as shown in Appendix~\ref{app:overlaps}.

Supposing that $O \in \Gamma_{\text{even}}$ (and hence $O^\dagger \in \Gamma_{\text{even}}$), we can substitute the expression for $\mathcal{E}_{\mathrm{M}_n}^{(3)} = \mathcal{E}_{\mathrm{M}_n \cap \mathrm{Cl}_n}^{(3)}$ from Theorem~\ref{thm:moments}(iii) for $\E_{U \sim D} \mathcal{U}^{\otimes 3}$ in Eq.~\eqref{Varoi}, leading to
\begin{align} \label{variancebound} \mathrm{Var}[\hat{o}] \leq \E[|\hat{o}|^2] = \frac{1}{2^{2n}}\sum_{\substack{\ell_1,\ell_2,\ell_3 \geq 0 \\ \ell_1+ \ell_2 + \ell_3 \leq n}} \alpha_{\ell_1,\ell_2,\ell_3}\sum_{\substack{A_1, A_2, A_3 \subseteq [2n] \text{ disjoint} \\ |A_1| = 2\ell_1, |A_2| = 2\ell_2, |A_3| = 2\ell_3}} \tr\left(\rho \gamma_{A_1}\gamma_{A_2}\right) \tr\left(O\gamma_{A_2}\gamma_{A_3}\right)\tr\left(O^\dagger \gamma_{A_3}\gamma_{A_1}\right),
\end{align}
with 
\begin{equation} \label{alphadef} \alpha_{\ell_1,\ell_2,\ell_3} \coloneqq \frac{{n\choose \ell_1,\ell_2,\ell_3, n-\ell_1-\ell_2-\ell_3}}{{2n\choose 2\ell_1,2\ell_2,2\ell_3, 2(n-\ell_1-\ell_2-\ell_3)}}\frac{{2n\choose 2(\ell_1 + \ell_3)}}{{n\choose \ell_1 + \ell_3}} \frac{{2n\choose 2(\ell_2 + \ell_3)}}{{n\choose \ell_2 + \ell_3}}, \end{equation}
for the variance of the estimator $\hat{o}$ for $\tr(O\rho)$ that we obtain from (a single sample of) our matchgate shadows. Eq.~\eqref{variancebound}, proven in subsection~\ref{sec:shadows_variance}, is expressed in terms of a particular set of Majorana operators $\gamma_{\mu}$. However, note from Eq.~\eqref{EjMn} that $\mathcal{E}^{(3)}_{\mathrm{M}_n}$ is invariant under composition with any Gaussian unitary channel $\mathcal{U}_Q^{\otimes 3}$, and by Corollary~\ref{cor:3design}, so is $\mathcal{E}^{(3)}_{\mathrm{M}_n \cap\mathrm{Cl}_n}$. This symmetry can be used to show that, for both the continuous ensemble $\mathrm{M}_n$ and the discrete ensemble $\mathrm{M}_n \cap \mathrm{Cl}_n$, we can in fact replace the $\gamma_\mu$'s in Eq.~\eqref{variancebound} with \textit{any} other basis of Majorana operators, i.e., for any $Q \in \orth$, we can write
\begin{equation} \label{varianceboundtilde} \E[|\hat{o}|^2] = \frac{1}{2^{2n}}\sum_{\substack{\ell_1,\ell_2,\ell_3 \geq 0 \\ \ell_1+ \ell_2 + \ell_3 \leq n}} \alpha_{\ell_1,\ell_2,\ell_3}\sum_{\substack{A_1, A_2, A_3 \subseteq [2n] \text{ disjoint} \\ |A_1| = 2\ell_1, |A_2| = 2\ell_2, |A_3| = 2\ell_3}} \tr\left(\rho \widetilde\gamma_{A_1}\widetilde\gamma_{A_2}\right) \tr\left(O\widetilde\gamma_{A_2}\widetilde\gamma_{A_3}\right)\tr\left(O^\dagger \widetilde\gamma_{A_3}\widetilde\gamma_{A_1}\right) \end{equation}
where $\widetilde{\gamma}_\mu = \sum_{\nu = 1}^{2n} Q_{\mu\nu}\gamma_\mu$. The freedom to choose the Majorana basis in Eq.~\eqref{varianceboundtilde} (which is not \textit{a priori} obvious for $\mathrm{M}_n \cap \mathrm{Cl}_n$, without Corollary~\ref{cor:3design}) will allow us to more easily bound the variance for large classes of fermionic observables. We can also obtain a bound that does not depend on our unknown state $\rho$ from Eq.~\eqref{varianceboundtilde} by applying a triangle inequality and noting that $|\tr(\rho \widetilde\gamma_{A_1}\widetilde\gamma_{A_2})| \leq \|\widetilde\gamma_{A_1}\widetilde\gamma_{A_2}\| \leq 1$ for any $A_1, A_2 \subseteq [n]$:
\begin{align} \label{variancebound2}
\mathrm{Var}[\hat{o}] &\leq  \frac{1}{2^{2n}} \sum_{\substack{\ell_1,\ell_2,\ell_3 \geq 0 \\ \ell_1 + \ell_2 + \ell_3 \leq n}}\alpha_{\ell_1,\ell_2,\ell_3} \sum_{\substack{A_1, A_2, A_3 \subseteq [2n] \text{ disjoint} \\ |A_1| = 2\ell_1, |A_2|=2\ell_2,|A_3|=2\ell_3}} \big|\tr(O\widetilde\gamma_{A_2\cup A_3})\tr(O \widetilde\gamma_{A_3 \cup A_1})\big| \\
&= \frac{1}{2^{2n}} \sum_{\substack{S_1, S_2 \subseteq [2n] \\ |S_1|, |S_2|, |S_1 \cap S_2| \text{ even}}} \alpha_{\frac{1}{2}|S_2\setminus S_1|, \frac{1}{2}|S_1\setminus S_2|,\frac{1}{2}|S_1 \cap S_2|} \big|\tr(O \widetilde{\gamma}_{S_1}) \tr(O\widetilde{\gamma}_{S_2})\big|. \nonumber
\end{align} 

Eqs.~\eqref{Minverse} and~\eqref{variancebound2} characterise the classical shadows obtained from performing random measurements corresponding to either of our two distributions over matchgate circuits ($\mathrm{M}_n$ or $\mathrm{M}_n \cap\mathrm{Cl}_n$), but they should be viewed only as a starting point. To provide a viable protocol for estimating the expectation values of some observables $O_1,\dots, O_M$, we must also be able to 1) efficiently compute (on a classical computer) the expectation values $\tr(O_i \mathcal{M}^{-1}(U_Q^\dagger\ket{b}\bra{b}U_Q))$ of each $O_i$ with respect to any classical shadow sample $\mathcal{M}^{-1}(U_Q^\dagger\ket{b}\bra{b}U_Q)$, and 2) efficiently compute a bound on the variance for each $O_i$, so as to determine the number of samples needed to achieve a given precision. In the following subsection, we describe efficient computation schemes and analyse the variance for three general classes of observables.

\subsection{Applications} 
\label{sec:summary_applications}

\subsubsection{Local fermionic observables} \label{sec:summarylocal}

First, as a simple example, we consider observables that are products of an even number of Majorana operators, i.e., $O = \widetilde{\gamma}_S$ for $S \subseteq [2n]$ with $|S|$ even, where $\widetilde{\gamma}_\mu = \sum_{\mu = 1}^{2n} Q'_{\mu\nu}\gamma_\nu$ for some $Q'\in \orth$. The expectation value of $\widetilde{\gamma}_S$ with respect to a classical shadow sample $\mathcal{M}^{-1}(U_Q^\dagger \ket{b}\bra{b}U_Q)$ can be computed as
\begin{equation} \label{computegammaS} \tr\left(\widetilde{\gamma}_S \mathcal{M}^{-1}(U_Q^\dagger\ket{b}\bra{b}U_Q)\right) = {2n\choose |S|}{n\choose |S|/2}^{-1} \Pf\left(i (Q' Q^{\mathrm{T}} C_{\ket{b}} Q {Q'}^{\mathrm{T}})\big|_{S}\right), \end{equation}
where $\Pf$ denotes the Pfaffian, $C_{\ket{b}}$ is the covariance matrix of the computational basis state $\ket{b}$, given in Eq.~\eqref{covarianceb}, and $M\big|_S$ denotes the restriction of a matrix $M$ to rows and columns indexed by $S$. Eq.~\eqref{computegammaS} follows directly from the form of our inverse channel $\mathcal{M}^{-1}$  [Eq.~\eqref{Minverse}] together with Wick's theorem, and is efficiently computable since the Pfaffian of a $2n\times 2n$ matrix can be computed in $\mathcal{O}(n^3)$ time (see e.g., \citen{Wimmer2012-jo}). From Eq.~\eqref{varianceboundtilde} or~\eqref{variancebound2}, the variance of the estimates for $\tr(\widetilde{\gamma}_S\rho)$ is bounded by ${2n\choose |S|}{n\choose |S|/2}^{-1}$ (to see this, observe that the only non-vanishing term in the sum corresponds to $A_1 = A_2 = \varnothing$ and $A_3 = S$), which scales as $n^{|S|/2}$ for constant $|S|$. Thus, the expectation values of local fermionic observables can be efficiently estimated using the classical shadows we obtain from either $\mathrm{M}_n$ or $\mathrm{M}_n \cap \mathrm{Cl}_n$.

\subsubsection{Gaussian density matrices} \label{sec:summary_densityoperators}

Next, we consider observables that are density operators of fermionic Gaussian states, i.e., $O = \varrho$, where $\varrho$ has the form in Eq.~\eqref{varrhodef}. In the case where $\varrho$ or the unknown state $\rho$ is a pure state, the expectation value $\tr(\varrho\rho)$ of $\varrho$ with respect to $\rho$ gives the fidelity between $\rho$ and $\varrho$. From Eq.~\eqref{Minverse}, the expectation value of $\varrho$ with respect to a classical shadow sample $\mathcal{M}^{-1}(U_Q^\dagger\ket{b}\bra{b}U_Q)$, which gives an unbiased estimate for $\tr(\varrho\rho)$, is 
\begin{equation} \label{estimatedensityoperators} \tr\left(\varrho \mathcal{M}^{-1}(U_Q^\dagger\ket{b}\bra{b}U_Q) \right) = \sum_{\ell = 0}^{n} {2n\choose 2\ell}{n\choose \ell}^{-1} \tr\left( \varrho \mathcal{P}_{2\ell}(U_Q^\dagger\ket{b}\bra{b}U_Q)\right). \end{equation}
Theorem~\ref{prop:densityoperators}, stated for a special case below and in full generality in subsection~\ref{sec:compute_densityoperators}, allows us to efficiently compute $\tr(\varrho \mathcal{P}_{2\ell}(U_Q^\dagger\ket{b}\bra{b}U_Q))$ for every $\ell$, and hence $\tr(\varrho \mathcal{M}^{-1}(U_Q^\dagger\ket{b}\bra{b}U_Q) )$, for any Gaussian state $\varrho$. Note that $U_Q^\dagger\ket{b}\bra{b}U_Q$ is also a Gaussian state, for any $U_Q \in \mathrm{M}_n$ and computational basis state $\ket{b}$.
\begingroup
\def\thetheorem{2*}
\begin{theorem}[specialised to invertible $C_{\varrho_1}$] \makeatletter\def\@currentlabel{1*}\makeatother\label{prop:densityoperators_short}
For any $n \in \mathbb{Z}_{>0}$, let $\varrho_1$ and $\varrho_2$ be density operators of $n$-mode fermionic Gaussian states $\mathrm{(\text{Eq.}~\eqref{varrhodef})}$, with covariance matrices $C_{\varrho_1}$ and $C_{\varrho_2}$ $\mathrm{(\text{Eq.}~\eqref{covariancematrix})}$. Then, for each $\ell \in \{0,\dots, n\}$, $\tr(\varrho_1\mathcal{P}_{2\ell}(\varrho_2))$ is the coefficient of $z^\ell$ in the polynomial $p_{\varrho_1,\varrho_2}(z)$, where
\begin{equation} \label{pinvertible} p_{\varrho_1,\varrho_2}(z) = \frac{1}{2^n} \Pf\big(C_{\varrho_1}\big) \Pf\left( -C_{\varrho_1}^{-1} + zC_{\varrho_2}\right)  \end{equation}
if $C_{\varrho_1}$ is invertible.
\end{theorem}
\endgroup

We give the form of $p_{\rho_1,\rho_2}(z)$ for the general case, where $C_{\varrho_1}$ is not necessarily invertible, in subsection~\ref{sec:compute_densityoperators}, where we also provide the proof. 
This polynomial has degree at most $r$ in general, where $2r \leq 2n$ is the rank of $C_{\varrho_1}$, so its coefficients could be computed using polynomial interpolation\footnote{As an alternative to polynomial interpolation, we note that there are division-free algorithms for efficiently computing the Pfaffian over any commutative ring (see e.g., Refs.~\citenum{mahajan,Rote2001}).} in $\mathcal{O}(r^4)$ time. 
In Appendix~\ref{app:linearPfaffian}, we describe a different strategy that computes all of the coefficients in $\mathcal{O}(r^3)$ time.
Thus, by taking $\varrho_1 = \varrho$ and $\varrho_2 = U_Q^\dagger\ket{b}\bra{b}U_Q$ in Theorem~\ref{prop:densityoperators}, we can compute our classical shadows estimate, Eq.~\eqref{estimatedensityoperators}, in at most $\mathcal{O}(n^3)$ time.  For instance, in the case where $C_\varrho$ is invertible, the terms $\tr(\varrho \mathcal{P}_{2\ell}(U_Q^\dagger\ket{b}\bra{b}U_Q))$ in Eq.~\eqref{estimatedensityoperators} are the coefficients of the polynomial $2^{-n}\Pf(C_{\varrho})\Pf(-C_\varrho^{-1} + z Q^{\mathrm{T}} C_{\ket{b}}Q)$, with $C_{\varrho}$ given as in Eq.~\eqref{covariancegeneral} and $C_{\ket{b}}$  in Eq.~\eqref{covarianceb}.

A result that is essentially a special case of Theorem~\ref{prop:densityoperators} (where $\varrho_1$ is a computational basis state) was derived in \citen{Helsen2022}, by combining Wick's theorem with a minor summation formula for Pfaffians. We use a different, more elementary approach (directly exploiting the structure of the Clifford algebra generated by the Majorana operators), to prove Theorem~\ref{prop:densityoperators}, because it can be extended to the more complicated case of estimating overlaps, discussed in the following subsection (whereas the proof strategy of \citen{Helsen2022} would require some other summation formula, tailored to the overlap case; as far as we are aware, such a formula is not available in the literature.\footnote{As a side note, it may be of interest to observe that going the other way, Theorem~\ref{prop:overlaps} could be used to derive a new minor summation formula, by expanding the quantity on the left-hand side na\"ively using Wick's theorem, and equating the result with the right-hand side. Likewise, the more general method for computing free-fermion quantities presented in subsection~\ref{sec:compute_general} could be used to extract an infinity of minor summation formulae, by expanding the quantity na\"ively using Wick's theorem and/or Eq.~\eqref{UQgammaS}, and equating the result with the efficiently computable expression given by the method.}

As for the variance, we show in subsection~\ref{sec:variance_densityoperators} that for any Gaussian state $\varrho$, Eq.~\eqref{variancebound2} becomes
\begin{equation} \label{variancedensityoperators} \mathrm{Var}[\hat{o}]\Big|_{O = \varrho} \leq \frac{1}{2^{2n}} \sum_{\substack{\ell_1,\ell_2,\ell_3 \geq 0\\\ell_1 + \ell_2 + \ell_3 \leq n}} \frac{{n\choose \ell_1,\ell_2,\ell_3,n-\ell_1-\ell_2-\ell_3}^2}{{2n\choose 2\ell_1,2\ell_2,2\ell_3, 2(n-\ell_1-\ell_2-\ell_3)}}\frac{{2n\choose 2(\ell_1+\ell_3)}}{{n\choose \ell_1 + \ell_3}}\frac{{2n\choose 2(\ell_2 + \ell_3)}}{{n\choose \ell_2 + \ell_3}}  \end{equation}
for the variance of the classical shadows estimator for the expectation value of $\varrho$. This can be straightforwardly upper bounded by $\mathcal{O}(n^3)$; we provide a more refined argument in Appendix~\ref{app:variancebound} that $\Var[\hat{o}]\big|_{O = \hat{\rho}} = \mathcal{O}(\sqrt{n} \log n)$. The RHS of Eq.~\eqref{sec:variance_densityoperators} is also plotted as the red line in Figure~\ref{fig:bnzetaplot}.

\subsubsection{Overlaps with Slater determinants} \label{sec:summary_Slater}

We can also efficiently estimate the overlap between a pure state $\ket{\psi}$ and an arbitrary Slater determinant $\ket{\varphi}$ using our matchgate shadows, provided that we can prepare $\ket{\psi}$ using a quantum circuit.\footnote{The procedure described here and in subsection~\ref{sec:background_QCQMC} involves preparing the superposition state $\frac{1}{\sqrt{2}}(\ket{\mathbf{0}} + \ket{\psi})$. In ~\citen{Huggins2022}, where $\ket{\psi}$ is the trial state $\ket{\Psi_{\text{trial}}}$ of the quantum Monte Carlo algorithm, $\frac{1}{\sqrt{2}}(\ket{\mathbf{0}} + \ket{\Psi_{\text{trial}}})$ is prepared by first preparing $\frac{1}{\sqrt{2}}(\ket{\mathbf{0}} + \ket{x})$ where $\ket{x} \neq \ket{\mathbf{0}}$ is a computational basis state, then applying a fermion number-conserving unitary which maps $\ket{x}$ to $\ket{\Psi_{\text{trial}}}$ and leaves $\ket{\mathbf{0}}$ invariant. However, more generally, we show in Appendix~\ref{app:overlaps} how overlap estimation can be performed given only a (controlled) quantum circuit for preparing $\ket{\psi}$ (rather than requiring the ability to prepare $\frac{1}{\sqrt{2}}(\ket{\mathbf{0}} + \ket{\psi})$).} As explained in subsection~\ref{sec:background_QCQMC}, assuming $\ket{\psi}$ has no support on the vacuum state $\ket{\mathbf{0}}$ (see Appendix~\ref{app:overlaps} for modified protocols that remove this assumption), the overlap $\braket{\psi}{\varphi}$ can be obtained by evaluating the expectation value of $\ket{\varphi}\bra{\mathbf{0}}$ with respect to the initial state $\rho = \frac{1}{2}(\ket{\mathbf{0}} + \ket{\psi})(\bra{\mathbf{0}} + \bra{\psi})$. Note that $\ket{\varphi}\bra{\mathbf{0}}$ is an even operator if and only if the number of electrons $\zeta$ in $\ket{\varphi}$ is even. (One way to see this is by using the fact that an operator is in $\Gamma_{\text{even}}$ if and only if it commutes with the parity operator $P = \prod_{j = 1}^n (I - 2a_j^\dagger a_j) = (-i)^n \gamma_1 \dots \gamma_{2n}$.) Since our analysis in subsection~\ref{sec:summary_shadows} applies directly only to even operators, we show in Appendix~\ref{app:overlaps_Slater} that in the case where $\ket{\varphi}$ has an odd number of fermions, we can reduce the problem of evaluating $\braket{\varphi}{\psi}$ to evaluating the expectation value of $\ket{\varphi'}\bra{\mathbf{0}}$ for a Slater determinant $\ket{\varphi'}$ with an even number of electrons, by introducing an extra qubit and making a simple modification to the initial state $\rho$. 

Hence, it suffices to show how to estimate the expectation value of $\ket{\varphi}\bra{\mathbf{0}} \in \Gamma_{\text{even}}$ for an arbitrary Slater determinant $\ket{\varphi}$ with an even number of fermions $\zeta$. 
Taking $\ket{\varphi}\bra{\mathbf{0}}$ to be an observable in the context of the classical shadows protocol,
its expectation values
with respect to classical shadow samples of any initial state $\rho$ are unbiased estimates of $\tr(\ket{\varphi}\bra{\mathbf{0}}\rho)$. By Eq.~\eqref{Minverse}, for our matchgate shadows, these estimates have the form
\begin{equation} \label{estimateketphibra0}
\tr\left(\ket{\varphi}\bra{\mathbf{0}} \mathcal{M}^{-1}(U_Q^\dagger\ket{b}\bra{b}U_Q)\right) = \sum_{\ell=0}^n {2n\choose 2\ell}{n\choose \ell}^{-1} \tr\left(\ket{\varphi}\bra{\mathbf{0}}\mathcal{P}_{2\ell}(U_Q^\dagger\ket{b}\bra{b}U_Q) \right)
\end{equation}
where $U_Q \in \mathrm{M}_n$ and $\ket{b}$ is a computational basis state. Eq.~\eqref{estimateketphibra0} can be efficiently computed using the following result, which we prove in subsection~\ref{sec:compute_overlaps}.
\begin{theorem} \label{prop:overlaps}
For any $n \in \mathbb{Z}_{>0}$ and even integer $0 \leq \zeta \leq n$, let $\ket{\varphi}$ be an $n$-mode, $\zeta$-fermion Slater determinant specified as in Eq.~\eqref{ajtildeSlater}. Let $\varrho$ be the density operator of any $n$-mode  fermionic Gaussian state $\mathrm{(}\text{Eq.~\eqref{varrhodef}}\mathrm{)}$, with covariance matrix $C_{\varrho}$ $\mathrm{(}\text{Eq.~\eqref{covariancegeneral}}\mathrm{)}$. Then, for each $\ell \in \{0,\dots, n\}$, $\tr(\ket{\varphi}\bra{\mathbf{0}} \mathcal{P}_{2\ell}(\varrho))$ is the coefficient of $z^\ell$ in the polynomial
\[ q_{\ket{\varphi},\varrho}(z) = \frac{1}{2^{n - \zeta/2}} i^{\zeta/2} \Pf\left(\left(C_{\ket{\mathbf{0}}} + zW^* \widetilde{Q} C_{\varrho} \widetilde{Q}^{\mathrm{T}} W^\dagger\right) \Big|_{\overline{S}_\zeta}  \right),  \]
where $C_{\ket{\mathbf{0}}}$ is the covariance matrix of the vacuum state $\ket{\mathbf{0}}$ $\mathrm{(}Eq.~\eqref{covarianceb}\mathrm{)}$,  $\widetilde{Q}$ is the orthogonal matrix defined in Eq.~\eqref{QSlater},\footnote{If only the first $\zeta$ rows of the matrix $V$ in Eq.~\eqref{ajtildeSlater} are specified, one can choose the remaining entries such that the resulting matrix is unitary, to find $\widetilde{Q}$.} 
\begin{equation} \label{Wdef} W \coloneqq \bigoplus_{j = 1}^\zeta \frac{1}{\sqrt{2}}\begin{pmatrix} 1 &-i \\ 1 &i \end{pmatrix} \oplus \bigoplus_{j = \zeta + 1}^n \begin{pmatrix} 1 & 0 \\ 0 &1 \end{pmatrix}, \end{equation}
and $\overline{S}_\zeta \coloneqq [2n] \setminus \{1,3,\dots,2\zeta - 1\}$. 
\end{theorem} 

The matrix $(C_{\ket{\mathbf{0}}} + zW^* \widetilde{Q} C_{\varrho} \widetilde{Q}^{\mathrm{T}} W^\dagger) \big|_{\overline{S}_\zeta}$ has size $2n - \zeta$, so the
polynomial $q_{\ket{\varphi}, \varrho}(z)$ has degree at most $n - \zeta/2$ and all of its coefficients can be found using polynomial interpolation in $\mathcal{O}((n - \zeta/2)^4)$ time. Thus, taking $\varrho = U_Q^\dagger \ket{b}\bra{b}U_Q$ in Theorem~\ref{prop:overlaps} gives an efficient way of computing our classical shadows estimate, Eq.~\eqref{estimateketphibra0}. 

As we prove in subsection~\ref{sec:variance_overlaps}, for any initial state $\rho$ and Slater determinant $\ket{\varphi}$ with an even number of $\zeta$ fermions, the variance of our classical shadows estimator for $\tr(\ket{\varphi}\bra{\mathbf{0}} \rho)$ is bounded as 
\begin{equation} \label{varianceoverlaps} \Var[\hat{o}]\Big|_{O = \ket{\varphi}\bra{\mathbf{0}}} \leq b(n, \zeta) \coloneqq \frac{1}{2^{2n}} \sum_{\substack{\ell_1,\ell_2,\ell_3 \geq 0\\ \ell_1 + \ell_2 + \ell_3 \leq n}}\alpha_{\ell_1,\ell_2,\ell_3}\, \kappa(n, \zeta, \ell_1,\ell_2,\ell_3),  \end{equation}
where $\alpha_{\ell_1,\ell_2,\ell_3}$ is given by Eq.~\eqref{alphadef} and\footnote{We adopt the convention that ${n\choose k_1,k_2,\dots} = 0$ whenever any $k_i < 0$. Note that the upper limit of the sum in Eq.~\eqref{kappa} can be replaced by $\min\{\zeta/2,\ell_3, n-\ell_1-\ell_2-\ell_3\}$.}
\begin{equation} \label{kappa} \kappa(n,\zeta, \ell_1,\ell_2,\ell_3) \coloneqq 2^{\zeta}\sum_{j=0}^{\zeta/2} {\zeta\choose 2j}{n-\zeta \choose \ell_1 - \zeta/2 +j,\, \ell_2 - \zeta/2 + j, \, \ell_3 - j, \,n - \ell_1 - \ell_2 - \ell_3 -j}. \end{equation}
Note that for $\zeta = 0$, the RHS of Eq.~\eqref{varianceoverlaps} reduces to the RHS of Eq.~\eqref{variancedensityoperators}, which is our variance bound for estimating expectation values of Gaussian density operators. This is consistent, because $\ket{\varphi} = \ket{\mathbf{0}}$ if $\zeta = 0$, so $\ket{\varphi}\bra{\mathbf{0}} = \ket{\mathbf{0}}\bra{\mathbf{0}}$ is a Gaussian density operator.
While we do not provide an asymptotic bound on the variance for $\zeta >0$, Eq.~\eqref{varianceoverlaps} is an explicit upper bound that can be computed in $\mathrm{poly}(n)$ time (and in order to use the classical shadows procedure, it suffices to be able to efficiently compute an upper bound on the variance, in order to choose the number of samples). We plot the bound $b(n,\zeta)$ in Eq.~\eqref{varianceoverlaps} in Fig.~\ref{fig:bnzetaplot} for $n$ up to $1000$ and various values of $\zeta$. The plot strongly suggests that the variance bound for $\zeta > 0$ is always smaller than the bound for $\zeta = 0$, which scales as $\mathcal{O}(\sqrt{n} \log n)$ (see~Appendix~\ref{app:variancebound}). This would imply that the number of matchgate shadow samples required to estimate overlaps with arbitrary Slater determinants is sublinear in the number of fermionic modes $n$. 

\begin{figure*}
    {\centering
    \includegraphics[width=0.97\textwidth]{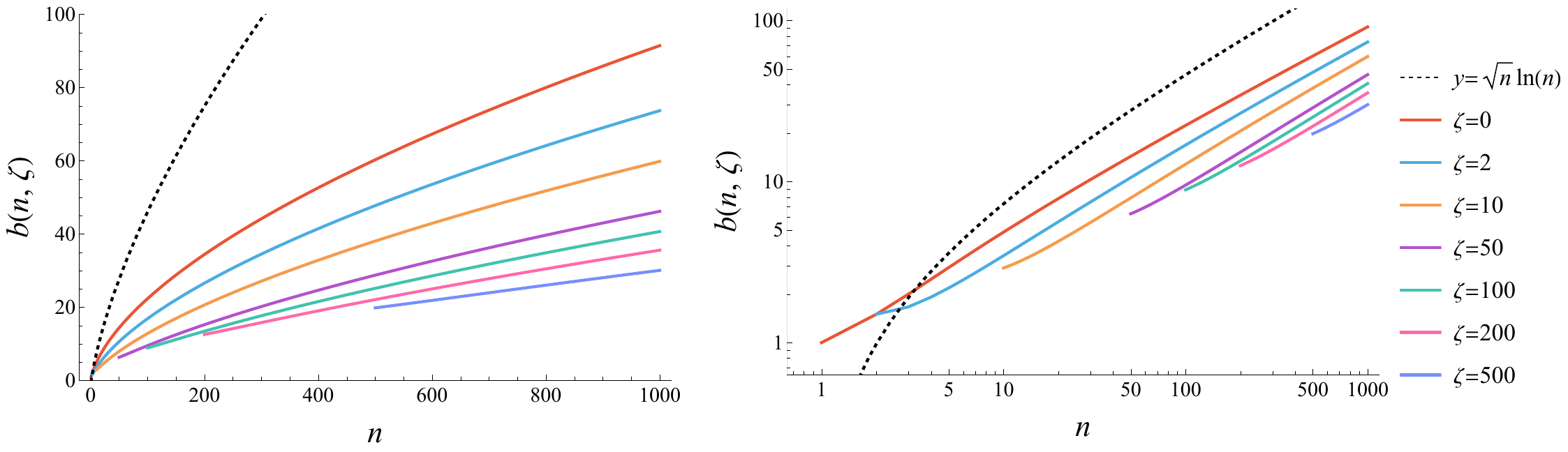}}
    \caption{Linear and log-log plots of $b(n,\zeta)$ vs.\ $n$, for $\zeta \in \{0,2,10,50,100,200,500\}$ (evaluated at integer values of $n$ and joined). We also plot $y = \sqrt{n} \ln(n)$ for comparison (in Appendix~\ref{app:variancebound}, we place an $\mathcal{O}(\sqrt{n}\ln(n))$ bound on $b(n,0)$, plotted in red). $b(n,\zeta)$, defined in Eq.~\eqref{sec:variance_overlaps}, is our bound on the variance for estimating the expectation value of $\ket{\varphi}\bra{\mathbf{0}}$ using our matchgate classical shadows, where $\ket{\varphi}$ is any $n$-mode, $\zeta$-fermion Slater determinant. As shown in Appendix~\ref{app:overlaps_Slater}, estimating the expectation value of $\ket{\varphi}\bra{\mathbf{0}}$ allows us to estimate the overlap between any pure state and $\ket{\varphi}$. Note that $b(n,0)$ is also equal to the RHS of Eq.~\eqref{variancedensityoperators}, which is our variance bound for estimating the expectation values of arbitrary Gaussian density operators.}
    \label{fig:bnzetaplot}
\end{figure*}

\begin{myalgorithmfloat}[tb]
\begin{mdframed}[roundcorner=3pt]
{\small
\begin{center}
\vspace{1em}
\textsc{Algorithm 1: Estimating overlaps with Slater determinants via matchgate shadows}
\end{center}
\begin{tabbing}
\textbf{Inputs:} \=$n$-qubit quantum state $\ket{\psi}$ (accessed via a state preparation circuit) \\
\> $n$-mode Slater determinants $\ket{\varphi_1}, \dots, \ket{\varphi_M}$ (specified as in Eq.~\eqref{ajtildeSlater}) with $\zeta_1,\dots, \zeta_M$ electrons, respectively \\
\> additive error $\varepsilon > 0$; failure probability $\delta > 0$  
\end{tabbing}
\noindent \textbf{Output:} estimates $\widetilde{o}_1,\dots,\widetilde{o}_M$ such that with probability at least $1-\delta$, $|\widetilde{o}_i - \braket{\psi}{\varphi_i}| \leq \varepsilon$ for all $i \in [M]$

\medskip
\noindent\textbf{Procedure:}

\smallskip

\noindent \underline{Collect matchgate shadow samples}

\smallskip

0. Choose $D$ = uniform distribution over $\mathrm{M}_n$ or $D =$ uniform distribution over $\mathrm{M}_n \cap \mathrm{Cl}_n$

1. $b_{\max} \leftarrow \max\limits_{i \in [M]} b(n,\zeta_i)$, where $b(n,\zeta)$ is the variance bound in Eq.~\eqref{varianceoverlaps} 

2. $K \leftarrow \left\lceil \frac{9}{2} \ln\left(\frac{M}{\delta}\right)\right\rceil$, $L \leftarrow \left\lceil 24\frac{b_{\max}}{\varepsilon^2}\right\rceil$, $N_{\mathrm{sample}} \leftarrow KL$ \hfill \textit{// $K$ subsamples of size $L$ for median-of-means estimator}

\smallskip

3. For $j = 1, \dots, N_{\mathrm{sample}}$, \hfill \textit{// obtain $N_{\mathrm{sample}}$ matchgate shadow samples}

\qquad {\color{mypurple2}\boxed{{\color{black}\text{a.}}}} prepare $\ket{\rho} = \frac{1}{\sqrt{2}}(\ket{\mathbf{0}} + \ket{\psi})$ (see Appendix~\ref{app:overlaps} for alternatives) 

\qquad {\color{white}\boxed{{\color{black}\text{b.}}}} sample a unitary $U_{Q^{(j)}}$ from $D$, by uniformly sampling $Q^{(j)}$ from $\orth$ if $D = \mathrm{M}_n$, or from $\mathrm{B}(2n)$ if $D = \mathrm{M}_n \cap \mathrm{Cl}_n$

\qquad \quad\,\,\, (see Appendix~\ref{app:random_circuit_sampling}) 

\qquad {\color{mypurple2}\boxed{{\color{black}\text{c.}}}} apply $U_{Q^{(j)}}$ to $\ket{\rho}$

\qquad {\color{mypurple2}\boxed{{\color{black}\text{d.}}}} measure in the computational basis, obtaining outcome $b^{(j)} \in \{0,1\}^n$

\qquad {\color{white}\boxed{{\color{black}\text{e.}}}} store $(Q^{(j)}, b^{(j)})$

\medskip
\underline{Compute overlap estimates}

\smallskip

4. For $i = 1, \dots, M$,

\qquad a. For $j = 1,\dots, N_{\mathrm{sample}}$, 

\qquad \qquad i. compute the coefficients of the polynomial
\[ q_{\ket{\varphi_i}, U_Q^{(j)\dagger}\ket{b^{(j)}}}(z) \coloneqq  \frac{1}{2^{n - \zeta_i/2}} i^{\zeta_i/2} \Pf\left(\left(C_{\ket{\mathbf{0}}} + zW^* \widetilde{Q} Q^{(j)T} C_{\ket{b^{(j)}}} Q^{(j)} \widetilde{Q}^{\mathrm{T}} W^\dagger\right) \Big|_{\overline{S}_\zeta}  \right) \]

\qquad \qquad \,\,\,\,\, from Theorem~\ref{prop:overlaps}; \enspace set \enspace $c_\ell \leftarrow $ coefficient of $z^\ell$ in $q_{\ket{\varphi_i}, U_Q^{(j)\dagger}\ket{b^{(j)}}}(z)$

\qquad \qquad ii. $o_i^{(j)} \leftarrow 2\sum\limits_{\ell=0}^n {2n\choose 2\ell} {n\choose \ell}^{-1} c_\ell$ \hfill \textit{// a single unbiased estimate of $\braket{\psi}{\varphi_i}$}

\qquad b. $\widetilde{o}_i = \text{median}(\overline{o}_{i,1}, \dots \overline{o}_{i,K})$, where $\overline{o}_{i,k} = \sum\limits_{j = (k-1)L + 1}^{kL} o_{i}^{(j)}/L$ \hfill \textit{// median-of-means estimate of $\braket{\psi}{\varphi_i}$}

\noindent 5. Return $\widetilde{o}_1, \dots \widetilde{o}_M$
}
\vspace{1em}
\end{mdframed}
\parbox{0.98\linewidth}{
\caption{Our matchgate shadows protocol applied to estimating the overlaps between a pure state $\ket{\psi}$ and $M$ Slater determinants $\ket{\varphi_1}, \dots, \ket{\varphi_M}$. (See subsection~\ref{sec:summary_applications} for examples of other observables that can be efficiently estimated using our shadows.) The procedure is stated for the case where the Slater determinants have even, nonzero numbers of electrons for simplicity, but can be easily adapted to arbitrary Slater determinants using the modification discussed in Appendix~\ref{app:overlaps}. Note that the collection of shadow samples (Steps 0-3) can be performed without knowing the Slater determinants (the number of samples needed depends only on $n$ and the $\zeta_i$), and that only Steps 3a, 3c, and 3d (boxed) are performed on a quantum computer, while the remaining steps are classical. This is a useful feature for application to, for instance, the QC-AFQMC technique of \citen{Huggins2022} (see subsection~\ref{sec:background_QCQMC} for a review), where the number of electrons is fixed but the relevant Slater determinants are not known \textit{a priori} and are instead determined over the course of the Monte Carlo procedure.}\label{alg:overlaps}}
\end{myalgorithmfloat}

For reference, we summarise our matchgate shadows protocol applied to estimating overlaps with Slater determinants in Algorithm~\ref{alg:overlaps}. Using this protocol (with $\ket{\psi} = \ket{\Psi_{\text{trial}}}$) in place of the Clifford-based shadows protocol implemented in \citen{Huggins2022} removes the exponential classical post-processing cost incurred in QC-AFQMC. 

\subsubsection{More general fermionic observables} \label{sec:summary_computegeneral}

While the first three types of applications discussed above likely cover many cases of interest, we also develop an explicit framework for efficiently evaluating expectation values of a much broader class of observables using our matchgate classical shadows.
This broader class includes products of operators of the form $A^{(1)}\dots A^{(m)}$, where each $A^{(i)}$ is an arbitrary linear combination of Majorana operators $\{\gamma_\mu\}_{\mu\in [2n]}$, a fermionic Gaussian unitary, or the density operator of a fermionic Gaussian state. 
As a specific application of this general framework, we show how to use it to estimate the inner product between an arbitrary pure state \(\ket{\psi}\) and an arbitrary pure fermionic Gaussian state (not restricted to be a Slater determinant), thereby extending the results in subsection~\ref{sec:summary_Slater}.
We describe the framework in detail in subsection~\ref{sec:compute_general}, which concludes with our procedure for inner product estimation in subsection~\ref{sec:workedexample}. 

Our post-processing procedures are built upon new \emph{classical} simulation results and proof techniques that may find application in other contexts, beyond their use in the specific classical shadows protocols we consider here. In particular, we present a method for efficiently evaluating any expression that can be written in the form $\tr(A^{(1)}\dots A^{(m)})$, which encompasses a wide range of free-fermion quantities of interest. At a high level, the general method consists of three main steps. First, we give a general recipe for recasting the trace of a product of arbitrary operators as a Grassmann integral (see Theorem~\ref{prop:tr(ABCD)}), in a Grassmann algebra that is related to the Clifford algebra generated by the Majorana operators. Then, we show that if each operator in the product falls into one of the three categories described above, the Grassmann integral can be massaged into a particular form. Finally, we develop an algorithm (Algorithm~\ref{alg:gMB}) for efficiently evaluating any integral of this form. 

\subsection{Comparison to related work}

\subsubsection{Prior work} \label{sec: comparison prior}

We now compare our results to those of Zhao \textit{et al.}~\cite{Zhao2021-fv}, which considers classical shadows resulting from the discrete uniform distribution over matchgate circuits $U_Q$ such that $Q$ is in the alternating group $\mathrm{A}(2n)$; these constitute a proper subset of $\mathrm{M}_n \cap \mathrm{Cl}_n$. We note that the measurement channel $\mathcal{M}$ for this distribution is the same as ours in Eq.~\eqref{matchgateshadowschannel}, even though its corresponding $2$-fold twirl $\E_{Q \in \mathrm{A}(2n)} \mU_Q^{\otimes 2}$ is different from the 2-fold twirl $\mathcal{E}_{\mathrm{M}_n}^{(2)} = \mathcal{E}_{\mathrm{M}_n \cap \mathrm{Cl}_n}^{(2)}$ for our distributions. 
(In fact, the $j$-fold twirl channels differ, i.e., $\E_{Q \in \mathrm{A}(2n)} \mU_Q^{\otimes j} \neq \mathcal{E}_{\mathrm{M}_n}^{(j)} = \mathcal{E}_{\mathrm{M}_n \cap \mathrm{Cl}_n}^{(j)}$, for all $j \in \mathbb{Z}_{>0}$.\footnote{To see this, note that the $1$-fold twirls differ: e.g., $\E_{Q\in\mathrm{A}(2n)}\mU_Q\kett{\gamma_{[2n]}} = \kett{\gamma_{[2n]}}$ by Eq.~\eqref{UQgammaS}, whereas $\mathcal{E}^{(1)}_{\mathrm{M}_n}\kett{\gamma_{[2n]}} = 0$ from Theorem~\ref{thm:moments}(i). This implies that $j$-fold twirls differ for all $j$, since the $j$-fold twirl is fully determined by the $(j+1)$-fold twirl, as shown in Eq.~\eqref{EjtoEj-1}.}) Since the measurement channels are the same, the expectation values of $\widetilde{\gamma}_S$ can likewise be computed using Eq.~\eqref{computegammaS} for the distribution in ~\citen{Zhao2021-fv}. However, the authors only consider and bound the variance for products $\gamma_S$ of the canonical Majorana operators $\gamma_\mu$ (also finding a variance bound of ${2n\choose |S|}{n\choose |S|/2}^{-1}$). In the absence of an analogue of Corollary~\ref{cor:3design} for their discrete distribution (which naturally picks out $\{\gamma_\mu\}_{\mu \in [2n]}$ as a preferred basis), the variance for arbitrary Majorana products $\widetilde{\gamma}_S$ is more difficult to (tightly) bound, using their basis-dependent expression for the variance.

More importantly, compared to \citen{Zhao2021-fv}, we provide methods for efficiently computing estimates of the expectation values of more families of observables, beyond single products of Majorana operators, as discussed in subsection~\ref{sec:summary_applications}. In particular, one of these is the set of $\ket{\varphi}\bra{\mathbf{0}}$ operators which allow us to obtain the overlap estimates in the QC-AFQMC algorithm (see subsection~\ref{sec:background_QCQMC}). Thus, we obtain a more generally applicable shadows protocol for estimating fermionic observables. 

\subsubsection{Subsequent work} 

Shortly after the preprint of this manuscript was published, two related papers, Refs.~\cite{ogorman2022fermionic} and~\cite{low2022classical}, were posted.

O'Gorman~\cite{ogorman2022fermionic} considers the same problem as we do in subsection~\ref{sec:summary_Slater}, that is, of estimating the overlaps between an unknown pure state and arbitrary Slater determinants  (likewise motivated by the application to QC-AFQMC), using classical shadows associated with the same discrete distribution analysed by Zhao \text{et al.} in Ref.~\citenum{Zhao2021-fv}. Hence, as discussed above, the measurement channel $\mathcal{M}$ for this distribution is the same as ours (Eq.~\eqref{matchgateshadowschannel}). However, without having proven an analogue of our ``matchgate 3-design'' result (Corollary~\ref{cor:3design}), the proof of their variance bounds is incomplete. Indeed, there is a gap between Lemma~2 and Theorem~4 of Ref.~\citenum{ogorman2022fermionic}, as it is not proved that the variance bound for $\ket{\mathbf{0}}\bra{\mathbf{0}}$ also applies to any arbitrary fermionic Gaussian state, nor that the variance bound for $\ket{1}^{\otimes \zeta} \ket{0}^{\otimes n- \zeta}\bra{\mathbf{0}}$ also applies to $\ket{\varphi}\bra{\mathbf{0}}$ for any $\zeta$-fermion Slater determinant. In our paper, the variance analysis (in the case of the discrete distribution over $\mathrm{M}_n \cap \mathrm{Cl}_n$) for arbitrary Gaussian states and Slater determinants relies on the matchgate 3-design result. This is similar to how the fact that the Clifford group forms a unitary 3-design is a key ingredient in the analysis of the Clifford-based classical shadows of Ref.~\citenum{Huang2020}. Ref.~\citenum{ogorman2022fermionic} places a bound of $\mathrm{Var}[\hat{o}]\big|_{O = \ket{\mathbf{0}}\bra{\mathbf{0}}} = \mathcal{O}(n)$ on the variance for $\ket{\mathbf{0}}\bra{\mathbf{0}}$, which is consistent with (though looser than) our bound of $\mathrm{Var}[\hat{o}]\big|_{O = \varrho} = \mathcal{O}(\sqrt{n}\log n)$ for an arbitrary Gaussian density matrix $\varrho$, and also presents numerical evidence that $\mathrm{Var}[\hat{o}]\big|_{O = \ket{\varphi}\bra{\mathbf{0}}}$ scales sublinearly with $n$ for Slater determinants $\ket{\varphi}$. In addition, Ref.~\citenum{ogorman2022fermionic} applies the variance bounds from Ref.~\citenum{Zhao2021-fv} to show how to learn a Slater determinant from copies thereof. 

For the classical post-processing required to extract estimates of $\tr(\ket{\varphi}\bra{\mathbf{0}}\rho)$ from the classical shadow samples, Ref.~\citenum{ogorman2022fermionic} shows that $\tr(\ket{\varphi}\bra{\mathbf{0}}\mathcal{M}^{-1}(U_Q^\dagger\ket{b}\bra{b}U_Q))$ can be decomposed into $n + 1$ matchgate tensor networks, then appeals to the fact that certain classes of such tensor networks can be contracted efficiently (see references therein). In contrast, we give an explicit expression for this quantity (Eq.~\eqref{estimateketphibra0} and Theorem~\ref{prop:overlaps}), and provide a self-contained algorithm for efficiently computing more general fermionic observables as well (see subsections~\ref{sec:summary_computegeneral} and~\ref{sec:compute_general}).

Low~\cite{low2022classical} considers the uniform distribution over number-conserving fermionic Gaussian unitaries, i.e., $\{U \in \mathrm{M}_n: [U, \sum_j a_j^\dagger a_j] = 0\}$, and shows how the classical shadows corresponding to this distribution can be used to estimate $k$-fermion reduced density matrices ($k$-RDMs) $\tr(a_{p_1}^\dagger \dots a_{p_k}^\dagger a_{q_1}\dots a_{q_k} \rho_\zeta)$ of a state $\rho_\zeta$ with fixed particle number $\zeta$, with an \textit{average-case} variance that is asymptotically better than the \textit{worst-case} variance resulting from the shadows considered in the present paper and Ref.~\citenum{Zhao2021-fv}. In particular, for $k = \mathcal{O}(1)$, the variance averaged over all $k$-RDMs is $\mathcal{O}(\zeta^k)$ for the shadows of Ref.~\citenum{low2022classical} (which can be much smaller than the variance bound of $\mathcal{O}(n^k)$, derived in Ref.~\citenum{Zhao2021-fv} and subsection~\ref{sec:summarylocal}), while for $k = \zeta$, the average variance is $\mathcal{O}(1)$. 

Ref.~\citenum{low2022classical} then reduces the estimation of the overlap with an arbitrary $n$-mode, $\zeta$-fermion Slater determinant to the estimation of a $\zeta$-RDM of a $(n + \zeta)$-mode state with $\zeta$ fermions, for which the ``average'' variance is $\mathcal{O}(1)$. However, there is an important subtlety: this average variance is taken over \textit{all} of the $\zeta$-RDMs, whereas only a subset of the $\zeta$-RDMs correspond to the estimation of a Slater determinant overlap. Therefore, the fact that this average variance is $\mathcal{O}(1)$ does \emph{not} imply that the variance for estimating Slater determinant overlaps is $\mathcal{O}(1)$ when averaged over $\zeta$-fermion Slater determinants. Without further analysis, it remains unclear what the worst-case or average-case variance would be for overlap estimation using these classical shadows. On the other hand, Eq.~\eqref{varianceoverlaps} provides a guarantee on the worst-case variance for overlap estimation using our matchgate shadows, though this bound is likely not independent of $n$. (Moreover, the protocol of Ref.~\citenum{low2022classical} involves adding $\zeta$ ancillary fermionic modes, which may be prohibitive for near-term quantum computers, especially when $\zeta$ is comparable to $n$---e.g., for systems at half-filling.)

For the classical post-processing, Ref.~\citenum{low2022classical} employs our proof techniques in Sections~\ref{sec:compute_densityoperators} and~\ref{sec:compute_overlaps}, extending and adapting them to obtain efficiently computable expressions for the $k$-RDM estimators obtained from their classical shadows. In the same vein as Theorems~\ref{prop:densityoperators} and~\ref{prop:overlaps}, the classical post-processing procedure of Ref.~\citenum{low2022classical} involves finding the coefficients of a certain polynomial that can be evaluated in terms of Pfaffians.

\section{Ensembles of matchgate circuits} \label{sec:ensembles}

In this section, we analyse the two distributions over matchgate circuits defined in Section~\ref{sec:summary} (see Eqs.~\eqref{Mn}--\eqref{EjMnCln}). We begin by proving Theorem~\ref{thm:moments} in subsection~\ref{sec:moments}, then use it in 
subsection~\ref{sec:shadows} to characterise the classical shadows resulting from the distributions. 

\subsection{Moments of the distributions} \label{sec:moments}

We prove Theorem~\ref{thm:moments} by explicitly evaluating the twirl channels $\mathcal{E}^{(j)}_{\mathrm{M}_n}$ and $\mathcal{E}^{(j)}_{\mathrm{M}_n \cap \mathrm{Cl}_n}$ for $j \in \{1,2,3\}$. For convenience, we first collect some basic facts about the Gaussian unitary channels $\mathcal{U}_Q$.

\begin{fact} \label{fact:mUQ} Let $\mathcal{U}_Q \in \LLHn$ be defined by Eqs.~\eqref{mathcalUQ} and~\eqref{UQdef}. For any $Q, Q' \in \orth$,
\begin{enumerate}[(a)]
    \item $\mathcal{U}_Q(ABC\dots) = \mathcal{U}_Q(A) \mathcal{U}_Q(B)\mathcal{U}_Q(C)\dots$ for any operators $A,B,C, \dots \in \LHn$ 
    \item $\mU_{Q Q'} = \mU_{Q'} \circ \mU_{Q}$
    \item $\mU_Q^\dagger = \mU_Q^{-1} = \mU_{Q^{\mathrm{T}}}$, where the adjoint is with respect to the Hilbert-Schmidt inner product,
    \item $\mathcal{U}_Q (\Gamma_k) = \Gamma_k$ for all $k \in \{0,\dots, 2n\}$, where $\Gamma_k$ is defined in Eq.~\eqref{Gammak}.
\end{enumerate}
\end{fact}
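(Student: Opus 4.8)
The plan is to verify the four items one at a time, directly from the definition $\mathcal{U}_Q(\,\cdot\,) = U_Q^\dagger(\,\cdot\,)U_Q$ in Eq.~\eqref{mathcalUQ}, where $U_Q$ is any unitary obeying Eq.~\eqref{UQdef}. At the outset I would note that $U_Q$ is specified only up to a global phase and that this phase cancels identically in the channel $\mathcal{U}_Q$, so every statement below is independent of the choice.

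For (a), I would insert $I = U_Q U_Q^\dagger$ between consecutive factors: $\mathcal{U}_Q(AB) = U_Q^\dagger A B U_Q = (U_Q^\dagger A U_Q)(U_Q^\dagger B U_Q) = \mathcal{U}_Q(A)\,\mathcal{U}_Q(B)$, and then induct on the number of factors. For (b), I would compute $(\mathcal{U}_{Q'}\circ\mathcal{U}_Q)(A) = U_{Q'}^\dagger U_Q^\dagger A U_Q U_{Q'} = (U_Q U_{Q'})^\dagger A (U_Q U_{Q'})$, and then check that $U_Q U_{Q'}$ is an admissible choice of $U_{QQ'}$: conjugating $\gamma_\mu$ by $U_Q U_{Q'}$ and using Eq.~\eqref{UQdef} twice gives $\sum_{\nu,\xi} Q_{\mu\nu} Q'_{\nu\xi}\,\gamma_\xi = \sum_\xi (QQ')_{\mu\xi}\,\gamma_\xi$, so $U_Q U_{Q'} = U_{QQ'}$ up to phase and hence $\mathcal{U}_{QQ'} = \mathcal{U}_{Q'}\circ\mathcal{U}_Q$.

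For (c), I would first observe that $\mathcal{U}_Q$ is a unitary superoperator on $\LHn$ with the Hilbert--Schmidt inner product, since $\tr\!\big((\mathcal{U}_Q A)^\dagger\,\mathcal{U}_Q B\big) = \tr\!\big(U_Q^\dagger A^\dagger B U_Q\big) = \tr(A^\dagger B)$ by cyclicity and $U_Q^\dagger U_Q = I$; this yields $\mathcal{U}_Q^\dagger = \mathcal{U}_Q^{-1}$. Then, putting $Q' = Q^{\mathrm{T}} = Q^{-1}$ in (b) and using $\mathcal{U}_I = \mathrm{id}$ (since $U_I\propto I$, as $U_I$ commutes with every $\gamma_\mu$ and hence with all of $\LHn$) gives $\mathcal{U}_{Q^{\mathrm{T}}}\circ\mathcal{U}_Q = \mathrm{id}$, i.e.\ $\mathcal{U}_{Q^{\mathrm{T}}} = \mathcal{U}_Q^{-1}$, closing the chain of equalities. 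For (d), I would invoke Eq.~\eqref{UQgammaS}: for $|S| = k$, $\mathcal{U}_Q(\gamma_S) = \sum_{|S'|=k}\det(Q|_{S,S'})\,\gamma_{S'}\in\Gamma_k$, so $\mathcal{U}_Q(\Gamma_k)\subseteq\Gamma_k$ by linearity; applying the inclusion also to $Q^{\mathrm{T}}$ and using $\mathcal{U}_{Q^{\mathrm{T}}} = \mathcal{U}_Q^{-1}$ from (c), $\Gamma_k = \mathcal{U}_Q\big(\mathcal{U}_{Q^{\mathrm{T}}}(\Gamma_k)\big)\subseteq\mathcal{U}_Q(\Gamma_k)$, hence equality. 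If one prefers not to assume Eq.~\eqref{UQgammaS}, it follows from (a) and Eq.~\eqref{UQdef} directly: $\mathcal{U}_Q(\gamma_S) = \prod_i \mathcal{U}_Q(\gamma_{\mu_i}) = \prod_i\big(\sum_\nu Q_{\mu_i\nu}\gamma_\nu\big)$, and expanding the product and applying the anticommutation relations Eq.~\eqref{MajoranaCCR} collapses every monomial to $\pm\gamma_{S'}$ with $|S'| = k$.

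None of these steps is a genuine obstacle; the only points requiring care are the reversal of order in the composition law (b)---conjugation by $U_Q U_{Q'}$ corresponds to first twirling by $\mathcal{U}_Q$ and then by $\mathcal{U}_{Q'}$---and the need for invertibility (via (c)) to upgrade the inclusion in (d) to an equality. The global-phase ambiguity in the choice of $U_Q$ is harmless throughout and should simply be flagged once.
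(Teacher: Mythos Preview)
Your proof is correct and follows essentially the same approach as the paper's: (a) via inserting $U_QU_Q^\dagger$, (b) by checking the defining relation on Majoranas, (c) by combining Hilbert--Schmidt unitarity with (b) applied to $Q^{\mathrm{T}}$, and (d) via Eq.~\eqref{UQgammaS} plus invertibility. One small caveat on your optional alternative for (d): the claim that anticommutation ``collapses every monomial to $\pm\gamma_{S'}$ with $|S'|=k$'' is not literally true monomial-by-monomial (terms with repeated indices drop to lower degree); the correct statement is that the $\mathcal{U}_Q(\gamma_{\mu_i})$ mutually anticommute (by orthogonality of $Q$), so their product is a $k$-blade and hence lies in $\Gamma_k$---but your primary argument via Eq.~\eqref{UQgammaS} already handles this cleanly.
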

\begin{proof}
    (a) is a simple consequence of the unitarity of $U_Q$. Using (a) in conjunction with Eq.~\eqref{UQdef} gives $(\mU_{Q'} \circ \mU_Q) (\gamma_S) = \mU_{QQ'}(\gamma_S)$ for all $S \subseteq [2n]$, which implies (b) since $\{\gamma_S : S \subseteq [2n]\}$ spans $\LHn$. For (c), it follows from (b) that $\mU_Q \circ \mU_{Q^{\mathrm{T}}} = \mU_{Q^{\mathrm{T}} Q} = \mU_I = \mathcal{I}$, so $\mU_{Q}^{-1} = \mU_{Q^{\mathrm{T}}}$. Also, clearly $\mU^{-1}(\,\cdot\,) = U_Q(\,\cdot\,)U_Q^\dagger$. Hence, for all $A,B \in \LHn$,
    \begin{align*}
        \tr(A^\dagger \mU_Q(B)) = \tr(U_Q A^\dagger U_Q^\dagger B) = \tr( \mU_Q^{-1}(A)^\dagger B),
    \end{align*}
    so $\mU_Q^{-1} = \mU_Q^\dagger$. (d) follows directly from Eq.~\eqref{UQgammaS} and the fact that $\mathcal{U}_Q$ is invertible.
\end{proof}
It follows from Fact~\ref{fact:mUQ}(b) that the map $\mathcal{U} : \orth \to \LLHn$ with $\mU(Q) = \mU_Q$ is a faithful representation of the orthogonal group $\orth$. The following fact follows straightforwardly from the group properties of $\orth$ and $\mathrm{B}(2n)$. 
\begin{fact} \label{fact:Eprojector} For any $j \in \mathbb{Z}_{>0}$, $\mathcal{E}_{\mathrm{M}_n}^{(j)}$ is the orthogonal projector onto the subspace $\mathcal{X}^{(j)}_{\mathrm{M}_n} \coloneqq \{A \in \LHn^{\otimes j}: \mU_Q^{\otimes j}(A) = A\enspace  \forall \, Q \in \orth\}$ of $\LHn^{\otimes j}$, and $\mathcal{E}_{\mathrm{M}_n \cap \mathrm{Cl}_n}^{(j)}$ is the orthogonal projector onto the subspace $\mathcal{X}^{(j)}_{\mathrm{M}_n \cap \mathrm{Cl}_n} \coloneqq \{A \in \LHn^{\otimes j}: \mU_Q^{\otimes j}(A) = A\enspace  \forall \, Q \in \mathrm{B}(2n)\}$.
\end{fact}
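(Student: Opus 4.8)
\emph{Proof plan.} This is the textbook fact that averaging a unitary representation of a group over the group yields the orthogonal projector onto the invariant subspace, and I would simply specialise it to the representation $Q\mapsto\mU_Q^{\otimes j}$. By Fact~\ref{fact:mUQ}(b), $\mU_{QQ'}=\mU_{Q'}\circ\mU_Q$, so $Q\mapsto\mU_Q^{\otimes j}$ is a representation of $\orth$ on $\LHn^{\otimes j}$ (cf.\ the remark following Fact~\ref{fact:mUQ}), and by Fact~\ref{fact:mUQ}(c) each $\mU_Q$ is unitary for the Hilbert--Schmidt inner product, hence $\mU_Q^{\otimes j}$ is unitary for the induced inner product on $\LHn^{\otimes j}$. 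Restricting $Q$ to $\mathrm{B}(2n)\subseteq\orth$ gives the corresponding representation of the finite group $\mathrm{B}(2n)$. By Eqs.~\eqref{EjMn}--\eqref{EjMnCln}, $\mathcal{E}^{(j)}_{\mathrm{M}_n}$ and $\mathcal{E}^{(j)}_{\mathrm{M}_n\cap\mathrm{Cl}_n}$ are the averages of this representation against the normalised Haar measure $\mu$ on $\orth$ and against the uniform measure on $\mathrm{B}(2n)$; I will argue for $\mathcal{E}^{(j)}_{\mathrm{M}_n}$, the case of $\mathcal{E}^{(j)}_{\mathrm{M}_n\cap\mathrm{Cl}_n}$ being identical with sums over $\mathrm{B}(2n)$ in place of Haar integrals.

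The key step I would establish is the translation identity: for every fixed $Q_0\in\orth$,
\[ \mU_{Q_0}^{\otimes j}\circ\mathcal{E}^{(j)}_{\mathrm{M}_n}=\int_{\orth}d\mu(Q)\,\mU_{QQ_0}^{\otimes j}=\int_{\orth}d\mu(Q)\,\mU_{Q}^{\otimes j}=\mathcal{E}^{(j)}_{\mathrm{M}_n}, \]
where the first equality uses $\mU_{Q_0}^{\otimes j}\circ\mU_Q^{\otimes j}=(\mU_{Q_0}\circ\mU_Q)^{\otimes j}=\mU_{QQ_0}^{\otimes j}$ via Fact~\ref{fact:mUQ}(b) and the second uses right-invariance of $\mu$; the mirror-image computation gives $\mathcal{E}^{(j)}_{\mathrm{M}_n}\circ\mU_{Q_0}^{\otimes j}=\mathcal{E}^{(j)}_{\mathrm{M}_n}$. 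Averaging the first identity over $Q_0$ against $\mu$ then yields idempotency, $\bigl(\mathcal{E}^{(j)}_{\mathrm{M}_n}\bigr)^2=\mathcal{E}^{(j)}_{\mathrm{M}_n}$. Self-adjointness follows from Fact~\ref{fact:mUQ}(c): since $\bigl(\mU_Q^{\otimes j}\bigr)^\dagger=\bigl(\mU_{Q^{\mathrm{T}}}\bigr)^{\otimes j}=\mU_{Q^{-1}}^{\otimes j}$, taking the adjoint inside the Haar integral and using that $Q\mapsto Q^{-1}$ preserves $\mu$ gives $\bigl(\mathcal{E}^{(j)}_{\mathrm{M}_n}\bigr)^\dagger=\mathcal{E}^{(j)}_{\mathrm{M}_n}$.

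An idempotent self-adjoint operator is the orthogonal projector onto its image, so it remains to identify $\mathrm{im}\,\mathcal{E}^{(j)}_{\mathrm{M}_n}$ with $\mathcal{X}^{(j)}_{\mathrm{M}_n}$. If $A\in\mathcal{X}^{(j)}_{\mathrm{M}_n}$, then $\mathcal{E}^{(j)}_{\mathrm{M}_n}(A)=\int_{\orth}d\mu(Q)\,\mU_Q^{\otimes j}(A)=A$, so $A\in\mathrm{im}\,\mathcal{E}^{(j)}_{\mathrm{M}_n}$; conversely, if $A=\mathcal{E}^{(j)}_{\mathrm{M}_n}(B)$, then the translation identity gives $\mU_{Q_0}^{\otimes j}\bigl(\mathcal{E}^{(j)}_{\mathrm{M}_n}(B)\bigr)=\mathcal{E}^{(j)}_{\mathrm{M}_n}(B)=A$ for every $Q_0\in\orth$, so $A\in\mathcal{X}^{(j)}_{\mathrm{M}_n}$. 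There is no genuine obstacle here; the only points requiring (routine) care are the justification for interchanging the Haar integral with composition and with the adjoint — valid because $Q\mapsto\mU_Q^{\otimes j}$ is continuous and $\orth$ is compact — and the facts that inversion preserves the Haar measure on $\orth$ and that transposition maps the finite group $\mathrm{B}(2n)$ of signed permutation matrices bijectively onto itself.
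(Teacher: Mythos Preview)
Your proposal is correct and follows essentially the same approach as the paper: establish the translation identity $\mU_{Q_0}^{\otimes j}\circ\mathcal{E}^{(j)}=\mathcal{E}^{(j)}=\mathcal{E}^{(j)}\circ\mU_{Q_0}^{\otimes j}$ via Fact~\ref{fact:mUQ}(b) and Haar invariance, deduce idempotency, prove self-adjointness from Fact~\ref{fact:mUQ}(c) together with invariance of Haar measure under inversion, and then identify the image with the invariant subspace by the two obvious inclusions. The paper's proof is virtually identical in structure and in the facts it invokes; your additional remark about continuity and compactness justifying the interchange of integral with composition/adjoint is a welcome point of rigor that the paper leaves implicit.
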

\begin{proof} For any $Q \in \orth$,
\begin{equation} \label{Mninvariance} \mU_Q^{\otimes j} \circ \mathcal{E}_{\mathrm{M}_n}^{(j)} = \mathcal{E}_{\mathrm{M}_n}^{(j)} = \mathcal{E}_{\mathrm{M}_n}^{(j)} \circ \mU_Q^{\otimes j} \end{equation} 
using Fact~\ref{fact:mUQ}(b) in conjunction with the left- and right-invariance of the Haar measure on $\orth$. It follows that 
\[ (\mathcal{E}_{\mathrm{M}_n}^{(j)})^2 = \mathcal{E}_{\mathrm{M}_n}^{(j)}. \] 
Also, 
\[ (\mathcal{E}_{\mathrm{M}_n}^{(j)})^\dagger = \int_{\orth} d\mu(Q)\, (\mathcal{U}_Q^\dagger)^{\otimes j} = \int_{\orth} d\mu(Q)\, \mU_{Q^{-1}}^{\otimes j} = \int_{\orth} d\mu(Q)\, \mU_{Q}^{\otimes j} = \mathcal{E}_{\mathrm{M}_n}^{(j)}, \]
where the second equality is Fact~\ref{fact:mUQ}(c), while the third follows from the fact that $\orth$ is unimodular. Thus, $\mathcal{E}_{\mathrm{M}_n}^{(j)}$ is an orthogonal projector. For $A \in \mathcal{X}_{\mathrm{M}_n}$, clearly $\mathcal{E}^{(j)}_{\mathrm{M}_n}(A) = A$, so $\mathcal{X}_{\mathrm{M}_n}^{(j)} \subseteq \mathrm{im}(\mathcal{E}^{(j)}_{\mathrm{M}_n})$. Conversely, if $A \in \mathrm{im}(\mathcal{E}^{(j)}_{\mathrm{M}_n})$, then $A = \mathcal{E}^{(j)}_{\mathrm{M}_n}(A)$, so $\mU_Q(A) = (\mU_Q \circ \mathcal{E}^{(j)}_{\mathrm{M}_n})(A) = \mathcal{E}^{(j)}_{\mathrm{M}_n}(A) = A$ by Eq.~\eqref{Mninvariance}.

The proof for $\mathcal{E}^{(j)}_{\mathrm{M}_n \cap \mathrm{Cl}_n}$ is analogous, with 
\begin{equation} \label{MnClninvariance} \mU_Q^{\otimes j} \circ \mathcal{E}_{\mathrm{M}_n\cap \mathrm{Cl}_n}^{(j)} = \mathcal{E}_{\mathrm{M}_n\cap \mathrm{Cl}_n}^{(j)} = \mathcal{E}_{\mathrm{M}_n\cap \mathrm{Cl}_n}^{(j)} \circ \mU_Q^{\otimes j} \end{equation} for any $Q \in \mathrm{B}(2n)$ following from the fact that $\mathrm{B}(2n)$ is closed under left- and right-multiplication by any group element, and $(\mathcal{E}_{\mathrm{M}_n\cap \mathrm{Cl}_n}^{(j)})^\dagger = \mathcal{E}_{\mathrm{M}_n\cap \mathrm{Cl}_n}^{(j)}$ from the fact that $\mathrm{B}(2n)$ is closed under inverse. 
\end{proof}

Throughout the remainder of this section, we use $\mathcal{E}^{(j)}$ to denote \emph{either} $\mathcal{E}_{\mathrm{M}_n}^{(j)}$ or $\mathcal{E}_{\mathrm{M}_n\cap \mathrm{Cl}_n}^{(j)}$, for $j \in \{1,2,3\}$. We have not yet proven that $\mathcal{E}_{\mathrm{M}_n}^{(j)} = \mathcal{E}_{\mathrm{M}_n\cap \mathrm{Cl}_n}^{(j)}$ for $j \in \{1,2,3\}$, but this notational simplification will be valid since any statement we make while evaluating $\mathcal{E}^{(j)}$ will be patently true for both $\mathcal{E}_{\mathrm{M}_n}^{(j)}$ and $\mathcal{E}_{\mathrm{M}_n\cap \mathrm{Cl}_n}^{(j)}$. The expression we arrive at for $\mathcal{E}^{(j)}$ will therefore be equal to both $\mathcal{E}_{\mathrm{M}_n}^{(j)}$ and $\mathcal{E}_{\mathrm{M}_n\cap \mathrm{Cl}_n}^{(j)}$. Eqs.~\eqref{Mninvariance} and~\eqref{MnClninvariance} will be particularly useful; we subsume these as
\begin{equation} \label{Einvariance} \mU_Q^{\otimes j} \circ \mathcal{E}^{(j)} = \mathcal{E}^{(j)} = \mathcal{E}^{(j)} \circ \mathcal{U}_Q^{\otimes j} \end{equation}
for all $Q \in \mathrm{B}(2n)$.

We start by calculating the 2-fold twirl $\mathcal{E}^{(2)}$, which can then be used to calculate $\mathcal{E}^{(1)}$, since for any $j > 1$,
\begin{equation} \label{EjtoEj-1} \mathcal{E}^{(j-1)}(A) = \tr_1\left[\mathcal{E}^{(j)}\left(\frac{I}{2^n} \otimes A\right) \right] \end{equation}
for all $\mathcal{A} \in \LHn$.
We then sketch the proof for $\mathcal{E}^{(3)}$, deferring the more technical parts to the appendix. Note that $\mathcal{E}^{(2)}$ could be derived from $\mathcal{E}^{(3)}$ using Eq.~\eqref{EjtoEj-1}. However, we present a direct, self-contained proof for $\mathcal{E}^{(2)}$, because the proof for $\mathcal{E}^{(3)}$ uses similar ideas, but is a bit more technically involved. 

\subsubsection{The 2-fold twirl $\mathcal{E}^{(2)}$}

As discussed above, we let $\mathcal{E}^{(2)}$ denote $\mathcal{E}_{\mathrm{M}_n}^{(2)}$ or $\mathcal{E}_{\mathrm{M}_n\cap \mathrm{Cl}_n}^{(2)}$. By Fact~\ref{fact:Eprojector}, $\mathcal{E}^{(2)}$ is an orthogonal projector, so we can determine it by finding its image. We consider the basis $\{\kett{\gamma_{S_1}}\kett{\gamma_{S_2}}:S_1,S_2 \subseteq [2n]\}$ for $\LHn^{\otimes 2}$. We start with a simple lemma that uses symmetry to preclude certain basis states from being in the image of $\mathcal{E}^{(2)}$, and partly characterise the action of $\mathcal{E}^{(2)}$ on the remaining basis states.

\begin{lemma} \label{lem:E2} Let $\mathcal{E}^{(2)}$ be $\mathcal{E}_{\mathrm{M}_n}^{(2)}$ or $\mathcal{E}_{\mathrm{M}_n\cap \mathrm{Cl}_n}^{(2)}$, defined as in Eqs.~\eqref{EjMn} and~\eqref{EjMnCln}. 

\noindent $\mathrm{(a)}$ For $S_1, S_2 \subseteq [2n]$, $\mathcal{E}^{(2)}\kett{\gamma_{S_1}}\kett{\gamma_{S_2}} \neq 0$ only if $S_1 = S_2$.

\smallskip

\noindent $\mathrm{(b)}$ $\mathcal{E}^{(2)}\kett{\gamma_S}\kett{\gamma_S} = \mathcal{E}^{(2)}\kett{\gamma_{S'}}\kett{\gamma_{S'}}$ for any $S, S' \subseteq [2n]$ such that $|S| = |S'|$.
\end{lemma}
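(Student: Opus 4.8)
The plan is to prove both parts of Lemma~\ref{lem:E2} using the invariance relation \eqref{Einvariance}, applied to a well-chosen collection of signed permutation matrices $Q \in \mathrm{B}(2n)$ (so that the argument works verbatim for both $\mathcal{E}^{(2)}_{\mathrm{M}_n}$ and $\mathcal{E}^{(2)}_{\mathrm{M}_n \cap \mathrm{Cl}_n}$, and we may simply write $\mathcal{E}^{(2)}$ throughout). The key point is that $\mathcal{E}^{(2)} = \mathcal{E}^{(2)} \circ \mathcal{U}_Q^{\otimes 2}$, so the image of $\mathcal{E}^{(2)}$ is unchanged if we first hit a basis vector $\kett{\gamma_{S_1}}\kett{\gamma_{S_2}}$ with $\mathcal{U}_Q^{\otimes 2}$. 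Since each diagonal sign flip $Q = \mathrm{diag}(1,\dots,1,-1,1,\dots,1)$ (with the $-1$ in position $\mu$) is a signed permutation matrix, Eq.~\eqref{UQgammaS} gives $\mathcal{U}_Q(\gamma_S) = (-1)^{[\mu \in S]}\gamma_S$, i.e.\ $\mathcal{U}_Q^{\otimes 2}\kett{\gamma_{S_1}}\kett{\gamma_{S_2}} = (-1)^{[\mu \in S_1] + [\mu \in S_2]}\kett{\gamma_{S_1}}\kett{\gamma_{S_2}}$.

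For part (a): fix $S_1, S_2$ with $S_1 \neq S_2$, and pick an index $\mu$ in the symmetric difference $S_1 \triangle S_2$, so that exactly one of $[\mu \in S_1], [\mu \in S_2]$ equals $1$. Then the sign-flip $Q$ at position $\mu$ satisfies $\mathcal{U}_Q^{\otimes 2}\kett{\gamma_{S_1}}\kett{\gamma_{S_2}} = -\kett{\gamma_{S_1}}\kett{\gamma_{S_2}}$. Applying $\mathcal{E}^{(2)}$ on the left and using $\mathcal{E}^{(2)} \circ \mathcal{U}_Q^{\otimes 2} = \mathcal{E}^{(2)}$ yields $\mathcal{E}^{(2)}\kett{\gamma_{S_1}}\kett{\gamma_{S_2}} = -\mathcal{E}^{(2)}\kett{\gamma_{S_1}}\kett{\gamma_{S_2}}$, forcing it to be zero. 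This takes care of all off-diagonal ($S_1 \neq S_2$) basis vectors.

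For part (b): it suffices to show $\mathcal{E}^{(2)}\kett{\gamma_S}\kett{\gamma_S} = \mathcal{E}^{(2)}\kett{\gamma_{S'}}\kett{\gamma_{S'}}$ when $S'$ is obtained from $S$ by a transposition of two indices $\mu, \nu$ (one in $S$, one not), since any two subsets of the same cardinality are connected by a chain of such moves. For this, take $Q \in \mathrm{B}(2n)$ to be the permutation matrix that swaps coordinates $\mu$ and $\nu$ (fixing all others); it is a genuine permutation matrix, hence in $\mathrm{B}(2n)$, and by Eq.~\eqref{UQgammaS} we have $\mathcal{U}_Q(\gamma_S) = \pm\gamma_{\pi(S)}$ where $\pi$ is the transposition $(\mu\,\nu)$ and the sign is a reordering sign. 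More carefully: $U_Q^\dagger \gamma_S U_Q = \det(Q|_{S, \pi(S)})\gamma_{\pi(S)}$, and $\det(Q|_{S,\pi(S)}) = \pm 1$; call this sign $\epsilon$. Then $\mathcal{U}_Q^{\otimes 2}\kett{\gamma_S}\kett{\gamma_S} = \epsilon^2 \kett{\gamma_{S'}}\kett{\gamma_{S'}} = \kett{\gamma_{S'}}\kett{\gamma_{S'}}$ since the sign squares away on the diagonal. Therefore $\mathcal{E}^{(2)}\kett{\gamma_{S'}}\kett{\gamma_{S'}} = \mathcal{E}^{(2)} \mathcal{U}_Q^{\otimes 2}\kett{\gamma_S}\kett{\gamma_S} = \mathcal{E}^{(2)}\kett{\gamma_S}\kett{\gamma_S}$, as desired.

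I don't anticipate a serious obstacle here; the only thing to be careful about is making the signs from reordering Majorana products explicit via Eq.~\eqref{UQgammaS} and checking that they square to $1$ on the diagonal (so that (b) really does give equality, not equality up to sign) — but this is routine. The one genuine modelling choice is to phrase everything in terms of $Q \in \mathrm{B}(2n)$ rather than general $Q \in \orth$, so that a single argument covers both twirl channels simultaneously, consistent with the convention fixed just before the lemma.
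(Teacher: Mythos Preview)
Your proof is correct and follows essentially the same approach as the paper: both parts use the invariance $\mathcal{E}^{(2)} = \mathcal{E}^{(2)}\circ\mathcal{U}_Q^{\otimes 2}$ with $Q\in\mathrm{B}(2n)$, taking a single-coordinate sign flip for (a) and a permutation for (b). The only cosmetic difference is that for (b) the paper uses one order-preserving permutation sending the sorted elements of $S$ to those of $S'$ (so $\mathcal{U}_Q\kett{\gamma_S}=\kett{\gamma_{S'}}$ with no sign), whereas you chain transpositions and note the sign squares away on the diagonal---both are equally valid.
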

\begin{proof} Let $S_1,S_2,S,S' \subseteq [2n]$.
\begin{enumerate}[(a)]
\item If $S_1 \neq S_2$, there must exist some index $\mu \in [2n]$ such that $\mu \in S_1$ and $\mu \not\in S_2$, or some index $\mu \in [2n]$ such that $\mu \in S_2$ and $\mu \not\in S_1$. In either case, let $Q \in \mathrm{B}(2n) \subset \orth$ be the reflection matrix such that $\mU_{Q}\kett{\gamma_\mu} = -\kett{\gamma_\mu}$ and $\mU_Q\kett{\gamma_\nu} = \kett{\gamma_\nu}$ for all $\nu \neq \mu$. Then, $\mU_Q^{\otimes 2}\kett{\gamma_{S_1}}\kett{\gamma_{S_2}} = -\kett{\gamma_{S_1}}\kett{\gamma_{S_2}}$ (using Fact~\ref{fact:mUQ}(a)).  Hence, by Eq.~\eqref{Einvariance},
\[ \mathcal{E}^{(2)}\kett{\gamma_{S_1}}\kett{\gamma_{S_2}} = \mathcal{E}^{(2)} \mU_Q^{\otimes 2}\kett{\gamma_{S_1}}\kett{\gamma_{S_2}} = -\mathcal{E}^{(2)}\kett{\gamma_{S_1}}\kett{\gamma_{S_2}}, \]
so $\mathcal{E}^{(2)}\kett{\gamma_{S_1}}\kett{\gamma_{S_2}} = 0$.
\item Suppose $|S| = |S'|$. Let $Q' \in \mathrm{B}(2n) \subset \orth$ be any permutation matrix such that $\mathcal{U}_{Q'}\kett{\gamma_S} = \kett{\gamma_{S'}}$. (Specifically, if $S = \{\mu_1,\dots, \mu_{|S|}\}$ with $\mu_1 < \dots <\mu_{|S|}$ and $S' = \{\mu_1',\dots, \mu_{|S|}'\}$ with $\mu_1' < \dots <\mu_{|S|}'$, take any permutation $Q'$ that maps $\mu_i \mapsto \mu_i'$ for each $i \in [|S|]$; it is clear that such a permutation always exists). Then, using Eq.~\eqref{Einvariance},
\[ \mathcal{E}^{(2)}\kett{\gamma_S}\kett{\gamma_S} = \mathcal{E}^{(2)} \mathcal{U}_Q^{\otimes 2}\kett{\gamma_S}\kett{\gamma_S} = \mathcal{E}^{(2)}\kett{\gamma_{S'}}\kett{\gamma_{S'}}. \]
\end{enumerate}
\end{proof}

We now prove Theorem~\ref{thm:moments}(ii).

\begin{proof}[Proof of Theorem~\ref{thm:moments}(ii)] 
Inserting resolutions of the identity [Eq.~\eqref{resolutionofidentity}] and using Lemma~\ref{lem:E2}(a), we have
\[ \mathcal{E}^{(2)} = \sum_{S, S'\subseteq [2n]} \kett{\gamma_{S}}\kett{\gamma_{S}}\braa{\gamma_{S}}\braa{\gamma_{S}}\mathcal{E}^{(2)}\kett{\gamma_{S'}}\kett{\gamma_{S'}}\braa{\gamma_{S'}}\braa{\gamma_{S'}} \]
(note that since $(\mathcal{E}^{(2)})^\dagger = \mathcal{E}^{(2)}$ by Fact~\ref{fact:Eprojector},  Lemma~\ref{lem:E2}(a) also implies that $\braa{\gamma_{S_1}}\braa{\gamma_{S_2}}\mathcal{E}^{(2)} = 0$ if $S_1\neq S_2$). It follows from Fact~\ref{fact:mUQ}(d) that $\mathcal{E}^{(2)}(\Gamma_k \otimes \Gamma_k) = \Gamma_k \otimes \Gamma_k$ for all $k \in \{0,\dots, 2n\}$, so $\braa{\gamma_{S}}\braa{\gamma_{S}}\mathcal{E}^{(2)}\kett{\gamma_{S'}}\kett{\gamma_{S'}} \neq 0$ only if $|S| = |S'|$. Hence, we can write
\[ \mathcal{E}^{(2)} = \sum_{k = 0}^{2n}\sum_{S, S' \in {[2n]\choose k}} \kett{\gamma_{S}}\kett{\gamma_{S}}\braa{\gamma_{S}}\braa{\gamma_{S}}\mathcal{E}^{(2)}\kett{\gamma_{S'}}\kett{\gamma_{S'}}\braa{\gamma_{S'}}\braa{\gamma_{S'}}. \]
Now, by Lemma~\ref{lem:E2}(b), the coefficient $\braa{\gamma_{S}}\braa{\gamma_{S}}\mathcal{E}^{(2)}\kett{\gamma_{S'}}\kett{\gamma_{S'}}$ is the same for \textit{all} pairs of subsets $S,S'$ of the same cardinality $k$, i.e., for some number $b'_k \in \mathbb{C}$, we have
\[ \braa{\gamma_{S}}\braa{\gamma_{S}}\mathcal{E}^{(2)}\kett{\gamma_{S'}}\kett{\gamma_{S'}} = b'_k \quad \forall \, S, S' \in {[2n]\choose k}. \]
Thus, 
\begin{align*}
    \mathcal{E}^{(2)} &= \sum_{k=0}^{2n} b'_k \sum_{S,S' \in {[2n]\choose k}} \kett{\gamma_{S}}\kett{\gamma_{S}}\braa{\gamma_{S'}}\braa{\gamma_{S'}} \\
    &= \sum_{k=0}^{2n} b_k \kett{\Upsilon^{(2)}_{k}}\braa{\Upsilon^{(2)}_k},
\end{align*}
where $\kett{\Upsilon_k^{(2)}}$ is defined as in Eq.~\eqref{Psi2} and we rescale $b_k'$ to $b_k \coloneqq b_k'{2n\choose k}$ to account for the normalisation of $\kett{\Upsilon_k^{(2)}}$.

Since $\mathcal{E}^{(2)}$ is a projector (Fact~\ref{fact:Eprojector}), each $b_k$ must equal $0$ or $1$. To complete the proof, we show that $b_k = 1$ for all $k \in \{0,\dots, 2n\}$ by showing that $\mU_Q^{\otimes 2}\kett{\Upsilon_k^{(2)}} = \kett{\Upsilon_k^{(2)}}$ for all $Q \in \orth$, so $\mathcal{E}^{(2)}\kett{\Upsilon_k^{(2)}} = \kett{\Upsilon_k^{(2)}}$: 
\begin{align*}
    \mathcal{U}_Q^{\otimes 2}\kett{\Upsilon_k^{(2)}} &= {2n\choose k}^{-1/2}\sum_{S \in {[2n]\choose k}}\mU_Q\kett{\gamma_S}\otimes \mU_Q\kett{\gamma_S} \\
    &= {2n \choose k}^{-1/2}\sum_{S \in {[2n]\choose k}}\sum_{S',S'' \in {[2n]\choose k}}\det(Q\big|_{S,S'})\det(Q\big|_{S,S''}) \kett{\gamma_{S'}}\kett{\gamma_{S''}} \\
    &= {2n\choose k}^{-1/2} \sum_{S',S'' \in {[2n]\choose k}} \det\left(Q^{\mathrm{T}}\big|_{S',[2n]}Q\big|_{[2n],S''}\right)\kett{\gamma_{S'}}\kett{\gamma_{S''}} \\
    &= {2n\choose k}^{-1/2}\sum_{S',S''\in{[2n]\choose k}} \det(I\big|_{S',S''}) \kett{\gamma_{S'}}\kett{\gamma_{S''}} \\
    &= {2n\choose k}^{-1/2}\sum_{S'\in{[2n]\choose k}}  \kett{\gamma_{S'}}\kett{\gamma_{S'}} \\
    &= \kett{\Upsilon_k^{(2)}},
\end{align*}
where we use Eq.~\eqref{UQdef} in the second line and the Cauchy-Binet formula in the fourth, noting that $\det(Q_{S,S'}) = \det((Q_{S,S'})^{\mathrm{T}}) = \det((Q^{\mathrm{T}})_{S',S})$.  
\end{proof}

\subsubsection{The 1-fold twirl $\mathcal{E}^{(1)}$}

\begin{proof}[Proof of Theorem~\ref{thm:moments}(i)] From Eq.~\eqref{EjtoEj-1}, 
\begin{align*}
    \mathcal{E}^{(1)}\kett{A} &= \tr_1\left[\mathcal{E}^{(2)} \frac{1}{\sqrt{2^n}}\kett{\gamma_{\varnothing}}\kett{A}\right]
\end{align*}
for any $A \in \LHn$. Substituting in the expression for $\mathcal{E}^{(2)}$ from Theorem~\ref{thm:moments}(ii) and using Hilbert-Schmidt orthogonality of the $\gamma_S$ (Eq.~\eqref{kettgammaS}), this evaluates to $\mathcal{E}^{(1)}\kett{A} = \kett{\gamma_\varnothing}\brakett{\gamma_{\varnothing}}{A}$, so $\mathcal{E}^{(1)} = \kett{\gamma_\varnothing}\braa{\gamma_{\varnothing}}$. 

Alternatively, it is easily seen that $\mathcal{E}^{(1)}\kett{\gamma_S} = 0$ for any $S \neq \varnothing$ (take any $\mu \in S$, and use Eq.~\eqref{Einvariance} with $Q$ the reflection that maps $\mu \mapsto -\mu$ and $\nu \mapsto \nu$ for all $\nu \neq \mu$ to obtain $\mathcal{E}^{(1)}\kett{\gamma_S} = -\mathcal{E}^{(1)}\kett{\gamma_S}$), while $\mathcal{E}^{(1)}\kett{\gamma_\varnothing} = \kett{\gamma_\varnothing}$.
\end{proof}

\subsubsection{The 3-fold twirl $\mathcal{E}^{(3)}$}

The 3-fold twirl channel $\mathcal{E}^{(3)}$ (which, as discussed above, represents $\mathcal{E}^{(3)}_{\mathrm{M}_n}$ or $\mathcal{E}^{(3)}_{\mathrm{M}_n \cap \mathrm{Cl}_n}$) can be calculated along the same lines as $\mathcal{E}^{(2)}$. The following lemma is the analogue of Lemma~\ref{lem:E2}, for $\mathcal{E}^{(3)}$. 
\begin{lemma} \label{lem:E3} Let $\mathcal{E}^{(3)}$ be $\mathcal{E}^{(3)}_{\mathrm{M}_n}$ or $\mathcal{E}^{(3)}_{\mathrm{M}_n \cap \mathrm{Cl}_n}$, defined as in Eqs.~\eqref{EjMn} and~\eqref{EjMnCln}.

\begin{enumerate}
    \item For $S_1,S_2,S_3 \subseteq [2n]$, $\mathcal{E}^{(3)}\kett{\gamma_{S_1}}\kett{\gamma_{S_2}}\kett{\gamma_{S_3}} \neq 0$ only if $S_1, S_2, S_3$ are of the form
\[ S_1 = A_1 \cup A_2, \quad S_2 = A_2 \cup A_3, \quad S_3 = A_3 \cup A_1 \]
for some mutually disjoint subsets $A_1,A_2,A_3 \subseteq [2n]$.
\item $\mathcal{E}^{(3)}\kett{\gamma_{A_1}\gamma_{A_2}} \kett{\gamma_{A_2}\gamma_{A_3}}\kett{\gamma_{A_3}\gamma_{A_1}} = \mathcal{E}^{(3)}\kett{\gamma_{A_1'}\gamma_{A_2'}} \kett{\gamma_{A_2'}\gamma_{A_3'}}\kett{\gamma_{A_3'}\gamma_{A_1'}}$ for any subsets $A_1,A_2,A_3,A_1',A_2',A_3' \subseteq [2n]$ such that $A_1,A_2,A_3$ are mutually disjoint, $A_1',A_2',A_3'$ are mutually disjoint, and $|A_i| = |A_i'|$ for all $i \in \{1,2,3\}$.
\end{enumerate} 
\end{lemma}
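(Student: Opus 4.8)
The plan is to run the same two-step argument used for $\mathcal{E}^{(2)}$ in Lemma~\ref{lem:E2}, now with three tensor factors, inserting one extra combinatorial observation to recognise the ``even-multiplicity'' condition as the disjoint-union form in the statement. For part~1, I would, for each index $\mu \in [2n]$, take the reflection $Q_\mu \in \mathrm{B}(2n) \subset \orth$ with $\mU_{Q_\mu}\kett{\gamma_\mu} = -\kett{\gamma_\mu}$ and $\mU_{Q_\mu}\kett{\gamma_\nu} = \kett{\gamma_\nu}$ for $\nu \neq \mu$. Using Fact~\ref{fact:mUQ}(a), $\mU_{Q_\mu}^{\otimes 3}\kett{\gamma_{S_1}}\kett{\gamma_{S_2}}\kett{\gamma_{S_3}} = (-1)^{m_\mu}\kett{\gamma_{S_1}}\kett{\gamma_{S_2}}\kett{\gamma_{S_3}}$ where $m_\mu \coloneqq |\{i : \mu \in S_i\}|$, so Eq.~\eqref{Einvariance} forces $\mathcal{E}^{(3)}$ to annihilate the vector unless $m_\mu$ is even, i.e.\ $m_\mu \in \{0,2\}$, for every $\mu$.

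The remaining point in part~1 is purely set-theoretic: I would check that ``$m_\mu \in \{0,2\}$ for all $\mu$'' is equivalent to $S_1 = A_1 \cup A_2$, $S_2 = A_2 \cup A_3$, $S_3 = A_3 \cup A_1$ with $A_1,A_2,A_3$ pairwise disjoint. The converse is immediate. For the forward direction, given such $S_i$, set $A_1 \coloneqq S_1 \cap S_3$, $A_2 \coloneqq S_1 \cap S_2$, $A_3 \coloneqq S_2 \cap S_3$; these are pairwise disjoint because $m_\mu \le 2$ rules out $S_1 \cap S_2 \cap S_3$, and each $\mu \in S_1$ has $m_\mu \in \{1,2\}\cap\{0,2\} = \{2\}$, hence lies in exactly one further $S_j$, so $S_1 = A_1 \cup A_2$ (and similarly for $S_2,S_3$).

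For part~2, I would exploit permutation symmetry. Given two triples $(A_1,A_2,A_3)$, $(A_1',A_2',A_3')$ with the $A_i$ pairwise disjoint, the $A_i'$ pairwise disjoint, and $|A_i| = |A_i'|$, I would choose a permutation $\pi$ of $[2n]$ that restricts to the unique order-preserving bijection $A_i \to A_i'$ on each block (this is consistent since the blocks are disjoint, with the images disjoint), extended arbitrarily on the complements (which have equal size). Letting $Q' \in \mathrm{B}(2n)$ be the associated permutation matrix, so $\mU_{Q'}\kett{\gamma_\mu} = \kett{\gamma_{\pi(\mu)}}$, order-preservation on each block gives $\mU_{Q'}(\gamma_{A_i}) = \gamma_{A_i'}$ exactly (no sign), hence $\mU_{Q'}(\gamma_{A_i}\gamma_{A_j}) = \gamma_{A_i'}\gamma_{A_j'}$ for the three relevant pairs; applying $\mU_{Q'}^{\otimes 3}$ to $\kett{\gamma_{A_1}\gamma_{A_2}}\kett{\gamma_{A_2}\gamma_{A_3}}\kett{\gamma_{A_3}\gamma_{A_1}}$ sends it to the primed vector, and $\mathcal{E}^{(3)} \circ \mU_{Q'}^{\otimes 3} = \mathcal{E}^{(3)}$ from Eq.~\eqref{Einvariance} finishes it.

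I expect the only delicate step to be the sign bookkeeping in part~2: a single permutation must simultaneously realise all three maps $A_i \mapsto A_i'$ without introducing phases from reordering Majoranas inside the products $\gamma_{A_i}\gamma_{A_j}$. Choosing $\pi$ to be order-preserving on each $A_i$ is precisely what makes $\mU_{Q'}(\gamma_{A_i}) = \gamma_{A_i'}$ on the nose, and the pairwise disjointness of the $A_i$ (and of the $A_i'$) is what allows these three order-preservation constraints to be imposed at once. Everything else is a routine three-fold echo of Lemma~\ref{lem:E2}.
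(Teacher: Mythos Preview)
Your proposal is correct and follows essentially the same approach as the paper: a single-index reflection to force even multiplicity in part~(a), and a block permutation in $\mathrm{B}(2n)$ combined with Eq.~\eqref{Einvariance} in part~(b). Your treatment is in fact slightly more explicit than the paper's on two points the paper glosses over---the set-theoretic identification of the ``each $\mu$ in zero or two $S_i$'' condition with the disjoint-union form, and the order-preserving choice of $\pi$ on each block to avoid sign issues---but these are elaborations of the same argument, not a different route.
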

Note that since Majorana operators anticommute, $\kett{\gamma_{A_i}\gamma_{A_j}}$ may differ from $\kett{\gamma_{A_i \cup A_j}}$ by a minus sign. We provide the proof of Lemma~\ref{lem:E3}, which uses symmetry arguments similar to those in Lemma~\ref{lem:E2}, in Appendix~\ref{app:E3}. This lemma should perhaps make the form of $\kett{\Upsilon^{(3)}_{k_1,k_2,k_3}}$ (defined in Eq.~\eqref{Psi3}) in the expression for $\mathcal{E}^{(3)}$ somewhat more intuitive; it allows us to prove Theorem~\ref{thm:moments}(iii) as follows.

\begin{proof}[Proof sketch for Theorem~\ref{thm:moments}(iii)] 
Lemma~\ref{lem:E3}(b) (together with $(\mathcal{E}^{(3)})^\dagger = \mathcal{E}^{(3)}$, from Fact~\ref{fact:Eprojector}) implies that for each triplet of integers $k_1,k_2,k_3 \in \{0,\dots, 2n\}$ such that $k_1 + k_2 + k_3 \leq 2n$, there exists some number $c_{k_1,k_2,k_3}' \in \mathbb{C}$ such that
\begin{equation} \label{E3elements}
    \braa{\gamma_{A_1}\gamma_{A_2}}\braa{\gamma_{A_2}\gamma_{A_3}}\braa{\gamma_{A_3}\gamma_{A_1}}\mathcal{E}^{(3)}\kett{\gamma_{A_1'}\gamma_{A_2'}}\kett{\gamma_{A_2'}\gamma_{A_3'}}\kett{\gamma_{A_3'}\gamma_{A_1'}} = c'_{k_1,k_2,k_3}
\end{equation}
for any subsets $A_1,A_2, A_3, A_1', A_2',A_3' \subseteq [2n]$ such that $A_1, A_2, A_3$ are mutually disjoint, $A_1', A_2', A_3'$ are mutually disjoint, and $|A_i| = |A_i'| = k_i'$ for all $i \in \{1,2,3\}$. Inserting resolutions of identities to the left and right of $\mathcal{E}^{(3)}$, then using Lemma~\ref{lem:E3}(a) and Eq.~\eqref{E3elements} gives
\begin{align*}
    \mathcal{E}^{(3)} &= \sum_{\substack{k_1,k_2,k_3 \in \{0,\dots, 2n\}\\ k_1 + k_2 + k_3 \leq 2n}} c'_{k_1,k_2,k_3} \\
    &\qquad \times \sum_{\substack{A_1,A_2, A_3 \subseteq [2n] \text{ } \mathrm{ disjoint} \\ |A_1| = k_1, |A_2| = k_2, |A_3| = k_3} }\sum_{\substack{A_1',A_2', A_3' \subseteq [2n] \text{ } \mathrm{ disjoint} \\ |A_1'| = k_1, |A_2'| = k_2, |A_3'| = k_3}}\kett{\gamma_{A_1}\gamma_{A_2}} \kett{\gamma_{A_2}\gamma_{A_3}}\kett{\gamma_{A_3}\gamma_{A_1}}\braa{\gamma_{A_1'}\gamma_{A_2'}}\braa{\gamma_{A_2'}\gamma_{A_3'}}\braa{\gamma_{A_3'}\gamma_{A_1'}}
\end{align*}
which we can rewrite as 
\[ \mathcal{E}^{(3)} = \sum_{\substack{k_1,k_2,k_3 \in \{0,\dots, 2n\}\\ k_1 + k_2 + k_3 \leq 2n}} c_{k_1,k_2,k_3}\kett{\Upsilon_{k_1,k_2,k_3}^{(3)}}\braa{\Upsilon_{k_1,k_2,k_3}^{(3)}} \]
by letting $c_{k_1,k_2,k_3} \coloneqq c'_{k_1,k_2,k_3}{2n \choose k_1,k_2,k_3, 2n-k_1-k_2-k_3}$. The fact that $\mathcal{E}^{(3)}$ is a projector (Fact~\ref{fact:Eprojector}) implies that each $c_{k_1,k_2,k_3}$ is either $0$ or $1$. In Appendix~\ref{app:E3}, we show that for each $k_1,k_2, k_3$ appearing in the sum, $\mathcal{E}^{(3)}\kett{\Upsilon_{k_1,k_2,k_3}^{(3)}} = \kett{\Upsilon_{k_1,k_2,k_3}^{(3)}}$ by proving that $\mU_Q^{\otimes 3}\kett{\Upsilon_{k_1,k_2,k_3}^{(3)}} = \kett{\Upsilon_{k_1,k_2,k_3}^{(3)}}$ for all $Q \in \orth$; it then follows that $c_{k_1,k_2,k_3} = 1$.
\end{proof}

\subsection{Classical shadows via matchgate circuits} \label{sec:shadows}

In this subsection, we characterise the classical shadows arising from the uniform distribution over the continuous group $\mathrm{M}_n$ [Eq.~\eqref{Mn}] of all matchgate circuits, and from the uniform distribution over the discrete group $\mathrm{M}_n \cap \mathrm{Cl}_n$  [Eq.~\eqref{MnCl}] of matchgate circuits that are also in the Clifford group $\mathrm{Cl}_n$. To implement the classical shadows protocol, we must be able sample unitaries from these distributions and implement them on a quantum computer; we discuss efficient methods for doing so in Appendix~\ref{app:random_circuit_sampling}.

With Theorem~\ref{thm:moments} in hand, we can straightforwardly find explicit expressions for the measurement channel $\mathcal{M}$ [Eq.~\eqref{Mgeneral2}] in the protocol as well as the variance $\mathrm{Var}[\hat{o}_i]$ [Eq.~\eqref{Varoi}] of the expectation value estimators $\hat{o}_i$ obtained from the classical shadows, when either of the two distributions is used. Since Eqs.~\eqref{Mgeneral2} and~\eqref{Varoi} depend on the distribution only through the 2- and 3-fold twirl channels, and Theorem~\ref{thm:moments} shows that the $j$-fold twirl channels for $\mathrm{M}_n$ and $\mathrm{M}_n \cap \mathrm{Cl}_n$ coincide for $j \in \{1,2,3\}$, it follows that the classical shadows measurement channel and the variances are exactly the same for the two distributions. Hence, we will use the same notation ($\mathcal{M}$ and $\mathrm{Var}[\hat{o}_i]$) for both distributions. 

\subsubsection{Measurement channel} \label{sec:shadows_channel}

First, we calculate the classical shadows measurement channel $\mathcal{M}$ by substituting the expression for the 2-fold twirl from Theorem~\ref{thm:moments}(ii) into Eq.~\eqref{Mgeneral2}.

\begin{proof}[Proof of Eq.~\eqref{matchgateshadowschannel}]
By Eq.~\eqref{Mgeneral2}, the measurement channel $\mathcal{M}$ associated with the uniform distribution over $\mathrm{M}_n$ or over $\mathrm{M}_n\cap \mathrm{Cl}_n$ is given by 
\[ \mathcal{M}(A) = \tr_1\left[\sum_{b \in \{0,1\}^n} \mathcal{E}^{(2)}(\ket{b}\bra{b}^{\otimes 2})(A \otimes I) \right] \]
for all $\mathcal{A} \in \LHn$,
where $\mathcal{E}^{(2)} \coloneqq \mathcal{E}^{(2)}_{\mathrm{M}_n} = \mathcal{E}^{(2)}_{\mathrm{M}_n \cap \mathrm{Cl}_n} = \sum_{k=0}^{2n}\kett{\Upsilon_{k}^{(2)}}\braa{\Upsilon_k^{(2)}}$ with $\kett{\Upsilon_k}^{(2)} \coloneqq {2n\choose k}^{-1/2} \sum_{S \in {[2n]\choose k}} \kett{\gamma_S}\kett{\gamma_S}$, by Theorem~\ref{thm:moments}(ii). We can simplify this by noting that $\mathcal{E}^{(2)}(\ket{b}\bra{b}^{\otimes 2}) = \mathcal{E}^{(2)}(\ket{\mathbf{0}}\bra{\mathbf{0}}^{\otimes 2})$ for all $b\in \{0,1\}^n$, which follows from the fact that computational basis states $\ket{b}\bra{b}$ are all Gaussian states (see subsection~\ref{sec:background_Gaussianstates}), and $\mathcal{E}^{(2)} = \mathcal{E}^{(2)}_{\mathrm{M}_n}$ is invariant under composition with any Gaussian unitary channel $\mathcal{U}_Q$. More explicitly, for each $b \in \{0,1\}^n$, let $Q_b \in \mathrm{B}(2n)$ be the matrix such that $\mathcal{U}_{Q_b}$ maps $\gamma_{2j-1} \mapsto -\gamma_{2j-1}$ for every $j \in [n]$ such that $b_j = 1$, and leaves all the other $\gamma_\mu$ unchanged. Then, $\mU_{Q_b}(\ket{b}\bra{b}) = \ket{\mathbf{0}}\bra{\mathbf{0}}$ from Eq.~\eqref{bb}, so $\mathcal{E}^{(2)}(\ket{b}\bra{b}^{\otimes 2}) = (\mathcal{E}^{(2)}\circ \mU_{Q_b})(\ket{b}\bra{b}^{\otimes 2}) = \mathcal{E}^{(2)}(\ket{\mathbf{0}}\bra{\mathbf{0}})$ using Eq.~\eqref{Einvariance}. Hence,
\begin{equation} \label{ME0} \mathcal{M}(A) = 2^n \tr_1\left[\mathcal{E}^{(2)}(\ket{\mathbf{0}}\bra{\mathbf{0}}^{\otimes 2})(A\otimes I) \right],\end{equation}
so it remains to calculate $\mathcal{E}^{(2)}(\ket{\mathbf{0}}\bra{\mathbf{0}}^{\otimes 2})$.

For convenience, let $\kett{\Pi_0} \equiv \ket{\mathbf{0}}\bra{\mathbf{0}}$ in Liouville representation, so $\mathcal{E}^{( 2)}(\ket{\mathbf{0}}\bra{\mathbf{0}}^{\otimes 2}) = \sum_{k=0}^{2n}\kett{\Upsilon_k^{(2)}}\braa{\Upsilon_k^{(2)}}(\kett{\Pi_0}^{\otimes 2})$. From Eq.~\eqref{bb}, we have  $\ket{\mathbf{0}}\bra{\mathbf{0}} = \frac{1}{2^n}\sum_{T \subseteq [n]}\prod_{j \in T} (-i\gamma_{2j-1}\gamma_{2j})$,
so we can expand $\kett{\Pi_0}$ in the $\gamma_S$ basis as
\begin{equation} \label{Pi0pairs} \kett{\Pi_0} = \frac{1}{\sqrt{2^n}} \sum_{T \subseteq [n]} (-i)^{|T|} \kett{\gamma_{\text{pairs}(T)}},\end{equation}
where $\text{pairs}(T) \coloneqq \bigcup_{j \in T}\{2j-1,2j\}$. From this, it is clear that $\brakett{\gamma_S}{\Pi_0}$ is only nonzero for even-cardinality subsets $S$ of the form $\text{pairs}(T)$ for some $T \subseteq [n]$; in such cases, $\brakett{\gamma_S}{\Pi_0} = (-i)^{|T|}/\sqrt{2^n}$. Thus,
\begin{align*}
    \braa{\Upsilon_k^{(2)}}(\kett{\Pi_0}^{\otimes 2}) &= {2n\choose k}^{-1/2} \sum_{S \in {[2n]\choose k}} \brakett{\gamma_S}{\Pi_0}^2 \\
    &= {2n\choose k}^{-1/2} \mathbf{1}_{\text{$k$ even}} \sum_{T \in {[n]\choose k/2}} \left( \frac{1}{\sqrt{2^n}}(-i)^{|T|}\right)^2 \\
    &= {2n\choose k}^{-1/2} \mathbf{1}_{\text{$k$ even}}{n\choose k/2} \frac{1}{2^n} (-1)^{k/2}  ,
\end{align*}
so we have
\begin{align*}
    \mathcal{E}^{(2)}(\ket{\mathbf{0}}\bra{\mathbf{0}}^{\otimes 2}) &= \sum_{k = 0}^{2n} {2n\choose k}^{-1/2} \mathbf{1}_{\text{$k$ even}}{n\choose k/2} \frac{1}{2^n} (-1)^{k/2} \kett{\Upsilon_k^{(2)}} \\
    &= \frac{1}{2^n}\sum_{\ell =0}^n (-1)^\ell {2n\choose 2\ell}^{-1/2}{n\choose \ell} {2n\choose 2\ell}^{-1/2} \sum_{S \in {[2n]\choose 2\ell}}\kett{\gamma_S}\kett{\gamma_S} \\
    &= \frac{1}{2^{2n}} \sum_{\ell =0}^n {2n\choose 2\ell}^{-1} {n\choose \ell} \sum_{S \in {[2n]\choose 2\ell}}\gamma_S^\dagger \otimes \gamma_S,
\end{align*}
using $\gamma_{S}^\dagger = (-1)^{|S|(|S|-1)/2}\gamma_S$ in the last line, which is a simple consequence of the fact that Majorana operators anticommute (and recalling that $\kett{\gamma_S} \equiv \gamma_S/\sqrt{2^n}$ due to normalisation). Inserting this into Eq.~\eqref{ME0} gives
\begin{align*}
    \mathcal{M}(A) &= \frac{1}{2^n} \sum_{\ell = 0}^n {2n\choose 2\ell}^{-1} {n\choose \ell}\sum_{S \in {[2n]\choose 2\ell}}\tr(\gamma_S^\dagger A)\gamma_S \\
    &= \sum_{\ell = 0}^n {2n\choose 2\ell}^{-1} {n\choose \ell}\mathcal{P}_{2\ell}(A),
\end{align*}
by definition of the projectors $\mathcal{P}_k$ [Eq.~\eqref{Pk}].
\end{proof}

From Eq.~\eqref{matchgateshadowschannel}, we see that $\mathcal{M}$ maps $\LHn$ onto the subspace $\Gamma_{\text{even}} = \oplus_{\ell =0}^{n}\Gamma_{2\ell}$ of even operators. We denote the (pseudo)inverse of $\mathcal{M}$ on this subspace by $\mathcal{M}^{-1}: \Gamma_{\text{even}} \to \Gamma_{\text{even}}$, which clearly has the form given in Eq.~\eqref{Minverse}.

We now consider the consequences of the fact that the image of $\mathcal{M}$ is $\Gamma_{\text{even}}$ for the classical shadows protocol. Observe from Eq.~\eqref{bb} that for any computational basis state $\ket{b}$, the corresponding density operator $\ket{b}\bra{b}$ is in $\Gamma_{\text{even}}$. Then, since conjugation by any matchgate circuit leaves $\Gamma_{\text{even}}$ invariant, the post-measurement state $U_Q^\dagger \ket{b}\bra{b}U_Q$ is an even operator for any $b \in \{0,1\}^n$ and $Q \in \orth$. Thus, for both distributions ($\mathrm{M}_n$ and $\mathrm{M}_n \cap \mathrm{Cl}_n$), the classical shadows $\mathcal{M}^{-1}(U_Q^\dagger \ket{b}\bra{b}U_Q)$ are well-defined. As noted in section~\eqref{sec:background_shadows}, these classical shadows constitute unbiased estimates for the unknown state $\rho$ if $\rho$ is in the image of $\mathcal{M}$---in our case, if $\rho \in \Gamma_{\text{even}}$. More explicitly, defining the random variable $\hat{\rho} = \mathcal{M}^{-1}(\hat{U}_Q^\dagger\ket{\hat{b}}\bra{\hat{b}}\hat{U}_Q)$ as in Eq.~\eqref{hatrho} (with $D$ taken to be either of our distributions over matchgate circuits), we have 
\begin{align*} \E[\hat{\rho}] = \mathcal{M}^{-1}(\mathcal{M}(\rho)) = \mathcal{P}_{\text{even}}(\rho),
\end{align*}
where $\mathcal{P}_{\text{even}} \coloneqq \sum_{\ell = 0}^{n} \mathcal{P}_{2\ell}$ is the projector onto $\Gamma_{\text{even}}$. Thus, to obtain unbiased estimators $\hat{o}_i \coloneqq \tr(O_i\hat{\rho})$ for the expectation values of arbitrary observables $O_i$ with respect to $\rho$ using this classical shadows protocol, it suffices for $\rho$ to be an even operator. However, it would also suffice for all of the observables $O_i$ to be even (and $\rho$ to be arbitrary), due to the Hilbert-Schmidt orthogonality of Majorana operators. In particular, $\tr(\mathcal{P}_{\mathrm{even}}(A)B) = \tr(A\mathcal{P}_{\mathrm{even}}(B))$ for any $A,B \in \LHn$, so if $O_i \in \Gamma_{\text{even}}$, we have
\[ \E[\hat{o}_i] = \tr(O_i\mathcal{P}_{\text{even}}(\rho)) = \tr( \mathcal{P}_{\text{even}}(O_i)\rho)= \tr(O_i\rho). \]
Therefore, we require that \textit{either} $\rho \in \Gamma_{\text{even}}$, or $O_i \in \Gamma_{\text{even}}$ for all $i$. 
\subsubsection{Variance} \label{sec:shadows_variance}

Having calculated $\mathcal{M}^{-1}$, we now substitute it along with the expression for the 3-fold twirl from Theorem~\ref{thm:moments}(iii) into Eq.~\eqref{Varoi} to obtain the variance bound Eq.~\eqref{varianceboundtilde}, which holds for any $O \in \Gamma_{\text{even}}$. Note that it suffices to consider the variance for even observables---even if we are in the case where $\rho \in \Gamma_{\text{even}}$ while the observables $O_i$ can be arbitrary, we have $\hat{o}_i = \tr(O_i \hat{\rho}) = \tr( \mathcal{P}_{\text{even}}(O_i)\hat{\rho})$ since our classical shadow samples $\mathcal{M}^{-1}(U_Q^\dagger \ket{b}\bra{b} U_Q)$ are all even operators, so the variance for $O_i$ is equal to the variance for its projection $\mathcal{P}_{\text{even}}(O_i) \in \Gamma_{\text{even}}$ onto the even subspace.

\begin{proof}[Proof of Eq.~\eqref{varianceboundtilde}] 
The variance of the unbiased estimator $\hat{o}$ for $\tr(O\rho)$ can be upper bounded as $\Var[\hat{o}] \leq \E[|\hat{o}|^2]$, and from Eq.~\eqref{Varoi}, we have
\begin{align*}
  \E[|\hat{o}|^2] = \tr\left[\sum_{b \in \{0,1\}^n} \mathcal{E}^{(3)}(\ket{b}\bra{b}^{\otimes 3}) \left(\rho \otimes \mathcal{M}^{-1}(O) \otimes \mathcal{M}^{-1}(O^\dagger\right) \right]  
\end{align*}
for $O \in \Gamma_{\text{even}}$, where $\mathcal{E}^{(3)} \coloneqq \mathcal{E}^{(3)}_{\mathrm{M}_n} = \mathcal{E}^{(3)}_{\mathrm{M}_n \cap \mathrm{Cl}_n}$. By Theorem~\ref{thm:moments}(ii), we have $\mathcal{E}^{(3)} = \sum_{\substack{k_1,k_2,k_3 \geq 0\\k_1 + k_2 + k_3 \leq 2n}} \kett{\Upsilon_{k_1,k_2,k_3}^{(3)}}\braa{\Upsilon_{k_1,k_2,k_3}^{(3)}}$, with $\kett{\Upsilon^{(3)}_{k_1,k_2,k_3}}$ given by Eq.~\eqref{Psi3}. Just as in the proof of Eq.~\eqref{matchgateshadowschannel}, we first use Eq.~\eqref{Einvariance} to infer that $\mathcal{E}^{(3)}(\ket{b}\bra{b}) = \mathcal{E}^{(3)}(\ket{\mathbf{0}}\bra{\mathbf{0}}^{\otimes 3})$ for all $b \in \{0,1\}$, leading to
\begin{equation} \label{EoE0} \E[|\hat{o}|^2] = 2^{n} \tr \left[\mathcal{E}^{(3)}(\ket{\mathbf{0}}\bra{\mathbf{0}}^{\otimes 3})\left(\rho \otimes \mathcal{M}^{-1}(O) \otimes \mathcal{M}^{-1}(O^\dagger)\right) \right].
\end{equation}

Next, we calculate $\mathcal{E}^{(3)}(\ket{\mathbf{0}}\bra{\mathbf{0}}^{\otimes 3}) = \sum_{\substack{k_1,k_2,k_3 \geq 0\\k_1 + k_2 +k_3 \leq 2n}}\kett{\Upsilon^{(3)}_{k_1,k_2,k_3}}\braa{\Upsilon_{k_1,k_2,k_3}^{(3)}}(\kett{\Pi_0}^{\otimes 3})$, where $\kett{\Pi_0} \equiv \ket{\mathbf{0}}\bra{\mathbf{0}}$. By Eq.~\eqref{Psi3}, 
\[ \braa{\Upsilon_{k_1,k_2,k_3}^{(3)}}(\kett{\Pi_0}^{\otimes 3}) = {{2n\choose k_1,k_2,k_3}'}^{-1/2} \sum_{\substack{A_1,A_2, A_3 \subseteq [2n] \text{ disjoint} \\|A_1| = k_1, |A_2| = k_2, |A_3| = k_3}} \brakett{\gamma_{A_1}\gamma_{A_2}}{\Pi_0}\brakett{\gamma_{A_2}\gamma_{A_3}}{\Pi_0} \brakett{\gamma_{A_3}\gamma_{A_1}}{\Pi_0},\]
where for brevity we write \begin{equation} \label{multinomial_shorthand} {n\choose k_1,k_2,k_3}' \equiv {n\choose k_1,k_2,k_3, 2n-k_1-k_2-k_3} \end{equation}
for the multinomial coefficient.
By Eq.~\eqref{Pi0pairs}, $\kett{\Pi_0}$ is a linear combination of Majorana products $\gamma_{\text{pairs}(T)}$ corresponding to subsets $\text{pairs}(T) \coloneqq \bigcup_{j \in T}\{2j-1,2j\}$ consisting only of pairs of indices $2j-1$ and $2j$, so we see that $\brakett{\gamma_{A_1}\gamma_{A_2}}{\Pi_0}\brakett{\gamma_{A_2}\gamma_{A_3}}{\Pi_0} \brakett{\gamma_{A_3}\gamma_{A_1}}{\Pi_0}$ can only be nonzero if $A_1 \cup A_2 = \text{pairs}(T_1')$, $A_2 \cup A_3 = \text{pairs}(T_2')$, and $A_3 \cup A_1 = \text{pairs}(T_3')$ for some $T_1',T_2', T_3' \subseteq [n]$. For mutually disjoint $A_1,A_2,A_3\subseteq[2n]$, this condition is equivalent to $A_1 = \text{pairs}(T_1)$, $A_2 = \text{pairs}(T_2)$, and $A_3 = \text{pairs}(T_3)$ for some mutually disjoint $T_1,T_2, T_3 \subseteq [n]$, in which case $\brakett{\gamma_{A_i}\gamma_{A_j}}{\Pi_0} = (-i)^{|T_i| + |T_j|}/\sqrt{2^n}$ (for $i,j \in \{1,2,3\}$, $i \neq j$) by Eq.~\eqref{Pi0pairs}. Hence,
\begin{align*}
\braa{\Upsilon_{k_1,k_2,k_3}^{(3)}}(\kett{\Pi_0}^{\otimes 3}) &=   {{2n\choose k_1,k_2,k_3}'}^{-1/2}\mathbf{1}_{k_1,k_2,k_3 \text{ even}} \\
&\quad \times \sum_{\substack{T_1,T_2, T_3 \subseteq [n] \text{ disjoint} \\ |T_1| = k_1/2, |T_2| = k_2/2, |T_3| = k_3/2}}\left(\frac{1}{\sqrt{2^n}} (-i)^{|T_1| + |T_2|}\right)\left(\frac{1}{\sqrt{2^n}} (-i)^{|T_2| + |T_3|}\right)\left(\frac{1}{\sqrt{2^n}} (-i)^{|T_3| + |T_1|}\right) \\
&= {{2n\choose k_1,k_2,k_3}'}^{-1/2}\mathbf{1}_{k_1,k_2,k_3 \text{ even}} {n\choose k_1/2, k_2/2, k_3/2}' \frac{1}{2^{3n/2}}(-1)^{(k_1 + k_2 + k_3)/2},
\end{align*}
from which we obtain
\begin{align*}
    \mathcal{E}^{(3)}(\ket{\mathbf{0}}\bra{\mathbf{0}}^{\otimes 3}) &= \sum_{\substack{k_1,k_2,k_3 \geq 0\\ k_1 + k_2 + k_3 \leq 2n}}{{2n\choose k_1,k_2,k_3}'}^{-1/2}\mathbf{1}_{k_1,k_2,k_3 \text{ even}} {n\choose k_1/2, k_2/2, k_3/2}' \frac{1}{2^{3n/2}}(-1)^{(k_1 + k_2 + k_3)/2} \kett{\Upsilon^{(3)}_{k_1,k_2,k_3}} \\
    &= \frac{1}{2^{3n/2}} \sum_{\substack{\ell_1,\ell_2,\ell_3 \geq 0\\ \ell_1 + \ell_2 + \ell_3 \leq n}} (-1)^{\ell_1 + \ell_2 + \ell_3} {{2n\choose 2\ell_1,2\ell_2,2\ell_3}'}^{-1/2} {n\choose \ell_1, \ell_2, \ell_3}'  \\
    &\quad \times {{2n\choose 2\ell_1,2\ell_2,2\ell_3}'}^{-1/2} \sum_{\substack{A_1, A_2, A_3 \subseteq [2n] \text{ disjoint}\\|A_1| =2\ell_1, A_2 = 2\ell_2, |A_3|= 2\ell_3}} \kett{\gamma_{A_1}\gamma_{A_2}}\kett{\gamma_{A_2}\gamma_{A_3}}\kett{\gamma_{A_3}\gamma_{A_1}} \\
    &= \frac{1}{2^{3n}} \sum_{\substack{\ell_1,\ell_2,\ell_3 \geq 0\\ \ell_1 + \ell_2 + \ell_3 \leq n}} (-1)^{\ell_1 + \ell_2 + \ell_3}  \frac{{n\choose \ell_1, \ell_2, \ell_3}'}{{2n\choose 2\ell_1,2\ell_2,2\ell_3}'} \sum_{\substack{A_1, A_2, A_3 \subseteq [2n] \text{ disjoint}\\|A_1| =2\ell_1, A_2 = 2\ell_2, |A_3|= 2\ell_3}} \gamma_{A_1}\gamma_{A_2}\otimes\gamma_{A_2}\gamma_{A_3}\otimes\gamma_{A_3}\gamma_{A_1}.
\end{align*}
Now, note that we can change to any other Majorana basis $\widetilde{\gamma}_\mu = \sum_{\nu=1}^{2n} Q_{\mu\nu}\gamma_\mu = \mU_Q(\gamma_\mu)$ for $Q \in \orth$, by applying $\mU_Q^{\otimes 3}$ to both sides and using $\mU_Q^{\otimes 3} \circ \mathcal{E}^{(3)} = \mathcal{E}^{(3)}$ from Eq.~\eqref{Einvariance}, yielding
\[ \mathcal{E}^{(3)}(\ket{\mathbf{0}}\bra{\mathbf{0}}^{\otimes 3}) = \frac{1}{2^{3n}} \sum_{\substack{\ell_1,\ell_2,\ell_3 \geq 0\\ \ell_1 + \ell_2 + \ell_3 \leq n}} (-1)^{\ell_1 + \ell_2 + \ell_3}  \frac{{n\choose \ell_1, \ell_2, \ell_3}'}{{2n\choose 2\ell_1,2\ell_2,2\ell_3}'} \sum_{\substack{A_1, A_2, A_3 \subseteq [2n] \text{ disjoint}\\|A_1| =2\ell_1, A_2 = 2\ell_2, |A_3|= 2\ell_3}} \widetilde\gamma_{A_1}\widetilde\gamma_{A_2}\otimes\widetilde\gamma_{A_2}\widetilde\gamma_{A_3}\otimes\widetilde\gamma_{A_3}\widetilde\gamma_{A_1}.\]

Inserting this into Eq.~\eqref{EoE0} gives
\begin{align*}
    \E[|\hat{o}|^2] &= \frac{1}{2^{2n}} \sum_{\substack{\ell_1,\ell_2,\ell_3 \geq 0 \\ \ell_1 + \ell_2 + \ell_3 \leq n}} (-1)^{\ell_1 + \ell_2 + \ell_3}\frac{{n\choose \ell_1, \ell_2, \ell_3}'}{{2n\choose 2\ell_1,2\ell_2,2\ell_3}'} \\
    &\qquad \times \sum_{\substack{A_1, A_2, A_3 \subseteq [2n] \text{ disjoint}\\|A_1| =2\ell_1, A_2 = 2\ell_2, |A_3|= 2\ell_3}}\tr\left(\widetilde\gamma_{A_1}\widetilde\gamma_{A_2}\rho\right)\tr\left(\widetilde\gamma_{A_2}\widetilde\gamma_{A_3}\mathcal{M}^{-1}(O)\right)\tr\left(\widetilde\gamma_{A_3}\widetilde\gamma_{A_1}\mathcal{M}^{-1}(O^\dagger)\right).
\end{align*}
Finally, we use Eq.~\eqref{Minverse} to write, for $|A_2| = 2\ell_2$ and $|A_3| = 2\ell_3$,
\begin{align*}
    \tr\left(\widetilde\gamma_{A_2}\widetilde\gamma_{A_3}\mathcal{M}^{-1}(O)\right) &= \tr\left(\mathcal{M}^{-1}(\widetilde\gamma_{A_2}\widetilde\gamma_{A_3}) O\right) = {2n\choose 2\ell_2 + 2\ell_3}{n\choose \ell_2 + \ell_3}^{-1}\tr(\widetilde\gamma_{A_2}\widetilde\gamma_{A_3}O),
\end{align*}
and similarly for
$\tr(\gamma_{A_3}\gamma_{A_1}\mathcal{M}^{-1}(O^\dagger))$. We thus arrive at Eq.~\eqref{varianceboundtilde}:
\begin{align*}
    \Var[\hat{o}] \leq \E[|\hat{o}|^2] &= \frac{1}{2^{2n}} \sum_{\substack{\ell_1,\ell_2,\ell_3 \geq 0 \\ \ell_1 + \ell_2 + \ell_3 \leq n}} (-1)^{\ell_1 + \ell_2 + \ell_3}\frac{{n\choose \ell_1, \ell_2, \ell_3}'}{{2n\choose 2\ell_1,2\ell_2,2\ell_3}'}\frac{{2n\choose 2(\ell_2 + \ell_3)}}{{n\choose \ell_2 + \ell_3}}\frac{{2n\choose 2(\ell_3 + \ell_1)}}{{n\choose \ell_3 + \ell_1}} \\
    &\qquad \times \sum_{\substack{A_1, A_2, A_3 \subseteq [2n] \text{ disjoint}\\|A_1| =2\ell_1, A_2 = 2\ell_2, |A_3|= 2\ell_3}}\tr\left(\widetilde\gamma_{A_1}\widetilde\gamma_{A_2}\rho\right)\tr\left(\widetilde\gamma_{A_2}\widetilde\gamma_{A_3}O\right)\tr\left(\widetilde\gamma_{A_3}\widetilde\gamma_{A_1}O^\dagger\right).
\end{align*}
\end{proof}
We further analyse this variance bound for particular observables of interest in Section~\ref{sec:variance}.

\section{Efficient post-processing of matchgate shadows} \label{sec:compute}

In this section, we start by proving Theorems~\ref{prop:densityoperators} and~\ref{prop:overlaps} in subsection~\ref{sec:compute_densityoperators} and~\ref{sec:compute_overlaps}, which give explicit expressions that can be used to efficiently evaluate fidelities with Gaussian states as well as overlaps with Slater determinants via matchgate shadows. In subsection~\ref{sec:compute_general}, we then present a general procedure for efficiently estimating the expectation values of a broader family of observables, including those that yield overlaps with arbitrary pure Gaussian states. 

\subsection{Estimating fidelities with fermionic Gaussian states} \label{sec:compute_densityoperators}

In this subsection, we state and prove the general form of Theorem~\ref{prop:densityoperators} (stated for the special case of invertible $C_{\varrho_1}$ in \ref{sec:summary_densityoperators}) , which allows us to efficiently compute the expectation value $\tr(\varrho \mathcal{M}^{-1}(U_Q^\dagger\ket{b}\bra{b}U_Q))$ of any fermionic Gaussian density operator $\varrho$ with respect to any matchgate shadow sample $\mathcal{M}^{-1}(U_Q^\dagger\ket{b}\bra{b}U_Q)$ (see Eq.~\eqref{estimatedensityoperators}). Recall that $\tr(\varrho \mathcal{M}^{-1}(U_Q^\dagger\ket{b}\bra{b}U_Q))$ gives an unbiased estimate of $\tr(\varrho \rho)$, allowing us to use our classical shadows to efficiently estimate the fidelities between the unknown state $\rho$ and any Gaussian states. We will later bound the variance of these estimates in subsection~\ref{sec:variance_densityoperators}.
\begingroup
\def\thetheorem{2}
\begin{theorem}[For arbitrary $C_{\varrho_1}$] \label{prop:densityoperators}
Let $\varrho_1$ and $\varrho_2$ be density operators of $n$-mode fermionic Gaussian states $\mathrm{(\text{Eq.}~\eqref{varrhodef})}$, with covariance matrices $C_{\varrho_1}$ and $C_{\varrho_2}$ $\mathrm{(\text{Eq.}~\eqref{covariancematrix})}$. Let $2r$ be the rank of $C_{\varrho_1}$, and let $Q_1\in \orth$ be any orthogonal matrix and $C_{\varrho_1'}$ any invertible $2r\times 2r$ matrix such that 
\begin{equation} \label{Cvarrho1'} C_{\varrho_1} = Q_1^{\mathrm{T}} \begin{pmatrix} C_{\varrho_1}' &{0} \\ {0} &0\end{pmatrix} Q_1. \end{equation} Then, for any $\ell \in \{0,\dots, n\}$, $\tr(\varrho_1\mathcal{P}_{2\ell}(\varrho_2))$ is the coefficient of $z^\ell$ in the polynomial
\begin{equation} \label{pgeneral} p_{\varrho_1,\varrho_2}(z) = \frac{1}{2^{n}}\Pf\left(C_{\varrho_1}'\right)\Pf\left(-C'^{-1}_{\varrho_1} + z(Q_1 C_{\varrho_2} Q_1^{\mathrm{T}})\Big|_{[2r]} \right) \end{equation}
(where $M\big|_{[2r]}$ denotes the matrix $M$ restricted to the first $2r$ rows and columns).\end{theorem}
\addtocounter{theorem}{-1}
\endgroup
Note from Eq.~\eqref{covariancegeneral} that $Q_1$ and $C_{\varrho_1}'$ always exist. In the case where $C_{\varrho_1}$ is invertible, we can take $Q_1 = I$ and $C_{\varrho_1}' = C_{\varrho_1}$, and Eq.~\eqref{pgeneral} reduces to Eq.~\eqref{pinvertible}.

$p_{\varrho_1,\varrho_2}(z)$ is a polynomial of degree at most $r$. Its coefficients can be calculated via polynomial interpolation, which entails evaluating $p_{\varrho_1,\varrho_2}(z)$ at $r+1$ values of $z$. From Eq.~\eqref{pgeneral}, each evaluation involves computing the Pfaffian of two $2r\times 2r$ matrices, which takes $\mathcal{O}(r^3)$ time. Thus, the total runtime of this approach is $\mathcal{O}(r^4)$. 
In Appendix~\ref{app:linearPfaffian}, we give an alternative procedure for computing all of the coefficients of $p_{\varrho_1,\varrho_2}(z)$ in $\mathcal{O}(r^3)$ time. 
Therefore, for any Gaussian state $\varrho$ with covariance matrix $C_{\varrho}$ of rank $r$, we can compute $\tr(\varrho\mathcal{M}^{-1}(U_Q^\dagger \ket{b}\bra{b} U_Q))$ (for any matchgate circuit $U_Q$ and computational basis state $\ket{b}$) in $\mathcal{O}(r^3)$ time, by finding the coefficients in the polynomial
\[ p_{\rho, U_Q^\dagger \ket{b}\bra{b}U_Q}(z) = \frac{1}{2^n}\Pf\left(-C_{\rho}'\right)\Pf\left(-C_{\varrho_1}'^{-1} + z (Q_1 Q^{\mathrm{T}} C_{\ket{b}}Q Q_1^{\mathrm{T}}) \Big|_{[2r]} \right), \]
then substituting the coefficient of $z^\ell$ for $\tr(\varrho \mathcal{P}_{2\ell}(U_Q^\dagger\ket{b}\bra{b}U_Q))$, for each $\ell \in \{0,\dots, r\}$, into Eq.~\eqref{estimatedensityoperators}.

We prove Theorem~\ref{prop:densityoperators} (as well as Theorem~\ref{prop:overlaps}, in the next subsection) by interpreting Majorana operators as generators of a Clifford algebra and using basic Clifford algebra properties, which we briefly summarise here. We refer the reader to Ref.~\citenum{Hestenes_1984} for the definition of a Clifford algebra as well as the proofs of the identities we use. An important property of a Clifford algebra is that it is \textit{graded}: any element $A$ of the algebra can be written as
\[ A = \gr{A}_0 + \gr{A}_1 + \gr{A}_2 + \dots = \sum_{k \geq 0}\gr{A}_k, \]
where $\gr{A}_k$ is the \textit{grade-$k$} or \textit{$k$-vector} part of $A$. If $A = \gr{A}_k$ for some nonnegative integer $k$, then $A$ is said to be \textit{homogeneous} of grade $k$ and is called a \textit{$k$-vector}. $0$-vectors, $1$-vectors, and $2$-vectors are referred to as \textit{scalars}, \textit{vectors}, and \textit{bivectors}, respectively. Scalars are identified with elements of the underlying field (in our case, $\mathbb{C}$), while for $k \geq 1$, a $k$-vector is the sum of \textit{$k$-blades} (or \textit{simple $k$-vectors}), where $A$ is a $k$-blade if and only if $A = a_1a_2\dots a_k$
for some vectors $a_1,a_2,\dots, a_k$ such that $a_i a_j = -a_j a_i$ for all $i,j \in [k]$ with $i \neq j$. Every vector squares to a scalar, i.e., $a^2 = \gr{a^2}_0$ for any $1$-vector $a$. The grade operation $\gr{\, \cdot\, }_k$ has the properties $\gr{A + B}_k = \gr{A}_k + \gr{B}_k$, $\gr{c A}_k = c\gr{A}_k = \gr{A}_k c$ if $c = \gr{c}_0$, and $\gr{\gr{A}_k}_l = \delta_{kl} \gr{A}_k$.  We will often make use of the following identity: for arbitrary elements $A$ and $B$,
\begin{equation} \label{AB0}
    \gr{AB}_0 = \sum_{k}\gr{\gr{A}_k\gr{B}_k}_0.
\end{equation} 

Thus, our space $\LHn$ of $n$-qubit operators can be viewed as a representation of a complex $2^{2n}$-dimensional Clifford algebra $\mathcal{C}_{2n}$ (with the addition and multiplication operations of the algebra corresponding to operator addition and multiplication). For any set of Majorana operators $\{\widetilde\gamma_\mu\}_{\mu \in [2n]}$, each $\widetilde{\gamma}_\mu$ represents a $1$-vector; then, since Majorana operators mutually anticommute, products of $k$ Majorana operators, i.e., $\widetilde{\gamma}_S$ for some $S \in {[2n]\choose k}$, represent $k$-blades. Thus, the grade-$k$ part of any operator is its projection onto the subspace $\Gamma_k$, and $k$-vectors are the elements of $\Gamma_k$. In particular, any scalar $c$ in the Clifford algebra is represented by a multiple of the identity operator: $c \equiv cI$. Hence, using the same notation for Clifford algebra elements and their representations, we have
\begin{equation} \label{trtogr0} \tr(A) = 2^{n} \gr{A}_0 \end{equation}
for $A \in \LHn \cong \mathcal{C}_{2n}$. Another useful observation is that for any $k$-vector $A_k$, $\gamma_{[2n]}A_k$ is of grade $2n-k$, so for any element $A$, we have $\gr{\gamma_{[2n]}A}_{2n} = \gamma_{[2n]}\gr{A}_0$. Since $\gamma_{[2n]}^\dagger \gamma_{[2n]} = I \equiv 1$, we can multiply both sides by $\gamma_{[2n]}^\dagger$ to obtain
\begin{equation} \label{pseudoscalarduality} \gr{A}_0 = \gamma_{[2n]}^\dagger \gr{\gamma_{[2n]}A}_{2n}.\end{equation}

It will also be convenient to introduce notation for the \textit{wedge product} on the Clifford algebra, which is defined in terms of the multiplication operation of the algebra and the grade operation as follows. For homogeneous elements $A_k = \gr{A_k}_k$ and $B_l = \gr{B_l}_l$, the wedge product is defined by 
\begin{equation} \label{wedgedef1} A_k \wedge B_l \coloneqq \gr{A_k B_l}_{k+l}\end{equation} 
Then, for arbitrary elements $A$ and $B$,
\begin{equation} \label{wedgedef2} A \wedge B \coloneqq \sum_k \sum_l \gr{A}_k \wedge \gr{B}_l. \end{equation}
The wedge is associative, distributive with respect to addition, and antisymmetric under exchange of vectors. From its definition, we see for instance that $\widetilde{\gamma}_{\mu_1} \dots \widetilde{\gamma}_{\mu_k} = \widetilde{\gamma}_{\mu_1} \wedge \dots \wedge \widetilde{\gamma}_{\mu_k}$, for any set of Majorana operators $\{\widetilde{\gamma}_\mu\}_{\mu \in [2n]}$. We adopt the convention that in the absence of brackets, the wedge product is performed first (i.e., $AB \wedge C = A (B \wedge C)$).

Finally, we state the following fact, which expresses the coefficient of $\gamma_{[2n]} = \gamma_1 \wedge \dots \wedge \gamma_{2n}$ in the $n$th wedge power of an arbitrary bivector $B \in \Gamma_2$---or more generally, the coefficient of $e_1 \wedge \dots \wedge e_{2k} $ in $B^{\wedge k}$, where $B$ is a bivector in the subalgebra generated by arbitrary 1-vectors $e_1,\dots, e_k$---in terms of a Pfaffian. This is easily proven using the definition of the Pfaffian and the antisymmetry of the wedge product.
\begin{fact} \label{fact:bivectorPfaffian}
    Let $e_1,\dots, e_{2k}$ be arbitrary 1-vectors. For any bivector $B = \frac{1}{2} \sum_{\mu,\nu =1}^{2k} M_{\mu\nu}e_\mu \wedge e_\nu$, where $M$ is an antisymmetric $2k \times 2k$ matrix,  we have\[ B^{\wedge k} = k!\, \Pf(M) e_1 \wedge \dots \wedge e_{2k}. \]
\end{fact}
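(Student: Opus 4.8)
The plan is to prove the identity by a direct expansion of $B^{\wedge k}$, collecting terms, and matching the outcome against the defining sum for the Pfaffian. This uses nothing beyond the associativity, bilinearity, and antisymmetry under exchange of $1$-vectors of the wedge product, all of which are recorded above. First I would substitute the given expression $B = \tfrac12\sum_{\mu,\nu=1}^{2k} M_{\mu\nu}\, e_\mu\wedge e_\nu$ into the $k$-fold wedge power and expand using associativity and distributivity:
\[ B^{\wedge k} = \frac{1}{2^k}\sum_{\mu_1,\nu_1,\dots,\mu_k,\nu_k=1}^{2k} M_{\mu_1\nu_1}\cdots M_{\mu_k\nu_k}\; e_{\mu_1}\wedge e_{\nu_1}\wedge\cdots\wedge e_{\mu_k}\wedge e_{\nu_k}. \]

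Next I would observe that whenever two of the $2k$ indices $\mu_1,\nu_1,\dots,\mu_k,\nu_k$ coincide the corresponding wedge of $1$-vectors vanishes by antisymmetry, so only index tuples that are permutations of $(1,\dots,2k)$ contribute. Writing such a tuple as a permutation $\sigma\in S_{2k}$ with $\sigma(2i-1)=\mu_i$ and $\sigma(2i)=\nu_i$, reordering the surviving wedge into the canonical order picks up a factor $\mathrm{sgn}(\sigma)$, and the coefficient becomes $\prod_{i=1}^k M_{\sigma(2i-1),\sigma(2i)}$, so
\[ B^{\wedge k} = \frac{1}{2^k}\Bigl(\sum_{\sigma\in S_{2k}}\mathrm{sgn}(\sigma)\prod_{i=1}^k M_{\sigma(2i-1),\sigma(2i)}\Bigr)\, e_1\wedge\cdots\wedge e_{2k}. \]
Recognising the bracketed quantity as $2^k k!\,\Pf(M)$, by the standard permutation-sum definition of the Pfaffian of an antisymmetric $2k\times 2k$ matrix, then yields $B^{\wedge k} = k!\,\Pf(M)\, e_1\wedge\cdots\wedge e_{2k}$, as claimed.

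I do not anticipate a genuine obstacle; the computation is routine. The only points needing care are (i) tracking the sign $\mathrm{sgn}(\sigma)$ incurred when sorting $e_{\mu_1}\wedge\cdots\wedge e_{\nu_k}$ into the order $e_1\wedge\cdots\wedge e_{2k}$, and (ii) keeping the normalisation consistent: the prefactor $\tfrac12$ in $B$, together with the two-fold redundancy $M_{\mu\nu}\,e_\mu\wedge e_\nu = M_{\nu\mu}\,e_\nu\wedge e_\mu$ (both $M$ and the wedge being antisymmetric), is exactly what matches the $1/(2^k k!)$ in the definition of $\Pf$. Note that no hypothesis on the linear independence of the $e_\mu$ is required, since the expansion already exhibits $B^{\wedge k}$ as a scalar multiple of $e_1\wedge\cdots\wedge e_{2k}$ in all cases, and that scalar is $k!\,\Pf(M)$ whether or not the top wedge itself vanishes.
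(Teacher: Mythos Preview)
Your proposal is correct and is precisely the approach the paper indicates: the paper states that Fact~\ref{fact:bivectorPfaffian} ``is easily proven using the definition of the Pfaffian and the antisymmetry of the wedge product,'' and your direct expansion, followed by recognising the permutation-sum definition of $\Pf(M)$, is exactly that.
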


We are now ready to prove Theorem~\ref{prop:densityoperators}.
\begin{proof}[Proof of Theorem~\ref{prop:densityoperators}]
Let $\varrho_1 = \prod_{j=1}^n \frac{1}{2}\left(I - i\lambda_j\gamma'_{2j-1}\gamma'_{2j}\right)$ and $\varrho_2 = \prod_{j=1}^n \frac{1}{2}\left(I - i\lambda_j''\gamma''_{2j-1}\gamma''_{2j}\right)$,
where $\lambda_j,\lambda_j'' \in [-1,1]$, and $\{\gamma'_\mu\}_{\mu \in [2n]}$ and $\{\gamma''_{\mu}\}_{\mu\in[2n]}$ are sets of Majorana operators: $\gamma_{\mu}' = U_{Q'}^\dagger\gamma_{\mu}U_{Q'}$ and $\gamma_{\mu}'' = U_{Q''}^\dagger \gamma_{\mu} U_{Q''}$ for some $Q',Q'' \in \orth$. To find $\tr(\varrho_1 \mathcal{P}_{2\ell}(\varrho_2))$, observe that if we expand the expression for $\varrho_2$, the projector $\mathcal{P}_{2\ell}$ picks out products of precisely $\ell$ of the bivectors $\lambda_j''\gamma_{2j-1}''\gamma_{2j}''$. Therefore,  $\tr(\varrho_1\mathcal{P}_{2\ell}(\varrho_2))$ is the coefficient of $z^\ell$ in
\begin{equation} \label{prop1proof1} \tr\left(\left[\prod_{j=1}^n \frac{1}{2}\left(I - i\lambda_j\gamma'_{2j-1}\gamma'_{2j}\right)\right] \left[\prod_{k=1}^n \frac{1}{2}\left(I - iz\lambda_k''\gamma''_{2k-1}\gamma''_{2k}\right) \right]\right) \eqqcolon p_{\varrho_1,\varrho_2}(z). \end{equation}

For convenience, we define $\widetilde{\gamma}_\mu \coloneqq U_{Q'}\gamma_\mu'' U_{Q'}^\dagger$ and let $\widetilde{\lambda}_j \coloneqq z \lambda_j''$, so that we can use the cyclic property of the trace to write $p_{\varrho_1,\varrho_2}(z) = \tr([\prod_j \frac{1}{2}(I - i\lambda_j\gamma_{2j-1}\gamma_{2j})][\prod_k \frac{1}{2} (I - i \widetilde{\lambda}_k \widetilde{\gamma}_{2k-1}\widetilde{\gamma}_{2k})])$. We will also use the following notation for the simple bivectors $\gamma_{2j-1}\gamma_{2j}$ and $\widetilde{\gamma}_{2j-1}\widetilde{\gamma}_{2j}$, and products thereof: for any $j \in [n]$ and $T \subseteq [n]$,
\begin{equation} \label{betadef}
\begin{aligned} &\beta_j \coloneqq \gamma_{2j-1}\gamma_{2j}, \qquad &&\widetilde{\beta}_j \coloneqq \widetilde{\gamma}_{2j-1}\widetilde{\gamma}_{2j}, \\
&\beta_T \coloneqq \prod_{j\in T} \beta_j, \qquad &&\widetilde{\beta}_T \coloneqq \prod_{j \in T} \widetilde{\beta}_j
\end{aligned}
\end{equation}
for any $T \subseteq [n]$. Note that $\beta_T$ and $\widetilde{\beta}_T$ are blades of grade $2|T|$. Analogously, denote products of $\lambda_j$ and $\widetilde{\lambda}_j$ by
\[ \lambda_T \coloneqq \prod_{j \in T} \lambda_j, \quad \widetilde{\lambda}_T \coloneqq \prod_{j \in T}\widetilde{\lambda}_j. \] Thus, we can rewrite Eq.~\eqref{prop1proof1} as 
\begin{align*}
    p_{\varrho_1,\varrho_2}(z) &= \tr\left(\left[\prod_{j = 1}^n \frac{1}{2}(I - i\lambda_j \beta_j)\right] \left[\prod_{j' = 1}^n\frac{1}{2} (I - i\widetilde{\lambda}_{j'}\widetilde{\beta}_{j'} \right] \right) \\
    &= 2^n \gr{\frac{1}{2^{2n}}\sum_{T \subseteq [n]}(-i)^{|T|} \lambda_T \beta_T \sum_{T'\subseteq[n]}(-i)^{|T'|}\widetilde{\lambda}_{T'}\widetilde{\beta}_{T'}}_0,
\end{align*}
using Eq.~\eqref{trtogr0} in the second equality.

To illustrate the main ideas, we first consider the case where $\lambda_j \neq 0$ for all $j \in [n]$ (i.e., $C_{\varrho_1}$ is invertible) for simplicity, before extending the proof to the general case. Using the identities in Eqs.~\eqref{AB0} and~\eqref{pseudoscalarduality}, we have
\begin{align}
    p_{\varrho_1,\varrho_2}(z) &= \frac{1}{2^n} \sum_k \gr{\gr{\sum_{T \subseteq [n]}(-i)^{|T|} \lambda_T \beta_T}_k \gr{\sum_{T' \subseteq [n]}(-i)^{|T'|}\widetilde{\lambda}_{T'}\widetilde{\beta}_{T'}}_k}_0 \nonumber \\
    &= \frac{1}{2^n} \sum_{\ell =0}^n \gr{\sum_{T \in {[n]\choose \ell}} (-i)^\ell \lambda_T \beta_T \sum_{T' \in {[n]\choose \ell}} (-i)^{\ell}\widetilde{\lambda}_{T'}\widetilde{\beta}_{T'}}_0 \nonumber \\
    &= \frac{1}{2^n}\sum_{\ell=0}^n \gamma_{[2n]}^\dagger \gr{\gamma_{[2n]}(-1)^\ell \sum_{T \in {[n]\choose \ell}} \lambda_T \beta_T \sum_{T'\in {[n]\choose \ell}}\widetilde{\lambda}_{T'} \widetilde{\beta}_{T'}}_{2n}, \label{prop1proofbeforeflip}
\end{align}
where the second line follows from the fact that $\beta_T$ and $\widetilde{\beta}_T$ are of grade $2|T|$. Noting that $\gamma_{[2n]} = \beta_{[n]}$, we have 
\begin{equation} \label{prop1proofflip} \gamma_{[2n]} \beta_T = (-1)^{|T|} \beta_{[n] \setminus T},\end{equation}
since the $\beta_j$ commute and $\beta_j^2 = -1$. Also, $\lambda_{T} = \lambda_{[2n]}/\lambda_{[n]\setminus T}$, so relabelling $T \to [n] \setminus T$ in the second sum gives 
\begin{align}
    p_{\varrho_1,\varrho_2}(z) &= \frac{1}{2^n} \lambda_{[2n]} \gamma_{[2n]}^\dagger \sum_{\ell = 0}^n \gr{\sum_{T \in {[n]\choose n -\ell}} \frac{1}{\lambda_T} \beta_T \sum_{T' \in {[n]\choose \ell}}\widetilde{\lambda}_{T'}\widetilde{\beta}_{T'}}_{2n} \label{prop1proofinter0} \\
    &= \frac{1}{2^n}\lambda_{[2n]}\gamma_{[2n]}^\dagger \sum_{\ell =0}^n \sum_{T \in {[n]\choose n-\ell}} \sum_{T' \in {[n]\choose \ell}} \left(\frac{1}{\lambda_T} \beta_T\right) \wedge (\widetilde\lambda_{T'} \widetilde{\beta}_{T'}) \label{prop1proofinter}
\end{align}
by definition of the wedge product (Eq.~\eqref{wedgedef1}). Now, since any two bivectors commute with respect to the wedge product and $\beta_j \wedge \beta_j = \widetilde{\beta}_j \wedge \widetilde{\beta}_j = 0$, it follows from the multinomial formula that 
\begin{equation} \label{prop1proofmultinomial} \left(\sum_{j =1}^n \frac{1}{\lambda_j} \beta_j + \sum_{j = 1}^{n} \widetilde{\lambda}_j \widetilde{\beta}_j \right)^{\wedge n} =n! \sum_{\ell =0}^n \sum_{T \in {[n]\choose n-\ell}} \sum_{T' \in {[n]\choose \ell}} \left(\frac{1}{\lambda_T} \beta_T\right) \wedge (\widetilde\lambda_{T'} \widetilde{\beta}_{T'}). \end{equation}
Finally, we write the bivector on the LHS as a linear combination of $\gamma_\mu\gamma_\nu$, so that we can apply Fact~\ref{fact:bivectorPfaffian}. Letting 
\[ \Lambda_1 \coloneqq \bigoplus_{j=1}^n \begin{pmatrix} 0 &\lambda_j \\ -\lambda_j &0 \end{pmatrix}, \qquad \Lambda_2  \coloneqq \bigoplus_{j=1}^n \begin{pmatrix} 0 &\lambda_j'' \\ -\lambda_j'' &0 \end{pmatrix},  \]
we have (recalling $\widetilde{\lambda}_j = z\lambda_j''$)
\begin{align*}
    \left(\sum_{j =1}^n \frac{1}{\lambda_j} \beta_j + \sum_{j = 1}^{n} \widetilde{\lambda}_j \widetilde{\beta}_j\right)^{\wedge n} &= \left(\frac{1}{2}\sum_{\mu, \nu \in [2n]} (-\Lambda_1^{-1} + zQ' Q''^{\mathrm{T}} \Lambda_2 Q'' Q'^{\mathrm{T}})_{\mu\nu} \gamma_\mu \wedge \gamma_\nu\right)^{\wedge n} \\
    &= n! \Pf(-\Lambda_1^{-1} + zQ'Q''^{\mathrm{T}} \Lambda_2 Q''Q'^{\mathrm{T}}) \gamma_{[2n]} 
\end{align*}
using Fact~\ref{fact:bivectorPfaffian}. We also have $\lambda_{[2n]} = \Pf(\Lambda_1)$. Inserting these into Eq.~\eqref{prop1proofinter} and noting that $Q'^{\mathrm{T}} \Lambda_1 Q' = C_{\varrho_1}$ and $Q''^{\mathrm{T}} \Lambda_2 Q'' = C_{\varrho_2}$, we arrive at 
\begin{align*}
    p_{\varrho_1,\varrho_2}(z) &= \frac{1}{2^n} \Pf\left(\Lambda_1\right) \Pf\left(-\Lambda_1^{-1} + zQ' Q''^{\mathrm{T}} \Lambda_2 Q'' Q'^{\mathrm{T}}\right)\gamma_{[2n]}^\dagger\gamma_{[2n]} \\
    &= \frac{1}{2^n} \Pf\left(Q'C_{\varrho_1}Q'^{\mathrm{T}}\right)\Pf\left(Q'(-C_{\varrho_1}^{-1} + z C_{\varrho_2})Q'^{\mathrm{T}} \right)
\end{align*}
for invertible $C_{\varrho_1}$. We can then write this as $p_{\varrho_1,\varrho_2}(z) = 2^{-n}\Pf(C_{\varrho_1}') \Pf(-C_{\varrho_1}'^{-1} + zQ_1 C_{\varrho_2}Q_1^{\mathrm{T}})$ for any $C_{\varrho_1}' = Q_1 C_{\varrho_1} Q_1^{\mathrm{T}}$ with $Q_1 \in \orth$ using the Pfaffian identity
\begin{equation} \label{PfBABT} \Pf(BAB^{\mathrm{T}}) = \det(B) \Pf(A). \end{equation}

For the general case, let $J \coloneqq \{j\in [n]: \lambda_j \neq 0\}$. From Eq.~\eqref{covariancegeneral}, $|J| = r$, where $2r$ is the rank of $C_{\varrho_1}$. The equations up to Eq.~\eqref{prop1proofbeforeflip} still hold, except we can sum over only subsets $T$ such that $\lambda_T \neq 0$---that is, subsets $T \subseteq J$:
\[ p_{\varrho_1,\varrho_2}(z) = \frac{1}{2^n} \sum_{\ell=0}^{|J|} \gamma_{[2n]}^\dagger \gr{ \gamma_{[2n]}(-1)^\ell \sum_{T \in {J\choose \ell}} \lambda_T \beta_T \sum_{T' \in {[n]\choose \ell}} \widetilde{\lambda}_{T'} \widetilde{\beta}_{T'}}_{2n}. \] 
Next, instead of using Eq.~\eqref{prop1proofflip}, we write $\gamma_{[2n]} = \beta_{[n] \setminus J}\beta_J$ and observe that $\beta_J \beta_T = (-1)^{|T|} \beta_{J\setminus T}$ for any $T \subseteq J$. Also, $\lambda_T = \lambda_J/\lambda_{J\setminus T}$ for $T \subseteq J$, so Eqs.~\eqref{prop1proofinter0} and~\eqref{prop1proofinter} change to
\begin{align}
    p_{\varrho_1,\varrho_2}(z) &= \frac{1}{2^n} \lambda_J \gamma_{[2n]}^\dagger \sum_{\ell = 0}^{|J|} \gr{\beta_{[n] \setminus J}\sum_{T \in {J\choose |J| -\ell}} \frac{1}{\lambda_T} \beta_T \sum_{T' \in {[n]\choose \ell}}\widetilde{\lambda}_{T'}\widetilde{\beta}_{T'}}_{2n} \nonumber\\
    &= \frac{1}{2^n}\lambda_J\gamma_{[2n]}^\dagger \beta_{[n]\setminus J} \wedge \sum_{\ell =0}^{|J|} \sum_{T \in {J\choose |J|-\ell}} \sum_{T' \in {[n]\choose \ell}}  \left(\frac{1}{\lambda_T} \beta_T\right) \wedge (\widetilde\lambda_{T'} \widetilde{\beta}_{T'}), \label{prop1proofgeneral1}
\end{align}
again using Eq.~\eqref{wedgedef1} to obtain the second equality. The generalisation of Eq.~\eqref{prop1proofmultinomial} is
\[ \left(\sum_{j \in J} \frac{1}{\lambda_j} \beta_j + \sum_{j = 1}^{n} \widetilde{\lambda}_j \widetilde{\beta}_j \right)^{\wedge |J|} =(|J|)! \sum_{\ell =0}^{|J|} \sum_{T \in {J\choose |J|-\ell}} \sum_{T' \in {[n]\choose \ell}} \left(\frac{1}{\lambda_T} \beta_T\right) \wedge (\widetilde\lambda_{T'} \widetilde{\beta}_{T'}), \]
and we can write the bivector on the LHS as $\frac{1}{2}\sum_{\mu,\nu \in \text{pairs}(J)} (-\Lambda_{1}^{-1})_{\mu\nu} \gamma_\mu\gamma_\nu + \frac{1}{2}\sum_{\mu,\nu \in [2n]} z(Q'Q''^{\mathrm{T}}\Lambda_2 Q''Q'^{\mathrm{T}})_{\mu\nu}\gamma_\mu\gamma_\nu$ (where $\text{pairs}(J) \coloneqq \bigcup_{j \in J}\{2j-1,2j\}$). Substituting this into Eq.~\eqref{prop1proofgeneral1}, we have
\[ p_{\rho_1,\rho_2}(z) = \frac{1}{2^n} \lambda_J \gamma_{[2n]}^\dagger \beta_{[n] \setminus J} \wedge \frac{1}{(|J|)!} \left(-\frac{1}{2}\sum_{\mu,\nu \in \text{pairs}(J)} (\Lambda_{1}^{-1})_{\mu\nu} \gamma_\mu\gamma_\nu + \frac{1}{2}z \sum_{\mu,\nu \in [2n]} (Q'Q''^{\mathrm{T}}\Lambda_2 Q''Q'^{\mathrm{T}})_{\mu\nu}\gamma_\mu \gamma_\nu \right)^{\wedge |J|} \]
Now, the key additional step for this general case is to observe that $\beta_{[n]\setminus J} \wedge (\gamma_\mu \gamma_\nu) = \beta_{[n]\setminus J} \wedge \gamma_\mu \wedge \gamma_\nu$ is equal to zero whenever $\mu$ and/or $\nu$ are in $\text{pairs}([n]\setminus J)$, due to the antisymmetry of the wedge product. This implies that we can restrict the sum in the second term to over $\mu, \nu \in \text{pairs}(J)$ as well. We can then apply Fact~\ref{fact:bivectorPfaffian} to obtain
\begin{align*}
    p_{\varrho_1,\varrho_2}(z) &= \frac{1}{2^n} \lambda_J \gamma_{[2n]}^\dagger \beta_{[n]\setminus J} \wedge \left[\Pf\left(-\Lambda_1^{-1} \Big|_{\text{pairs}(J)} + zQ'Q''^{\mathrm{T}} \Lambda_2 Q''Q'^{\mathrm{T}}\Big|_{\text{pairs}(J)}\right) \gamma_{\text{pairs}(J)} \right] \\
    &= \frac{1}{2^n} \Pf\left(\Lambda_1\Big|_{\text{pairs}(J)}\right)\Pf\left(-\Lambda_1^{-1} \Big|_{\text{pairs}(J)} + zQ'C_{\varrho_2}Q'^{\mathrm{T}}\Big|_{\text{pairs}(J)}\right),
\end{align*}
since $\beta_{[n] \setminus J} \wedge \gamma_{\text{pairs}(J)} = \beta_{[n]\setminus J}\wedge \beta_J  = \gamma_{[2n]}$. Theorem~\ref{prop:densityoperators} follows by noting that any $C_{\varrho_1}'$ satisfying Eq.~\eqref{Cvarrho1'} is related to $\Lambda_1\big|_{\text{pairs}(J)}$ by an orthogonal basis change, and using the Pfaffian identity in Eq.~\eqref{PfBABT}.
\end{proof}

As a side remark, taking $z = 1$ in Theorem~\ref{prop:densityoperators} yields the general formula
\begin{equation} \label{trvarrho1varrho2} \tr(\varrho_1 \varrho_2) = \frac{1}{2^n} \Pf(-C_{\varrho_1}')\Pf\left(C_{\varrho_1}'^{-1} + (Q_1 C_{\varrho_2}Q_1^{\mathrm{T}})\Big|_{[r]}\right)\end{equation}
for the overlap between two arbitrary Gaussian states (since $p_{\varrho_1,\varrho_2}(1) = \sum_{\ell=0}^{2n} \tr(\varrho_1\mathcal{P}_{2\ell}(\varrho_2))$ and $\sum_{\ell=0}^{2n}\mathcal{P}_{2\ell}(\varrho_2) = \mathcal{P}_{\text{even}}(\varrho_2) = \varrho_2$). A special case of this formula, where $\varrho_1$ and $\varrho_2$ are both pure, is given in Ref.~\citenum{Bravyi2017-pw}. 

\subsection{Estimating overlaps with Slater determinants} \label{sec:compute_overlaps}

In this subsection, we prove Theorem~\ref{prop:overlaps}. As discussed in subsection~\ref{sec:summary_Slater}, Theorem~\ref{prop:overlaps} provides a method for efficiently estimating the expectation value of $\ket{\varphi}\bra{\mathbf{0}}$, where $\ket{\varphi}$ is an arbitrary Slater determinant, which in turn allows us to estimate the overlap of a pure state with $\ket{\varphi}$. 

The proof of Theorem~\ref{prop:overlaps} uses some of the ideas in the proof of Theorem~\ref{prop:densityoperators} (in the case where $C_{\varrho_1}$ is not invertible). The basic Clifford algebra properties we use in the proof are summarised in the previous subsection.

\begin{proof}[Proof of Theorem~\ref{prop:overlaps}] 
Let $\varrho = \prod_{j=1}^n \frac{1}{2}(I - i\lambda_j\gamma'_{2j-1}\gamma'_{2j})$, where $\lambda_j \in [-1,1]$ and $\gamma_{\mu}' = U_{Q'}^\dagger \gamma_\mu U_{Q'}$ for some $Q' \in \orth$. By the same reasoning as in the proof of Theorem~\ref{prop:overlaps}, $\tr(\ket{\varphi}\bra{\mathbf{0}}\mathcal{P}_{2\ell}(\varrho))$ is the coefficient of $z^\ell$ in 
\[ \tr\left(\ket{\varphi}\bra{\mathbf{0}} \left[\prod_{j=1}^n \frac{1}{2}(I - iz\lambda_j \gamma_{2j-1}'\gamma_{2j}') \right] \right) \eqqcolon q_{\ket{\varphi},\varrho}(z). \] With $\widetilde{Q}$ the orthogonal matrix defined in Eq.~\eqref{QSlater}, we can write $\ket{\varphi}\bra{\mathbf{0}}$ as
\begin{align} \label{phi0eq}
    \ket{\varphi}\bra{\mathbf{0}} &= \widetilde{a}_1^\dagger\dots \widetilde{a}_\zeta^\dagger \ket{\mathbf{0}}\bra{\mathbf{0}} = U_{\widetilde{Q}}^\dagger a_1^\dagger \dots a_\zeta^\dagger U_{\widetilde{Q}} \ket{\mathbf{0}}\bra{\mathbf{0}} = U_{\widetilde{Q}}^\dagger a_1^\dagger \dots a_\zeta^\dagger \ket{\mathbf{0}}\bra{\mathbf{0}}U_{\widetilde{Q}},
\end{align}
where in the last equality, we use the fact that $U_{\widetilde{Q}}$ conserves the number operator ($\sum_{j=1}^n \widetilde{a}_j^\dagger\widetilde{a}_j = \sum_{j=1}^n a_j^\dagger a_j$), so $U_{\widetilde{Q}}\ket{\mathbf{0}} \propto \ket{\mathbf{0}}$. Hence, 
defining $\widetilde{\gamma}_\mu \coloneqq U_{\widetilde{Q}}\gamma_{\mu}' U_{\widetilde{Q}}^\dagger$ and $\widetilde{\lambda}_j \coloneqq z\lambda_j$, we can use the cyclic property of the trace along with Eq.~\eqref{bb} to obtain
\begin{align*} q_{\ket{\varphi},\varrho}(z) = \tr\left(a_1^\dagger \dots a_\zeta^\dagger \left[\prod_{j=1}^n\frac{1}{2}(I - i\gamma_{2j-1}\gamma_{2j}) \right]\left[\prod_{j'=1}^n \frac{1}{2}(I - i\widetilde{\lambda}_{j} \widetilde{\gamma}_{2j'-1}\widetilde{\gamma}_{2j'})\right]\right).
\end{align*}

We will use the notation $\beta_j$, $\widetilde{\beta}_j$, $\beta_T$, $\widetilde{\beta}_T$ defined in Eq.~\eqref{betadef}, as well as $\widetilde{\lambda}_T \coloneqq \prod_{j \in T}\widetilde{\lambda}_j$. Note that $a_j^\dagger$ commutes with $\frac{1}{2}(I- i\beta_k)$ for all $k \neq j$ while $a_j^\dagger \frac{1}{2}(I - i\beta_j) = a_j^\dagger(I - a_j^\dagger a_j) = a_j^\dagger$. Thus,  
\begin{align*}
    q_{\ket{\varphi},\varrho}(z) &= \tr\left(a_1^\dagger \dots a_\zeta^\dagger \left[\prod_{j=\zeta + 1}^n \frac{1}{2}(I - i\gamma_{2j-1}\gamma_{2j}) \right] \left[\prod_{j'=1}^n \frac{1}{2}(I - i\widetilde{\lambda}_{j} \widetilde{\gamma}_{2j'-1}\widetilde{\gamma}_{2j'})\right] \right) \\ 
    &= \frac{1}{2^{n-\zeta}}\sum_k \gr{\gr{a_1^\dagger \dots a_\zeta^\dagger \sum_{T \subseteq [\zeta + 1:n]} (-i)^{|T|} \beta_T }_{k} \gr{\sum_{T' \subseteq[n]}(-i)^{|T'|}\widetilde{\lambda}_{T'} \widetilde{\beta}_{T'} }_k }_0,
\end{align*}
where $[\zeta + 1:n] \coloneqq \{\zeta + 1, \dots, n\}$ and we apply Eqs.~\eqref{trtogr0} and~\eqref{AB0} in the second line. Since $a_1^\dagger, \dots, a_\zeta^\dagger, \gamma_{2\zeta + 1},\dots \gamma_{2\zeta}$ mutually anticommute, $a_1^\dagger \dots a_\zeta^\dagger \beta_T$ is a blade of grade $\zeta + 2|T|$, so $\gr{a_1^\dagger \dots a_\zeta^\dagger \beta_T}_k = \delta_{k, \zeta + 2|T|} a_1^\dagger \dots a_\zeta^\dagger \beta_T$. Using this along with Eq.~\eqref{pseudoscalarduality} gives
\[ q_{\ket{\varphi},\varrho}(z) =  \frac{1}{2^{n-\zeta}} \sum_{\ell = \zeta/2}^n \gamma_{[2n]}^\dagger\gr{\gamma_{[2n]} a_1^\dagger \dots a_\zeta^\dagger \sum_{T \in {[\zeta +1:n] \choose \ell - \zeta/2}} (-i)^{\ell - \zeta/2} \beta_T \sum_{T' \in {[n]\choose \ell}}(-i)^\ell \widetilde{\lambda}_{T'} \widetilde{\beta}_{T'} }_{2n}.  \]
Next, $\gamma_{[2n]}a_1^\dagger \dots a_\zeta^\dagger = (-ia_1)^\dagger \dots (-ia_\zeta)^\dagger \gamma_{[2\zeta + 1:2n]}$ and $\gamma_{[2\zeta + 1:2n]} \beta_T = (-1)^{|T|} \beta_{[\zeta + 1:n] \setminus T}$ for any $T \subseteq [\zeta + 1:n]$, so
\begin{align*}
    q_{\ket{\varphi},\varrho}(z) &= \frac{1}{2^{n-\zeta}}  \gamma_{[2n]}^\dagger \sum_{\ell=\zeta/2}^n \gr{(-i)^\zeta a_1^\dagger \dots a_\zeta^\dagger \sum_{T \in {[\zeta + 1:n] \choose \ell - \zeta/2}}(-i)^{\ell - \zeta/2} (-1)^{\ell - \zeta/2}\beta_{[\zeta + 1:n]\setminus T} \sum_{T' \in {[n]\choose \ell}} (-i)^\ell\widetilde{\lambda}_{T'} \widetilde{\beta}_{T'} }_{2n} \\
    &= \frac{i^{\zeta/2}}{2^{n-\zeta}}  \gamma_{[2n]}^\dagger \sum_{\ell = \zeta/2}^n \gr{a_1^\dagger \dots a_\zeta^\dagger \sum_{T\in {[\zeta + 1:n] \choose n - \zeta/2 - \ell}} \beta_{T} \sum_{T' \in {[n]\choose \ell}}\widetilde{\lambda}_{T'}\widetilde{\beta}_{T'} }_{2n} \\
    &= \frac{i^{\zeta/2}}{2^{n-\zeta}}  \gamma_{[2n]}^\dagger (a_1^\dagger \dots a_\zeta^\dagger) \wedge \sum_{\ell = \zeta/2}^n \sum_{T\in {[\zeta + 1:n] \choose n - \zeta/2 - \ell}}\sum_{T' \in {[n]\choose \ell}} \beta_T \wedge (\widetilde{\lambda}_{T'} \widetilde{\beta}_{T'}) \\
    &= \frac{i^{\zeta/2}}{2^{n-\zeta}}  \gamma_{[2n]}^\dagger (a_1^\dagger \dots a_\zeta^\dagger) \wedge \frac{1}{(n -\zeta/2)!}\left(\sum_{j=\zeta + 1}^n \beta_j + \sum_{j=1}^n \widetilde{\lambda}_j \widetilde{\beta}_j \right)^{\wedge (n - \zeta/2)},
\end{align*}
where we have proceeded along the same lines as in the proof of Theorem~\ref{prop:densityoperators}---we change variables $T \to [\zeta + 1:n] \setminus T$ in the second line, apply the definition of the wedge product [Eq.~\eqref{wedgedef1}] in the third, and use the multinomial theorem in the fourth. 

Now, consider the basis of $1$-vectors 
\[ \{e_1, \dots, e_{2n}\} \coloneqq \{\sqrt{2}a_1^\dagger, \sqrt{2} a_1, \dots, \sqrt{2}a_\zeta^\dagger, \sqrt{2}a_\zeta, \gamma_{2\zeta + 1},\dots, \gamma_{2n} \}. \] It can be seen that the unitary matrix $W$ defined in Eq.~\eqref{Wdef} changes between $\{e_\mu\}_{\mu \in [2n]}$ and $\{\gamma_\mu\}_{\mu \in [2n]}$: we have $e_\mu = \sum_{\mu =1}^{2n} W_{\mu\nu}\gamma_\nu$, so $\gamma_\mu = \sum_{\mu=1}^{2n} W^\dagger_{\mu\nu} e_\nu$. Writing $q_{\ket{\varphi},\varrho}(z)$ in terms of $\{e_\mu\}_{\mu \in [2n]}$, we have 
\begin{align*}
    q_{\ket{\varphi},\varrho}(z) &= \frac{i^{\zeta/2}}{2^{n - \zeta}}   \gamma_{[2n]}^\dagger \left(\frac{e_1}{\sqrt{2}} \frac{e_3}{\sqrt{2}}  \dots \frac{e_{2\zeta -1}}{\sqrt{2}}\right) \\
    &\quad \wedge \frac{1}{(n-\zeta/2)!}\left(\frac{1}{2} \sum_{\mu,\nu \in \overline{S}_\zeta} (C_{\ket{\mathbf{0}}})_{\mu\nu} e_\mu e_\nu + \frac{1}{2} \sum_{\mu,\nu \in [2n]} z(W^* \widetilde{Q} C_{\varrho} \widetilde{Q}^{\mathrm{T}} W^\dagger)_{\mu\nu} e_\mu
    e_\nu \right)^{\wedge (n - \zeta/2)},
\end{align*}
where $\overline{S}_\zeta \coloneqq [2n] \setminus \{1,3,\dots, 2\zeta - 1\}$. 
Finally, by the antisymmetry of the wedge product, $(e_1 e_3 \dots e_{2\zeta -1}) \wedge (e_\mu e_\nu) = e_1 \wedge e_3 \wedge \dots \wedge e_{2\zeta - 1} \wedge e_\mu \wedge e_\nu$ vanishes whenever $\mu$ and/or $\nu$ are in $\{1,3,\dots, 2\zeta - 1\}$, so we can restrict the second sum to $\mu, \nu \in \overline{S}_\zeta$. We can then apply Fact~\ref{fact:bivectorPfaffian}, obtaining
\begin{align*}
    q_{\ket{\varphi},\varrho}(z) &= \frac{i^{\zeta/2}}{2^{n - \zeta/2}}   \gamma_{[2n]}^\dagger (e_1 e_3 \dots e_{2\zeta - 1}) \wedge \Pf\left(\left(C_{\ket{\mathbf{0}}} + zW^* \widetilde{Q} C_{\varrho}\widetilde{Q}^{\mathrm{T}} W^\dagger\right)\Big|_{\overline{S}_\zeta} \right) \bigwedge_{\mu \in \overline{S}_\zeta} e_\mu   \\
    &= \frac{i^{\zeta/2}}{2^{n - \zeta}} \Pf\left(\left(C_{\ket{\mathbf{0}}} + zW^* \widetilde{Q} C_{\varrho}\widetilde{Q}^{\mathrm{T}} W^\dagger\right)\Big|_{\overline{S}_\zeta} \right)\gamma_{[2n]}^\dagger (-1)^{\zeta(\zeta - 1)/2} \bigwedge_{\mu \in [2n]} e_\mu 
\end{align*}
Theorem~\ref{prop:overlaps} follows by noting that $\bigwedge_{\mu \in [2n]} e_\mu = (-i)^{\zeta} \gamma_{[2n]}$ and $(-1)^{\zeta(\zeta-1)/2} (-i)^\zeta = (-1)^{\zeta^2/2} = 1$ since $\zeta$ is even.
\end{proof}

\subsection{Efficient estimation of more general observables} \label{sec:compute_general}

In this subsection, we give a method for efficiently estimating the expectation values of a larger class of observables using our matchgate shadows, extending beyond the Gaussian density operators treated in Theorem~\ref{prop:densityoperators} and the $\ket{\varphi}\bra{\mathbf{0}}$ operators in Theorem~\ref{prop:overlaps}, which allow us to measure overlaps with Slater determinants. This class comprises arbitrary products of certain kinds of operators, including Gaussian density operators, Gaussian unitaries, and linear combinations of Majorana operators (i.e., $\sum_{\mu =1}^{2n} c_\mu \gamma_\mu$ for any $c_\mu \in \mathbb{C})$. A useful example of such a product is any operator of the form $\ket{\phi}\bra{\mathbf{0}} = U_{Q_\phi}\ket{\mathbf{0}}\bra{\mathbf{0}}$, where $\ket{\phi}$ is an arbitrary pure Gaussian state (not necessarily a Slater determinant) and $U_{Q_\phi}$ is a Gaussian unitary that prepares $\ket{\phi}$. The ability to efficiently estimate the expectation value of such operators allows us to efficiently estimate the overlaps between any pure state and any pure Gaussian states (see Appendix~\ref{app:overlaps_Gaussian}). 

Our method builds on ideas in Refs.~\citenum{Bravyi2008-lf} and~\citenum{Bravyi2017-pw}, which use Grassmann integration (also known as Berezin integration) to evaluate certain quantities involving fermionic operations, as well as identities in Ref.~\citenum{Caracciolo} for evaluating Grassmann integrals. Specifically, we generalise results in Appendix~A of \citen{Bravyi2017-pw}, and consider what is essentially a special case of the problem in \citen{Bravyi2008-lf}, and are therefore able to give a more explicit and streamlined protocol.\footnote{Moreover, using, e.g., Theorem~1 of \citen{Bravyi2008-lf} would lead to an asymptotically worse complexity than the one we obtain, when considering products of a constant number of observables of the kinds described above. The high-level techniques are closely related, but we benefit from directly treating a special case of their more general problem.} We also extend Theorem~A.15(d) of Ref.~\citenum{Caracciolo} to obtain an efficient method for evaluating Grassmann integrals of a certain form. 

We begin by reviewing the basics of Grassmann algebra and Grassmann integration (subsection~\ref{sec:Grassmann_review}). We then present our method, which can be outlined at a high level as follows. The objective is to evaluate $\tr(A^{(1)}\dots A^{(m)})$ where $A^{(i)}$ are elements of $\LHn$ (equivalently, the Clifford algebra $\mathcal{C}_{2n}$). 
The first main step is to express $\tr(A^{(1)}\dots A^{(m)})$ (for arbitrary $A^{(1)}, \dots, A^{(m)}$), as a Grassmann integral by associating to each Clifford algebra element $A^{(i)}$ a particular element of a $2^{2mn}$-dimensional Grassmann algebra (subsection~\ref{sec:CtoG}). We then show that the integral reduces to a fairly simple form when each $A^{(i)}$ belongs to a certain class of operators; importantly, operators that allow us to evaluate expectation values with respect to matchgate shadow samples fall into this class (subsection~\ref{sec:efficientexamples}). Finally, we construct a procedure for efficiently evaluating any integral of this form (Algorithm~\ref{alg:gMB}). 

After presenting the general method, we work through an example of applying it to efficiently evaluate the expectation value of $\ket{\phi}\bra{\mathbf{0}}$, where $\ket{\phi}$ is a Gaussian state, with respect to any matchgate shadow sample (subsection~\ref{sec:workedexample}). In conjunction with the procedure in Appendix~\ref{app:overlaps_Gaussian}, this enables us to efficiently estimate the overlap between any pure state and $\ket{\phi}$. 

\subsubsection{Review of Grassmann algebra and Grassmann integration} \label{sec:Grassmann_review}

We give a brief overview of Grassmann algebras and Grassmann integration, and refer the reader to Ref.~\citenum{Di_Francesco_1997} for a more detailed treatment. Our definitions and notational conventions are mainly based on Refs.~\citenum{Bravyi2004-fs} and~\citenum{Caracciolo}. 

For any $n \in \mathbb{Z}_{\geq 0}$, a $2^{2n}$-dimensional Grassmann algebra $\mathcal{G}_{2n}$ (over the field $\mathbb{C}$) is generated by $2n$ Grassmann variables $\theta_1,\dots, \theta_{2n}$ (for our purposes, it will suffice to consider Grassmann algebras with an even number of generators, although the same definitions hold for odd numbers of generators). The multiplication operation of the Grassmann algebra is associative and antisymmetric under the exchange of generators:
\begin{equation} \label{Grassmannproduct} \theta_\mu\theta_\nu = -\theta_\nu\theta_\mu \end{equation}
for any $\mu, \nu \in [2n]$; this implies that $\theta_\mu^2 = 0$. Any element $f(\theta) \in \mathcal{G}_{2n}$ can be written as
\begin{equation} \label{f(theta)}
f(\theta) = c_0 + \sum_{k =1}^{2n} \sum_{1 \leq \mu_1 < \dots < \mu_k \leq 2n} c_{\mu_1,\dots, \mu_k} \theta_{\mu_1}\dots \theta_{\mu_k}
\end{equation}  
for some coefficients $c_0, c_{\mu_1,\dots, \mu_k} \in \mathbb{C}$.  We say that  $f(\theta)$ is \textit{even} if $c_{\mu_1,\dots, \mu_k} = 0$ for all odd $k$. Even elements commute with all other elements in $\mathcal{G}_{2n}$. 
From these definitions, it can be observed that $\mathcal{G}_{2n}$ is equivalent to a $2^{2n}$-dimensional Clifford algebra $\mathcal{C}_{2n}$, except where the multiplication operation is taken to be the wedge product, defined by Eq.~\eqref{wedgedef1} and~\eqref{wedgedef2}. (Indeed, the properties of Grassmann integration, some of which we review below and in Appendix~\ref{app:Grassmann}, can all be interpreted in terms of Clifford algebra operations, and proved using Clifford algebra identities.)

The ``integral'' $\int d\theta_\mu$ is defined by its action 
\begin{equation} \label{Grassmannintegraldef1} \int d\theta_\mu \, 1 = 0,\qquad \int d\theta_\mu \, \theta_\nu = \delta_{\mu\nu}, \end{equation}
on scalars and generators, together with linearity over $\mathbb{C}$ and the rule
\begin{equation} \label{Grassmannintegraldef2} \int d\theta_\mu  \left(\theta_\nu f(\theta)\right) = \delta_{\mu\nu} f(\theta) - \theta_\nu \int d\theta_\mu \, f(\theta) \end{equation}
for any $f(\theta) \in \mathcal{G}_{2n}$. 
We will use the notation $\int d\theta_{\mu_1} \dots d\theta_{\mu_k} \equiv \int d\theta_{\mu_1} \dots \int d\theta_{\mu_k}$ for any $\mu_1,\dots, \mu_k \in [2n]$, and $D\theta \equiv d\theta_{2n} \dots d{\theta}_{1}$, so that $\int D\theta \equiv \int d\theta_{2n} \dots d\theta_1$. Then, note that $\int D\theta \, \theta_{\mu_1}\dots \theta_{\mu_k}$ is only nonzero if $k = 2n$ and $\{\theta_{\mu_1},\dots, \theta_{\mu_k}\} = [2n]$, in which case it is equal to $\mathrm{sgn}(\pi)$ where $\pi$ is the permutation that maps $(\mu_1,\dots, \mu_k)$ to 
$(1,\dots, 2n)$. Thus, when applied to an arbitrary element of $\mathcal{G}_{2n}$, $\int D\theta$ picks out the coefficient of $\theta_1\dots \theta_{2n}$, i.e., for $f(\theta)$ expanded as in Eq.~\eqref{f(theta)}, 
\begin{equation} \label{Dtheta} \int D\theta \, f(\theta) = c_{1,\dots, 2n}.  \end{equation}

Also, for any Grassmann variables $\theta_1,\dots, \theta_{2n}, \eta_1,\dots, \eta_{2n}$, we use $\theta \equiv \begin{pmatrix} \theta_1 &\dots &\theta_{2n}\end{pmatrix}^{\mathrm{T}}$ to denote the vector with $\theta_1, \dots, \theta_{2n}$ as its components, and likewise, $\eta \equiv \begin{pmatrix} \eta_1 &\dots &\eta_{2n}\end{pmatrix}^{\mathrm{T}}$. Then, we have, for instance,
\[ \theta^{\mathrm{T}} \eta \equiv \sum_{\mu =1}^{2n}\theta_\mu \eta_\mu, \qquad B\theta \equiv \begin{pmatrix} \sum\limits_{\mu = 1}^{2n} B_{1\mu} \theta_\mu \\ \vdots \\ \sum\limits_{\mu = 1}^{2n} B_{k\mu} \theta_\mu  \end{pmatrix}, \qquad \theta^{\mathrm{T}} M\theta \equiv \sum_{\mu,\nu=1}^{2n} M_{\mu\nu}\theta_{\mu}\theta_\nu  \]
for any $k \times 2n$ matrix $B$ and $2n\times 2n$ matrix $M$. Note from Eq.~\eqref{Grassmannproduct} that $\theta^{\mathrm{T}} \eta = -\eta^{\mathrm{T}} \theta$.

Finally, following \citen{Bravyi2004-fs}, we define the ``Grassmann representation'' of $\LHn \cong \mathcal{C}_{2n}$ as follows. 
Recall that $A \in \LHn$ can be expanded uniquely in terms of the (canonical) Majorana operators $\{\gamma_\mu\}_{\mu \in [2n]}$ as 
\begin{equation} \label{expandA} A = c_0 I + \sum_{k =1}^{2n} \sum_{1 \leq \mu_1 < \dots < \mu_k \leq 2n} c_{\mu_1,\dots, \mu_k} \gamma_{\mu_1}\dots \gamma_{\mu_k} \end{equation}
for $c_0, c_{\mu_1,\dots, \mu_k} \in \mathbb{C}$ (by Hilbert-Schmidt orthogonality, $c_0 = \tr(A), c_{\mu_1,\dots, \mu_k} = \tr(\gamma_{\mu_k}\dots\gamma_{\mu_1}A)$). The Grassmann representation $\omega(A; \theta)$ of $A$ in terms of Grassmann variables $\theta_1,\dots, \theta_{2n}$ is then given by
\begin{equation} \label{Grassman_rep} \omega(A; \theta) = c_0 + \sum_{k =1}^{2n} \sum_{1 \leq \mu_1 < \dots < \mu_k \leq 2n} c_{\mu_1,\dots, \mu_k} \theta_{\mu_1}\dots \theta_{\mu_k}, \end{equation}
which we can regard as an element of $\mathcal{G}_{2n}$, or of any Grassmann algebra whose generators include $\theta_1,\dots, \theta_{2n}$. Note that for $A, B \in \LHn$, we have $\omega(AB; \theta) = \omega(A;\theta)\omega(B;\theta)$ if and only if $AB = A \wedge B$ (where $\wedge$ denotes the wedge product in $\mathcal{C}_{2n})$. 

\subsubsection{Expressing the trace of any product of operators as a Grassmann integral} \label{sec:CtoG}

Let $A^{(1)},\dots, A^{(m)}$ be arbitrary operators in $\LHn \cong \mathcal{C}_{2n}$. To express the scalar quantity $\tr(A^{(1)}\dots A^{(m)})$ in terms of a Grassmann integral, we represent each $A^{(i)}$ in terms of a different set of $2n$ independent Grassmann variables $\theta_1^{(i)},\dots, \theta_{2n}^{(i)}$, then consider a Grassmann algebra that includes all of the $\{\theta_{\mu}^{(i)}\}_{\mu \in [2n],i \in [m]}$ as generators. We prove the following identity in Appendix~\ref{app:tr(ABCD)}.
\begin{theorem} \label{prop:tr(ABCD)}
For any even integer $m \in \mathbb{Z}_{> 0}$ and any $A^{(1)}, \dots, A^{(m)} \in \LHn$, let $\theta_1^{(1)}, \dots, \theta_{2n}^{(1)},\dots, \theta_1^{(m)}, \dots, \theta_{2n}^{(m)}$ be generators of a $2^{2mn}$-dimensional Grassmann algebra $\mathcal{G}_{2mn}$. Then, we have
\begin{align} \label{CliffordtoGrassmann}
    \tr\left(A^{(1)}\dots A^{(m)}\right) = 2^{n} (-1)^{n m(m-1)/2} \int D\theta^{(m)}\dots D\theta^{(1)}\, \omega(A^{(1)};\theta^{(1)})\dots \omega(A^{(m)}; \theta^{(m)}) \exp\Bigg(\sum_{\substack{i,j \in [m]\\i <j}} s_{ij} \theta^{(i) T} \theta^{(j)} \Bigg),
\end{align}
where $\omega(A^{(i)}; \theta^{(i)}) \in \mathcal{G}_{2mn}$ is the Grassmann representation of $A^{(i)}$ in terms of $\theta_1^{(i)},\dots, \theta_{2n}^{(i)}$ $\mathrm{(}$Eq.~\eqref{Grassman_rep}$\mathrm{)}$, and $s_{ij} \coloneqq (-1)^{i+j+1}$ for $i < j$.
\end{theorem}
Here, we use the same notational conventions as those defined in subsection~\ref{sec:Grassmann_review}, e.g., $D\theta^{(i)} \equiv d\theta_{2n}^{(i)}\dots d\theta_1^{(i)}$ and $\theta^{(i)}$ denotes the vector $\begin{pmatrix} \theta_1^{(i)} &\dots &\theta_{2n}^{(i)} \end{pmatrix}^{\mathrm{T}}$. To apply Theorem~\ref{prop:tr(ABCD)} to products of an odd number $m$ of operators, we can, for instance, express one of the operators as a product of two operators, or add in $A^{(m+1)} = I$. 

Theorem~\ref{prop:tr(ABCD)} generalises Equations~(140) and~(144) of \citen{Bravyi2017-pw}, which give the expressions for $m = 2$ and for $m=4$ in the special case where one of the operators is $(-i)^{n} \gamma_{[2n]}$.\footnote{Indeed, when $m$ is even and one of the operators (say, $A^{(1)}$) is the parity operator $P = (-i)^{n} \gamma_{[2n]}$, Eq.~\eqref{CliffordtoGrassmann} simplifies somewhat, reducing to an expression involving only $2(m-1)n$ Grassmann generators; see Appendix~\ref{app:tr(ABCD)} for details. Hence, if we have an odd number of operators, one of which can be rewritten as a product of $P$ and some operator (e.g., $\ket{\psi}\bra{\mathbf{0}} \propto P\ket{\psi}\bra{\mathbf{0}}$ for any eigenstate $\ket{\psi}$ of $P$), we can obtain a slightly simpler expression from Theorem~\ref{prop:tr(ABCD)}.} 

The purpose of Theorem~\ref{prop:tr(ABCD)} is to translate the trace of any product of operators into a Grassmann integral. The basic reason for doing so is that the multiplication operation of the Grassmann algebra, which is equivalent to the wedge product in the Clifford algebra, is easier to work with than the multiplication operation of the Clifford algebra. In this translation, we go from a $2^{2n}$-dimensional Clifford algebra to a larger Grassmann algebra, with $2n$ separate generators for each of the $m$ operators $A^{(i)}$ appearing in the trace. To see why, consider expanding each $A^{(i)}$ on the left-hand side of Eq.~\eqref{CliffordtoGrassmann} into products of Majorana operators, as in Eq.~\eqref{expandA}, and observe that the terms that end up contributing to the trace are those for which each $\gamma_\mu$ appears in an even number of the $A^{(i)}$. Now, if we were to use only $2n$ generators for \textit{all} of the different operators (i.e., work with an expression like $\omega(A^{(1)};\theta)\dots \omega(A^{(m)};\theta)$ rather than have a different $\theta^{(i)}$ for each $A^{(i)}$), these terms would not be properly accounted for in the Grassmann algebra, as the wedge product is antisymmetric---note $\gamma_\mu^2 = I$, whereas $\theta_\mu^2 = 0$. Using independent sets of generators instead preserves all of these terms when we move to the Grassmann algebra. Then, the exponential on the right-hand side of Eq.~\eqref{CliffordtoGrassmann} correctly compensates for the fact that the Grassmann integral picks out only terms that contain all of the generators (as shown by the proof in Appendix~\ref{app:tr(ABCD)}).

\subsubsection{Efficiently evaluating certain Grassmann integrals} \label{sec:efficientexamples}

We now consider Grassmann integrals of a certain, rather general form that can be efficiently evaluated, and show that for many operators of potential interest, Theorem~\ref{prop:tr(ABCD)} yields integrals that can be written in this form. 

Specifically, for any $N,K \in \mathbb{Z}_{\geq 0}$, any $K \times 2N$ matrix $B$, and any antisymmetric $2N \times 2N$ matrix $M$, we define the Grassmann integral
\begin{equation} \label{gBM} g(B, M) \coloneqq \int D\chi \, (B\chi)_1\dots (B\chi)_K \exp\left(\frac{1}{2}\chi^{\mathrm{T}} M\chi \right), \end{equation}
where $\chi_1,\dots, \chi_{2N}$ are generators of a $2^{2N}$-dimensional Grassmann algebra, $D\chi \equiv d\chi_{2N}\dots d\chi_1$, and $\chi \equiv \begin{pmatrix} \chi_1 \dots \chi_{2N}\end{pmatrix}^{\mathrm{T}}$ (so $(B\chi)_j \equiv \sum_{\mu=1}^{2N}B_{j\mu}\chi_\mu$, $\chi^{\mathrm{T}} M\chi = \sum_{\mu,\nu=1}^{2N}M_{\mu\nu}\chi_\mu\chi_\nu$). In the case where $M$ is invertible, Theorem~A.15(d) of Ref.~\citenum{Caracciolo} gives an efficiently computable expression for $g(B,M)$, namely, 
\begin{equation} \label{A.15(d)}
    g(B, M) = 
    \Pf(M) \Pf(-B M^{-1} B^{\mathrm{T}})
\end{equation}
where $\Pf(-BM^{-1}B^{\mathrm{T}}) \equiv 1$ if $K = 0$ (and the Pfaffian for any matrix of odd dimension is zero by definition, so $\Pf(-BM^{-1}B^{\mathrm{T}}) \equiv 0$ if $K$ is odd). 
We provide a recursive algorithm that efficiently evaluates $g(B,M)$ in the general case, where $M$ is not necessarily invertible, later in this subsection.
We first give several broad examples where $\tr(A^{(1)}\dots A^{(m)})$ can be expressed as such an integral, via Theorem~\ref{prop:tr(ABCD)}. 

Consider the integral on the RHS of Eq.~\eqref{CliffordtoGrassmann} for some $A^{(1)},\dots, A^{(m)} \in \LHn$. We take $N = mn$ and identify $\theta^{(i)}_{\mu}$ with $\chi_{2n(i-1) + \mu}$ for $\mu \in [2n], i \in [m]$, so that $\chi = \begin{pmatrix} \theta^{(1)T} &\dots & \theta^{(m)T}\end{pmatrix}^{\mathrm{T}} = \begin{pmatrix} \theta_1^{(1)} &\dots &\theta_{2n}^{(1)} &\dots  &\theta_1^{(m)} &\dots &\theta_{2n}^{(m)}\end{pmatrix}^{\mathrm{T}}.$ Then, note that we can write the term $\exp(\sum_{i<j} s_{ij}\theta^{(i)T}\theta^{(j)})$ in the form
\begin{equation} \label{sijpart}
\exp\Bigg(\sum_{i,j \in [m]:i <j} s_{ij} \theta^{(i) T} \theta^{(j)} \Bigg)= \exp\left(\frac{1}{2}\chi^{\mathrm{T}} S \chi\right) \end{equation} for an antisymmetric $2N \times 2N$ matrix $S$ (more explicitly, $S$ has a simple block form, with $2n\times 2n$ zero blocks along its diagonal, blocks of the form $s_{ij}\mathbbm{1}$ above the diagonal, and appropriately matching blocks below the diagonal, where $\mathbbm{1}$ is the $2n\times 2n$ identity matrix). It remains to consider the Grassmann representations $\omega(A^{(i)};\theta^{(i)})$ and show how to encode them appropriately into an integral of the form $g(M,B)$.

The simplest example is where $A^{(i)}$ is an arbitrary linear combination of Majorana operators: $A^{(i)} = \sum_{\mu = 1}^{2n} c_{\mu} \gamma_\mu$ for some $c_{\mu} \in \mathbb{C}$. Then, by Eq.~\eqref{Grassman_rep}, $\omega(A^{(i)}; \theta^{(i)}) = \sum_{\mu =1}^{2n}c_\mu \theta_\mu^{(i)}$. By putting the $2n$ coefficients $c_\mu$ in columns $2n(i-1) + 1$ through $2ni$ of the $k$th row of $B$ for some $k$, we then have $\omega(A^{(i)};\theta^{(i)}) = (B\chi)_k$, which clearly fits into the form of $g(B,M)$. Similarly, if $A^{(i)}$ is a product of linear combinations of Majorana operators that mutually anticommute, i.e., if $A^{(i)} = L_1\dots L_\ell$ where $L_p = \sum_{\mu =1}^{2n}c_{p\mu} \gamma_\mu$ and $L_pL_q = -L_q L_p$ for all $p \neq q$ (for instance, this arises when $A^{(i)} = \widetilde{\gamma}_{\mu_1}\dots \widetilde{\gamma}_{\mu_\ell}$ for some Majorana operators $\widetilde{\gamma}_{\mu}$, or $A^{(i)} = \widetilde{a}^\dagger_{j_1}\dots \widetilde{a}^\dagger_{j_\ell}$ for some creation operators $\widetilde{a}^\dagger_j$), then $A^{(i)} = L_1 \wedge \dots \wedge L_\ell$, so $\omega(A^{(i)}; \theta^{(i)}) = (\sum_{\mu_1} c_{1\mu_1} \theta_{\mu_1}^{(i)})\dots (\sum_{\mu_\ell} c_{\ell\mu_\ell} \theta_{\mu_\ell}^{(i)})$. Hence, by putting the coefficients $c_{p\mu}$ in appropriate positions in some rows $k,\dots,k + \ell-1$ of the matrix $B$, we have $\omega(A^{(i)};\theta^{(i)}) = (B\chi)_{k}\dots (B\chi)_{k+ \ell-1}$. 

Next, suppose $A^{(i)}$ is a fermionic Gaussian density operator $\varrho = \prod_{j=1}^n \frac{1}{2}(I - i\lambda_{j} \widetilde{\gamma}_{2j-1}\widetilde{\gamma}_{2j})$, where $\widetilde{\gamma}_\mu = \sum_{\nu=1}^{2n} Q_{\mu\nu}\gamma_\mu$ for some $Q \in \orth$ (see Eq.~\eqref{varrhodef}). Since the $\widetilde{\gamma}_\mu$ mutually anticommute, $\varrho = 2^{-n}\bigwedge_{j \in [n]}^{2n}(I - i \lambda_j \widetilde{\gamma}_{2j-1} \wedge \widetilde{\gamma}_{2j})$, so 
\begin{align} \label{omegavarrho}
    \omega(\varrho; \theta^{(i)}) &= \frac{1}{2^n} \prod_{j=1}^{n} \left(1 - i \lambda_j \widetilde{\theta}_{2j-1}^{(i)} \widetilde{\theta}_{2j}^{(i)}\right) = \frac{1}{2^n} \prod_{j=1}^n \exp\left(-i \lambda_j \widetilde{\theta}_{2j-1}^{(i)} \widetilde{\theta}_{2j}^{(i)}\right) = \frac{1}{2^n}\exp \left(-\frac{i}{2}\theta^{(i)T} C_{\varrho} \theta^{(i)} \right),
\end{align}
where we let $\widetilde{\theta}^{(i)}_\mu \coloneqq \sum_{\nu=1}^{2n}Q_{\mu\nu} \theta^{(i)}_\nu$. The second inequality follows from the fact that $(\widetilde{\theta}^{(i)}_{2j-1}\widetilde{\theta}_{2j}^{(i)})^k = 0$ for any $k > 1$, so $\exp(-i \lambda_j \widetilde{\theta}_{2j-1}^{(i)} \widetilde{\theta}_{2j}^{(i)}) = 1 - i\lambda_j \widetilde{\theta}_{2j-1}^{(i)} \widetilde{\theta}_{2j}^{(i)}$, and the third uses the fact that $\widetilde{\theta}_{2j-1}^{(i)} \widetilde{\theta}_{2j}^{(i)}$ are even and hence mutually commute, along with the definition of the covariance matrix $C_\varrho$ of $\varrho$ (Eq.~\eqref{covariancematrix}). Thus, we can write $\omega(\varrho;\theta^{(i)}) = \exp(\frac{1}{2} \chi^{\mathrm{T}} C^{(i)} \chi)$ where $C^{(i)}$ is a $2N \times 2N$ matrix with $-iC_\varrho$ as one of its blocks (rows and columns $2n(i-1) + 1$ through $2ni$) and all other entries $0$. Furthermore, observe that Eq.~\eqref{omegavarrho} does not require $\lambda_j \in [-1,1]$ (as is the case for a Gaussian density operator), so we can still write $A^{(i)} = \exp(\frac{1}{2}\chi^{\mathrm{T}} C^{(i)}\chi)$ for some $C^{(i)}$ whenever it has the form $\prod_{j=1}^n \frac{1}{2}(I - i\lambda_j \widetilde{\gamma}_{2j-1}\widetilde{\gamma}_{2j})$ for \textit{some} $\lambda_j \in \mathbb{C}$, even if it is not a Gaussian density operator per se. This observation will be useful for, for instance, evaluating expectation values with respect to matchgate shadow samples (see the following subsection). 

A final example we consider is where one or more of the operators in the expression $\tr(A^{(1)}\dots A^{(m)})$ we want to evaluate is a fermionic Gaussian unitary $U_Q \in \mathrm{M}_n$. We suppose that $U_Q$ is specified by the orthogonal matrix $Q \in \orth$ that it corresponds to via Eq.~\eqref{UQdef}, along with $\det(U_Q)$. Hence, if $\det(Q) = 1$, i.e., $Q \in \mathrm{SO}(2n)$, we can write $U_Q = e^{i\alpha} U_R$, where $\det(U_R) = 1$, $R = Q$, and $e^{i\alpha}$ is a global phase, determined by $\det(U_Q)$. If $\det(Q) = -1$, we can write $U_Q = e^{i\alpha}\gamma_1 U_R$, where $R = R_1 Q$ with $R_1 = \text{diag}(1,-1,\dots, -1)$ the $2n \times 2n$ reflection matrix that $\gamma_1$ corresponds to via Eq.~\eqref{UQdef}, $\det(U_R) = 1$, and $e^{i\alpha}$ is a global phase. In both cases, $R \in \mathrm{SO}(2n)$, and it can be shown (see e.g., \citen{Jozsa2008-cm}) that $U_R$ is generated by a quadratic Hamiltonian: $U_R = \exp(-iH)$ with 
\begin{equation} \label{quadraticHamiltonian} H = \frac{i}{2}\sum_{\mu,\nu=1}^{2n} h_{\mu\nu} \gamma_\mu \gamma_\nu, \end{equation}
where $h$ is the $2n\times 2n$ antisymmetric matrix such that $R = e^{2h}$, i.e., $h = \ln(R)/2$. Block-diagonalising $h$ using an orthogonal matrix $Q' \in \orth$ as $h = Q'^{\mathrm{T}}\bigoplus_{j=1}^n \begin{pmatrix} 0 &\sigma_j \\ -\sigma_j &0 \end{pmatrix}Q'$ for some $\sigma_j \in \mathbb{R}$, and defining the Majorana operators $\gamma_\mu' \coloneqq \sum_{\nu=1}^{2n} Q_{\mu\nu}' \gamma_\nu$, we then have 
\[ H = i \sum_{j=1}^n \sigma_j \gamma_{2j-1}' \gamma_{2j}', \]
so 
\begin{align*}
    U_R = \exp\left( \sum_j \sigma_j \gamma_{2j-1}'\gamma_{2j}'\right) = \prod_{j=1}^n \exp\left(\sigma_j  \gamma_{2j-1}'\gamma_{2j}'\right) = \prod_{j=1}^n \left(\cos(\sigma_j) I + \sin(\sigma_j)\gamma_{2j-1}'\gamma_{2j}' \right),
\end{align*}
where the second equality follows from the fact that $\gamma_{2j-1}'\gamma_{2j}'$ mutually commute, and the third from $(\gamma_{2j-1}'\gamma_{2j}')^2 = -I$. Since the $\gamma_\mu'$ anticommute, we can also write this as $U_R = \wedge_{j=1}^n [\cos(\sigma_j)I + \sin(\sigma_j)\gamma_{2j-1}'\gamma_{2j}']$, from which we see that $\theta_{2n}$ has the form
\[ \omega(U_R; \theta) = \prod_{j=1}^n \left(\cos(\sigma_j) + \sin(\sigma_j) \theta_{2j-1}'\theta_{2j}'\right), \]
with $\theta'_\mu \coloneqq \sum_{\nu = 1}^{2n} Q'_{\mu\nu} \theta_\nu$. Thus, if $\cos(\sigma_j) \neq 0$ for all $j \in [n]$, we have 
\begin{equation} \label{omegaUR} \omega(U_R;\theta) = c \prod_{j=1}^n \left[1 + \tan(\sigma_j) \theta'_{2j-1}\theta'_{2j}\right] = c\prod_{j=1}^n \exp\left(\tan(\sigma_j)\theta'_{2j-1}\theta'_{2j} \right) = c\exp\left(\frac{1}{2}\theta^{\mathrm{T}} T_R \theta \right), \end{equation}
where $c \coloneqq \prod_{j=1}^n \cos(\sigma_j)$ and $T_R\coloneqq Q'^{\mathrm{T}} \bigoplus_{j=1}^n \begin{pmatrix} 0 &\tan(\sigma_j) \\ -\tan(\sigma_j) &0\end{pmatrix}Q'$. If, on the other hand, $\cos(\sigma_j) = 0$ for some $j$, say, for $j \in J \subseteq [n]$, then we have 
\[ \omega(U_R; \theta) = c_J \left[\prod_{j \in J}\left(Q'\theta\right)_{2j-1}\left(Q'\theta\right)_{2j} \right]\exp\left( \frac{1}{2}\theta^{\mathrm{T}} T_{R,J} \theta\right), \]
where $c_J \coloneqq \prod_{j=1}^n \cos(\sigma_j)$ and $T_{R,J}$ is the same as $T_R$ except with $\tan(\sigma_j)$ replaced by $0$ for all $j \in J$. Hence, by placing $T_R$ or $T_{R,J}$ in the appropriate block of the larger matrix $M$ in Eq.~\eqref{gBM}, and some of the rows of $Q'$ in the appropriate positions of the larger matrix $B$ if $J \neq \varnothing$, we can incorporate $\omega(U_R; \theta)$ into a Grassmann integral of the form of $g(B,M)$ (Eq.~\eqref{gBM}). When applying Theorem~\ref{prop:tr(ABCD)} to $U_Q$, we take $A^{(i)} = U_Q = e^{i\alpha}U_R$ for some $i$ if $\det(Q) = 1$, and if $\det(Q) = -1$ we take $A^{(i)} = \gamma_1$ (then $\omega(A^{(i)};\theta^{(i)})$ is simply $\theta^{(i)}_1$) and $A^{(i+1)} = e^{i\alpha}U_R$. 

Therefore, when each $A^{(i)}$ in $\tr(A^{(1)}\dots A^{(m)})$ belongs to one of the three example categories we have considered above (products of linear combinations of Majorana operators, Gaussian density operators, and Gaussian unitaries), we can write the RHS of Eq.~\eqref{CliffordtoGrassmann} as a Grassmann integral over a product of some linear combinations of the Grassmann variables $\chi_\mu$, as well as some exponentials of the form $\exp(\frac{1}{2}\theta^{\mathrm{T}} M^{(j)} \theta)$ for some antisymmetric $2N \times 2N$ matrices $M^{(j)}$. Since these exponentials are even elements, we can commute them past any of linear combinations of $\chi_\mu$ to move them all to the right, and then combine them with the $\exp(\frac{1}{2}\chi^{\mathrm{T}} S \chi)$ from Eq.~\eqref{sijpart} as $\exp(\frac{1}{2}\theta^{\mathrm{T}} \sum_i (M^{j} + S)\theta)$. Then, setting $M = \sum_i M^{j} +S$ and constructing the matrix $B$ appropriately, we obtain a Grassmann integral $g(B,M)$ of the form in Eq.~\eqref{gBM}. We work through an explicit example of this procedure in the following subsection. 

It remains to show how to efficiently evaluate the Grassmann integral $g(B,M)$ for arbitrary $K \times 2N$ matrices $B$ and antisymmetric $2N \times 2N$ matrices $M$. We present an algorithm for doing so as Algorithm~\ref{alg:gMB}, and prove its correctness in Appendix~\ref{app:gMB}. The algorithm builds on ideas in \citen{Bravyi2017-pw}, and extends Theorem~A.15(d) in Ref.~\citenum{Caracciolo} (see Eq.~\eqref{A.15(d)}) to an efficient procedure for evaluating $g(B,M)$ even when $M$ is not invertible.

\begin{myalgorithmfloat}[tb]
\begin{mdframed}[roundcorner=3pt]
\vspace{1em}
{\small

\begin{center} 
\textsc{Algorithm 2: Evaluating the Grassmann integral $g(B,M)$}
\end{center}

\begin{tabbing}
\textbf{Inputs:} \= $K \times 2N$ matrix $B$, antisymmetric $2N \times 2N$ matrix $M$ (with entries in $\mathbb{C}$)
\end{tabbing}
\vspace{-1em}
\noindent \textbf{Output:} value of $g(B,M)$, where $g(B,M)$ is the Grassmann integral defined in Eq.~\eqref{gBM}

\medskip

\texttt{evaluateIntegral}($B,M,K,2N$):

\qquad 1. If $K$ is odd or $K > 2N$,

\qquad \qquad Return $0$

\qquad 2. Else if $K = 2N$,

\qquad\qquad Return $\det(B)$

\qquad 3. Else

\qquad \qquad a. If $M$ is invertible,

\qquad \qquad \qquad Return $\Pf(M) \Pf(-B M^{-1} B^{\mathrm{T}})$ \hfill \textit{// with $\Pf(-BM^{-1}B^{\mathrm{T}}) \equiv 1$ if $K = 0$}

\qquad \qquad b. Else
 \begin{enumerate}[\qquad \qquad \qquad i.]
\item Let $2r$ be the rank of $M$

\item Let $R \in \mathrm{SO}(2N)$ be any special orthogonal matrix and $M'$ any invertible $2r \times 2r$ such that \[ M = R^{\mathrm{T}} \begin{pmatrix} M' &0 \\
0 &0 \end{pmatrix} R \]
\hfill \textit{// this is always possible since $M$ is antisymmetric}

\item Let $BR = \begin{pmatrix} B' & B'' \end{pmatrix}$, where $B'$ has size $K \times 2r$ and $B''$ has size $K \times 2N -2r$ 

\smallskip

\item Return $(-1)^{N+ K/2}\Pf(-M') \times \texttt{evaluateIntegral}(B''^{\mathrm{T}}, B' M'^{-1} B'^{\mathrm{T}}, 2N - 2r, K)$ 
\end{enumerate}
}
\vspace{1em}
\end{mdframed}
\parbox{0.98\linewidth}{
\caption{Our algorithm for efficiently evaluating $g(B,M)$ (Eq.~\eqref{gBM}); see Appendix~\ref{app:gMB} for the proof of correctness. When $M$ is invertible, the runtime scales with $N$ as $\mathcal{O}(N^3)$. When $M$ is not invertible, the algorithm is recursive. Importantly, the fourth argument of the function $\texttt{evaluateIntegral}$ always decreases (by at least $2$) when it is called recursively, so the number of recursion steps is at most $N$ (note that whenever the third argument is larger than the fourth, the algorithm returns in $\mathcal{O}(1)$ time). Therefore, the total runtime is at most $\mathcal{O}(N^4)$ for general $M$. Note that further optimisations are possible in some cases, by using, for instance, Pfaffian identities for block matrices.}\label{alg:gMB}} 
\end{myalgorithmfloat}

\subsubsection{Worked example: Overlaps with pure Gaussian states via matchgate shadows} \label{sec:workedexample}

Having presented our general method, we now apply it to show how to use our matchgate shadows to estimate the overlaps between a pure state (accessed via a state preparation circuit) and arbitrary pure Gaussian states. As shown in Appendix~\ref{app:overlaps_Gaussian}, we can evaluate any such overlap by evaluating the expectation value of $\ket{\phi}\bra{\mathbf{0}}$ for a Gaussian state $\ket{\phi}$ of even parity ($P\ket{\phi} = \ket{\phi}$ where $P = (-i)^n \gamma_{[2n]}$). To illustrate the general applicability of the method, we will actually consider evaluating the expectation value of $\widetilde{\gamma}_S\ket{\phi}\bra{\mathbf{0}}$ for some Majorana operators $\{\widetilde{\gamma}_\mu\}_{\mu \in [2n]}$ and $S \subseteq [2n]$.\footnote{In fact, such quantities naturally arise when constructing the energy estimator for certain generalisations of QC-AFQMC.} This reduces to $\ket{\phi}\bra{0}$ in the special case where $S = \varnothing$, so $\gamma_S = I$.   

By Eq.~\eqref{Minverse}, the expectation value of $\widetilde{\gamma}_S \ket{\phi}\bra{\mathbf{0}}$ with respect to a matchgate shadow sample $\mathcal{M}^{-1}(U_Q^\dagger\ket{b}\bra{b}U_Q)$ is 
\begin{equation} \label{estimateblahblah} \tr\left(\widetilde{\gamma}_S \ket{\phi}\bra{\mathbf{0}} \mathcal{M}^{-1}(U_Q^\dagger\ket{b}\bra{b}U_Q)\right) = \sum_{\ell=0}^n {2n\choose 2\ell}{n\choose \ell}^{-1} \tr\left(\widetilde{\gamma}_S \ket{\phi}\bra{\mathbf{0}} \mathcal{P}_{2\ell}(U_Q^\dagger\ket{b}\bra{b}U_Q) \right),\end{equation}
so it suffices to efficiently evaluate the terms $\tr(\widetilde{\gamma}_S \ket{\phi}\bra{\mathbf{0}} \mathcal{P}_{2\ell}(U_Q^\dagger\ket{b}\bra{b}U_Q))$ for each $\ell \in \{0,\dots, n\}$.  We consider more generally $\tr(\widetilde{\gamma}_S \ket{\phi}\bra{\mathbf{0}} \mathcal{P}_{2\ell}(\varrho))$ for an arbitrary Gaussian density operator $\varrho = \prod_{j=1}^n \frac{1}{2}(I - i\lambda_j \gamma'_{2j-1}\gamma'_{2j})$, with $\lambda_j \in [-1,1]$ and $\gamma_{\mu}' \coloneqq U_{Q}^\dagger \gamma_\mu U_{Q}$ for any $Q \in \orth$.
We use the same generating function trick as in Theorems~\ref{prop:densityoperators} and~\ref{prop:overlaps}, putting the variable $z$ in front of the bivectors $\gamma_{2j-1}'\gamma_{2j}'$ so that $\tr(\widetilde{\gamma}_S \ket{\phi}\bra{\mathbf{0}} \mathcal{P}_{2\ell}(\varrho))$ is the coefficient of $z^\ell$ in the polynomial
\[ p(z) \coloneqq \tr\left(\widetilde{\gamma}_S \ket{\phi}\bra{\mathbf{0}} \prod_{j=1}^n \frac{1}{2}\left(I - iz \lambda_j \gamma_{2j-1}'\gamma_{2j}'\right) \right). \]
$p(z)$ has degree at most $n$, so we can find its coefficients via polynomial interpolation if we are able to evaluate $p(z)$ at arbitrary values of $z$. We can use the method described above to do this efficiently. 

First, we fix some notation. Let $\widetilde{Q} \in \orth$ be the orthogonal matrix such that $\widetilde{\gamma}_\mu = \sum_{\nu=1}^{2n} \widetilde{Q}_{\mu\nu}\gamma_\mu$, and let $S = \{\mu_1,\dots, \mu_{|S|}\}$ with $\mu_1 < \dots < \mu_{|S|}$. We assume that $\ket{\phi}$ is specified as $\ket{\phi} = e^{i\alpha} U_R\ket{\mathbf{0}}$, where $e^{i\alpha}$ is a global phase, $\det(U_R) = 1$, and $R \in \mathrm{SO}(n)$ (we know $\det(R) = 1$ because $\ket{\phi}$ and $\ket{\mathbf{0}}$ have the same parity). For convenience, we denote $\varrho(z) \coloneqq \prod_{j=1}^{n} \frac{1}{2}(I - iz\lambda_j \gamma_{2j-1}'\gamma_{2j}')$.

Hence, we want to evaluate $p(z) = e^{i\alpha} \tr(\widetilde{\gamma}_S U_R \ket{\mathbf{0}}\bra{\mathbf{0}} \varrho(z))$. We start by writing down the Grassmann representations of $\widetilde{\gamma}_S$, $U_R$, $\ket{\mathbf{0}}\bra{\mathbf{0}}$, and $\varrho(z)$, using subsection~\ref{sec:efficientexamples}. Next, we apply Theorem~\ref{prop:tr(ABCD)} from subsection~\ref{sec:CtoG} to to express $p(z)$ as a Grassmann integral. Finally, we rewrite this integral in the form of Eq.~\eqref{gBM}, by assembling the matrices $B$ and $M$. The integral can then be evaluated by putting $B$ and $M$ into Algorithm~\ref{alg:gMB}. 

As shown in subsection~\ref{sec:efficientexamples}, the Grassmann representation of $\widetilde{\gamma}_S = \widetilde{\gamma}_{\mu_1}\dots \widetilde{\gamma}_{\mu_{|S|}}$ in terms of $\theta_1,\dots, \theta_{2n}$ is simply \[ \omega(\widetilde{\gamma}_S; \theta) = (\widetilde{Q}\theta)_{\mu_1}\dots (\widetilde{Q}\theta)_{\mu_{|S|}}\]
(recalling that $\theta$ represents the vector $\begin{pmatrix} \theta_1 &\dots &\theta_{2n}\end{pmatrix}^{\mathrm{T}})$). We find the Hamiltonian that generates $U_R$ (Eq.~\eqref{quadraticHamiltonian}), and block-diagonalise its antisymmetric coefficient matrix $h$ as $h = Q'^{\mathrm{T}} \bigoplus_{j=1}^n \begin{pmatrix} 0&\sigma_j \\ -\sigma_j &0 \end{pmatrix} Q'$. Then, considering the generic case where $\cos(\sigma_j) \neq 0$ for all $j\in [n]$ for simplicity, we have \[ \omega(U_R;\theta) = \Bigg(\prod_{j=1}^n \cos(\sigma_j)\Bigg) \exp\left(\frac{1}{2}\theta^{\mathrm{T}} T_R\theta\right), \qquad \text{where } T_R\coloneqq Q'^{\mathrm{T}} \bigoplus_{j=1}^n \begin{pmatrix} 0 &\tan(\sigma_j) \\ -\tan(\sigma_j) &0\end{pmatrix}Q' \]
by Eq.~\eqref{omegaUR}. Using Eq.~\eqref{omegavarrho} (noting that it holds for arbitrary $\lambda_j \in \mathbb{C}$), we obtain
\[ \omega(\ket{\mathbf{0}}\bra{\mathbf{0}}; \theta) = \frac{1}{2^n}\exp\left(-\frac{i}{2} \theta^{\mathrm{T}} C_{\ket{\mathbf{0}}} \theta \right), \qquad \omega(\varrho(z)) = \frac{1}{2^n}\exp\left(-\frac{i}{2} \theta^{\mathrm{T}} (zC_{\varrho}) \theta \right) \]
for any $z$,
where $C_{\ket{\mathbf{0}}}$ is the covariance matrix of the vacuum state $\ket{\mathbf{0}}\bra{\mathbf{0}}$ (see Eq.~\eqref{covarianceb}) and $C_{\varrho}$ is the covariance matrix of $\varrho$ (see Eq.~\eqref{covariancegeneral}). 

Then, applying Theorem~\ref{prop:tr(ABCD)} with $m = 4$, $A^{(1)} = \widetilde{\gamma}_S$, $A^{(2)} = U_R$, $A^{(3)} = \ket{\mathbf{0}}\bra{\mathbf{0}}$, and $A^{(4)} = \varrho(z)$ yields
\begin{align} 
p(z) &= e^{i\alpha}\tr\left(\widetilde{\gamma}_S U_R \ket{\mathbf{0}}\bra{\mathbf{0}}\varrho(z)\right) \nonumber \\
&= e^{i\alpha} 2^n \int D\theta^{(1)}\,D\theta^{(2)}\,D\theta^{(3)}\,D\theta^{(4)}\, (\widetilde{Q}\theta^{(1)})_{\mu_1}\dots (\widetilde{Q}\theta^{(1)})_{\mu_{|S|}}\nonumber\\
&\qquad \qquad\times \Bigg(\prod_{j=1}^n \cos(\sigma_j)\Bigg) \exp\left(\frac{1}{2}\theta^{(2)T} T_R\theta^{(2)}\right) \frac{1}{2^n} \exp\left(-\frac{i}{2}\theta^{(3)T} C_{\ket{\mathbf{0}}}\theta^{(3)}\right)\frac{1}{2^n} \exp\left(-\frac{i}{2}\theta^{(4)T} (zC_{\varrho})\theta^{(4)} \right) \nonumber\\
&\qquad\qquad \times \exp\left(\theta^{(1)T}\theta^{(2)} -\theta^{(1)T}\theta^{(3)} +\theta^{(1)T}\theta^{(4)} + \theta^{(2) T}\theta^{(3)} - \theta^{(2)T}\theta^{(4)} + \theta^{(3)T}\theta^{(4)}\right). \label{workedexample}
\end{align}
Since the arguments of the exponentials are all even elements of $\mathcal{G}_{8n}$, they mutually commute, so we can rewrite the product of exponentials as a single exponential. 

Now, we relabel $\theta_1^{(1)},\dots, \theta^{(1)}_{2n}, \dots, \theta_1^{(4)}, \dots, \theta_{2n}^{(4)}$ as $\chi_1, \dots, \chi_{8n}$, so 
\[ \chi \equiv \begin{pmatrix} \theta^{(1)} \\ \theta^{(2)} \\ \theta^{(3)} \\ \theta^{(4)}\end{pmatrix}. \]
Let $\widetilde{Q}\big|_{S*}$ denote the $|S| \times 2n$ matrix formed by restricting $\widetilde{Q}$ to rows in $S$, and define the $|S| \times 8n$ matrix $B$ by
\[ B = \begin{pmatrix} \widetilde{Q}\big|_{S*} &0 &0 &0
\end{pmatrix}\]
(where each $0$ denotes an $|S| \times 2n$ block). $B$ is constructed such that the $k$th entry of $B\chi$ is $(\widetilde{Q}\theta^{(1)})_{\mu_k}$. Also define the $8n \times 8n$ matrix $M(z)$ by
\begin{equation}
\label{M(z)} 
M(z) = \begin{pmatrix} 
0 &\mathbbm{1} &-\mathbbm{1} &\mathbbm{1} \\
-\mathbbm{1} &T_R &\mathbbm{1} &-\mathbbm{1} \\
\mathbbm{1} & -\mathbbm{1} &-iC_{\ket{\mathbf{0}}} &\mathbbm{1} \\
-\mathbbm{1} &\mathbbm{1} &-\mathbbm{1} &-izC_{\varrho}
\end{pmatrix} \end{equation}
(where $\mathbbm{1}$ denotes the $2n\times 2n$ identity matrix). $M(z)$ is constructed such that the product of exponentials in Eq.~\eqref{workedexample} is equal to $\exp(\frac{1}{2}\chi^{\mathrm{T}} M\chi)$. 
Therefore,
\[ p(z) = e^{i\alpha} \frac{1}{2^n}\Bigg(\prod_{j=1}^n \cos(\sigma_j) \Bigg)\int D\chi\,  (B\chi)_{1} \dots (B\chi)_{|S|} \exp\left(\frac{1}{2}\chi^{\mathrm{T}} M(z)\chi \right),  \]
and the Grassmann integral is equal to $g(B, M(z))$ in the notation of Eq.~\eqref{gBM}. This can be evaluated for any fixed value of $z$ in at most $\mathcal{O}(n^4)$ time using Algorithm~\ref{alg:gMB}.

To evaluate the expectation value of $\widetilde{\gamma}_S\ket{\phi}\bra{\mathbf{0}}$ with respect to any classical shadow sample $\mathcal{M}^{-1}(U_Q^\dagger\ket{b}\bra{b}U_Q)$ (taking $S =\varnothing$ if we wish to estimate the overlap with $\ket{\phi}$), we substitute $C_{U_Q^\dagger\ket{b}\bra{b}U_Q} = Q^{\mathrm{T}} C_{\ket{b}}Q$ for $C_\varrho$ in Eq.~\eqref{M(z)}, and find the coefficients of $p(z)$ via polynomial interpolation, evaluating $p(z)$ at $n + 1$ values of $z$ using Algorithm~\ref{alg:gMB}. We then substitute the coefficients into Eq.~\eqref{estimateblahblah}.

\section{Variance bounds} \label{sec:variance}

In subsections~\ref{sec:compute_densityoperators} and~\ref{sec:compute_overlaps}, we presented efficient methods for extracting unbiased estimates of expectation values of fermionic Gaussian density operators and of overlaps with Slater determinants from our matchgate shadows. We now analyse the variances of these estimates in this section. We leave investigating the variance of the more general classes of observables considered in subsection~\ref{sec:compute_general} to future work.

\subsection{Variance for expectation values of Gaussian density operators} \label{sec:variance_densityoperators} 

We begin in this subsection by proving Eq.~\eqref{variancedensityoperators}, which is an efficiently computable upper bound on the variance for estimating the expectation value of any Gaussian density operator $\varrho$. We then show that this bound scales at most polynomially with the number of fermionic modes $n$ (using a straightforward analysis to bound it loosely by $\mathcal{O}(n^3)$). We provide a more careful analysis in Appendix~\ref{app:variancebound} to demonstrate that the variance is $\mathcal{O}(\sqrt{n}\log n)$.

\begin{proof}[Proof of Eq.~\eqref{variancedensityoperators}] Let $\varrho$ be the density operator of any Gaussian state. 
By Eq.~\eqref{varrhodef}, $\varrho = \prod_{j =1}^n \frac{1}{2}(I - i\lambda_j \widetilde\gamma_{2j-1}\widetilde\gamma_{2j})$ for some $\lambda_j \in [-1,1]$ and Majorana operators $\widetilde{\gamma}_\mu$. We apply Eq.~\eqref{variancebound2} with this set of Majorana operators (recall we are free to choose the Majorana basis in Eq.~\eqref{variancebound2}) to $\varrho$, yielding
\[ \mathrm{Var}[\hat{o}]\Big|_{O = \varrho} \leq  \frac{1}{2^{2n}} \sum_{\substack{\ell_1,\ell_2,\ell_3 \geq 0 \\ \ell_1 + \ell_2 + \ell_3 \leq n}}\alpha_{\ell_1,\ell_2,\ell_3} \sum_{\substack{A_1, A_2, A_3 \subseteq [2n] \text{ disjoint} \\ |A_1| = 2\ell_1, |A_2|=2\ell_2,|A_3|=2\ell_3}} \big|\tr(\varrho\widetilde\gamma_{A_2\cup A_3})\tr(\varrho \widetilde\gamma_{A_3 \cup A_1})\big|, \]
with $\alpha_{\ell_1,\ell_2,\ell_3}$ given in Eq.~\eqref{alphadef}. We can write $\varrho$ as 
\[ \varrho = \frac{1}{2^n}\sum_{T \subseteq [n]} \lambda_T \widetilde{\gamma}_{\text{pairs}(T)} \]
where $\lambda_T \coloneqq \prod_{j \in T} \lambda_j$ and $\text{pairs}(T) \coloneqq \bigcup_{j \in T} \{2j-1,2j\}$. Thus, we see that $\tr(\varrho\widetilde{\gamma}_{A_2 \cup A_3})\tr(\varrho\widetilde{\gamma}_{A_3 \cup A_1})$ is nonzero only if $A_2 \cup A_3 = \text{pairs}(T_1')$ and $A_3 \cup A_1 = \text{pairs}(T_2')$ for some $T_1',T_2' \subseteq [n]$. For mutually disjoint $A_1, A_2, A_3$, this condition is equivalent to $A_1 = \text{pairs}(T_1)$, $A_2 = \text{pairs}(T_2)$, and $A_3 = \text{pairs}(T_3)$ for some mutually disjoint subsets $T_1,T_2, T_3 \subseteq [n]$, in which case $|\tr(\varrho\widetilde{\gamma}_{A_2 \cup A_3})\tr(\varrho\widetilde{\gamma}_{A_3 \cup A_1})| =  |2^{-n}\lambda_{A_2 \cup A_3}\tr(\widetilde{\gamma}_{A_2\cup A_3}^2) \cdot 2^{-n} \lambda_{A_3 \cup A_1}\tr(\widetilde\gamma_{A_3 \cup A_1}^2)| = |\lambda_{A_2\cup A_3}\lambda_{A_3 \cup A_1}|$.
Hence, using Eq.~\eqref{alphadef} and $|\lambda_T| \leq 1$ for any $T \subseteq [n]$, 
\begin{align*}
    \mathrm{Var}[\hat{o}]\Big|_{O = \varrho} &\leq \frac{1}{2^{2n}}\sum_{\substack{\ell_1,\ell_2,\ell_3 \geq 0 \\ \ell_1 + \ell_2 + \ell_3 \leq n}} \alpha_{\ell_1,\ell_2,\ell_3} \sum_{\substack{T_1,T_2,T_3 \subseteq [n] \text{ disjoint} \\ |T_1| = \ell_1, |T_2| = \ell_2, |T_3| = \ell_3}} |\lambda_{A_2 \cup A_3}\lambda_{A_3 \cup A_1}| \\
    &\leq \frac{1}{2^{2n}}\sum_{\substack{\ell_1,\ell_2,\ell_3\geq 0\\ \ell_1 + \ell_2 + \ell_3 \leq n}} \frac{{n\choose \ell_1,\ell_2,\ell_3, n-\ell_1-\ell_2-\ell_3}}{{2n\choose 2\ell_1,2\ell_2,2\ell_3, 2(n-\ell_1-\ell_2-\ell_3)}}\frac{{2n\choose 2(\ell_1 + \ell_3)}}{{n\choose \ell_1 + \ell_3}} \frac{{2n\choose 2(\ell_2 + \ell_3)}}{{n\choose \ell_2 + \ell_3}} \sum_{\substack{T_1,T_2,T_3 \subseteq [n] \text{ disjoint} \\ |T_1| = \ell_1, |T_2| = \ell_2, |T_3| = \ell_3}}1 \\
    &= \frac{1}{2^{2n}}\sum_{\substack{\ell_1,\ell_2,\ell_3\geq 0\\ \ell_1 + \ell_2 + \ell_3 \leq n}} \frac{{n\choose \ell_1,\ell_2,\ell_3, n-\ell_1-\ell_2-\ell_3}^2}{{2n\choose 2\ell_1,2\ell_2,2\ell_3, 2(n-\ell_1-\ell_2-\ell_3)}}\frac{{2n\choose 2(\ell_1 + \ell_3)}}{{n\choose \ell_1 + \ell_3}} \frac{{2n\choose 2(\ell_2 + \ell_3)}}{{n\choose \ell_2 + \ell_3}}.
\end{align*}
\end{proof}
We further analyse the bound in Eq.~\eqref{variancedensityoperators} using Stirling's approximation---more precisely, we consider the bounds
\begin{equation} \label{Stirling} \sqrt{2\pi}\sqrt{k} \left(\frac{k}{e}\right)^k < k! \leq e\sqrt{k}\left(\frac{k}{e}\right)^k, \end{equation}
which hold for all $k \in \mathbb{Z}_{>0}$~\cite{Robbins_1955}. By writing the multinomial coefficients in terms of factorials then applying Eq.~\eqref{Stirling}, it is straightforward to show that
\begin{equation} \label{binbound}
    \frac{{2n\choose 2k}}{{n\choose k}} \leq 2^{nH_{\mathrm{b}}(k/n)},
\end{equation}
for any $k \in \{0,\dots, n\}$, where $H_{\mathrm{b}}$ denotes the binary entropy function $H_{\mathrm{b}}(x) = -x\log_2 x -(1-x) \log_2(1-x)$, and that for integers $k_1 + k_2 + k_3 + k_4 = n$, 
\begin{equation} \label{multbound} \frac{{n\choose k_1,k_2,k_3,k_4}^2}{{2n\choose 2k_1,2k_2,2k_3,2k_4}} \leq \begin{dcases}
\sqrt{\frac{n}{k_1k_2k_3k_4}} \quad &k_1,k_2,k_3,k_4 > 0\\
\sqrt{\frac{n}{k_1k_2k_3}} \quad &k_1,k_2,k_3 > 0, k_4 = 0 \\
\sqrt{\frac{n}{k_1k_2}} \quad &k_1,k_2 >0, k_3= k_4 = 0 \\
1 \quad &k_1 > 0, k_2=k_3=k_4 = 0
\end{dcases}\end{equation}
(where all other cases follow by symmetry between $k_1, k_2, k_3, k_4$). We can obtain a loose bound by upper-bounding ${2n\choose 2k}{n\choose k}^{-1}$ by $2^n$ for all $k$, using $\max_x H_\mathrm{b}(x) = 1$, and noting that the RHS of Eq.~\eqref{multbound} is upper bounded by a constant in every case, so ${n\choose k_1,k_2,k_3,k_4}^2 {2n\choose 2k_1,2k_2,2k_3,2k_4}^{-1} \leq c$ for some constant $c$. Thus,
\[ \Var[\hat{o}]\Big|_{O = \varrho} \leq \frac{1}{2^{2n}} \sum_{\substack{\ell_1,\ell_2,\ell_3 \geq 0\\ \ell_1 + \ell_2 + \ell_3 \leq n}} c (2^n)^2 = \mathcal{O}(n^3).  \]

In Appendix~\ref{app:variancebound}, we tighten this to $\mathcal{O}(\sqrt{n}\log n)$, using Eqs.~\eqref{binbound} and~\eqref{multbound}. In Figure.~\ref{fig:bnzetaplot}, the bound in Eq.~\eqref{variancedensityoperators} is plotted as in red, while the dashed black line is $y = \sqrt{n}\ln n$.

\subsection{Variance for overlaps with Slater determinants} \label{sec:variance_overlaps}

In this subsection, we prove Eqs.~\eqref{varianceoverlaps} and~\eqref{kappa}, which constitute an efficiently computable upper bound on the variance for estimating the expectation value of $\ket{\varphi}\bra{\mathbf{0}}$, for any Slater determinant $\ket{\varphi}$ with an even number $\zeta$ of fermions. As explained in Appendix~\ref{app:overlaps}, the ability to estimate these expectation values allow us to estimate the overlaps between a pure state and arbitrary Slater determinants (with any number of fermions). 

Note that for $\zeta = 0$, we have $\ket{\varphi}\bra{\mathbf{0}} = \ket{\mathbf{0}}\bra{\mathbf{0}}$, which is a Gaussian density operator, and indeed Eqs.~\eqref{varianceoverlaps} and~\eqref{kappa} reduce in this case to our variance bound for Gaussian density operators, Eq.~\eqref{variancedensityoperators}. The key observation in the proof of Eq.~\eqref{variancedensityoperators}, given in the previous subsection, is that for a Gaussian density operator of the form $\varrho = \prod_{j=1}^n \frac{1}{2}(I -i\lambda_j \widetilde{\gamma}_{2j-1}\widetilde{\gamma}_{2j})$, the terms $\tr(\varrho \widetilde{\gamma}_{A_2 \cup A_3})\tr(\varrho \widetilde{\gamma}_{A_3 \cup A_1})$ (in the inner sum of Eq.~\eqref{variancebound2}) are nonzero only for certain combinations of disjoint subsets $A_1,A_2, A_3 \subseteq [2n]$. It was then straightforward to see that the number of such combinations is at most ${n \choose \ell_1,\ell_2,\ell_3, n -\ell_1-\ell_2-\ell_3}$. We use a similar approach to prove Eqs.~\eqref{varianceoverlaps} and~\eqref{kappa}, expanding $\ket{\varphi}\bra{\mathbf{0}}$ in a particular basis of Majorana operators, then counting the number of combinations $(A_1,A_2,A_3)$ that contribute to the sum in Eq.~\eqref{variancebound2}. For $\zeta > 0$, this requires a more involved combinatorial argument.

\begin{proof}[Proof of Eqs.~\eqref{varianceoverlaps} and~\eqref{kappa}] Let $\ket{\varphi} = \widetilde{a}_1^\dagger \dots \widetilde{a}_\zeta^\dagger\ket{\mathbf{0}}$ be any $\zeta$-fermion Slater determinant with $\zeta$ even, and let $\{\widetilde{\gamma}_\mu\}_{\mu \in [2n]}$ be the Majorana operators corresponding to $\{\widetilde{a}_j\}_{j \in [n]}$ (via Eq.~\eqref{Majoranadef}). Then, $\widetilde{\gamma}_{2j-1}\ket{\mathbf{0}} = (\widetilde{a}_j + \widetilde{a}_j^\dagger)\ket{\mathbf{0}} = \widetilde{a}_j^\dagger\ket{\mathbf{0}}$, which together with the anticommutation relations implies that $\widetilde{a}_1^\dagger\dots \widetilde{a}_\zeta^\dagger\ket{\mathbf{0}} = \widetilde{\gamma}_1\widetilde{\gamma}_3 \dots \widetilde{\gamma}_{2\zeta - 1}\ket{\mathbf{0}}$. Hence, defining $S_\zeta \coloneqq \{1,3,\dots, 2\zeta - 1\}$,
\[ \ket{\varphi}\bra{\mathbf{0}} = \widetilde{\gamma}_1\widetilde{\gamma}_3 \dots \widetilde{\gamma}_{2\zeta - 1}\ket{\mathbf{0}}\bra{\mathbf{0}} = \widetilde{\gamma}_{S_\zeta} \ket{\mathbf{0}} \bra{\mathbf{0}}. \]
Note that since the basis transformation from $a_j$ to $\widetilde{a}_j$ (Eq.~\eqref{ajtildeSlater}) is number-conserving, we can also express $\ket{\mathbf{0}}\bra{\mathbf{0}}$ in terms of $\widetilde{\gamma}_{\mu}$ as $\ket{\mathbf{0}}\bra{\mathbf{0}} = \prod_{j=1}^n \frac{1}{2}(I - i\widetilde{\gamma}_{2j-1}\widetilde{\gamma}_{2j})$.
Applying Eq.~\eqref{variancebound2} with $\widetilde{\gamma}_\mu$ as the Majorana basis to $\ket{\varphi}\bra{\mathbf{0}}$, we have
\begin{align*}
    \Var[\hat{o}]\Big|_{O = \ket{\varphi}\bra{\mathbf{0}}} \leq \frac{1}{2^{2n}}\sum_{\substack{\ell_1,\ell_2,\ell_3 \geq 0 \\ \ell_1 + \ell_2 + \ell_3 \leq n}} \alpha_{\ell_1,\ell_2,\ell_3} \sum_{\substack{A_1,A_2, A_3 \subseteq [2n] \text{ disjoint} \\ |A_1| = 2\ell_1, |A_2| = 2\ell_2, |A_3| = 2\ell_3}} \big|\tr\left(\ket{\varphi}\bra{\mathbf{0}} \widetilde{\gamma}_{A_2 \cup A_3}\right) \tr\left(\ket{\varphi}\bra{\mathbf{0}}\widetilde{\gamma}_{A_3 \cup A_1}\right) \big|,
\end{align*}
with 
\begin{align*}
    \big|\tr\left(\ket{\varphi}\bra{\mathbf{0}} \widetilde{\gamma}_{A_2 \cup A_3}\right) \tr\left(\ket{\varphi}\bra{\mathbf{0}}\widetilde{\gamma}_{A_3 \cup A_1}\right) \big| &= \big|\tr\left(\ket{\mathbf{0}}\bra{\mathbf{0}} \widetilde{\gamma}_{A_2 \cup A_3} \widetilde{\gamma}_{S_\zeta} \right) \tr\left(\ket{\mathbf{0}}\bra{\mathbf{0}}\widetilde{\gamma}_{A_3 \cup A_1}\widetilde{\gamma}_{S_\zeta}\right) \big| \\
    &= \big|\tr\left(\ket{\mathbf{0}}\bra{\mathbf{0}} \widetilde{\gamma}_{(A_2 \cup A_3) \triangle S_\zeta}\right) \tr\left(\ket{\mathbf{0}}\bra{\mathbf{0}}\widetilde{\gamma}_{(A_3 \cup A_1) \triangle S_\zeta}\right) \big|,
\end{align*}
where $A \triangle B$ denotes the symmetric difference of the sets $A$ and $B$. Writing 
\[ \ket{\mathbf{0}}\bra{\mathbf{0}} = \frac{1}{2^n} \sum_{T \subseteq [n]} \widetilde{\gamma}_{\text{pairs}(T)}, \]
where $\text{pairs}(T) \coloneqq \bigcup_{j \in T} \{2j-1,2j\}$, we see that $|\tr\left(\ket{\varphi}\bra{\mathbf{0}} \widetilde{\gamma}_{A_2 \cup A_3}\right) \tr\left(\ket{\varphi}\bra{\mathbf{0}}\widetilde{\gamma}_{A_3 \cup A_1}\right)|$ is nonzero only if $(A_2 \cup A_3) \triangle S_\zeta = \text{pairs}(T_1)$ and $(A_3 \cup A_1)\triangle S_\zeta = \text{pairs}(T_2)$ for some $T_1, T_2 \subseteq [n]$, in which case it takes on the value $1$. Thus, 
\[ \Var[\hat{o}] \Big|_{O = \ket{\varphi}\bra{\mathbf{0}}} \leq \frac{1}{2^{2n}} \sum_{\substack{\ell_1,\ell_2,\ell_3 \geq 0 \\ \ell_1 + \ell_2 + \ell_3 \leq n}} \alpha_{\ell_1,\ell_2,\ell_3} \, \kappa(n, \zeta, \ell_1,\ell_2,\ell_3), \]
where $\kappa(n,\zeta,\ell_1,\ell_2,\ell_3)$ is the number of tuples $(A_1,A_2, A_3)$ such that $A_1, A_2, A_3$ are mutually disjoint subsets of $[2n]$ of cardinalities $2\ell_1$, $2\ell_2$, and $2\ell_3$, respectively, and $(A_2 \cup A_3) \triangle S_\zeta = \text{pairs}(T_1)$ and $(A_3 \cup A_1)\triangle S_\zeta = \text{pairs}(T_2)$ for some $T_1, T_2 \subseteq [n]$. This last condition means that for each $j \in [n]$, $2j-1 \in (A_2 \cup A_3) \triangle S_\zeta$ if and only if $2j \in (A_2\cup A_3) \triangle S_\zeta$, and likewise for $(A_3 \cup A_1) \triangle S_\zeta$.

For mutually disjoint $A_1, A_2, A_3$,  $(A_2 \cup A_3) \triangle S_\zeta = A_2 \triangle (A_3 \triangle S_\zeta)$ and $(A_3 \cup A_1) \triangle S_\zeta = A_1 \triangle (A_3 \triangle S_\zeta)$. For convenience, let $A_3' \coloneqq A_3 \triangle S_\zeta$, and for each $i \in \{1,2,3\}$ and $j \in [n]$, we write
\[
A_i^{(j)} \coloneqq \begin{dcases} 
00 \qquad &\text{if $2j - 1, 2j \not\in A_i$} \\
01 \qquad &\text{if $2j-1 \not\in A_i$ and $2j \in A_i$} \\
10 \qquad &\text{if $2j-1 \in A_i$ and $2j\not\in A_i$} \\
11 \qquad &\text{if $2j-1,2j \in A_i$},
\end{dcases}
\]
and define $A_3'^{(j)}$ analogously.
The above condition then translates to $A_2^{(j)} \oplus A_3'^{(j)} = 00$ or $11$, and $A_1^{(j)} \oplus A_3'^{(j)} = 00$ or $11$, for every $j \in [n]$, where $\oplus$ denotes the bitwise XOR. Equivalently,
\[ A_1^{(j)} = A_3'^{(j)} \oplus 00 \text{ or } A_3'^{(j)} \oplus 11, \qquad A_2^{(j)} = A_3'^{(j)} \oplus 00 \text{ or } A_3'^{(j)} \oplus 11. \]
We assume that $A_1,A_2,A_3$ satisfy this condition and are mutually disjoint, and consider the different possibilities. 
For $j \in \{1,\dots, \zeta\}$,
\begin{itemize}
    \item if $A_3^{(j)} = 00$, then $A_3'^{(j)} = 10$, so either $A_1^{(j)} = 01$ and $A_2^{(j)} = 10$, or $A_1^{(j)} = 10$ and $A_2^{(j)} = 01$ (note that e.g., $A_1^{(j)}$ and $A_2^{(j)}$ cannot both be $01$, since this would mean that $2j$ is in both $X_1$ and $A_2$, contradicting the assumption that they are disjoint);
    \item if $A_3^{(j)} = 01$, then $A_3'^{(j)} = 11$, so $A_1^{(j)} = 00$ and $A_2^{(j)} = 00$ (note that we cannot have $A_1^{(j)} = 11$ since then $A_1$ and $A_3$ would not be disjoint, and likewise for $A_2$);
    \item if $A_3^{(j)} = 10$, then $A_3'^{(j)} = 00$, so $A_1^{(j)} = 00$ and $A_2^{(j)} = 00$;
    \item if $A_3^{(j)} = 11$, then $A_3'^{(j)} = 01$, and there are no possibilities for $A_1$ and $A_2$ that satisfy the disjointness assumption.
\end{itemize}
For $j \in \{\zeta + 1,\dots, n\}$, we have $A_3'^{(j)} = A_3^{(j)}$, so
\begin{itemize}
    \item if $A_3^{(j)} = 00$, then $A_1^{(j)} = 00$ and $A_2^{(j)} = 00$, or $A_1^{(j)} = 00$ and $A_2^{(j)} = 00$, or $A_1^{(j)} = 11$ and $A_2^{(j)} = 00$;
    \item if $A_3^{(j)} = 01$ or $A_3^{(j)} = 10$, there are no possibilities for $A_1$ and $A_2$;
    \item if $A_3^{(j)} = 11$, then $A_1^{(j)} = 00$ and $A_2^{(j)} = 00$.
\end{itemize}

Now, let us count the number of such combinations for which $|A_i| = 2\ell_i$ for $i \in \{1,2,3\}$. We imagine building up the sets $A_1,A_2, A_3$, starting with empty sets and then adding to them by choosing one of the viable cases above for each $j \in [n]$.
Note that for $j \in \{\zeta + 1,\dots, n\}$, we either add $0$ or $2$ elements to $A_3$ (choosing $A_3^{(j)} = 00$ adds no elements, whereas choosing $A_3^{(j)} = 11$ adds two). Hence, since we ultimately want an even number of elements in $A_3$, the number of elements we add to $A_3$ from the $j \in \{1,\dots, \zeta\}$ cases must be even. Let this number be $2k$. Then, $2k$ can be any integer between $0$ and $2\ell_3$. For each $j \in \{1,\dots, \zeta\}$, there are two ways to add $0$ elements to $A_3$ ($A_3^{(j)} = 00$, then $A_1 = 01$ and $A_2 = 10$ or vice versa), and both of these ways add $1$ element each to $A_1$ and $A_2$; there are also two ways to add $1$ element to $A_3$ ($A_3^{(j)} = 01$ or $A_3^{(j)} = 10$), and both of these ways add $0$ elements to $A_1$ and $A_2$. Thus, for a fixed $k$, there are ${\zeta \choose 2k} 2^{2k} 2^{\zeta- 2k} = {\zeta \choose 2k} 2^\zeta$ ways to add $2k$ elements to $A_3$ from the $j \in \{1,\dots,\zeta\}$ cases, and all of these ways add $\zeta - 2k$ elements to each of $A_1$ and $A_2$. Thus, it remains to add $2\ell_3 - 2k$ elements to $A_3$, $2\ell_1 - \zeta + 2k$ elements to $A_1$, and $2 \ell_2 - \zeta +2k$ elements to $A_2$ from the $j \in \{\zeta + 1,\dots, n\}$ cases. For each $j \in \{\zeta + 1, \dots n\}$, we can add no elements to $A_1,A_2,A_3$, or $2$ elements to $A_1$ and no elements to $A_2, A_3$, or $2$ elements to $A_2$ and no elements to $A_1,A_3$, or $2$ elements to $A_3$ and no elements to $A_1, A_2$. Thus, there are ${n - \zeta \choose (2\ell_1 - \zeta + 2k)/2, \, (2\ell_2 - \zeta + 2k)/2, (2\ell_3 - 2k)/2}'$ ways to add the remaining elements to $A_1, A_2, A_3$ (see Eq.~\eqref{multinomial_shorthand}). Summing over the possible values of $k$ gives
\[ \kappa(n,\zeta,\ell_1,\ell_2,\ell_3) = \sum_{k = 0}^{\ell_3} {\zeta \choose 2k} 2^{\zeta}  {n-\zeta \choose \ell_1 - \zeta/2 + k,\, \ell_2 -\zeta/2+k, \, \ell_3 -k}'. \]
This is the same as Eq.~\eqref{kappa} since ${\zeta \choose 2k} = 0$ if $2k > \zeta$. 
\end{proof}

The bound given by Eqs.~\eqref{varianceoverlaps} and~\eqref{kappa} is computable in $\text{poly}(n)$ time, and we plot it for values of $n$ up to $1000$ and various values of $\zeta \leq n$ in Fig.~\ref{fig:bnzetaplot}. The plot strongly suggests that the bound for every $\zeta$ scales sublinearly for all $n$. We know that the bound for $\zeta = 0$ is $\mathcal{O}(\sqrt{n}\log n)$ from Appendix~\ref{app:variancebound}, and from the plot, the bounds for $\zeta > 0$ seem to all  be lower than that for $\zeta = 0$. 

\section{Conclusion} \label{sec:discussion}

In this paper, we investigated the classical shadows obtained via two different ensembles of random matchgate circuits, one continuous and one discrete. 
In Theorem~\ref{thm:moments}, we analysed the first three moments of the uniform distribution over all matchgate circuits ($\mathrm{M}_n$) and of the uniform distribution over Clifford matchgate circuits ($\mathrm{M}_n\cap \mathrm{Cl}_n$), and found that they match, establishing that Clifford matchgate circuits form a ``matchgate 3-design.'' We then used these results to derive expressions for the classical measurement channel corresponding to these distributions (Eq.~\eqref{matchgateshadowschannel}) and its pseudo-inverse (Eq.~\eqref{Minverse}), and to establish bounds on the variance of our matchgate shadows estimator for the expectation value of an arbitrary observable (Eq.~\eqref{variancebound2}). Importantly, the 3-design property allowed us to easily bound the variance of estimators arising from the discrete ensemble for general classes of fermionic observables. (This is reminiscent of how Ref.~\citenum{Huang2020} uses the result that the Clifford group forms a unitary $3$-design~\cite{Zhu2017-xf} as a key step in analysing their Clifford classical shadows).

We then developed techniques to efficiently extract various kinds of information about a quantum state of interest from its matchgate shadow, and placed bounds on the variance of the associated estimators in some cases. For local fermionic operators, we showed that our matchgate shadows straightforwardly lead to efficient measurement schemes, matching the performance of prior work (\citen{Zhao2021-fv}) and generalising it to handle local operators in arbitrary single-particle bases.
We then demonstrated that our matchgate shadows can efficiently estimate not only local observables, but quantities like \(\langle \varrho \rangle\), where \(\varrho\) is the density operator of an arbitrary fermionic Gaussian state. For $\varrho$ pure, this gives the fidelity between an arbitrary unknown quantum state and the fermionic Gaussian state. We showed that these estimates can be obtained from the matchgate shadows in cubic time, and bounded the variance in terms of the system size $n$ as $\mathcal{O}(\sqrt{n} \log n)$.
This provides an interesting contrast with classical shadows derived from the Clifford group, where one is forced to choose between being able to efficiently (in terms of sample complexity) treat local qubit observables (using random single-qubit Clifford circuits) and being able to efficiently treat global properties such as fidelities and the expectation values of low-rank operators (using random $n$-qubit Clifford circuits), but not both simultaneously~\cite{Huang2020}. Our results show that randomising over a certain strict subset of the $n$-qubit Clifford group (that is, $\mathrm{M}_n \cap \mathrm{Cl}_n$), we can efficiently handle both local fermionic observables and global properties.  

One of the original motivations of our work was the need to efficiently estimate quantities of the form \(\langle \psi | \varphi \rangle\), where \(\ket{\psi}\) is an arbitrary pure state (accessed via a state preparation circuit) and \(\ket{\varphi}\) is an arbitrary Slater determinant. These overlaps are required in, for instance, auxiliary-field quantum Monte Carlo (AFQMC) methods, and a protocol for estimating them using Clifford shadows was recently implemented as a core subroutine in the quantum-classical hybrid AFQMC algorithm of \citen{Huggins2022}. However, this protocol involved a classical post-processing step whose complexity scales exponentially with the system size $n$. Our matchgate shadows approach (explicitly described in Algorithm~\ref{alg:overlaps}) to this problem removes this exponential bottleneck, enabling us to process each each sample in $\mathcal{O}(n^4)$ time. We also bounded the variance of the estimates (and hence the number of samples needed) by an expression that we can efficiently evaluate. This bound is plotted in Fig.~\ref{fig:bnzetaplot} for values of $n$ up to $1000$, showing that the variance is reasonable at these values, with a growth rate that suggests sublinear scaling. 

In addition, we constructed a more general framework for efficiently evaluation the expectation values of arbitrary products of certain commonly encountered fermionic operators with respect to matchgate shadows. We applied this framework to generalise our overlap estimation procedure to arbitrary fermionic Gaussian states, and we expect it will be a useful tool in the development of further applications of our matchgate shadows in the future.

Before concluding, we highlight some open questions raised by our results. 

We were able to improve the naive quartic-time algorithm for classically evaluating the expectation value of $\varrho$ with respect to a matchgate shadow sample (see subsection~\ref{sec:summary_densityoperators} and Appendix~\ref{app:linearPfaffian}), by exploiting the structure of the Pfaffians of certain linear matrix functions. Can a similar improvement be achieved for our overlap estimation algorithms? Due to the large number of overlap computations in QC-AFQMC, even shaving off this one factor of $n$ would be valuable for this application. More generally, we did not delve into the optimisations that may be available for our methods for evaluating other quantities, and these may manifest themselves when the methods are applied to concrete use cases. 

As for the quantum sample complexity, we were able to provide closed-form expressions for bounds on the variance of our matchgate shadow estimators in some cases, and efficiently computable bounds in others. Is is possible to provide a simpler or more intuitive characterisation of the variance for arbitrary operators, than our Eqs.~\eqref{varianceboundtilde} and~\eqref{variancebound2}?  How would our approach perform when used to estimate quantities that are nonlinear in the density operator of the unknown state? 

More fundamentally, we proved that $\mathrm{M}_n \cap \mathrm{Cl}_n$ is a $3$-design for $\mathrm{M}_n$. Is there an analogous result for the subgroup $\mathrm{M}_n^*$ of $\mathrm{M}_n$ consisting only of fermionic parity-conserving Gaussian unitaries, i.e., $U_Q$ such that $Q \in \mathrm{SO}(2n)$? We know that the $k$-th moment of the (Haar-)uniform distribution over $\mathrm{M}_n^*$ differs from that over $\mathrm{M}_n$, for any $k \geq 1$. To see this, observe that for any $Q \in \mathrm{SO}(2n)$, $\mathcal{U}_Q(\gamma_{[2n]}) = \gamma_{[2n]}$ (recall $\gamma_{[2n]}$ is proportional to the parity operator), so the 1-fold twirl for $\mathrm{M}_n^*$ maps $\gamma_{[2n]}$ to itself, whereas the $1$-fold twirl for $\mathrm{M}_n$ maps $\gamma_{[2n]}$ to 0 by Theorem~\ref{thm:moments}(i). Thus, the $1$-fold twirls are different, which implies that the $k$-fold twirls are different for all $k$.  

Finally, for Clifford shadows, \citen{Chen2021-cq} drew on the connection between classical shadows and randomised benchmarking to design a noise-robust classical shadow protocol.
Can we make use of this connection and the prior work of \citen{Helsen2022} on the randomised benchmarking of matchgates to design a similarly robust version of our matchgate shadows? 

We leave these explorations to future work.

\section*{Acknowledgements}

We thank Dave Bacon, David Gosset, Jarrod McClean, Bryan O'Gorman, Nicholas Rubin, and Andrew Zhao for helpful discussions. We thank Benchen Huang and Sam McArdle for pointing out various typos in the first draft. We thank anonymous reviewers for insightful comments and suggestions, and for pointing out further typos (special thanks to Reviewer 4 for spotting no fewer than 57 typos). JL thanks David Reichman for support. This research was supported in part by the National Science Foundation under Grant No. NSF PHY-1748958, as some of the discussions occurred during a workshop at the Kavli Institute for Theoretical Physics.

\section*{Author contributions}

WH and RB proposed studying classical shadows associated with matchgate circuits for the purpose of estimating Slater determinant overlaps, and WH provided the reduction of overlap estimation to expectation value estimation in Appendix~\ref{app:overlaps}. JL constructed the cubic-time algorithm in Appendix~\ref{app:linearPfaffian} for computing Pfaffians of certain linear matrix functions. KW developed the remaining methods and proofs, and wrote most of the manuscript. RB managed the collaboration. All authors contributed to discussions and paper-writing.

\noindent\textbf{}
\appendix 

\section{Reducing overlap estimation to expectation value estimation} \label{app:overlaps}

In the main text, one of the applications we considered was estimating the overlap between a general pure state $\ket{\psi}$ and a Slater determinant $\ket{\varphi}$, given a quantum circuit for preparing $\ket{\psi}$ and a classical description of $\ket{\varphi}$. In this appendix, we describe methods for reducing the problem of evaluating the overlap to the problem of finding the expectation value of an appropriate observable that can be handled by our matchgate shadows protocol. We generalise this reduction to the case where we want to estimate the overlap between $\ket{\psi}$ and an arbitrary pure Gaussian state.

The basic procedure we will use for reducing overlap estimation to expectation value estimation is as follows. Let $\ket{\Psi}$, $\ket{\Phi}$, and $\ket{\!\perp}$ be pure states (on the same number of qubits), and let $\rho$ be the density operator of $\frac{1}{\sqrt{2}}(\ket{\!\perp} + \ket{\Psi})$: 
\begin{equation} \label{overlaprho} \rho \coloneqq \frac{1}{2}(\ket{\!\perp} + \ket{\Psi})(\bra{\perp\!} + \bra{\Psi}). \end{equation}
Now, if $\braket{\perp\!}{\Psi} = 0$ and $\braket{\perp\!}{\Phi} = 0$, then we have
\[ \tr(\ket{\Phi}\bra{\perp\!} \rho) = \frac{1}{2}\braket{\Psi}{\Phi}. \] Hence, the overlap between $\ket{\Psi}$ and $\ket{\Phi}$ can be estimated using the classical shadows framework, by taking $\rho$ to be the initial state and $\ket{\Phi}\bra{\perp \!}$ to be an observable. Recall from subsection~\ref{sec:shadows_channel} that our matchgate classical shadows protocol gives unbiased estimates of $\tr(O\rho)$ if $\rho \in \Gamma_{\text{even}}$ or $O \in \Gamma_{\text{even}}$. Thus, to estimate overlaps using our matchgate shadows, we have to choose $\ket{\Psi}$, $\ket{\Phi}$, and $\ket{\!\perp}$ such that 1.~$\braket{\Psi}{\Phi}$ gives the overlap we want, 2.~$\braket{\perp\!}{\Psi} =\braket{\perp\!}{\Phi} =0$, and 3.~either $\rho \in \Gamma_{\text{even}}$ or $\ket{\Phi}\bra{\perp\!} \in \Gamma_{\text{even}}$ (or both). In the main text (subsection~\ref{sec:summary_Slater}), we assumed for simplicity that $\ket{\psi}$ has no overlap with the vacuum state $\ket{\mathbf{0}}$, and that $\ket{\varphi}$ is a Slater determinant with an even number of fermions $\zeta > 0$, which implies that $\braket{\mathbf{0}}{\varphi} = 0$ and $\ket{\varphi}\bra{\mathbf{0}} \in \Gamma_{\text{even}}$. Thus, taking $\ket{\Psi} = \ket{\psi}$, $\ket{\Phi} = \ket{\varphi}$, and $\ket{\!\perp} = \ket{\mathbf{0}}$ satisfies all three conditions above, allowing us to estimate $\braket{\psi}{\varphi}$. In the following, we show how choose $\ket{\Psi}$, $\ket{\Phi}$, and $\ket{\!\perp}$ to estimate $\braket{\psi}{\varphi}$ in more general situations, such as when these assumptions are not met. 

\subsection{Overlaps with arbitrary Slater determinants} \label{app:overlaps_Slater}

Let $\ket{\psi}$ be an arbitrary $n$-qubit pure state (possibly with nonzero overlap with the vacuum state $\ket{\mathbf{0}}$) and $\ket{\varphi}$ an arbitrary $n$-mode, $\zeta$-fermion Slater determinant. 

First, consider the case where $\zeta$ is odd. Then, we introduce an ancilla qubit (extending our system to $n+1$ fermionic modes) and take 
\begin{equation} \label{oddcase} \ket{\Psi} = \ket{\psi}\ket{1}, \qquad \ket{\Phi} = \ket{\varphi}\ket{1}, \qquad \ket{\!\perp} = \ket{\mathbf{0}}\ket{\mathbf{0}}. \end{equation}
Then, $\braket{\Psi}{\Phi} = \braket{\psi}{\varphi}$ and $\braket{\Psi}{\!\perp} = \braket{\Phi}{\!\perp} = 0$ (regardless of whether $\ket{\psi}$ and $\ket{\varphi}$ are orthogonal to $\ket{\mathbf{0}}$), so conditions 1 and 2 above are satisfied. Also, $\ket{\Phi}$ is a Slater determinant with an even number ($\zeta + 1$) of fermions, and $\ket{\!\perp}$ is the vacuum state of $n + 1$ modes, so $\ket{\Phi}\bra{\perp \!}$ is an even operator, as can be seen from the fact that it commutes with the parity operator $P = (-i)^n \gamma_1 \dots \gamma_{2n}$. Thus, we can use our matchgate classical shadows protocol to estimate the overlap $\braket{\psi}{\varphi}$ by estimating the expectation value of $\ket{\Phi}\bra{\perp\!}$ with respect to the state $\rho$ defined as in Eq.~\eqref{overlaprho}. Moreover, the fact that $\ket{\Phi}$ is also a Slater determinant allows us to directly apply Theorem~\ref{prop:overlaps} to efficiently compute the expectation value estimates from our matchgate shadows, as well as Eq.~\eqref{varianceoverlaps} to bound the variance of these estimates.

Now consider the case where $\zeta$ is even. Then, we introduce two ancilla qubits (extending our system to $n + 2$ fermionic modes) and take
\begin{equation} \label{evencase} \ket{\Psi} = \ket{\psi}\ket{1}\ket{1}, \qquad \ket{\Phi} = \ket{\varphi}\ket{1}\ket{1}, \qquad \ket{\!\perp} = \ket{\mathbf{0}}\ket{0}\ket{0}. \end{equation}
Again, all three conditions above are satisfied, and since $\ket{\Phi}$ is a Slater determinant with an even number ($\zeta + 1$) of fermions, we can use our matchgate classical shadows protocol, directly applying Theorem~\ref{prop:overlaps} and Eq.~\eqref{varianceoverlaps}. 

Furthermore, note that in both cases, the initial state $\ket{\rho}$ can be easily prepared given access to a controlled quantum circuit that prepares $\ket{\psi}$. For instance, in the case where $\zeta$ is odd, we can prepare $\frac{1}{\sqrt{2}}(\ket{\!\perp} + \ket{\Psi}) = \frac{1}{\sqrt{2}}(\ket{\mathbf{0}}\ket{0} + \ket{\psi}\ket{1})$ by starting with the all-zeros state on $n+1$ qubits, applying a Hadamard gate to the last qubit, then implementing the unitary that prepares $\ket{\psi}$, controlled on the last qubit. 
Finally, we remark that further optimisations are possible to reduce the number of ancilla qubits used in both cases, but here we have presented the most straightforward schemes that give observables $\ket{\Phi}\bra{\perp\!}$ to which we can directly apply the results of the main text. 

\subsection{Overlaps with arbitrary pure Gaussian states} \label{app:overlaps_Gaussian}

We extend the above procedures to estimating the overlap between $\ket{\psi}$ and an arbitrary pure Gaussian state $\ket{\phi}$. Note that any pure Gaussian state has a fixed parity, i.e., $\tr(P \ket{\phi}\bra{\phi}) = \pm 1$. Hence, since an operator is even if and only if it commutes with the parity operator, $\ket{\phi}\ket{1}\bra{\mathbf{0}}\bra{0} \in \Gamma_{\text{even}}$ if $\ket{\phi}$ has odd parity, and $\ket{\phi}\ket{1}\ket{1}\bra{\mathbf{0}}\bra{0}\bra{0} \in \Gamma_{\text{even}}$ if $\ket{\phi}$ has even parity. Therefore, we choose the states $\ket{\Psi}$, $\ket{\Phi}$, and $\ket{\! \perp}$ as in Eq.~\eqref{oddcase} (with $\varphi \to \phi$) if $\ket{\phi}$ is an odd-parity state, and we choose them as in Eq.~\eqref{evencase} if $\ket{\phi}$ is an even-parity state. In both cases, the initial state $\frac{1}{\sqrt{2}} (\ket{\!\perp} + \ket{\Psi})$ can be prepared given a preparation for $\ket{\psi}$, as discussed above. Moreover, $\ket{\phi}\ket{1}$ and $\ket{\phi}\ket{1}\ket{1}$ are both Gaussian states, so we can use our method described in subsection~\ref{sec:workedexample} to efficiently compute the expectation value of $\ket{\phi}\ket{1}\bra{\mathbf{0}}\ket{\mathbf{0}}$ or $\ket{\phi}\ket{1}\ket{1}\bra{\mathbf{0}}\bra{0}\bra{0}$ from our matchgate shadows.

\section{Generating random matchgate circuits}
\label{app:random_circuit_sampling}

In the main text, we analysed the classical shadows obtained from two distributions over matchgate circuits: the uniform distribution over the group $\mathrm{M}_n$ of all $n$-qubit matchgate circuits, and the uniform distribution over the group $\mathrm{M}_n \cap \mathrm{Cl}_n$ of matchgate circuits that are also in the Clifford group. In this appendix, we review how to efficiently sample unitaries from these distributions and implement them as linear-depth quantum circuits.

\subsection{Sampling from the uniform distribution over $\mathrm{M}_n$}

As discussed in subsection~\ref{sec:summary_random_matchgates}, we take the ``uniform'' distribution over \(\mathrm{M}_n\) to be the distribution induced by the Haar measure over \(\orth\). Recall from  subsection~\ref{sec:background_matchgates} that each $Q \in \orth$ corresponds uniquely to a matchgate circuit $U_Q$ if we ignore the global phase of $U_Q$ (for the classical shadows protocol, the global phase is irrelevant). Hence, we can sample from \(\mathrm{M}_n\) by sampling \(Q\) from \(\orth\) according to the Haar measure and finding the corresponding \(U_Q\). In this section, we briefly explain how to efficiently sample from \(\orth\) and implement \(U_Q\) as a quantum circuit.

There exists standard techniques for efficiently sampling from $\orth$~\cite{Mezzadri2006-sb, Stewart1980-rr}. The basic idea is to apply Gram-Schmidt orthogonalisation to a $2n \times 2n$ matrix whose entries are independent, normally distributed, real-valued random variables. This procedure yields a Haar-random element $Q$ of $\orth.$ 
With $Q$ in hand, we can compile \(U_Q\) using the techniques of \citen{Jiang2018-lu}, which decomposes $Q$ as a product of Givens rotations and reflections via straightforward operations.
The resulting circuits require \(\mathcal{O}(n^2)\) two-qubit gates on a linearly-connected array of qubits, and can be executed in depth \(\mathcal{O}(n)\).

\citen{Helsen2022} put forward an alternative approach to sampling from \(\orth\) that makes use of the probabilistic Hurwitz lemma~\cite{Diaconis2000-mf}. Their approach is based on directly sampling a sequence of random Givens rotations, which correspond to one- and two-qubit Pauli rotations as shown in subsection~\ref{sec:background_matchgates}, and adding in a reflection with probability $1/2$. Their method also achieves the asymptotically optimal \(\mathcal{O}(n^2)\) two-qubit gate count.

\subsection{Sampling from the uniform distribution over $\mathrm{M}_n \cap \mathrm{Cl}_n$}

As noted in \citen{Zhao2021-fv} and in subsection~\ref{sec:summary_random_matchgates}, a matchgate circuit \(U_Q \in \mathrm{M}_n\) is also an element of the $n$-qubit Clifford group \(\mathrm{Cl}_n\) if and only if \(Q\) is a signed permutation matrix, i.e., \(Q \in \mathrm{B}(2n)\). We can therefore sample from the discrete distribution \(\mathrm{M}_n \cap \mathrm{Cl}_n\) by sampling from \(\mathrm{B}(2n)\). We can straightforwardly sample a random permutation matrix. For example, we can sample the position of the \(1\) in each column (from the remaining set of valid choices) one column at a time. The signs can then be sampled independently. The resulting matchgate circuit/Clifford unitary can be compiled into a quantum circuit using the same strategy as suggested above for arbitrary matchgates (see \citen{Jiang2018-lu}), or by using strategies designed for compiling Clifford circuits (e.g., by using the canonical form of \citen{Bravyi2021-rn}).

\section{Proof details for Theorem~\ref{thm:moments}(iii)}  \label{app:E3}

In this appendix, we prove Lemma~\ref{lem:E3} and fill in the details for the proof of Theorem~\ref{thm:moments}(iii). As explained in the main text, we use $\mathcal{E}^{(3)}$ to denote $\mathcal{E}^{(3)}_{\mathrm{M}_n}$ or $\mathcal{E}^{(3)}_{\mathrm{M}_n\cap \mathrm{Cl}_n}$

\begin{proof}[Proof of Lemma~\ref{lem:E3}] \hfill
\begin{enumerate}[(a)]
    \item Let $S_1,S_2,S_3 \subseteq [2n]$. First, observe that if there exists an index $\mu \in [2n]$ such that $\mu \in S_i$ for some $i \in \{1,2,3\}$ and $\mu \not\in S_j$ for both $j \in \{1,2,3\}\setminus \{i\}$, or if there exists an index $\mu \in S_1 \cap S_2 \cap S_3$, then $\mathcal{E}^{(3)}\kett{\gamma_{S_1}}\kett{\gamma_{S_2}}\kett{\gamma_{S_3}} = 0$. To see this, consider the reflection matrix $Q \in \mathrm{B}(2n) \subset \mathrm{O}(2n)$ such that $\mU_Q\kett{\gamma_\mu} = -\kett{\gamma_\mu}$ and $\mu_Q\kett{\gamma_\nu} = \kett{\gamma_\nu}$ for all $\nu \neq \mu$. Then in either case, $\mU_Q\kett{\gamma_{S_1}}\kett{\gamma_{S_2}}\kett{\gamma_{S_3}} = -\kett{\gamma_{S_1}}\kett{\gamma_{S_2}}\kett{\gamma_{S_3}}$. Hence, using Eq.~\eqref{Einvariance} gives \[ \mathcal{E}^{(3)}\kett{\gamma_{S_1}}\kett{\gamma_{S_2}}\kett{\gamma_{S_3}} = \mathcal{E}^{(3)}\mU_Q \kett{\gamma_{S_1}}\kett{\gamma_{S_2}}\kett{\gamma_{S_3}} = -\mathcal{E}^{(3)} \kett{\gamma_{S_1}}\kett{\gamma_{S_2}}\kett{\gamma_{S_3}},\] so $\mathcal{E}^{(3)}\kett{\gamma_{S_1}}\kett{\gamma_{S_2}}\kett{\gamma_{S_3}} =0$ for both cases. 
    Thus, $\mathcal{E}^{(3)}\kett{\gamma_{S_1}}\kett{\gamma_{S_2}}\kett{\gamma_{S_3}}$ can only be nonzero if each index $\mu \in [2n]$ appears in either zero or two of the subsets $S_1,S_2, S_3$, in which case $S_1,S_2, S_3$ must be of the form $S_1 = A_1 \cup A_2$, $S_2 = A_2 \cup A_3$, $S_3 = A_3 \cup A_1$, for mutually disjoint subsets $A_1, A_2, A_3 \subseteq [2n]$.
    \item Suppose $A_1,A_2,A_3 \subseteq [2n]$ are mutually disjoint, $A_1',A_2',A_3' \subseteq [2n]$ are mutually disjoint, and $|A_i| = |A_i'|$ for all $i \in \{1,2,3\}$. Let $Q' \in \mathrm{B}(2n) \subseteq \orth$ be any permutation matrix such that $\mU_{Q'}\kett{\gamma_{A_i}} = \kett{\gamma_{A_i'}}$ for all $i \in \{1,2,3\}$ (it is clear that such a $Q'$ always exists, given the disjointness condition and the fact that $A_i$ and $A_i'$ have the same cardinality). Then, using Eq.~\eqref{Einvariance}, \[ \mathcal{E}^{(3)}\kett{\gamma_{A_1}\gamma_{A_2}} \kett{\gamma_{A_2}\gamma_{A_3}}\kett{\gamma_{A_3}\gamma_{A_1}} = \mathcal{E}^{(3)}\mU_{Q'}\kett{\gamma_{A_1}\gamma_{A_2}} \kett{\gamma_{A_2}\gamma_{A_3}}\kett{\gamma_{A_3}\gamma_{A_1}} = \mathcal{E}^{(3)}\kett{\gamma_{A_1'}\gamma_{A_2'}} \kett{\gamma_{A_2'}\gamma_{A_3'}}\kett{\gamma_{A_3'}\gamma_{A_1'}}. \]
\end{enumerate}
\end{proof}

\begin{proof}[Proof of Theorem~\ref{thm:moments}(iii)]
Considering the basis for $\LHn$ consisting of products of Majorana operators, Lemma~\ref{lem:E3}(a) gives the form of all basis elements that are not annihilated by $\mathcal{E}^{(3)}$. We can then expand $\mathcal{E}^{(3)}$ in terms of its matrix elements with respect to this basis as
\begin{align*}
\mathcal{E}^{(3)} = \sum_{A_1,A_2,A_3 \subseteq [2n] \text{ disjoint}}\sum_{A_1',A_2',A_3' \subseteq [2n] \text{ disjoint}} &\braa{\gamma_{A_1}\gamma_{A_2}}\braa{\gamma_{A_2}\gamma_{A_3}}\braa{\gamma_{A_3}\gamma_{A_1}} \mathcal{E}^{(3)}\kett{\gamma_{A_1'}\gamma_{A_2'}}\kett{\gamma_{A_2'}\gamma_{A_3'}}\kett{\gamma_{A_3'}\gamma_{A_1'}} \\
&\times \kett{\gamma_{A_1}\gamma_{A_2}}\kett{\gamma_{A_2}\gamma_{A_3}}\kett{\gamma_{A_3}\gamma_{A_1}}\braa{\gamma_{A_1'}\gamma_{A_2'}}\braa{\gamma_{A_2'}\gamma_{A_3'}}\braa{\gamma_{A_3'}\gamma_{A_1'}}, 
\end{align*}
where each sum is over all triplets of subsets of $[2n]$ that are mutually disjoint. Note that $\kett{\gamma_{A_i}\gamma_{A_j}}$ is equal to $\kett{\gamma_{A_i \cup A_j}}$ up to sign. Next, since $\mathcal{E}^{(3)}(\Gamma_{k_1} \otimes \Gamma_{k_2}\otimes \Gamma_{k_3}) = \Gamma_{k_1} \otimes \Gamma_{k_2}\otimes \Gamma_{k_3}$ by Fact~\ref{fact:mUQ}(d), the matrix element $\braa{\gamma_{A_1}\gamma_{A_2}}\braa{\gamma_{A_2}\gamma_{A_3}}\braa{\gamma_{A_3}\gamma_{A_1}} \mathcal{E}^{(3)}\kett{\gamma_{A_1'}\gamma_{A_2'}}\kett{\gamma_{A_2'}\gamma_{A_3'}}\kett{\gamma_{A_3'}\gamma_{A_1'}}$ can only be nonzero if $|A_1 \cup A_2| = |A_1' \cup A_2'|$, $|A_2 \cup A_3| = |A_2' \cup A_3'|$, and $|A_3 \cup A_1| = |A_3' \cup A_1'|$. This condition is equivalent to $|A_i| = |A_i'|$ for all $i \in \{1,2,3\}$, because the $A_i$ are mutually disjoint and the $A_i'$ are mutually disjoint. Then, using Eq.~\eqref{E3elements} (which follows from Lemma~\ref{lem:E3}(b)) gives 
\begin{align*}
\mathcal{E}^{(3)} &= \sum_{\substack{k_1,k_2,k_3 \in \{0,\dots, 2n\}\\k_1+k_2+k_3 \leq 2n}}c'_{k_1,k_2,k_3}\sum_{\substack{A_1,A_2,A_3 \subseteq [2n] \text{ disjoint} \\ |A_1| = k_1, |A_2| = k_2, |A_3| = k_3}}\kett{\gamma_{A_1}\gamma_{A_2}}\kett{\gamma_{A_2}\gamma_{A_3}}\kett{\gamma_{A_3}\gamma_{A_1}} \\
&\hspace{11em} \times \sum_{\substack{A_1',A_2',A_3' \subseteq [2n] \text{ disjoint} \\ |A_1'| = k_1, |A_2'| = k_2, |A_3'| = k_3}} \braa{\gamma_{A_1'}\gamma_{A_2'}}\braa{\gamma_{A_2'}\gamma_{A_3'}}\braa{\gamma_{A_3'}\gamma_{A_1'}}\\
&= \sum_{\substack{k_1,k_2,k_3 \in \{0,\dots, 2n\}\\k_1+k_2+k_3 \leq 2n}}c_{k_1,k_2,k_3} \kett{\Upsilon_{k_1,k_2,k_3}^{(3)}}\braa{\Upsilon_{k_1,k_2,k_3}^{(3)}} 
\end{align*}
with $c_{k_1,k_2,k_3} = c'_{k_1,k_2,k_3}{2n\choose k_1,k_2,k_3, 2n-k_1-k_2-k_3}$ and $\kett{\Upsilon^{(3)}_{k_1,k_2,k_3}}$ defined as in Eq.~\eqref{Psi3}, repeated here for convenience:
\[\kett{\Upsilon^{(3)}_{k_1,k_2,k_3}} \coloneqq {{2n \choose k_1,k_2, k_3,2n-k_1-k_2-k_3}}^{-1/2} \sum_{\substack{A_1,A_2, A_3 \subseteq [2n] \text{ } \mathrm{ disjoint} \\ |A_1| = k_1, |A_2| = k_2, |A_3| = k_3} }   \kett{\gamma_{A_1}\gamma_{A_2}}\kett{\gamma_{A_2}\gamma_{A_3}} \kett{\gamma_{A_3}\gamma_{A_1}}. \]

Finally, we show that for all $k_1,k_2,k_3 \in [2n]$ such that $k_1+k_2+k_3 \leq 2n$, we have 
\begin{equation} \label{UQ3Upsilon}
\mathcal{U}_Q^{\otimes 3}\kett{\Upsilon_{k_1,k_2,k_3}^{(3)}} = \kett{\Upsilon_{k_1,k_2,k_3}^{(3)}}
\end{equation} for all $Q \in \orth$, which implies that $\mathcal{E}^{(3)}\kett{\Upsilon_{k_1,k_2,k_3}^{(3)}}  = \kett{\Upsilon_{k_1,k_2,k_3}^{(3)}}$ and hence $c_{k_1,k_2,k_3} = 1$. To do this, we show that Eq.~\eqref{UQ3Upsilon} holds for a set of generators for $\orth$. In particular, we consider Givens rotations $G_{\mu}$ in the plane spanned by the $\gamma_\mu, \gamma_{\mu+1}$ axes, 
\begin{equation} \label{UGmu} \mU_{G_{\mu}(\theta)}: \begin{dcases} \gamma_\mu &\mapsto \enspace \cos\theta \gamma_\mu + \sin\theta \gamma_{\mu+1}, \qquad \\
\gamma_{\mu + 1} &\mapsto\enspace -\sin\theta \gamma_\mu + \cos\theta\gamma_{\mu+1}, \qquad  \\
\gamma_\nu &\mapsto \enspace \gamma_\nu \qquad \text{ for } \nu \not\in \{\mu,\mu + 1\}.\end{dcases}\end{equation}
for all $\mu \in [2n-1]$ and $\theta \in \mathbb{R}$,
along with any reflection $R$, say
\[\mU_R : \begin{dcases} \gamma_1 \mapsto -\gamma_1, \\
\gamma_\nu \mapsto \gamma_\nu \qquad \text{ for } \nu \neq 1.\end{dcases}\]

It is easy to see that $\mU_R^{\otimes 3}\kett{\gamma_{A_1}\gamma_{A_2}}\kett{\gamma_{A_2}\gamma_{A_3}}\kett{\gamma_{A_3}\gamma_{A_1}} = \kett{\gamma_{A_1}\gamma_{A_2}}\kett{\gamma_{A_2}\gamma_{A_3}}\kett{\gamma_{A_3}\gamma_{A_1}}$, and hence Eq.~\eqref{UQ3Upsilon} holds for $Q = R$. The index $1$ is either in one of the (disjoint) subsets $A_1,A_2, A_3$, or in none of them. In the latter case, clearly $\mU_R\kett{\gamma_{A_i}\gamma_{A_j}} = \kett{\gamma_{A_i}\gamma_{A_j}}$ for all $(i,j) \in \{(1,2),(2,3),(3,1)\}$. In the former, $\mU_R$ negates two of the $\kett{\gamma_{A_i}\gamma_{A_j}}$ and leaves the third unchanged. 

For the Givens rotations, our strategy is to show that \begin{equation} \label{dUGUpsilon} \frac{\partial}{\partial\theta} \mU_{G_{\mu}(\theta)}^{\otimes 3}\kett{\Upsilon_{k_1,k_2,k_3}^{(3)}}\Bigg|_{\theta = 0} = 0. \end{equation} Then, it will be clear that the exact same argument applies to the rotated set of Majorana operators $\{\widetilde{\gamma}_\xi\}_{\xi \in [2n]}$ given by $\kett{\widetilde{\gamma}_{\xi}} = \mU_{G_{\mu}(\widetilde{\theta})}\kett{\gamma_{\xi}}$ for any $\widetilde{\theta} \in \mathbb{R}$, leading to 
\begin{equation} \label{thetaderivative} \frac{\partial}{\partial\theta}\mU_{G_{\mu}(\theta)}^{\otimes 3} \sum_{\substack{A_1,A_2, A_3 \subseteq [2n] \text{ } \mathrm{ disjoint} \\ |A_1| = k_1, |A_2| = k_2, |A_3| = k_3} }   \kett{\widetilde\gamma_{A_1}\widetilde\gamma_{A_2}}\kett{\widetilde\gamma_{A_2}\widetilde\gamma_{A_3}} \kett{\widetilde\gamma_{A_3}\widetilde\gamma_{A_1}}\Bigg|_{\theta = 0} = 0. \end{equation}
The left-hand side is proportional to $\frac{\partial}{\partial\theta} \mU_{G_{\mu}(\theta)}^{\otimes 3}\kett{\Upsilon_{k_1,k_2,k_3}^{(3)}}\big|_{\theta = \widetilde{\theta}}$. Thus, we have $\frac{\partial}{\partial\theta} \mU_{G_{\mu}(\theta)}^{\otimes 3}\kett{\Upsilon_{k_1,k_2,k_3}^{(3)}} = 0$ for all $\theta$, and so $\mU_{G_{\mu}(\theta)}^{\otimes 3}\kett{\Upsilon_{k_1,k_2,k_3}^{(3)}} = \mU_{G_{\mu}(0)}^{\otimes 3}\kett{\Upsilon_{k_1,k_2,k_3}^{(3)}} = \kett{\Upsilon_{k_1,k_2,k_3}^{(3)}}$ for all $\theta$.
(One could also directly show that $\mU_{G_{\mu}(\theta)}\kett{\Upsilon_{k_1,k_2,k_3}^{(3)}} = \kett{\Upsilon_{k_1,k_2,k_3}^{(3)}}$ by considering the same cases as the ones below and applying Eq.~\eqref{UGmu}, but taking the derivative at $0$ substantially reduces the number of terms in the expansions and thereby simplifies the presentation.) 

Taking the $\theta$-derivative of Eq.~\eqref{UGmu} gives
\begin{equation} \label{UGmuderivative} \frac{\partial}{\partial \theta} \mU_{G_{\mu}(\theta)}\Bigg|_{\theta = 0} : \begin{dcases}
\gamma_\mu &\mapsto \enspace \gamma_{\mu + 1} \\
\gamma_{\mu + 1} &\mapsto\enspace -\gamma_\mu \\
\gamma_\nu &\mapsto \enspace 0 \qquad \text{ for } \nu \not\in \{\mu,\mu+1 \}.
\end{dcases} \end{equation}
Using Fact~\ref{fact:mUQ}(a), we have that for $S = \{\nu_1,\dots, \nu_{|S|}\} \subseteq [2n]$ with $\nu_1 < \dots < \nu_{|S|}$, 
\begin{align*}
    \frac{\partial}{\partial\theta}\mU_{G_\mu(\theta)}(\gamma_S)\Bigg|_{\theta = 0} = \sum_{i =1}^{|S|} \gamma_{\nu_1}\dots \gamma_{\nu_{i-1}} \frac{\partial}{\partial \theta}\mU_{G_\mu(\theta)}(\gamma_{\nu_i})\Bigg|_{\theta = 0}\gamma_{\nu_{i+1}}\dots \gamma_{\nu_{|S|}}.
\end{align*}
Substituting in Eq.~\eqref{UGmuderivative}, we obtain
\begin{equation} \label{dUGgammaS}
    \frac{\partial}{\partial\theta}\mU_{G_\mu(\theta)}(\gamma_S)\Bigg|_{\theta = 0} = \begin{dcases}
    0 \qquad &\text{if $\mu, \mu + 1 \in S$ or $\mu, \mu + 1 \not\in S$} \\
    \gamma_{S[\mu \to \mu+1]} \qquad &\text{if $\mu \in S$, $\mu + 1 \not\in S$} \\
    -\gamma_{S[\mu + 1 \to \mu]} \qquad &\text{if $\mu + 1 \in S, \mu \notin S$}
    \end{dcases},
\end{equation}
where $S[\mu \to \mu \pm 1]$ denotes the set $S$ with $\mu$ replaced by $\mu \pm 1$ (the $\mu,\mu+1\in S$ case follows from $\gamma_\mu^2 = \gamma_{\mu+1}^2 = I$). 

Then, looking at the basis elements $\kett{\gamma_{A_1}\gamma_{A_2}}\kett{\gamma_{A_2}\gamma_{A_3}}\kett{\gamma_{A_3}\gamma_{A_1}}$ (with $A_1$, $A_2$, $A_3$ mutually disjoint) that appear in $\kett{\Upsilon_{k_1,k_2,k_3}^{(3)}}$, we see from Eq.~\eqref{dUGgammaS} that $\frac{\partial}{\partial\theta} \mU_{G_{\mu}(\theta)}^{\otimes 3}\kett{\gamma_{A_1}\gamma_{A_2}}\kett{\gamma_{A_2}\gamma_{A_3}}\kett{\gamma_{A_3}\gamma_{A_1}}\big|_{\theta = 0} = 0$ whenever $\mu, \mu+1$ are both in $A_i$ for some $i \in \{1,2,3\}$ [Case 1], or $\mu, \mu + 1$ are not in $A_1$, $A_2$, nor $A_3$ [Case 2]. There are only two more cases left to consider (all others follow from symmetry): $\mu \in A_1$ with $\mu + 1$ not in $A_1$, $A_2$, nor $A_3$, and $\mu \in A_1$ with $\mu +1 \in A_2$. 

[Case 3] For any disjoint $(A_1,A_2,A_3)$ such that $\mu \in A_1$ and $\mu + 1 \notin A_1,A_2, A_3$, consider the corresponding triplet of disjoint subsets $(A_1[\mu \to \mu + 1], A_2, A_3)$. Using $\frac{\partial}{\partial\theta}\mU_{G_\mu(\theta)}^{\otimes 3}\big|_{\theta = 0} = \frac{\partial}{\partial\theta}\mU_{G_\mu(\theta)}\big|_{\theta = 0}\otimes \mathcal{I}\otimes \mathcal{I} + \mathcal{I} \otimes \frac{\partial}{\partial\theta}\mU_{G_\mu(\theta)}\big|_{\theta = 0}\otimes \mathcal{I} + \mathcal{I} \otimes \mathcal{I} \otimes \frac{\partial}{\partial\theta}\mU_{G_\mu(\theta)}\big|_{\theta = 0}$ and applying Eq.~\eqref{dUGgammaS}, we see that
\begin{align*}
    \frac{\partial}{\partial \theta}\mU_{G_\mu(\theta)}^{\otimes 3}\Bigg|_{\theta =0}\big( &\kett{\gamma_{A_1}\gamma_{A_2}}\kett{\gamma_{A_2}\gamma_{A_3}}\kett{\gamma_{A_3}\gamma_{A_1}} +\kett{\gamma_{A_1[\mu \to \mu+1]}\gamma_{A_2}}\kett{\gamma_{A_2}\gamma_{A_3}}\kett{\gamma_{A_3}\gamma_{A_1[\mu \to \mu + 1]}} \big) \\
    = \enspace &\kett{\gamma_{A_1[\mu\to\mu+1]}\gamma_{A_2}}\kett{\gamma_{A_2}\gamma_{A_3}}\kett{\gamma_{A_3}\gamma_{A_1}} + \kett{\gamma_{A_1}\gamma_{A_2}}\kett{\gamma_{A_2}\gamma_{A_3}}\kett{\gamma_{A_3}\gamma_{A_1[\mu \to \mu+1]}} \\
    - \enspace &\kett{\gamma_{A_1[\mu \to \mu + 1][\mu + 1\to \mu]}\gamma_{A_2}}\kett{\gamma_{A_2}\gamma_{A_3}}\kett{\gamma_{A_3}\gamma_{A_1[\mu \to \mu + 1]}} - \kett{\gamma_{A_1[\mu \to \mu + 1]}\gamma_{A_2}}\kett{\gamma_{A_2}\gamma_{A_3}}\kett{\gamma_{A_3}\gamma_{A_1[\mu \to \mu + 1][\mu + 1 \to \mu]}} \\
    = \enspace &0,
\end{align*}
observing that $A_1[\mu\to\mu + 1][\mu+1 \to \mu] = A_1$.

[Case 4] For any disjoint $(A_1,A_2, A_3)$ such that $\mu \in A_1$ and $\mu + 1 \in A_2$, consider the corresponding triplet of disjoint subsets $(A_1[\mu\to \mu+1], A_2[\mu+1 \to \mu], A_3)$. The superposition of the two corresponding basis elements is 
\begin{align*}
    &\kett{\gamma_{A_1}\gamma_{A_2}}\kett{\gamma_{A_2}\gamma_{A_3}}\kett{\gamma_{A_3}\gamma_{A_1}} + \kett{\gamma_{A_1[\mu \to \mu + 1]}\gamma_{A_2[\mu+1\to\mu]}}\kett{\gamma_{A_2[\mu+1\to\mu]}\gamma_{A_3}}\kett{\gamma_{A_3}\gamma_{A_1[\mu\to\mu+1]}} \\
    &\quad = \kett{\gamma_{A_1}\gamma_{A_2}}\big(\kett{\gamma_{A_2}\gamma_{A_3}}\kett{\gamma_{A_3}\gamma_{A_1}} - \kett{\gamma_{A_2[\mu+1\to\mu]}\gamma_{A_3}}\kett{\gamma_{A_3}\gamma_{A_1[\mu\to\mu+1]}}\big)
\end{align*}
since $\gamma_{A_1[\mu \to \mu+1]}\gamma_{A_2[\mu + 1 \to \mu]}$ is the same as $\gamma_{A_1}\gamma_{A_2}$ except with $\gamma_{\mu}$ and $\gamma_{\mu+1}$ exchanged, so $\gamma_{A_1[\mu \to \mu+1]}\gamma_{A_2[\mu + 1 \to \mu]} = -\gamma_{A_1}\gamma_{A_2}$. By Eq.~\eqref{dUGgammaS}, 
\[ \frac{\partial}{\partial \theta} \mU_{G_\mu(\theta)}\Bigg|_{\theta = 0} \kett{\gamma_{A_1}\gamma_{A_2}} = 0, \]
since $\gamma_{A_1}\gamma_{A_2} \propto \gamma_{A_1 \cup A_2}$ and $\mu, \mu+1$ are both in $A_1 \cup A_2$. Hence,
\begin{align*}
    \frac{\partial}{\partial \theta}\mU_{G_\mu(\theta)}^{\otimes 3}\Bigg|_{\theta =0}\big(&\kett{\gamma_{A_1}\gamma_{A_2}}\kett{\gamma_{A_2}\gamma_{A_3}}\kett{\gamma_{A_3}\gamma_{A_1}} + \kett{\gamma_{A_1[\mu \to \mu + 1]}\gamma_{A_2[\mu+1\to\mu]}}\kett{\gamma_{A_2[\mu+1\to\mu]}\gamma_{A_3}}\kett{\gamma_{A_3}\gamma_{A_1[\mu\to\mu+1]}} \big) \\
    = \enspace &\kett{\gamma_{A_1}\gamma_{A_2}}\big[-\kett{\gamma_{A_2[\mu+1\to\mu]}\gamma_{A_3}} \kett{\gamma_{A_3}\gamma_{A_1}} + \kett{\gamma_{A_2}\gamma_{A_3}} \kett{\gamma_{A_3} \gamma_{A_1[\mu\to\mu+1]}} \\
    &\hspace{4.35em} - \big(\kett{\gamma_{A_2[\mu+1\to\mu][\mu \to \mu+1]}\gamma_{A_3}}\kett{\gamma_{A_3}\gamma_{A_1[\mu\to\mu+1]}} - \kett{\gamma_{A_2[\mu+1\to\mu]}\gamma_{A_3}}\kett{\gamma_{A_3}\gamma_{A_1[\mu\to\mu+1][\mu+1\to\mu]}} \big)\big] \\
    = \enspace &0.
\end{align*}

Eq.~\eqref{dUGUpsilon} now follows by expanding $\kett{\Upsilon_{k_1,k_2,k_3}^{(3)}}$ into basis elements [Eq.~\eqref{Psi3}] and applying the above cases. (In particular, note that the basis elements that are not handled by Case 1 or 2 are all paired off according to Case 3 or 4 such that there are no overlapping pairs. Explicitly, consider the map $(A_1,A_2,A_3) \mapsto (A_1',A_2', A_3')$ where
\[
\begin{dcases}
A_i' = A_i[\mu \to \mu + 1], \enspace A_j' = A_j \text{ for } j \neq i \qquad &\text{if $\mu \in A_i$, $\mu + 1 \not\in A_1,A_2,A_3$} \\
A_i' = A_i[\mu+ 1 \to \mu], \enspace A_j' = A_j \text{ for } j \neq i \qquad &\text{if $\mu+1 \in A_i$, $\mu \not\in A_1,A_2,A_3$} \\
A_i' = A_i[\mu \to \mu + 1], \enspace A_j' = A_j[\mu +1\to \mu], \enspace A_k' = A_k \text{ for } k \not\in\{i,j\} \qquad &\text{if $\mu \in A_i$, $\mu + 1\in A_j$ $(i \neq j)$}.
\end{dcases}
\]
This map is defined on every triplet $(A_1,A_2,A_3)$ of disjoint subsets that does not fall under Case~1 or 2. The map takes each such triplet to a different triplet and is clearly self-inverse, so it partitions its domain into pairs, and the calculations above for Cases 3 and 4 show that $\frac{\partial}{\partial \theta}\mU_{G_\mu(\theta)}^{\otimes 3}\Big|_{\theta =0} (\kett{\gamma_{A_1}\gamma_{A_2}}\kett{\gamma_{A_2}\gamma_{A_3}}\kett{\gamma_{A_3}\gamma_{A_1}} + \kett{\gamma_{A_1'}\gamma_{A_2'}}\kett{\gamma_{A_2'}\gamma_{A_3'}}\kett{\gamma_{A_3'}\gamma_{A_1'}}) =~0$.
\end{proof}

\section{Computing Pfaffians of linear matrix functions in cubic time} \label{app:linearPfaffian}

In this appendix, we consider the problem of computing the coefficients of the polynomial $\Pf(A(z))$, where $A(z) = B + zC$ for $z$-independent, antisymmetric $2r\times 2r$ matrices $B$ and $C$. Using polynomial interpolation, this can be done in $\mathcal{O}(r^4)$ time, since the polynomial has degree at most $r$. Here, we present a different procedure that computes all of the $r+1$ coefficients in $\mathcal{O}(r^3)$ time in total, for the case where $A(0) = B$ is invertible. Since the matrix $C_{\varrho_1}'^{-1}$ in Theorem~\ref{prop:densityoperators} is always invertible by construction, we can apply our procedure to compute the coefficients of the polynomial $p_{\varrho_1,\varrho_2}(z)$ in Theorem~\ref{prop:densityoperators} in $\mathcal{O}(r^3)$ time, for arbitrary Gaussian states $\varrho_1$ and $\varrho_2$. 

The strategy is to recursively compute the derivatives of $\Pf(A(z))$ evaluated at $z = 0$. Denoting the $\ell$th derivative of a function $F$ by $\partial^\ell F$, the coefficient of $z^\ell$ in $\Pf(A(z))$ is 
\[ \partial^\ell \Pf(A) \Big|_{z = 0} = \partial^\ell \Pf(A(0)). \]
From $\Pf(A)^2 = \det(A)$ and Jacobi's formula, the first derivative of $\Pf(A)$ is
\begin{equation} \label{PfJacobi} \partial^1 \Pf(A) = \frac{1}{2}\Pf(A)\tr(A^{-1} \,\partial^1 A) \end{equation}
if $A(z)$ is invertible. Since $A$ is a smooth function of $z$, and $A(0)$ is invertible by assumption, $A(z)$ is invertible in a neighbourhood around $z = 0$, so we can apply Eq.~\eqref{PfJacobi} to compute the derivatives at $z =0$. Define the functions 
\[ f(z) = \Pf(A(z)), \qquad g(z) = \frac{1}{2}\tr\left(A^{-1}(z)\,  \partial^1A(z)\right) \]
so that Eq.~\eqref{PfJacobi} becomes $\partial^1 f = fg$.
Then, from the product rule, the $\ell$th derivative of $f$ is
\begin{align} \label{partialellf}
    \partial^\ell f = \sum_{j = 0}^{\ell - 1} {\ell - 1 \choose j} \left(\partial^{\ell - 1 - j} f\right) \left(\partial^{j} g\right)
\end{align}
for $\ell \geq 1$. For $A(z) = B + zC$, we have $\partial^1 A(z) = C$ and $g(z) = \frac{1}{2}\tr(A^{-1}(z)C)$. Hence, since $\partial^1 A^{-1} = -A^{-1} (\partial^1 A)A^{-1}$,
\[ \partial^j g = \frac{1}{2}(-1)^j j! \, \tr\left((A^{-1}C)^{j+1}\right). \]

The procedure for computing $\partial^\ell \Pf(A(0)) = \partial^\ell f(0)$ for $\ell \in \{0,\dots, r\}$ is as follows. First, we compute and store the eigenvalues $\lambda_1,\dots, \lambda_{2r}$ of $A^{-1}(0)C = B^{-1}C$, which takes $\mathcal{O}(r^3)$ time. Then, we compute $\partial^j g(0)$ using the fact that $\tr((B^{-1}C)^{j+1}) = \sum_{i = 1}^{2r} \lambda_i^{j+1}$. By storing the powers of the eigenvalues as they are computed, we can obtain $\partial^j g(0)$ for every $j \in \{0,\dots, r-1\}$ in $\mathcal{O}(r^2)$ time. Finally, we iteratively compute $\partial^\ell f(0)$ via Eq.~\eqref{partialellf}, using these values of $\partial^j g(0)$ as well as previously computed values of lower-order derivatives of $f$ at $z=0$ (starting by computing $f(0) = \Pf(B)$, which uses $\mathcal{O}(r^3)$ time). Since Eq.~\eqref{partialellf} gives $\partial^\ell f$ as a linear combination of $\ell$ terms of the form $(\partial^{\ell-1-j} f)(\partial^j g)$, computing $\partial^\ell f(0)$ for every $\ell \in [r]$ takes $\mathcal{O}(r^3)$ time if we compute the binomial coefficients independently, or $\mathcal{O}(r^2)$ if we iteratively compute them (using e.g., Pascal's triangle). Thus, the total runtime is $\mathcal{O}(r^3)$.

\section{Proofs for the efficient estimation of more general observables} \label{app:Grassmann}

In this appendix, we prove Theorem~\ref{prop:tr(ABCD)} and the correctness of Algorithm~\ref{alg:gMB} in subsection~\ref{sec:compute_general}. As preliminaries, we review some standard identities for Grassmann integrals that we will use. 

For arbitrary $N \in \mathbb{Z}_{>0}$, let $\chi_1,\dots,\chi_{2N}$ be generators of a $2^{2N}$-dimensional Grassmann algebra $\mathcal{G}_{2N}$. It follows from the definition in Eqs.~\eqref{Grassmannintegraldef1} and~\eqref{Grassmannintegraldef2} that Grassmann integrals anticommute,
\begin{equation} \label{integralantisymmetric} \int d\chi_\mu \int d\chi_\nu = -\int d\chi_\nu \int d\chi_\mu.
\end{equation}

For any antisymmetric $2N\times 2N$ matrix $M$, 
\begin{equation} \label{GrassmannGaussian} \int D\chi\, \exp\left(\frac{1}{2}\chi^{\mathrm{T}} M\chi\right) = \Pf(M). \end{equation}
This is essentially the analogue of Fact~\ref{fact:bivectorPfaffian} for Grassmann variables. Expanding the exponential as a power series, $\exp(\frac{1}{2}\chi^{\mathrm{T}} M\chi) = \sum_{j=0}^{\infty} \frac{1}{j!} (\frac{1}{2}\sum_{\mu,\nu=1}^{2N} M_{\mu\nu}\chi_{\mu}\chi_\nu)^j$, we see that the only power that can contain a term proportional to $\chi_1\dots \chi_N$ (and hence contribute to the integral) is that for $j = N$. Now, note that Fact~\ref{fact:bivectorPfaffian} relies only on the antisymmetry of the wedge product under the exchange of $1$-vectors in the Clifford algebra. The multiplication operation of the Grassmann algebra is likewise antisymmetric under the exchange of generators (Eq.~\eqref{Grassmannproduct}), so we can infer that $\frac{1}{N!} (\frac{1}{2}\sum_{\mu,\nu=1}^{2N} M_{\mu\nu}\chi_{\mu}\chi_\nu)^N = \Pf(M) \chi_1\dots \chi_N$. A straightforward generalisation of Eq.~\eqref{GrassmannGaussian} is
\begin{equation} \label{A.15(b)}
\int D\chi \, \chi_{j_1}\dots \chi_{j_k} \exp\left(\frac{1}{2}\chi^{\mathrm{T}} M\chi\right)  = \epsilon(J)\, \Pf\big(M\big|_{\overline{J}}\big),
\end{equation}
where $J = \{j_1,\dots, j_k\} \subseteq [2N]$ with $j_1 < \dots < j_k$, $\epsilon(J) \coloneqq (-1)^{|J|(|J|-1)/2 + \sum_{j \in J}j}$, and $M\big|_{\overline{J}}$ is the submatrix of $M$ consisting of rows and columns in $\overline{J} \coloneqq [2N] \setminus J$. This is Theorem~A.15(b) of Ref.~\citenum{Caracciolo}, and follows from the antisymmetry of Grassmann integrals (Eq.~\eqref{integralantisymmetric}) combined with Eq.~\eqref{GrassmannGaussian}. 

Also, under a change of variables $\widetilde{\chi}_\mu = \sum_{\nu = 1}^{2N} T_{\mu\nu} \chi_\nu$ for $\mu \in [2N]$, where $T$ is any invertible $2N \times 2N$ matrix,
\begin{equation} \label{Dchitilde} \int D\chi = \det(T) \int D\widetilde{\chi}\end{equation}
(where $D\widetilde{\chi} \equiv d\widetilde{\chi}_{2N} \dots d\widetilde{\chi}_{1}$). This also follows from antisymmetry (Eq.~\eqref{integralantisymmetric}). 

Now let $\chi_1,\dots, \chi_{2N}, \eta_1,\dots, \eta_K$ (for $K \in \mathbb{Z}_{\geq 0}$) be generators of some extended Grassmann algebra, where the $\eta_\mu$ do not involve $\chi_1,\dots, \chi_{2N}$. Then, for any $K \times N$ matrix $B$ and any invertible antisymmetric $2N \times 2N$ matrix $M$, 
\begin{equation} \label{A.15(a)}
    \int D\chi\, \exp\left(\frac{1}{2}\chi^{\mathrm{T}} M\chi + \eta^{\mathrm{T}} B \chi \right) = \Pf(M) \exp\left(\frac{1}{2}\eta^{\mathrm{T}} BM^{-1} B^{\mathrm{T}} \eta \right).
\end{equation}
This follows from Eq.~\eqref{GrassmannGaussian} by ``completing the square,'' defining $\chi' = \chi - M^{-1} B^{\mathrm{T}} \eta$ so that $\exp(\frac{1}{2}\chi^{\mathrm{T}} M\chi + \eta^{\mathrm{T}} B \chi) = \exp(\frac{1}{2}\chi'^{\mathrm{T}} M \chi') \exp(\frac{1}{2}\eta^{\mathrm{T}} B M^{-1} B^{\mathrm{T}} \eta)$. See Theorem~A.15(a) of Ref.~\citenum{Caracciolo} for a more general statement and proof.

\subsection{Proof of Theorem~\ref{prop:tr(ABCD)}} \label{app:tr(ABCD)}

\begin{proof}[Proof of Theorem~\ref{prop:tr(ABCD)}] 
First, note that it suffices to prove Eq.~\eqref{CliffordtoGrassmann} for the case where each of $A^{(i)}$ is a product $\gamma_{S^{(i)}}$ of Majorana operators, for arbitrary $S^{(i)} \subseteq [2n]$. Then, Eq.~\eqref{CliffordtoGrassmann} extends to arbitrary $A^{(i)} \in \LHn$, by virtue of the linearity of the Grassmann representation $\omega$ (Eq.~\eqref{Grassman_rep}) and the fact that products of Majorana operators span $\LHn$. 

We start by considering the LHS of Eq.~\eqref{CliffordtoGrassmann} for $A^{(i)} = \gamma_{S^{(i)}}$:
\[ \text{LHS} \coloneqq \tr\left(\gamma_{S^{(1)}} \dots \gamma_{S^{(m)}}\right). \]
For each $\mu \in [2n]$, define $J_\mu \subseteq [m]$ to be the set of indices $i$ for which $\gamma_\mu$ appears in $\gamma_{S^{(i)}}$: 
\[ J_\mu \coloneqq \{i \in [m]: \mu \in S^{(i)}\}. \] We now reorder the $\gamma_\mu$'s in the product $\gamma_{S^{(1)}} \dots \gamma_{S^{(m)}}$ such that any $\gamma_1$'s all appear to the left, followed by any $\gamma_2$'s, and so on. Since distinct $\gamma_\mu$'s  anticommute, this reordering incurs a sign, which we will denote by $\varsigma \in \{\pm 1\}$. Hence, 
\begin{align*}
    \text{LHS} &= \varsigma \, \tr\left( \gamma_1^{|J_1|} \dots \gamma_{2n}^{|J_{2n}|} \right) \\
    &= \varsigma\, 2^n \, \mathbf{1}_{|J_1|,\dots, |J_{2n}| \text{ even}} 
\end{align*}
since $\tr(\gamma_1^{|J_1|} \dots \gamma_{2n}^{|J_{2n}|}) = \tr(I) = 2^n$ if $|J_\mu|$ is even for all $\mu \in [2n]$, whereas if any $|J_\mu|$ has odd cardinality, then $\gamma_\mu^{|J_\mu|} = \gamma_\mu$ so $\gamma_1^{|J_1|} \dots \gamma_{2n}^{|J_{2n}|} = \gamma_S$ for $S \neq \varnothing$, which is traceless.

Next, we consider the RHS of Eq.~\eqref{CliffordtoGrassmann} for $A^{(i)} = \gamma_{S^{(i)}}$ and show that it is also equal to $\varsigma\, 2^n \mathbf{1}_{|J_1|,\dots, |J_{2n}| \text{ even}}$. By Eq.~\eqref{Grassman_rep}, $\omega(\gamma_S;\theta)= \theta_S$ for any $S \subseteq [2n]$, where we define $\theta_S \coloneqq \theta_{\mu_1}\dots \theta_{\mu_{|S|}}$ for $S = \{\mu_1,\dots,\mu_{|S|}\}$ with $\mu_1 < \dots < \mu_{|S|}$. Thus, the RHS of Eq.~\eqref{CliffordtoGrassmann} is 
\begin{align*}
    \text{RHS} \coloneqq 2^n (-1)^{nm(m-1)/2} \int D\theta^{(m)}\dots D\theta^{(1)} \, \theta^{(1)}_{S^{(1)}}\dots \theta^{(m)}_{S^{(m)}} \exp\Bigg(\sum_{\substack{i,j \in [m]\\i <j}} s_{ij} \theta^{(i) T} \theta^{(j)} \Bigg).
\end{align*}
We reorder the integrals $\int D\theta^{(m)} \dots D\theta^{(1)} \equiv \int d\theta^{(m)}_{2n} \dots d\theta^{(m)}_{1} \dots d\theta^{(1)}_{2n} \dots d\theta^{(1)}_1$ to $\int d\theta^{(m)}_{2n} \dots d\theta^{(1)}_{2n} \dots d\theta^{(m)}_{1} \dots d\theta^{(1)}_{1}$. It follows from antisymmetry (Eq.~\eqref{integralantisymmetric}) that this incurs a sign of $(-1)^{nm(m-1)/2}$, i.e., 
\[ (-1)^{nm(m-1)/2}\int D\theta^{(m)} \dots D\theta^{(1)} = \int d\theta^{(m)}_{2n} \dots d\theta^{(1)}_{2n} \dots d\theta^{(m)}_{1} \dots d\theta^{(1)}_{1}. \]
Next, we reorder the $\theta^{(i)}_{\mu}$'s in $\theta^{(1)}_{S^{(1)}}\dots \theta^{(m)}_{S^{(m)}}$ such that any $\theta_1^{(i)}$'s all appear to the left, followed by any $\theta^{(i)}_2$'s, and so on. Importantly, we do not move $\theta_\mu^{(i)}$ to the left of $\theta_\mu^{(j)}$ if $i > j$, i.e., for any fixed $\mu\in [2n]$, we do not reorder the $\theta_\mu^{(i)}$ among themselves. Consequently, this reordering incurs the same sign as that of reordering the $\gamma_\mu$'s in $\gamma_{S^{(1)}}\dots \gamma_{S^{(m)}}$ we considered above, so
\[ \theta^{(1)}_{S^{(1)}} \dots \theta^{(m)}_{S^{(m)}} = \varsigma \Bigg(\prod_{j_1 \in J_1} \theta^{(j_1)}_1 \Bigg)\dots \Bigg(\prod_{j_{2n} \in J_{2n}} \theta^{(j_{2n})}_{2n} \Bigg), \]
where for each $\mu\in [2n]$, the product $\prod_{j_\mu \in J_\mu}$ is taken in increasing order. 
Finally, we write the exponential as 
\begin{align*}
    \exp\Bigg(\sum_{\substack{i,j \in [m]\\i <j}} s_{ij} \theta^{(i) T} \theta^{(j)} \Bigg) &= \exp\Bigg(\sum_{\substack{i,j \in [m]\\i <j}} s_{ij} \sum_{\mu =1}^{2n} \theta^{(i)}_\mu \theta^{(j)}_\mu \Bigg) \\
    &= \exp\Bigg(\sum_{\substack{i,j \in [m]\\i <j}} s_{ij} \theta^{(i)}_1 \theta^{(j)}_1 \Bigg)\dots \exp\Bigg(\sum_{\substack{i,j \in [m]\\i <j}} s_{ij} \theta^{(i)}_{2n} \theta^{(j)}_{2n} \Bigg)
\end{align*}
using the fact that the $\theta_\mu^{(i)}\theta_\mu^{(j)}$ mutually commute. Thus, 
\begin{align*} \text{RHS} &= 2^n \varsigma  \int d\theta^{(m)}_{2n} \dots d\theta^{(1)}_{2n} \dots d\theta^{(m)}_{1} \dots d\theta^{(1)}_{1}\Bigg(\prod_{j_1 \in J_1} \theta^{(j_1)}_1 \Bigg)\dots \Bigg(\prod_{j_{2n} \in J_{2n}} \theta^{(j_{2n})}_{2n}  \Bigg) \\
&\qquad \times \exp\Bigg(\sum_{\substack{i,j \in [m]\\i <j}} s_{ij} \theta^{(i)}_1 \theta^{(j)}_1 \Bigg)\dots \exp\Bigg(\sum_{\substack{i,j \in [m]\\i <j}} s_{ij} \theta^{(i)}_{2n} \theta^{(j)}_{2n} \Bigg) 
\end{align*}
which we can rewrite using Eq.~\eqref{Grassmannintegraldef2} (recalling that $m$ is even) and the fact that the exponentials are even elements as
\[ \text{RHS} = 2^n \varsigma \, \Omega_{J_1}\dots \Omega_{J_{2n}}, \]
where for each $\mu \in [2n]$, $\Omega_{J_\mu}$ denotes the integral
\[ \Omega_{J_\mu} \coloneqq \int d\theta_\mu^{(m)}\dots d\theta^{(1)}_{\mu}\, \left(\prod_{j_\mu \in J_\mu} \theta^{(j_\mu)}_\mu \right)\exp\Bigg(\sum_{\substack{i,j \in [m]\\i <j}} s_{ij} \theta^{(i)}_{\mu} \theta^{(j)}_{\mu} \Bigg). \]

We now show that $\Omega_{J_\mu} = \mathbf{1}_{|J_\mu| \text{ even}}$ for all $\mu \in [2n]$. Let $M$ be the antisymmetric $m \times m$ matrix with entries $s_{ij}$ above its diagonal, e.g., for $m = 4$,
\[ M = \begin{pmatrix} 0 &1 &-1 &1 \\ 
-1 & 0 &1 &-1 \\
1 &-1 & 0 &1 \\
-1 &1 &-1 &0 \end{pmatrix} \]
and the structure is similar for all $m$. Then, the exponential term can be written as $\exp(\sum_{i < j} s_{ij} \theta_\mu^{(i)}\theta_\mu^{(j)}) = \exp(\frac{1}{2}\sum_{i, j\in [m]} M_{ij} \theta_\mu^{(i)}\theta_\mu^{(j)})$, so applying Eq.~\eqref{A.15(b)} with $(\chi_1,\dots, \chi_{2N}) \to (\theta^{(1)}_{\mu},\dots, \theta^{(m)}_\mu)$ and $J \to J_\mu$ gives
\begin{align*} \Omega_{J_\mu} 
&= \epsilon(J_\mu) \Pf\big(M\big|_{\overline{J}_\mu}\big).
\end{align*}
Since $m$ is even, $\overline{J}_\mu \coloneqq [m] \setminus J_\mu$ has even cardinality if and only if $J_\mu$ has even cardinality, so $\Pf(M\big|_{\overline{J}_\mu}) = \mathbf{1}_{|J_\mu| \text{ even}}\,\Pf(M\big|_{\overline{J}_\mu})$ (recall that the Pfaffian of any matrix of odd dimension is $0$). It remains to prove that $\epsilon(J_\mu) \Pf\big(M\big|_{\overline{J}_\mu}) = 1$ when $|J_\mu|$ is even. It follows from the definition $\epsilon(J) \coloneqq (-1)^{|J|(|J|-1)/2 + \sum_{j\in J} j}$ that $\epsilon(J_\mu) = \epsilon(\overline{J}_\mu)$ for $m$ and $|J_\mu|$ even, so we show that $\epsilon(J_\mu) \Pf(M\big|_{J_\mu}) = 1$ for arbitrary $J_\mu \subseteq[m]$ of even cardinality. Let $J = \{j_1,\dots, j_k\}$, with $k = |J_\mu|$. First, it is easy to verify that the Pfaffian of the top left $k \times k$ submatrix of $M$ is~$1$, i.e., $\Pf(M\big|_{\{1,\dots,k\}}) = 1$. Now, observe that for $i \in [k]$, replacing the $i$th row and column by the row and column indexed by $j_i$ leaves the $i$th row and column unchanged if $j_i - i$ is even, and multiplies the $i$th row and column by $-1$ if $j_i - i$ is odd. Hence, since multiplying a row and column by a constant multiplies the Pfaffian by the same constant, replacing the $i$th row and column in $M\big|_{\{1,\dots, k\}}$ by row and column $j_i$ multiplies the Pfaffian by $(-1)^{j_i - i}$. Doing this for all $i \in [k]$ to change $M\big|_{\{1,\dots, k\}}$ to $M\big|_{\{j_1,\dots, j_k\}} = M\big|_{J_\mu}$ gives
$\Pf(M\big|_{J_\mu}) = (-1)^{\sum_{i =1}^{k} (j_i - i)} \Pf(M\big|_{\{1,\dots, k\}}) = (-1)^{\sum_{j \in J_\mu} j - |J_\mu|(|J_\mu| - 1)/2} = \epsilon(J_\mu)$. Therefore, we have $\Omega_{J_\mu} = \mathbf{1}_{|J_\mu| \text{ even}}$ for any $J_\mu \subseteq [m]$, so
\[ \mathrm{RHS} = 2^n \varsigma \,  \mathbf{1}_{|J_1| \text{ even}}\dots \mathbf{1}_{|J_{2n}| \text{ even}} = \text{LHS}. \]
\end{proof}

\smallskip

In the case where one of the operators is the parity operator $P = \gamma_{[2n]}$, say, $A^{(m)} = P$, Eq.~\eqref{CliffordtoGrassmann} becomes
\begin{align*}
&\tr(A^{(1)}\dots A^{(m-1)} P) \\
&\quad = (-i)^n 2^n (-1)^{nm(m-1)/2} \int D\theta^{(m-1)} \dots D\theta^{(1)} \, \omega(A^{(1)};\theta^{(1)})\dots \omega(A^{(m-1)};\theta^{(m-1)}) \exp\Bigg(\sum_{\substack{i, j \in [m-1] \\ i < j}}s_{ij} \theta^{(i)^{\mathrm{T}}} \theta^{(j)} \Bigg), \end{align*}
which involves only $m - 1$ sets of Grassmann variables, $\theta^{(1)}, \dots, \theta^{(m-1)}$. This generalises Equation~144 of Ref.~\citenum{Bravyi2017-pw} to arbitrary even $m$, and follows from that fact that $\omega(\gamma_{[2n]}; \theta^{(m)}) = \theta^{(m)}_1 \dots \theta^{(m)}_{2n}$, so we can integrate out $\theta^{(m)}$ as
\[ \int D\theta^{(m)}\, \theta^{(m)}_1\dots \theta^{(m)}_{2n} \exp\Bigg(\sum_{i < m} s_{ij}\theta^{(i)T}\theta^{(m)}\Big) = 1. \]

\subsection{Proof of correctness for Algorithm~\ref{alg:gMB}} \label{app:gMB}

\begin{proof}[Proof of correctness for Algorithm~\ref{alg:gMB}] Let $B$ be an arbitrary $K \times 2N$ matrix, and $M$ an arbitrary $2N \times 2N$ matrix. Let $\chi_1,\dots, \chi_{2N}$ be the generators of a $2^{2N}$-dimensional Grassmann algebra. 

First, note that
\[ g(B,M) \coloneqq \int D\chi \, (B\chi)_1\dots (B\chi)_K \exp\left(\frac{1}{2}\chi^{\mathrm{T}} M\chi\right) \]
vanishes whenever $K$ is odd. The exponential $\exp(\frac{1}{2}\chi^{\mathrm{T}} M\chi)$ is a linear combination of even elements, so in this case the integrand cannot contain terms proportional to $\chi_1\dots \chi_{2N}$ (recall from Eq.~\eqref{Dtheta} that these are the only terms that contribute to the integral over $D\chi$). $g(B,M)$ is also zero whenever $K > 2N$, because $(B\chi)_1\dots (B\chi)_K$ expands into a linear combination of products of $K$ $\chi_\mu$'s. Every product must vanish for $K > 2N$, since $\chi_\mu^2 = 0$ from Eq.~\eqref{Grassmannproduct}. Consequently, Algorithm~\ref{alg:gMB} returns $g(B,M) =0$ if $K$ is odd or $K > 2N$ (Step 1). 

Next, if $K = 2N$ (so $B$ is a square matrix), Algorithm~\ref{alg:gMB} returns $g(B,M) = \det(B)$ in Step 2. This is correct by Equation A.86 of Ref.~\citenum{Caracciolo}, but is also easy to see directly. When $K = 2N$, only the scalar term from the exponential contributes, so $g(B,M) = \int D\chi (B\chi)_1\dots (B\chi)_K$. Then, changing variables to $\widetilde{\chi}_\mu = \sum_{\nu=1}^{2N} B_{\mu\nu}\chi_\mu $ gives $g(B,M) = \det(B)$ by Eq.~\eqref{Dchitilde}. 

Henceforth, assume that $K$ is even and $K < 2N$. Consider an extended Grassmann algebra, generated by $\chi_1,\dots, \chi_{2N}, \eta_1,\dots, \eta_K$ (where the $\eta_\mu$ do not involve $\chi_1,\dots, \chi_{2N}$). We will use the fact that 
\begin{equation} \label{Bchi} (B\chi)_1 \dots (B\chi)_K = (-1)^{K/2} \int D\eta \, \exp(\eta^{\mathrm{T}} B \chi).  \end{equation} To see this, note that $\exp(\eta^{\mathrm{T}} B\chi) = (1 + \eta_1(B\chi)_1) \dots (1 + \eta_K (B\chi)_K)$, since the $\eta_k (B\chi)_k$ mutually commute and $(\eta_k (B\chi)_k)^j = 0$ for any $j > 1$, so $\exp(\eta_k (B\chi)_k) = 1 + \eta_k (B\chi)_k$. As a result, we have
\begin{align*}
    \int d\eta_1\dots d\eta_K \exp(\eta^{\mathrm{T}} B\chi) &= \int d\eta_1\dots  d\eta_K \, \left(1 + \eta_1 (B\chi)_1\right) \dots \left(1 + \eta_K (B\chi)_K\right) \\
    &= \left[\int d\eta_1 \left(1 + \eta_1 (B\chi)_1\right) \right]  \dots \left[\int d\eta_K \left(1 + \eta_K (B\chi)_K\right) \right] \\
    &= (B\chi)_1 \dots (B\chi)_K,
\end{align*}
applying the definition of $\int d\eta_\mu$, Eqs.~\eqref{Grassmannintegraldef1} and~\eqref{Grassmannintegraldef2}, to obtain the second and third lines. Finally observe that antisymmetry (Eq.~\eqref{integralantisymmetric}) implies that $\int D\eta \equiv \int d\eta_K\dots d\eta_1 = (-1)^{K(K-1)/2} \int d\eta_1\dots d\eta_K$, and we have $(-1)^{K(K-1)/2} = (-1)^{K/2}$ since $K^2/2$ is even for even $K$.

Thus, we have
\begin{align*}
    g(B,M) &= (-1)^{K/2} \int D\chi\, D\eta \, \exp(\eta^{\mathrm{T}} B\chi) \exp\left(\frac{1}{2}\chi^{\mathrm{T}} M\chi\right) \\
    &= (-1)^{K/2} \int D\eta \,D\chi \, \exp\left(\frac{1}{2}\chi^{\mathrm{T}} M \chi + \eta^{\mathrm{T}} B\chi\right),
\end{align*}
where the second line uses Eq.~\eqref{integralantisymmetric} along with the fact that the arguments of the exponentials commute since they are both even elements. Now, if $M$ is invertible, then we can directly apply Eq.~\eqref{A.15(a)} followed by Eq.~\eqref{GrassmannGaussian} to obtain $g(B,M) = (-1)^{K/2} \Pf(M) \Pf(BM^{-1}B^{\mathrm{T}}) = \Pf(M) \Pf(-BM^{-1}B^{\mathrm{T}})$, which is what Step~3a returns. This is Theorem~A.15(d) of Ref.~\citenum{Caracciolo}, which we stated in Eq.~\eqref{A.15(d)}. 

If $M$ is not invertible, then just as in Step 3b, let $2r$ be the rank of $M$, let $R \in \mathrm{SO}(2N)$ be any special orthogonal matrix and $M'$ any invertible (antisymmetric) $2r \times 2r$ such that $M = R^{\mathrm{T}} \begin{pmatrix} M' &0 \\
0 &0 \end{pmatrix} R$, and let $BR = \begin{pmatrix} B' & B'' \end{pmatrix}$, where $B'$ has size $K \times 2r$ and $B''$ has size $K \times 2N -2r$. Also make the change of variables $\widetilde{\chi}_{\mu} = \sum_{\nu=1}^{2N} R^{\mathrm{T}}_{\mu\nu} \chi_\nu$, and write $\widetilde{\chi} = R^{\mathrm{T}}\chi = \begin{pmatrix}\widetilde{\chi}' \\ \widetilde{\chi}'' \end{pmatrix}$, where $\widetilde{\chi}'$ is the vector consisting of the first $2r$ variables $\widetilde{\chi}_1, \dots, \widetilde{\chi}_{2r}$, and $\widetilde{\chi}''$ consists of the remaining $2N - 2r$ variables $\widetilde{\chi}_{2r+1},\dots \widetilde{\chi}_{2N}$. Since $R \in \mathrm{SO}(2n)$, $\int D\chi = \int D\widetilde{\chi} \equiv \int D\widetilde{\chi}'' \, D\widetilde{\chi}'$ by Eq.~\eqref{Dchitilde}. Hence,
\begin{align*}
    g(B,M) &= (-1)^{K/2}\int D\eta\, D\widetilde{\chi}'' \, D\widetilde{\chi}' \exp\left(\frac{1}{2}\widetilde{\chi}'^{\mathrm{T}} M' \widetilde{\chi}' + \eta^{\mathrm{T}} B' \widetilde{\chi}' + \eta^{\mathrm{T}} B''\widetilde{\chi}'' \right) \\
    &= (-1)^{K/2} \int D\eta \left[\int D\widetilde{\chi}'  \exp\left(\frac{1}{2}\widetilde{\chi}'^{\mathrm{T}} M' \widetilde{\chi}' + \eta^{\mathrm{T}} B'\widetilde{\chi}'\right) \right] \left[\int D\widetilde{\chi}'' \, \exp\left(\eta^{\mathrm{T}} B''\widetilde{\chi}''\right) \right] \\
    &= (-1)^{K/2} \int D\eta \, \Pf(M') \exp\left(\frac{1}{2} \eta^{\mathrm{T}} B' M'^{-1} B'^{\mathrm{T}} \eta \right) (-1)^{(2N-2r)/2} (-B''^{\mathrm{T}} \eta)_1 \dots (-B''^{\mathrm{T}} \eta)_{2N - 2r} \\
    &= (-1)^{K/2+ N} (-1)^r \Pf(M') \int D\eta\, (B''^{\mathrm{T}} \eta)_1 \dots (B''^{\mathrm{T}}\eta)_{2N - 2r} \exp\left(\frac{1}{2}\eta^{\mathrm{T}} B' M'^{-1} B' \eta\right) \\
    &= (-1)^{K/2 + N} \Pf(-M') \, g(B''^{\mathrm{T}}, B' M'^{-1} B'^{\mathrm{T}}).
\end{align*}
Here, we use the fact that the arguments of the exponential are even together with Eq.~\eqref{Grassmannintegraldef2} to obtain the second line. We then apply Eq.~\eqref{A.15(a)} to the $\int D\widetilde{\chi}'$ integral, and Eq.~\eqref{Bchi} (with $\chi \to \eta$ and $\eta \to \widetilde{\chi}''$) to the $\int D\widetilde{\chi}''$ integral to get the third line, noting that $\eta^{\mathrm{T}} B''\widetilde{\chi}'' = - \widetilde{\chi}''^{\mathrm{T}} B''^{\mathrm{T}} \eta$. The fourth line follows from the fact that even elements commute, and the fifth uses the property $\Pf(-M') = (-1)^{r} \Pf(M')$ for any $2r \times 2r$ antisymmetric matrix $M'$, along with the definition of $g(B,M)$ (Eq.~\eqref{gBM}). Therefore, we can evaluate $g(B,M)$ by evaluating $g(B''^{\mathrm{T}}, B' M'^{-1} B'^{\mathrm{T}})$. This is a smaller problem because   $B'M'^{-1}B'^{\mathrm{T}}$ is of size $K \times K$, and we are in the case $K < 2N$. \end{proof}

\section{$\mathcal{O}(\sqrt{n} \log n)$ variance bound for expectation values of fermionic Gaussian density operators} \label{app:variancebound}

In this appendix, we place a tighter asymptotic bound on the variance for our matchgate shadows estimates of $\tr(\varrho\rho)$ (where $\varrho$ is any fermionic Gaussian state and $\rho$ is an unknown state) than the straightforward but loose $\mathcal{O}(n^3)$ bound given in subsection~\ref{sec:variance_densityoperators} of the main text. Specifically, we argue that the expression on the RHS of Eq.~\eqref{variancedensityoperators}, which we will denote by $b(n)$, is asymptotically bounded in terms of the number of fermionic modes $n$ as $\sqrt{n} \log n$:
\[b(n) \coloneqq  \frac{1}{2^{2n}} \sum_{\substack{\ell_1,\ell_2,\ell_3 \geq 0\\\ell_1 + \ell_2 + \ell_3 \leq n}} \frac{{n\choose \ell_1,\ell_2,\ell_3,n-\ell_1-\ell_2-\ell_3}^2}{{2n\choose 2\ell_1,2\ell_2,2\ell_3, 2(n-\ell_1-\ell_2-\ell_3)}}\frac{{2n\choose 2(\ell_1+\ell_3)}}{{n\choose \ell_1 + \ell_3}}\frac{{2n\choose 2(\ell_2 + \ell_3)}}{{n\choose \ell_2 + \ell_3}} = \mathcal{O}(\sqrt{n}\log n). \]
This is consistent with Figure~\ref{fig:bnzetaplot}. 

We use the bounds Eq.~\eqref{binbound} and~\eqref{multbound} for the binomial and multinomial terms in $b(n)$. For simplicity, we can subsume all the cases in Eq.~\eqref{multbound} as 
\begin{align*}
    \frac{{n\choose k_1,k_2,k_3,k_4}^2}{{2n\choose 2k_1,2k_2,2k_3,2k_4}} &\leq \sqrt{\frac{n}{\max\{k_1,1\}\max\{k_2,1\}\max\{k_3,1\}\max\{k_4,1\}}} \\
    &\leq 4\sqrt{\frac{n}{(k_1 + 1)(k_2+1)(k_3+1)(k_4+1)}},
\end{align*}
where we bound $\max\{k,1\} \geq (k+1)/2$ in the second line. Thus,  \begin{align*}
    b(n) &\leq \frac{1}{2^{2n}} \sum_{\substack{\ell_1,\ell_2,\ell_3 \\ \ell_1 + \ell_2 + \ell_3 \leq n}} 4 \sqrt{\frac{n}{(\ell_1 + 1)(\ell_2 + 1)(\ell_3 + 1)(n-\ell_1 - \ell_2 - \ell_3 +1)}} 2^{nH_{\mathrm{b}}\left((\ell_2 + \ell_3)/n\right)} 2^{nH_{\mathrm{b}}\left((\ell_3 + \ell_1)/n\right)} \\
    &\sim \frac{4}{2^{2n}} \int_{-\frac{1}{2}}^{n +\frac{1}{2}} d\ell_1\, \int_{-\frac{1}{2}}^{n-\ell_1 + \frac{1}{2}} d\ell_2\, \int_{-\frac{1}{2}}^{n-\ell_1 -\ell_2 + \frac{1}{2}}d\ell_3\, \sqrt{\frac{n}{(\ell_1 + 1)(\ell_2 + 1)(\ell_3 + 1)(n-\ell_1 - \ell_2 - \ell_3 +1)}} \\
    &\qquad \times 2^{n\left(H_{\mathrm{b}}\left((\ell_2 + \ell_3)/n\right) + (\ell_2 + \ell_3)/n\right)} \\
    &= \frac{4}{2^{2n}} n\sqrt{n} \iiint d\ell_1'\, d\ell_2'\,d\ell_3'\, \mathbf{1}_{\ell_1',\ell_2',\ell_3' \geq -1/(2n), \,\ell_1'+\ell_2'+\ell_3' \leq 1 + 1/(2n)} \\
    &\qquad \times \frac{1}{\sqrt{(\ell_1' + \frac{1}{n})(\ell_2' + \frac{1}{n})(\ell_3' + \frac{1}{n})(n-\ell_1'-\ell_2'-\ell_3' + \frac{1}{n})}} \exp\left[(\ln 2) n \left(H_{\mathrm{b}}(\ell_2' + \ell_3') + H_{\mathrm{b}}(\ell_3' + \ell_1') \right) \right],
\end{align*}
using the midpoint rule in the second line, and changing variables to $\ell_i' = \ell_i/n$ for $i \in \{1,2,3\}$ in the third.

Next, we use Laplace's method to evaluate the integral over two of the dimensions. Note that the function $(\ln 2)(H_{\mathrm{b}}(\ell_2' + \ell_3') + H_{\mathrm{b}}(\ell_3' + \ell_1'))$ in the argument of the exponential is maximised along the line $L$ given by $\ell_1' = \ell_2'$ and $\ell_3' = 1/2 - \ell_1'$. If we fix $\ell_3'$ and consider the integral over $\ell_1'$ and $\ell_2'$, this line does not intersect (the interior of) the region of integration over $\ell_1'$ and $\ell_2'$ whenever $\ell_3' \geq 1/2 + 1/(2n)$, so Laplace's method does not apply in this case. Hence, we change variables, letting
\[ x = \frac{1}{2}(\ell_1' + \ell_2'), \qquad y = \frac{1}{2}(-\ell_1' + \ell_2'), \]
so that for every relevant value of $x$, the line $L$ intersects the region of integration over $y$ and $\ell_3'$ at exactly one point in the interior of the region. The determinant of the Jacobian for this change of variables is $2$, and we have
\begin{align*}
    b(n) &\lesssim \frac{8}{2^{2n}} n\sqrt{n} \iiint dx\, dy\, d\ell_3' \, \mathbf{1}_{x-y,x+y, \ell_3' \geq -1/(2n), \, 2x + \ell_3' \leq 1+ 1/(2n)} \\
    &\qquad \times \frac{1}{\sqrt{(x-y+\frac{1}{n})(x + y + \frac{1}{n})(\ell_3' + \frac{1}{n})(n-2x+\ell_3' + \frac{1}{n})}}\exp\left[n(\ln 2) \left(H_{\mathrm{b}}(x+ y + \ell_3') + H_{\mathrm{b}}(x - y + \ell_3') \right) \right].
\end{align*}
We now apply Laplace's method to evaluate the integral over $y$ and $\ell_3'$. Denote the function in the exponential by $f_x(y,\ell_3') = (\ln 2) (H_{\mathrm{b}}(x + y + \ell_3') + H_{\mathrm{b}}(x - y + \ell_3'))$. For any fixed $x \in (-1/(2n), 1/2 + 1/(4n))$, $f_x$ is maximised at the point $y = 0$, $\ell_3' = 1/2 - x$ (with maximum value $2\ln2$), and this point is always within the region of integration. The Hessian $\mathbf{H}_{f_x}$ of $f_x$ at this point is $\text{diag}(-8,-8)$ (which is clearly negative definite), so $\sqrt{-\det(\mathbf{H}_{f_x}(0, 1/2-x))} = 8$.
Laplace's method therefore gives
\begin{align*}
    b(n) &\lesssim \frac{8}{2^{2n}} n\sqrt{n} \int_{-\frac{1}{2n}}^{\frac{1}{2} + \frac{1}{4n}} dx\, \frac{2\pi}{n} \frac{1}{8} \frac{1}{\sqrt{(x + \frac{1}{n})^2 (\frac{1}{2}-x + \frac{1}{n})^2}} e^{n (2\ln 2)} \\
    &= 2\pi \sqrt{n} \int_{-\frac{1}{2n}}^{\frac{1}{2} + \frac{1}{4n}} dx\, \frac{1}{(x + \frac{1}{n})(\frac{1}{2} + \frac{1}{n} - x)} \\
    &= 2\pi \sqrt{n} \frac{2n}{\frac{n}{2} + 4} \int_{-\frac{1}{2}}^{\frac{n}{2} + \frac{1}{4}} dx'\, \left(\frac{1}{x' + 1} + \frac{1}{\frac{n}{2}+1 - x'} \right) \\
    &= 4\pi \sqrt{n} \frac{n}{n+4} \left[\ln(x' + 1) - \ln\left(\frac{n}{2} + 1 - x'\right) \right]\Bigg|_{-1/2}^{n/2 + 1/4} \\
    &= \mathcal{O}(\sqrt{n} \ln n),
\end{align*}
where we changed variables to $x' = x/n$ in the third line.

\bibliography{ryan, refs}

\end{document}